\documentclass[conference,review,anonymous]{IEEEtran}
\IEEEoverridecommandlockouts

\usepackage[noadjust]{cite}

\input{env.sty}

\usepackage{titletoc}

\allowdisplaybreaks

\makeatletter
\newcommand{\linebreakand}{%
  \end{@IEEEauthorhalign}
  \hfill\mbox{}\par
  \mbox{}\hfill\begin{@IEEEauthorhalign}
}
\makeatother

\begin{document}

\title{A Complete Equational Theory for Quantum~Circuits}
\thispagestyle{plain}
\pagestyle{plain}

\author{\IEEEauthorblockN{Alexandre Clément\IEEEauthorrefmark{1}}
\IEEEauthorblockA{\href{mailto:alexandre.clement@loria.fr}{alexandre.clement@loria.fr}\\
\url{https://orcid.org/0000-0002-7958-5712}\\
\url{https://members.loria.fr/AClement}}
\and
\IEEEauthorblockN{Nicolas Heurtel\IEEEauthorrefmark{2}\IEEEauthorrefmark{3}}
\IEEEauthorblockA{\href{mailto:nicolas.heurtel@quandela.com}{nicolas.heurtel@quandela.com}\\
\url{https://orcid.org/0000-0002-9380-8396}}
\and
\IEEEauthorblockN{Shane Mansfield\IEEEauthorrefmark{2}}
\IEEEauthorblockA{\href{mailto:shane.mansfield@quandela.com}{shane.mansfield@quandela.com}}
\linebreakand
\IEEEauthorblockN{Simon Perdrix\IEEEauthorrefmark{1}}
\IEEEauthorblockA{\href{mailto:simon.perdrix@loria.fr}{simon.perdrix@loria.fr}\\
\url{https://orcid.org/0000-0002-1808-2409}\\
\url{https://members.loria.fr/SPerdrix}}
\and
\IEEEauthorblockN{Beno\^\i t Valiron\IEEEauthorrefmark{3}}
\IEEEauthorblockA{\href{mailto:benoit.valiron@universite-paris-saclay.fr}{benoit.valiron@universite-paris-saclay.fr}\\
\url{https://orcid.org/0000-0002-1008-5605}\\
\url{https://www.monoidal.net}}
\linebreakand
\IEEEauthorblockA{\IEEEauthorrefmark{1}Universit\'e de Lorraine, CNRS,\\
Inria, LORIA\\
F-54000 Nancy, France}
\and
\IEEEauthorblockA{\IEEEauthorrefmark{2}Quandela\\
7 Rue Léonard de Vinci\\
91300 Massy, France}
\and
\IEEEauthorblockA{\IEEEauthorrefmark{3}Université Paris-Saclay, CentraleSupélec,\\
Inria, CNRS, ENS Paris-Saclay,\\
Laboratoire Méthodes Formelles\\
91190, Gif-sur-Yvette, France}
}

\maketitle

\begin{abstract}
We introduce the first complete equational theory for
  quantum circuits. More precisely, we introduce a set of circuit
  equations that we prove to be sound and complete: two circuits
  represent the same unitary map if and only if they can be
  transformed one into the other using the equations. The proof is
  based on the properties of multi-controlled gates -- that are
  defined using elementary gates -- together with an encoding of
  quantum circuits into linear optical circuits, which have been
  proved to have a complete axiomatisation.
\end{abstract}

\section{Introduction}

Quantum computation is the art of manipulating the states of objects
governed by the laws of quantum physics in order to perform
computation. The standard model for quantum computation is the
\emph{quantum co-processor model}: an auxiliary device, hosting a
quantum memory. This coprocessor is then interfaced with a classical
computer: the classical computer sends the co-processor a series of
instructions to update the state of the memory. The standard formalism
for these instructions is the \emph{circuit model}
\cite{deutsch-circuit}. Akin to boolean
circuits, in quantum circuits wires represent \emph{quantum bits} and
boxes elementary operations -- \emph{quantum gates}. The mathematical
model is however very different: quantum bits (qubits) correspond to vectors in
a $2$-dimensional
Hilbert space, gates to unitary maps and parallel composition to the
tensor product -- the Kronecker product.

Quantum circuits currently form the \emph{de facto} standard for
representing low-level, logical operations on a quantum memory. They
are used for everything: resource estimation
\cite{green2013quipper}, optimization
\cite{amy2014polynomial,duncan2020graph,PhysRevA.102.022406,maslov2005quantum,maslov2008quantum,nam2018automated}, satisfaction of hardware
constraints \cite{kissinger2019cnot,nash2020quantum}, \textit{etc}.

However, as ubiquitous to quantum computing as they are, the graphical
language of quantum circuits has never been fully
formalized.
In particular, a \emph{complete
  equational theory} has been a longstanding open problem for 30 years
\cite{aaronson-slide}. It would make it possible to directly prove
properties such as circuit equivalence without having to rely on
ad-hoc set of equations. So far, complete equational theories were
only known for non-universal fragments, such as circuits acting on at
most two qubits \cite{bian2022generators,coecke2018zx}, the stabilizer
fragment \cite{makary2021generators,ranchin2014complete}, the
CNot-dihedral fragment \cite{Amy_2018}, or fragments of reversible
circuits~\cite{iwama2002transformation,cockett2018categorycnot,cockett2018categorytof}. 

Interestingly enough, other diagrammatic languages for quantum
computation have been developed on sound foundations: it is
reasonable to think
that this could help in developing a complete equational theory for circuits.
Arguably the strongest candidate has been the ZX-calculus
\cite{coecke2008interacting,coecke2011interacting},\footnote{or its
  variants like ZH \cite{Backens_2019} and ZW \cite{DBLP:conf/lics/HadzihasanovicN18}, sharing several similar
properties.} equipped with complete equational theories
\cite{jeandel2018complete,hadzihasanovic2018two,jeandel2018diagrammatic,lmcs:6532,vilmart2019near}. The ZX-calculus shares the same
underlying mathematical representation for states: wires corresponds
to Hilbert spaces and parallel composition to the tensor
operation. Nonetheless, the completeness of the
ZX-calculus does not lead \textit{a priori} to a complete equational theory for
quantum circuits. The reason lies in the expressiveness of the
ZX-calculus and the \emph{non-unitarity} of some of its generators. Any
quantum circuit can be straightforwardly seen as a ZX-diagram. On the
other hand, a ZX-diagram does not necessarily represent a unitary map, and
even when it does, extracting a corresponding quantum circuit
is known to be a hard task in general \cite{duncan2020graph,de2022circuit}.

Another example of a quantum language with a complete equational theory
is the LOv-calculus, a language for linear optical quantum circuits
for which a simple complete equational theory has recently been
introduced \cite{clement2022LOv}. While both linear optical and
regular quantum circuits are universal for unitary transformations,
they do not share the same structure. In particular, if the parallel
composition of quantum circuits corresponds to the tensor product, for
linear optical circuits it stands for the \emph{direct sum}.

In this paper, we introduce the first complete equational theory for quantum
circuits, by first closing the gap between regular and linear optical quantum
circuits.
Despite the seemingly incompatible approaches to parallel
composition, our completeness result derives from the completeness result
for linear optical circuits. Indeed, unlike ZX-generators, linear
optical components are unitary, making it possible write a translation
in both directions.

The complete equational theory for quantum circuits is derived from
the completeness of the LOv-calculus as follows: equipped with maps
for encoding (from quantum circuits to linear optical circuits) and decoding
(from linear optical circuits to quantum circuits), one can roughly speaking
prove completeness for quantum circuits as long as its equational
theory is powerful enough to derive a finite number of equations,
those corresponding to the decoding of the equations of the complete
equational theory for linear optical circuits.

 Due to the difference in its interpretation in both kinds of circuits, the parallel composition is not preserved by the encoding nor the decoding maps. The translations are actually based on a sequentialisation of circuits, since the translation of a local gate (acting on at most two wires) is translated as a piece of circuit acting potentially on all wires. Technically, it forces to work with a raw version of circuits, as a circuit may lead to a priori distinct translations depending on the choice of the sequentialisation. Moreover,  a single linear optical gate like a phase shifter (which consists in applying a phase on a particular basis state) is decoded as a piece of circuits that can be interpreted as a multi-controlled gate acting on all qubits. As we choose to stick with the usual generators of quantum circuits acting on at most two qubits, multi-controlled gates are inductively defined and we introduce an equational theory powerful enough to prove the basic algebra of multi-controlled gates, necessary to finalise the proof of completeness.

The paper is structured as follows. We first introduce a set of
structural relations for quantum circuits generated by the standard
elementary gates: Hadamard, Phase-rotations, and CNot. We define
multi-controlled gates using these elementary gates, and show that the
basic algebra of multi-controlled gates can be derived from the
structural relations.  In addition to the structural equations, we
introduce Euler-angle-based equations. We then proceed to the proof of
completeness, based on a back-and-forth translation from quantum
circuits to linear optical circuits.

\section{Quantum Circuits}
\label{sec:QC}

In quantum computation, circuits ---such as quantum circuits or
optical quantum circuits--- are graphical descriptions of quantum
processes. Akin to (conventional) boolean circuits, circuits in
quantum computations are built from wires (oriented from left to
right), representing the flow of information, and gates, representing
operations to update the state of the system.
Every circuit comes with a set of input wires (incoming the circuit
from the left) and a set of output wires (exiting the circuit on the
right).

\subsection{Graphical languages}\label{sec:raw}

To provide a formal definition of circuits, we first use the notion of
\emph{raw circuits}.\footnote{\emph{Raw terms} are for instance
  similarly used \cite{Pawel} as an intermediate step in the defintion
  of prop.}  Given a set of generators, one can generate a \emph{raw
  circuit} by means of iterative sequential ($\circ$) and parallel
($\otimes$) compositions.  For instance, given the elementary gates
$\gh$ and $\gppifour$ (with one input and one output) and $\gcnot$ (with
two inputs and two outputs), one can construct the raw circuit
$\gcnot \circ ((\gh\otimes \gppifour)\circ \gcnot)$. Notice that a
sequential composition $C'\circ C$ requires that the number of outputs
of $C$ matches the number of inputs of $C'$. This raw circuit can be
depicted by gluing the generators together and using boxes to witness
how the generators have been composed:
\[{\scalebox{.8}{\tikzfig{bare-circuit-ex3}}}\]
To avoid the use of boxes and recover the intuitive notion of
circuits, we formally define circuits as a prop \cite{prop}, which
consists in considering the raw circuits up to the rules given in
Figure \ref{fig:axiom}. More precisely, a prop generated by a set $\mathcal G$ of
elementary gates is the collection of raw circuits generated by
$\mathcal G\cup \{\gid, \gswap,\tikzfig{diagrammevide-ss}\}$\footnote{
  $\gid$ denotes the identity, $\gswap$ the swap and
  $\tikzfig{diagrammevide-ss}$ the empty circuit.} quotiented by the
equations of Figure \ref{fig:axiom}.

\begin{figure*}[htb]
\scalebox{1}{\fbox{\begin{minipage}{1\textwidth}
\newcommand{\nspazer}{0.5em}
\vspace{-1em}
\begin{multicols}{2}%
\begin{eqnabc}\label{idneutre}\begin{array}[t]{rcccl}id_k \circ C& \equiv & C& \equiv &C\circ id_k\\[0.5em]
\minitikzfig[0.75]{Cidmultifils}&\equiv&\minitikzfig[0.75]{Cmultifils}&\equiv&\minitikzfig[0.75]{idCmultifils}
\end{array}\end{eqnabc}
\vspace{\nspazer}
\begin{eqnabc}\label{assoccomp}\begin{array}[t]{rcl}(C_3\circ C_2)\circ C_1 &\equiv & C_3\circ (C_2\circ C_1)\\[0.5em]
\minitikzfig[0.75]{C1puisC2C3}&\equiv&\minitikzfig[0.75]{C1C2puisC3}
\end{array}\end{eqnabc}
\vspace{\nspazer}
\begin{eqnabc}\label{videneutre}\begin{array}[t]{rcccl}\tikzfig{diagrammevide-s} \otimes C &\equiv &C &\equiv &C\otimes  \tikzfig{diagrammevide-s}\\[0.5em]
\minitikzfig[0.75]{diagrammevidesurC}&\equiv&\minitikzfig[0.75]{Cmultifils}&\equiv&\minitikzfig[0.75]{Csurdiagrammevide}
\end{array}\end{eqnabc}
\vspace{\nspazer}
\begin{eqnabc}\label{naturaliteswap}\begin{array}[t]{rcl}\sigma_{k}\circ (C\otimes \tikzfig{filcourt-s}) &\equiv  &(\tikzfig{filcourt-s}\otimes C) \circ     \sigma_{k}\\[0.5em]
\minitikzfig[0.75]{Cswapn1}&\equiv&\minitikzfig[0.75]{swapn1C}
\end{array}\end{eqnabc}
\vspace{\nspazer}
\begin{eqnabc}\label{assoctens}\begin{array}[t]{rcl}(C_1\otimes C_2)\otimes C_3&\equiv &C_1\otimes(C_2\otimes C_3)\\[0.5em]
\minitikzfig[0.75]{C1surC2puissurC3}&\equiv&\minitikzfig[0.75]{C1puissurC2surC3}
\end{array}\end{eqnabc}
\vspace{\nspazer}
\begin{eqnabc}\label{mixedprod}\begin{array}[t]{rcl}(C_2\circ C_1)\otimes (C_4\circ C_3) &\equiv &(C_2\otimes C_4)\circ (C_1\otimes C_3)\\[0.5em]
\minitikzfig[0.75]{C1C2surC3C4}&\equiv&\minitikzfig[0.75]{C1surC3puisC2surC4}
\end{array}\end{eqnabc}
\vspace{\nspazer}
\begin{eqnabc}\label{doubleswap}\begin{array}[t]{rcl}\gswap \circ \gswap &\equiv &\tikzfig{filcourt-s}\otimes \tikzfig{filcourt-s}\\[0.5em]
\minitikzfig[0.75]{swapswap-xxs}&\equiv&\minitikzfig[0.75]{filsparalleleslongs-xxs}
\end{array}\end{eqnabc}
\end{multicols}
where $id_0 = \tikzfig{diagrammevide-s}$ and
$id_{k+1} = id_k\otimes \tikzfig{filcourt-s}$, and
$\sigma_{0} := \tikzfig{filcourt-s}$,
$\sigma_{k+1} := (\gswap\otimes id_k )\circ
(\tikzfig{filcourt-s}\otimes \sigma_k)$.
\end{minipage}}}
\caption{Definition of $\equiv$ for raw circuits (either raw quantum
  circuits or raw optical circuits).\label{fig:axiom}}
\end{figure*}

The use of the prop formalism guarantees that circuits can be depicted
graphically without ambiguity. Circuits are thus defined up to
deformations, as for instance:
\[
  \scalebox{.7}{\tikzfig{ex1Lb}\ =\ \tikzfig{ex1Rb}}.
\]

\subsection{Quantum circuits: Syntax and semantics}

We consider quantum circuits defined on the following standard set of
generators: Hadamard, Control-Not, and Phase-gates together with
global phases.

\begin{definition}
  \label{def:QC}
  Let $\textup{\textbf {QC}}$ be the prop
  generated 
  by $\gh$, $\gcnot$, and for any $\varphi \in \mathbb R$, $\gp$ and
  $\gs$.
\end{definition}

\begin{figure*}
\centering
\fbox{\begin{minipage}{1\textwidth}\scalebox{.9}{\begin{minipage}{1\textwidth}
\newcommand{\nspazer}{0.5em}
\begin{multicols}{2}
\begin{equation}\label{zgate}\tikzfig{QgateZ}\ :=\ \tikzfig{Ppi}\end{equation}
\vspace{\nspazer}
\begin{equation}\label{xgate}\tikzfig{QgateX}\ :=\ \tikzfig{HZH}\end{equation}
\vspace{0em}
\begin{equation}\label{RXgate}\tikzfig{QgateRXtheta}\ :=\ \tikzfig{HPH}\end{equation}
\vspace{\nspazer}
\begin{equation}\label{CNot-def}\tikzfig{CNot-non-adj}\ :=\ \tikzfig{swap-CNot-swap}\end{equation}
\vspace{\nspazer}\vspace{\nspazer}
\begin{equation}\label{NotC-def}\tikzfig{NotC-non-adj} \ := \ \tikzfig{swap-NotC-swap}\end{equation}
\end{multicols}
\vspace{\nspazer}
\end{minipage}}
\end{minipage}}
\caption{Usual abbreviations of quantum circuits.\label{fig:shortcut}} 
\end{figure*}

\medskip The gates $\gh$ and $\gp$ have one input and one output,
while $\gcnot$ has two and $\gs$ zero. A quantum circuit $C$ with $n$ inputs
and $n$ outputs is called a $n$-qubit circuit. Given an $n$-qubit
circuit $C$, the corresponding unitary map $\interp C$ is acting on the
Hilbert space
$\mathbb C^{\{0,1\}^n}=\textup{span}(\ket{x}, x\in
\{0,1\}^n)$:\footnote{We use the standard Dirac
  notations.}

\begin{definition}[Semantics]
  For any $n$-qubit quantum circuit $C$, let
  $\interp C: \mathbb C^{\{0,1\}^n} \to \mathbb C^{\{0,1\}^n}$ be the
  linear map inductively defined as follows:
  $\interp{C_2\circ C_1} = \interp{C_2}\circ\interp{ C_1}$,
  $\interp{C_1\otimes C_3} = \interp{C_1}\otimes\interp{ C_3}$, and
  $\forall x,y\in \{0,1\}$, $\forall \varphi \in \mathbb R$, 
  \begin{align*}
    \interp{\gh} &=\ket x \mapsto {\textstyle\frac{1}{\sqrt2}}(\ket 0  +(-1)^x\ket 1),
  \\
    \interp{\gp}  &= \ket x\mapsto e^{ix\varphi}\ket x,
  \\
    \interp\gid &= \ket x\mapsto \ket x,
  \\
    \interp{\gcnot}  &=\ket{x,y}\mapsto  \ket{x,x\oplus y},
  \\
    \interp{\gswap} &= \ket{x,y}\mapsto \ket{y,x},
  \\
    \interp{\gs} &= 1\mapsto e^{i\varphi},
  \\
    \interp{\emptyc} &= 1\mapsto 1.
  \end{align*}
\end{definition}

\begin{remark}
  Although the definition of $\interp .$ relies on the inductive
  structure of raw quantum circuits, it is well-defined on quantum
  circuits as for any raw quantum circuits $C,C'$, whenever $C\equiv C'$
  we have $\interp C = \interp{C'}$.
\end{remark}

\begin{proposition}[Universality \cite{Barenco1995gates}]
 For any $n$-qubit
 unitary map $U$ acting on $\mathbb C^{\{0,1\}^n}$,
 there exists an $n$-qubit circuit $C$ such that $\interp C=U$.\qed
\end{proposition}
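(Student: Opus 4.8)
The plan is to follow the classical strategy for proving universality: successively reduce an arbitrary unitary to gates that are manifestly in the image of $\interp{\cdot}$ over the generators of $\QCbf$. I would organise the argument in four layers, from single-qubit gates up to arbitrary unitaries, and rely throughout on the functoriality $\interp{C_2\circ C_1}=\interp{C_2}\circ\interp{C_1}$ so that concatenating circuits realises products of unitaries.

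First I would establish single-qubit universality. From the semantics, $\interp{\gp}$ is the diagonal map $\mathrm{diag}(1,e^{i\varphi})$, while $\interp{\gs}$ supplies an arbitrary global phase; composing the two yields every $Z$-rotation $R_Z(\theta)=\mathrm{diag}(e^{-i\theta/2},e^{i\theta/2})$. Conjugating a $Z$-rotation by $\interp{\gh}$ turns it into an $X$-rotation, exactly as recorded in Figure~\ref{fig:shortcut}. Since every $U\in U(2)$ factors, by the Euler-angle (ZXZ) decomposition, as $e^{i\alpha}R_Z(\beta)R_X(\gamma)R_Z(\delta)$, it follows that every one-qubit unitary lies in the image of $\interp{\cdot}$.

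Next I would realise, for every $k$, every $k$-fold-controlled one-qubit unitary, using only $\gcnot$ and one-qubit gates and crucially \emph{without ancillae}, since an $n$-qubit circuit has exactly $n$ wires. This is the Barenco-style recursion: a singly-controlled $U$ decomposes into one-qubit gates $A,B,C$ with $ABC=\mathrm{Id}$ interleaved with two $\gcnot$'s, and one bootstraps to $k$ controls via the square-root trick (choosing $V$ with $V^2=U$ and recursing on the number of controls). Combined with the previous paragraph, every multi-controlled one-qubit unitary is in the image of $\interp{\cdot}$. Finally I would handle two-level unitaries and then the general case: a two-level unitary acts as the identity off a subspace $\mathrm{span}(\ket a,\ket b)$, and picking a Gray code from $a$ to $b$ and conjugating by the corresponding (multi-)controlled NOT gates reduces to the case where $a$ and $b$ differ in a single coordinate, where the two-level unitary is precisely a multi-controlled one-qubit gate, hence realisable. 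A Gaussian-elimination argument then writes any $N\times N$ unitary $U$ (with $N=2^n$) as a product of at most $N(N-1)/2$ two-level unitaries, and the sequential composite of the corresponding circuits realises $U$.

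The main obstacle is the multi-controlled decomposition of the second step: expressing arbitrary controlled rotations using only two-qubit $\gcnot$'s and one-qubit gates, with no spare wires to serve as ancillae, requires the recursive Barenco construction and is markedly more delicate than the surrounding bookkeeping (Euler angles, Gray codes, and the elimination count), which are essentially routine once the multi-controlled building block is in hand.
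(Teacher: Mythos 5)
The paper does not prove this proposition itself: it is stated with a \qed and attributed to the cited reference \cite{Barenco1995gates}, whose argument is exactly the one you reproduce (Euler decomposition for one-qubit gates, the ancilla-free Barenco recursion for multi-controls, Gray codes for two-level unitaries, and Gaussian elimination into two-level factors). Your proposal is correct and follows essentially the same route as the paper's source — indeed your multi-controlled construction mirrors the paper's own Definition of $\lambda^{n}R_X(\theta)$ — so there is nothing to add beyond the minor remark that the singly-controlled decomposition $ABC=\mathrm{Id}$ needs the usual extra controlled-phase correction when $U\notin SU(2)$.
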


We use standard shortcuts in the description of quantum circuits,
given in \cref{fig:shortcut}. 
In textual description, we sometimes use CNot, $s(\varphi)$, $X$, $P(\varphi)$, \emph{etc} to denote respectively $\gcnot$, $\gs$, $\gx$, $\gp$, \emph{etc}.
Moreover, when the parameters (e.g. $\varphi$) are not specific values they can take arbitrary ones. 
We write $R_X(\theta)$ for the so-called
$X$-rotation \cite{NielsenChuang}, whereas the standard phase gate $P(\varphi)$ is a $Z$-rotation
only up to a global phase. As a consequence, they have a slightly
different behaviour: $P$ is $2\pi$-periodic: $\interp{P(2\pi)}=I$,
whereas $R_X$ is $4\pi$-periodic, and we instead have
$\interp{R_X(2\pi)} = -I$.

\subsection{Structural equations}

We introduce a set $\qczero$ of \emph{structural equations} on quantum
circuits in \cref{eq:qc0}. These equations are structural in the sense
that the transformations on the parameters are only based on the fact
that $\mathbb R$ is an additive group. In particular, these equations
are valid for any reasonable\footnote{I.e. which forms an additive
  group and contains $\pi/2$.} restriction on the angles.

We write $\qczero\vdash C_1 = C_2$ when $C_1$ can be transformed into
$C_2$ using the equations of \cref{eq:qc0}.\footnote{More formally,
  $\qczero\vdash \cdot = \cdot$ is defined as the smallest congruence
  which satisfies equations of Figures \ref{fig:axiom} and
  \ref{eq:qc0}.}

\begin{proposition}
  The structural equations of \cref{eq:qc0} are sound, i.e. if
  $\qczero\vdash C_1 = C_2$ then $\interp{C_1} = \interp{C_2}$.
\end{proposition}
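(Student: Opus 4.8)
The plan is to prove soundness by a standard structural induction, reducing the claim to a finite verification on the individual equations. Since $\qczero \vdash C_1 = C_2$ is defined as the smallest congruence containing the equations of Figures \ref{fig:axiom} and \ref{eq:qc0}, it suffices to check two things: first, that the semantic map $\interp{\cdot}$ is itself a congruence with respect to the prop operations (so that it respects the inductive closure), and second, that each axiom in the generating set is validated by $\interp{\cdot}$.

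For the congruence part, I would invoke the fact that $\interp{\cdot}$ is defined compatibly with $\circ$ and $\otimes$, namely $\interp{C_2 \circ C_1} = \interp{C_2} \circ \interp{C_1}$ and $\interp{C_1 \otimes C_3} = \interp{C_1} \otimes \interp{C_3}$. This immediately gives that if $\interp{C_1} = \interp{C_2}$ and $\interp{C_1'} = \interp{C_2'}$ then $\interp{C_1 \circ C_1'} = \interp{C_2 \circ C_2'}$ and similarly for $\otimes$ (whenever the compositions are defined), so the set of pairs $(C_1, C_2)$ with $\interp{C_1} = \interp{C_2}$ is closed under the prop operations and is an equivalence relation. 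By minimality of $\qczero \vdash \cdot = \cdot$, it is therefore enough to verify soundness on each generating equation. Note that the equations of Figure \ref{fig:axiom} are already sound by the Remark preceding the statement, which asserts that $C \equiv C'$ implies $\interp{C} = \interp{C'}$; this leaves only the equations of \cref{eq:qc0} to check.

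The core of the argument is then the equation-by-equation verification for \cref{eq:qc0}. For each structural equation I would compute the linear map assigned to both sides using the inductive definition of $\interp{\cdot}$ on the basis states $\ket{x}$, $x \in \{0,1\}^n$, and confirm they agree. Because these equations encode only the additive-group structure of the parameters, each verification amounts to an elementary identity: the additivity equations follow from $e^{ix\varphi_1} e^{ix\varphi_2} = e^{ix(\varphi_1 + \varphi_2)}$ for the phase gates (and analogously for the global phase $s$), while any equations involving $\pi/2$-specific angles, Hadamard conjugations, or the $\cnot$ reduce to finite matrix computations on a fixed small number of qubits that can be carried out directly.

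The main obstacle is not conceptual but bookkeeping: the equations in \cref{eq:qc0} are stated diagrammatically, so each side must be parsed into a precise composite of generators before its semantics can be evaluated, and one must be careful to evaluate both sides on the same indexing of wires (accounting for the $\sigma_k$ permutations and the conventions of Figure \ref{fig:shortcut}, such as $X = HZH$ and the non-adjacent $\cnot$ defined via swaps). Since $\qczero$ is a fixed finite set, this is a finite—if tedious—check, and once the congruence reduction above is in place there is no further subtlety.
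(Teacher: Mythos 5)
Your proposal is correct and follows essentially the same route as the paper, whose proof is simply ``by inspection of the equations of \cref{eq:qc0}'': the congruence reduction you spell out is the standard (and implicit) justification for why an equation-by-equation check suffices, and the per-axiom verification is exactly the finite matrix computation the paper intends.
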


\begin{figure*}
\centering
\scalebox{.8}{\fbox{\begin{minipage}{1.2603\textwidth}
\vspace{1em}
\begin{multicols}{3}
\newcommand{\nspazer}{1em}
\begin{equation}\label{HH}\tag{a}\begin{array}{rcl}\tikzfig{HH}&=&\tikzfig{filcourt}\end{array}\end{equation}
\vspace{\nspazer}
\begin{equation}\label{S0}\tag{b}\begin{array}{rrccl}\tikzfig{QScal0}&=&\tikzfig{QScal2pi}&=&\tikzfig{diagrammevide-s}\end{array}\end{equation}
\vspace{\nspazer}
\begin{equation}\label{SS}\tag{c}\begin{array}{rcl}\tikzfig{QScalsum}&=&\tikzfig{QScalphi1}\ \ \tikzfig{QScalphi2} \end{array}\end{equation}
\vspace{\nspazer}
\begin{equation}\label{P0}\tag{d}\begin{array}{rcl}\tikzfig{P0}&=&\tikzfig{filcourt}\end{array}\end{equation}
\vspace{\nspazer}
\begin{equation}\label{CNotCNot}\tag{e}\begin{array}{rcl}\tikzfig{CNotCNot}&=&\tikzfig{Qfilsparallelescourts}\end{array}\end{equation}
\vspace{\nspazer}
\begin{equation}\label{CNotX}\tag{f}\begin{array}{rcl}\tikzfig{CNotXb}&=&\tikzfig{XhCNotXh}\end{array}\end{equation}
\vspace{\nspazer}
\begin{equation}\label{CNotlift}\tag{g}\begin{array}{rcl}\tikzfig{CNotgrandCNotbas}&=&\tikzfig{CNotCNotCNot}\end{array}\end{equation}
\vspace{\nspazer}
\begin{equation}\label{tripleCNotswap}\tag{h}\begin{array}{rcl}\tikzfig{tripleCNotv1}&=&\tikzfig{Qswap}\end{array}\end{equation}
\vspace{\nspazer}
\begin{equation}\label{commutationPctrl}\tag{i}\begin{array}{rcl}\tikzfig{PCNot}&=&\tikzfig{CNotP}\end{array}\end{equation}
\vspace{\nspazer}
\begin{equation}\label{commutationCNothaut}\tag{j}\begin{array}{rcl}\tikzfig{CNotgrandCNothaut}&=&\tikzfig{CNothautCNotgrand}\end{array}\end{equation}
\vspace{\nspazer}
\end{multicols}
\begin{multicols}{2}
\newcommand{\nspazer}{1em}
\begin{equation}\label{PP}\tag{k}\begin{array}{rcl}\tikzfig{PP}&=&\tikzfig{Pphiplusphiprime}\end{array}\end{equation}
\vspace{\nspazer}
\begin{equation}\label{XPX}\tag{l}\begin{array}{rcl}\tikzfig{XPX}&=&\tikzfig{Pmoinsphieiphi}\end{array}\end{equation}
\vspace{\nspazer}
\begin{equation}\label{CZ}\tag{m}\begin{array}{rcl}\tikzfig{CZ}&=&\tikzfig{CPpi}\end{array}\end{equation}
\vspace{\nspazer}
\end{multicols}
\newcommand{\nspazer}{1em}
\begin{equation}\label{commutationdecompctrlPRY}\tag{n}\begin{array}{rcl}\tikzfig{comcontrlRXHRXH}&=&\tikzfig{comcontrlHRXHRX}\end{array}\end{equation}
\vspace{\nspazer}\vspace{\nspazer}
\begin{equation}\label{commutationdecompctrlRXpasenface}\tag{o}\begin{array}{rcl}\tikzfig{nnRXRXbncompactsansH-}&=&\tikzfig{RXbnnnRXcompactsansH-}\end{array}\end{equation}
\vspace{\nspazer}
\end{minipage}}}
\caption{\label{eq:qc0}\textbf{Axioms of $\qczero$:} Structural
  equations on quantum circuits. The equations are defined for any
  $\varphi, \varphi_1,\varphi_2, \theta, \theta'\in \mathbb R$.}
\end{figure*}

\begin{proof}
By inspection of the equations of
  \cref{eq:qc0}.
\end{proof}

Equations (\ref{HH}) to (\ref{XPX}) are fairly standard in quantum
computing.  \cref{CZ}, which is used for instance in
\cite{abdessaied2014quantum}, describes two equivalent ways to define
a controlled-Z gate. Notice that this equation cannot be derived from
the other axioms as it is the only equation on 2 qubits which does not
preserve the parity of the number of CNots plus the number of
swaps. Equations (\ref{commutationdecompctrlPRY}) and
(\ref{commutationdecompctrlRXpasenface}) are more involved and account
for some specific commutation properties of controlled gates (see
\cref{commctrl} and \cref{commctrlpasenface}).

The axioms of $\textup{QC}_0$, i.e.~the equations given in
\cref{eq:qc0}, are sufficient to derive standard elementary circuit
identities like those given in \cref{usefull_eq}.

One can also prove that some particular circuits, called phase-gadgets
\cite{Cowtan_2020}, can be flipped vertically:

\begin{equation}\label{CNotPCNotreversible}\qczero\vdash ~\scalebox{0.8}{\tikzfig{CNotPbCNot}}~=~\scalebox{0.8}{\tikzfig{CNotPbCNotalenvers}}\end{equation}

\begin{equation}\label{NotCRXNotCreversible}\qczero\vdash~\scalebox{0.8}{\tikzfig{RXhconjCNot}}~=~\scalebox{0.8}{\tikzfig{RXbconjCNot}}\end{equation}

The derivations are given in \cref{proof:usefuleq}. Combining
\cref{CNotPCNotreversible} and \cref{commutationPctrl}, one can easily
prove the following equation, used for instance in
\cite{nam2018automated} in the context of circuit optimisation:
\[
  \qczero\vdash \scalebox{.8}{\tikzfig{CPL}}
  {=}\!\scalebox{.8}{\tikzfig{CPR}}
\]

\begin{figure*}
\centering
\fbox{\scalebox{.8}{\begin{minipage}{1.2603\textwidth}
\newcommand{\nspazer}{0.5em}
\begin{multicols}{3}
\begin{equation}\label{commutationCNotsbas}\begin{array}{rcl}\tikzfig{CNotgrandCNotbas}&=&\tikzfig{CNotbasCNotgrand}\end{array}\end{equation}
\vspace{\nspazer}
\vspace{\nspazer}
\begin{equation}\label{CNotHH}\begin{array}{rcl}\tikzfig{CNotHH}&=&\tikzfig{HHNotC}\end{array}\end{equation}
\vspace{\nspazer}
\vspace{\nspazer}
\begin{equation}\label{XX}\begin{array}{rcl}\tikzfig{XX}&=&\tikzfig{filcourt-s}\end{array}\end{equation}
\vspace{\nspazer}
\begin{equation}\label{CNotliftvar}\begin{array}{rcl}\tikzfig{CNotgrandCNotbas}&=&\tikzfig{CNotgrandconjNotChauts}\end{array}\end{equation}
\vspace{\nspazer}\vspace{\nspazer}\begin{equation}\label{commutationXCNot}\begin{array}{rcl}\tikzfig{XbCNot}&=&\tikzfig{CNotXb}\end{array}\end{equation}
\vspace{\nspazer}\vspace{\nspazer}\begin{equation}\label{ZZ}\begin{array}{rcl}\tikzfig{ZZ}&=&\tikzfig{filcourt-s}\end{array}\end{equation}
\vspace{\nspazer}\begin{equation}\label{NotClift}\begin{array}{rcl}\tikzfig{NotCgrandNotCbas}&=&\tikzfig{NotCNotCNotC}\end{array}\end{equation}
\vspace{\nspazer}\begin{equation}\label{ZCNot}\begin{array}{rcl}\tikzfig{ZbCNot}&=&\tikzfig{CNotZZ}\end{array}\end{equation}
\vspace{\nspazer}\begin{equation}\label{commutationRXCNot}\begin{array}{rcl}\tikzfig{XCNot}&=&\tikzfig{CNotX}\end{array}\end{equation}
\end{multicols}
\vspace{-2.8em}
\begin{multicols}{2}
\vspace{\nspazer}\begin{equation}\label{RX0}\begin{array}{rcl}\tikzfig{RX0}&=&\tikzfig{filcourt-s}\end{array}\end{equation}
\vspace{\nspazer}\begin{equation}\label{RXRX}\begin{array}{rcl}\tikzfig{RXRX}&=&\tikzfig{RXthetaplusthetaprime}\end{array}\end{equation}
\vspace{\nspazer}\begin{equation}\label{CNotHCNot}\begin{array}{rcl}\tikzfig{CNotHhCNot}&=&\tikzfig{XbHhCNotHhCNotHh}\end{array}\end{equation}
\end{multicols}
~
\end{minipage}}}
\caption{Standard circuit identities that can be derived from the
  axioms of $\textup{QC}_0$, given in
  \cref{eq:qc0} \label{usefull_eq}. The proofs are given in
  \cref{proof:usefuleq}.}
\end{figure*}

Notice that when $\varphi = -\varphi' = \alpha/2$  the above circuits are two
equivalent standard implementations of a controlled-phase gate of angle
$\alpha$. We show in the next section how the basic algebra of (multi-) controlled gates can be derived. 

\subsection{Controlled gates}

Multi-controlled gates are useful to describe more elaborate quantum
circuits. We use the notations ``$\lambda$'' and ``$\Lambda$'' for
controls.
Given a 1-qubit gate $G$, $\lambda^1 G$ is a 2-qubit positively
controlled gate: if the control qubit (the top one) is in state
$\ket 1$ (resp.\ $\ket 0$) then $G$ (resp.\ the identity) is applied
on the target qubit (the bottom one). $\lambda^2 G$ is a 3-qubit
positively controlled gate, where the two upper qubits are controls:
they both need to be in state $\ket{1}$ for the gate $G$ to fire on
the bottom qubit.
We also consider more general multi-controlled gates
$\Lambda^{x_1\ldots x_k}G$ with positive (when $x_i=1$) and negative
(when $x_i=0$) controls: if the first qubit is in the state
$\ket{x_1}$ (resp. $\ket {\bar x_1}$) then $\Lambda^{x_2\ldots x_k}G$
(resp. the identity) is applied on the remaining qubits. Finally,
$\Lambda ^x_y G$ denotes a multi-controlled gate with control qubits
on both sides -- above and below -- of the target qubit.

We will follow a standard construction for multi-controls
using a decomposition into elementary $1$- and $2$-qubit gates (see
for instance \cite{Barenco1995gates}).  Note that we do not aim here at defining
\emph{all} controlled operators: as this construction is the main
apparatus for the completeness result, we only focus on the operations
$s(\varphi)$, $X$, $R_X(\theta)$ and $P(\varphi)$. Other controlled
operations can then be derived if needed.

We first define in \cref{def:multicontrolled-pos} circuits
implementing regular, all-positive multi-controlled gates
$\lambda^nG$. We then present in \cref{def:multicontrolled-oriented}
how to handle positive and negative controls. In
\cref{def:multicontrolled} we finally introduce controlled gates with
controls both above and below the gate $G$.

\begin{definition}[Positively multi-controlled gates]
  \label{def:multicontrolled-pos}
  For all $n\in\mathbb{N}$ and
  $G\in \{s(\varphi),X,R_X(\theta),P(\varphi)\}$, we define a quantum
  circuit $\lambda^n G$.\footnote{Note that $G$ spans non-elementary
    gates, the constructor $\lambda$ is not considered as a gate
    operator, and the fact that the circuit $\lambda^nG$ happens to be
    related to $G$ is a corollary of its definition, as 
    discussed further in the article.}  This circuit acts on
  $n$ wires when $G=s(\varphi)$ and $n+1$ otherwise.  We define each
  circuit $\lambda^nG$ as follows.
  \begin{itemize}
  \item $\lambda^{n} R_X(\theta)$ is defined by induction:%
  \[
    \lambda^{0} R_X(\theta)\coloneqq R_X(\theta),
  \]
  \[
    \lambda^{n+1} R_X(\theta)\coloneqq\scalebox{0.8}{\tikzfig{mctrlXthetaOnlyBlack}}.
  \]
\item $\lambda^{n} P(\varphi)$ is defined by induction using
  $\lambda^nR_X(\varphi)$:
  \[
    \lambda^{0} P(\varphi)\coloneqq P(\varphi),
  \]
  \[
    \lambda^{n+1}
    P(\varphi)\coloneqq\scalebox{0.8}{\tikzfig{mctrlPphi2OnlyBlack}}.
  \]
  
\item $\lambda^{n}X$ is a simple macro:
  \[ \lambda^{n}X\coloneqq
    \scalebox{0.8}{\tikzfig{mctrlPHOnlyBlack}} \]
  
\item Finally, $\lambda^{0}s(\psi)\coloneqq s(\psi)$ and
  $\lambda^{n+1}s(\psi)\coloneqq \lambda^nP(\psi)$. 
\end{itemize}
\end{definition}

\begin{definition}[Multi-controlled gates]
  \label{def:multicontrolled-oriented}
  For any $k$-length list of booleans $x = x_1,\dots, x_k$ ($x_i\in\{0,1\}$), for any
  $G\in \{s(\varphi),X,R_X(\theta),P(\varphi)\}$ we define the quantum circuit $\Lambda^x
  G$ as
  \[
    \Lambda^{x}
    G\coloneqq\scalebox{0.8}{\tikzfig{generalControledGate}}    
  \]
  when $G\in \{X,R_X(\theta),P(\varphi)\}$, and
  \[\Lambda^{x}
    s(\varphi)\coloneqq\scalebox{0.8}{\tikzfig{generalControledGateS}}.
  \]
  where $\overline x = 1-x$, $\!\scalebox{0.8}{\tikzfig{X1}}\!=\!\scalebox{0.8}{\tikzfig{X}}$, and $\!\scalebox{0.8}{\tikzfig{X0}}\!=\!\scalebox{0.8}{\tikzfig{filcourt-s}}$.  
\end{definition}

\begin{definition}[General multi-controlled gates]
  \label{def:multicontrolled}
  Given two lists of booleans $x\in \{0,1\}^k$ and $y\in
  \{0,1\}^\ell$, if $xy$ is the concatenation of $x$ and $y$ we define the
  two quantum circuits
  \begin{itemize}
  \item for any $G\in \{X,R_X(\theta),P(\varphi)\}$
    \[\Lambda^x_y G\coloneqq \scalebox{0.8}{\tikzfig{multixy}}\]
      \item $\Lambda^x_y s(\varphi)\coloneqq \Lambda^{xy} s(\varphi)$.
  \end{itemize}
\end{definition}

One can double check using the semantics that $\Lambda^x_y G$ is
actually a multi-controlled gate:

\begin{proposition}
  For any $x,u\in \{0,1\}^k$, $y,v\in \{0,1\}^\ell$, $a\in\{0,1\}$ and
  $G\in
  \{X,R_X(\theta),P(\varphi)\}$,
  \begin{align*}
    \interp{\Lambda^x_yG}\ket{u,a,v} &= \begin{cases}\ket u\otimes
      (\interp G \ket a)\otimes \ket{v}&\text{if $uv=xy$,
      }\\\ket{u,a,v} &\text{otherwise,}\end{cases}
  \intertext{and}
    \interp{\Lambda^x_ys(\varphi)}\ket{u,v}
    &= \begin{cases}e^{i\varphi}\ket {u,v}&\text{if $uv=xy$,
      }\\\ket{u,v} &\text{otherwise.}\end{cases}
  \end{align*}
\end{proposition}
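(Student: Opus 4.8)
The plan is to establish both identities by direct computation of the interpretation $\interp{\cdot}$, using only the semantic rules of the \interp{\cdot}-definition together with the inductive definitions of \cref{def:multicontrolled-pos,def:multicontrolled-oriented,def:multicontrolled}; no equation of $\qczero$ is needed. The heart of the argument is an induction on the number $n$ of controls, performed \emph{simultaneously} for the four families $s(\varphi)$, $X$, $R_X(\theta)$ and $P(\varphi)$. This joint treatment is forced by the mutual recursion in \cref{def:multicontrolled-pos}: the step defining $\lambda^{n+1}R_X(\theta)$ invokes a lower multi-controlled $X$, the macro $\lambda^{\bullet}X$ is built from a multi-controlled $P$, and multi-controlled $P$ is built from multi-controlled $R_X$. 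I would take as induction hypothesis at level $n$ the all-positive version of the claim: for each such $G$, $\interp{\lambda^n G}\ket{w, a} = \ket w \otimes (\interp G \ket a)$ when $w = 1^n$ and $\interp{\lambda^n G}\ket{w,a} = \ket{w,a}$ otherwise, together with the phase variant $\interp{\lambda^n s(\varphi)}\ket w = e^{i\varphi}\ket w$ exactly when $w = 1^n$.

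The base case $n=0$ is immediate since $\lambda^0 G = G$. For the inductive step the only genuine computation is for $R_X$: reading off $\interp{\cdot}$ of the decomposition of $\lambda^{n+1}R_X(\theta)$ and splitting into cases according to whether the top $n$ controls are all $\ket 1$ and whether the newest control is $\ket 1$, one checks that the two half-angle rotations $R_X(\pm\theta/2)$ either add up to $\interp{R_X(\theta)}$ (all controls $\ket 1$) or cancel to the identity (otherwise). Here I would use the three elementary semantic facts $\interp{\gcnot}\ket{c,a} = \ket{c, c\oplus a}$, $\interp X \ket a = \ket{\bar a}$ (which follows from $X \coloneqq HZH$ in \cref{fig:shortcut}), and the additivity $\interp{R_X(\alpha)}\,\interp{R_X(\beta)} = \interp{R_X(\alpha+\beta)}$ with $\interp{R_X(0)} = \mathrm{id}$, both immediate from $R_X(\theta)\coloneqq HP(\theta)H$, $\interp H^2 = \mathrm{id}$ and $\interp{P(\alpha)}\interp{P(\beta)} = \interp{P(\alpha+\beta)}$. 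The controlled $P$ then follows from $R_X$ by the same bookkeeping through their defining relation, the macro $\lambda^{n+1}X$ follows from the controlled $P(\pi)$, and $\lambda^{n+1}s(\psi) = \lambda^n P(\psi)$ transfers the statement for $P$ (the control that would have acted on a target now merely contributes the phase $e^{i\psi}$), closing the simultaneous induction.

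It remains to remove the all-positive restriction. For \cref{def:multicontrolled-oriented}, the circuit $\Lambda^x G$ conjugates $\lambda^k G$ by an $X$ on each wire $i$ with $x_i = 0$; since $\interp X \ket a = \ket{\bar a}$, this conjugation turns the requirement ``qubit $i$ equals $\ket 1$'' into ``qubit $i$ equals $\ket{x_i}$'', so the gate fires exactly when the control register equals $x$. For the two-sided gates of \cref{def:multicontrolled}, the case $s(\varphi)$ is definitional, and for $G \in \{X,R_X(\theta),P(\varphi)\}$ the construction merely reindexes the wires, bringing the $\ell$ lower controls above the target by swaps whose semantics is $\ket{a,b}\mapsto\ket{b,a}$, before applying $\Lambda^{xy}G$; composing with these permutations yields precisely the condition $uv = xy$. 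The main obstacle is the inductive step for $R_X$: one must organise the layered induction so that the lower multi-controlled $X$ and $P$ are available when proving $\lambda^{n+1}R_X$, and carry out the four-way case analysis without sign or half-angle errors. Once that step is secured, the extension to mixed and two-sided controls is routine reindexing.
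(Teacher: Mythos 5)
The paper does not actually prove this proposition---it is dispatched with ``one can double check using the semantics''---and your simultaneous induction over the mutually recursive definitions of $\lambda^n s$, $\lambda^n X$, $\lambda^n R_X$ and $\lambda^n P$, followed by $X$-conjugation to handle negative controls and wire reindexing for the two-sided gates, is a correct and faithful way of carrying out exactly that omitted check. The only nit is that the recursion for $\lambda^{n+1}R_X(\theta)$ invokes elementary CNots and the level-$n$ rotation rather than a lower multi-controlled $X$, so the dependency order within a level is simply $R_X$, then $s$, then $P$, then $X$; this only simplifies the bookkeeping you describe.
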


We use the standard bullet-based graphical  notation for multi-controlled gates:
the $i^\text{th}$ control is black (resp.\ white) when $x_i=1$ (resp.\ $x_i=0$), and
the $j^\text{th}$ from the end control is black (resp.\ white) when
$y_{\ell-j+1}=1$ (resp.\ $=0$), e.g.:
\begin{align*}
  \Lambda^{11}_1 X&: \hspace{1.4ex}\scalebox{.9}{\tikzfig{C11X1}},
  &
  \Lambda^{0}_{10} R_X(\theta)&:  \scalebox{.9}{\tikzfig{C0X10}},
  \\[0.3cm]
  \Lambda^{10} P(\varphi)&: \scalebox{.9}{\tikzfig{C10P}},
  &
  \Lambda^{1\ldots1} R_X(\theta)&: \scalebox{.9}{\tikzfig{CCRX}}.
\end{align*}
To avoid ambiguity with CNot we will not use this notation in the particular case of $\Lambda^1 X$ and $\Lambda_1X$. Notice however that  
$\Lambda^1X$ is provably  equivalent to  $\cnot$:

\begin{proposition}\label{ctrlXCNot}$\qczero \vdash \Lambda^1 X = \gcnot$.
\end{proposition}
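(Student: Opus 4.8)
The plan is to unfold $\Lambda^1 X$ all the way down to the elementary generators and then rewrite the resulting circuit to $\gcnot$ using the axioms of $\qczero$. First I would observe that by \cref{def:multicontrolled-oriented}, since the single control is positive ($x_1=1$), the conjugating gate is $X^{\overline{x_1}}=X^0=\gid$, so no extra $X$ appears and $\Lambda^1 X = \lambda^1 X$. By the macro of \cref{def:multicontrolled-pos}, $\lambda^1 X$ is $\lambda^1 P(\pi)$ with a Hadamard placed on the target wire on each side. Since $P(\pi)$ is the $Z$ gate \eqref{zgate} and $X=HZH$ \eqref{xgate}, this already presents $\lambda^1X$ as an $H$-conjugated controlled-$Z$, which is morally $\gcnot$; the remaining work is to make this rewriting formal.

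Next I would unfold $\lambda^1 P(\pi)$ via its inductive definition, whose base case is built from $\lambda^0 R_X(\pi)=R_X(\pi)$ together with phase gates and CNots. Feeding this into the $H$-on-target wrapper and pushing the Hadamards through with \eqref{HH} turns the $R_X$-based building blocks into the phase-based controlled-$Z$ circuit of \eqref{CZ}. The crucial step is \eqref{CZ} itself: as the paper notes, it is the only available axiom changing the parity of the number of CNots plus swaps, which is exactly what is needed to pass between the phase-gadget presentation of the controlled-$Z$ produced by the definition and a single-$\gcnot$ presentation, so I expect \eqref{CZ} to be applied exactly once. After applying it, the surviving Hadamards cancel via \eqref{HH} and the remaining local gates collapse, leaving $\gcnot$.

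The hard part will be the scalar and global-phase bookkeeping rather than the structural gymnastics. The definition routes through $R_X$, and $R_X(\pi)$ agrees with $X$ only up to a global phase of the form $s(\cdot)$; likewise, realising a controlled-$P(\varphi)$ through $R_Z=H R_X(\varphi) H$ forces a global phase that, once placed under a control, must reappear as a genuine $P$ on the control wire. Tracking these factors and verifying that they recombine to the identity requires careful use of \eqref{S0}, \eqref{SS}, \eqref{PP}, \eqref{P0} and the commutation \eqref{XPX}; getting every phase to cancel exactly is where the derivation is delicate. By contrast, sliding CNots and Hadamards past one another is routine given the derived identities collected in \cref{usefull_eq}, and the final matching to $\gcnot$ uses only \eqref{CNot-def} together with \eqref{xgate} and \eqref{zgate}.
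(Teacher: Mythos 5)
Your plan follows the paper's own derivation essentially step for step: unfold $\Lambda^1 X$ to the $H$-conjugate of $\lambda^1 P(\pi)$, expand the latter into $R_X$ gates, CNots and a phase on the control, convert the $R_X$'s to $P$'s via \eqref{RXgate} and \eqref{HH}, cancel the scalars with \eqref{SS} and \eqref{S0}, apply \eqref{CZ} exactly once (your parity observation about why it must appear is exactly right), and cancel the outer Hadamards with \eqref{HH}. The only cosmetic differences are that the paper's bookkeeping uses the derived identity \eqref{CNotHH} rather than \eqref{PP}/\eqref{XPX} to rearrange the Hadamards, and the elementary blocks that appear are $R_X(\pm\pi/2)$ (from unfolding $\lambda^1 R_X(\pi)$) rather than $R_X(\pi)$ itself.
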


\begin{proof}
The proof is given in \cref{preuvectrlXCNot}.
\end{proof}

\subsection{Properties of multi-controlled gates}\label{propertiesmultictrl}
 
In a multi-qubit controlled gate, all control qubits play a similar
role. This can be expressed as the following commuting property:
\[
  \scalebox{0.8}{\tikzfig{swapCCRX3}} =
  \scalebox{0.8}{\tikzfig{CCRXswap3}}
\]
This property is provable in $\qczero$, considering three cases
depending whether the exchanged control qubits are either above or
below the target qubit:

\begin{proposition}\label{cor:swap}
  For any $x\in \{0,1\}^k, y\in \{0,1\}^\ell, z\in \{0,1\}^m$,
  $a,b\in \{0,1\}$ and any $G\in \{s(\psi),X,R_X(\theta),P(\varphi)\}$,
  \begin{alignat}{100}
    \qczero&\vdash& \scalebox{0.9}{\tikzfig{swapMctrl}}
    &{=} \scalebox{0.9}{\tikzfig{Mctrlswap}}
    \label{swapCCZ}
    \\
    \qczero&\vdash& \scalebox{0.9}{\tikzfig{swapMctrl-down}}
    &{=} \scalebox{0.9}{\tikzfig{Mctrlswap-down}}
    \label{swapCCZ2}
    \\
    \qczero&\vdash& \scalebox{0.9}{\tikzfig{swapMctrl-middle}}
    &{=} \scalebox{0.9}{\tikzfig{Mctrlswap-middle}}
    \label{swapCCZ3}
  \end{alignat}
\end{proposition}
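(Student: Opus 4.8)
The plan is to derive all three equations by induction on the number of controls, after two reductions that concentrate the real work into a single base computation. \emph{First reduction:} it suffices to treat all-positive controls. By \cref{def:multicontrolled-oriented} a negative control on a wire is the corresponding positive control conjugated by an $X$ on that same wire, and these conjugating gates act on a single wire each. Hence, sliding a $\gswap$ across the two exchanged control wires with naturality \eqref{naturaliteswap}, each $X$ travels with its own wire (a negative control at position $i$ becomes a negative control at position $i+1$), so the general statement follows from the all-positive one by pre- and post-composing with the appropriate $X$ gates. \emph{Second reduction:} it suffices to treat $G = R_X(\theta)$, since $\lambda^n P(\varphi)$ is built from $\lambda^n R_X(\varphi)$, $\lambda^n X$ is a macro of the same shape, and $\lambda^{n+1}s(\psi) = \lambda^n P(\psi)$ with $\Lambda^x_y s(\varphi)=\Lambda^{xy}s(\varphi)$; thus a swap valid for the $R_X$-block propagates to $P$, $X$ and $s$ by unfolding their definitions and applying the $R_X$-case together with purely structural moves.

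\emph{Core lemma.} The heart of the argument is that in $\lambda^n R_X(\theta)$ with all-positive controls, two adjacent controls can be exchanged in $\qczero$. I would prove this by induction on $n$ using the recursive definition of \cref{def:multicontrolled-pos}, which exposes CNots together with smaller blocks $\lambda^{n} R_X(\pm\theta/2)$. When the two exchanged controls both lie inside one such inner block, the induction hypothesis closes the case after re-routing the $\gswap$ through the CNot skeleton using \eqref{commutationCNothaut}, the lifting identity \eqref{CNotlift}, naturality \eqref{naturaliteswap} and \eqref{doubleswap}. The genuinely new case is when the exchanged pair straddles the control peeled off by the outermost recursion; there I would invoke the two dedicated commutation axioms \eqref{commutationdecompctrlPRY} and \eqref{commutationdecompctrlRXpasenface}, which are exactly tailored to commute a control past the block boundary and expose the symmetry that is invisible in the syntax.

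\emph{The three positional cases.} Equation \eqref{swapCCZ} (both exchanged controls above the target) is the direct output of the core lemma, since after the first reduction the above-controls form a positive $\lambda$-block. Equation \eqref{swapCCZ2} (both below) follows symmetrically, the below-controls being governed by the construction of \cref{def:multicontrolled}. The middle case \eqref{swapCCZ3}, where the control immediately above the target is exchanged with the one immediately below it, is the delicate one because the $\gswap$ wire must cross the elementary gate sitting on the target; here I would unfold \cref{def:multicontrolled} to bring the below-control onto the same side as the above-controls, together with the phase-gadget flips \eqref{CNotPCNotreversible}--\eqref{NotCRXNotCreversible}, thereby reducing \eqref{swapCCZ3} to an instance of \eqref{swapCCZ} (or \eqref{swapCCZ2}).

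The main obstacle will be the straddling subcase of the core lemma: when the outermost recursion of $\lambda^{n+1}R_X(\theta)$ acts precisely on the two controls being exchanged, the required symmetry is not apparent from the decomposition, and one must use \eqref{commutationdecompctrlPRY} and \eqref{commutationdecompctrlRXpasenface} to reshape the $\pm\theta/2$-halves and the CNot conjugations into a form symmetric in the two controls. Tracking the signs of the half-angles and the CNot orientations through this rearrangement is where the bookkeeping is heaviest.
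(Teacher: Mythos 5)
Your overall architecture (reduce to all-positive controls, reduce to $G=R_X(\theta)$, then an induction on the recursive decomposition whose key difficulty is the ``straddling'' case) matches the paper's strategy for the $R_X$ part: the paper isolates exactly that difficulty as the commutation property \eqref{eq:comRX}, proved by induction with an explicit base-case computation from \eqref{CNotPCNotreversible} and \eqref{commutationPctrl}, and then obtains \eqref{swapCCZ}--\eqref{swapCCZ3} for $R_X(\theta)$ by induction. (Minor point: the paper does not need axioms \eqref{commutationdecompctrlPRY} and \eqref{commutationdecompctrlRXpasenface} here; those are reserved for \cref{commctrl} and \cref{commctrlpasenface}. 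Your instinct to reach for them in the straddling case suggests you have not quite identified the right key lemma, but this is a matter of which derivation you write down, not an obstruction.)

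The genuine gap is your second reduction. The cases $G\in\{s(\psi),P(\varphi),X\}$ do \emph{not} follow from the $R_X$ case ``by unfolding their definitions and applying the $R_X$-case together with purely structural moves.'' By \cref{def:multicontrolled-pos}, $\lambda^{n+1}s(\psi)=\lambda^{n}P(\psi)$, so the last control wire of a multi-controlled scalar \emph{is the target wire} of the underlying multi-controlled phase. Hence in the base case of the induction for $G=s(\psi)$ (when $z=\epsilon$, i.e.\ the two exchanged controls are the last two), you are not exchanging two controls of an $R_X$-block at all: you are exchanging a control with the target of a $P$-gate. This requires \cref{prop:CP} (Equation \eqref{phasemobile}), the statement that the target of a controlled phase is interchangeable with a control, which is a substantive fact: the paper proves it by a mutual induction intertwined with \eqref{swapCCZ}--\eqref{swapCCZ3}, using the additional lemmas \cref{commctrlciblesdistinctes} and \cref{CZRXCZreversiblecontrole}. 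The same issue propagates to $P$ (since $\lambda^{n+1}P(\varphi)$ contains a $\lambda^{n}s(\varphi/2)$ acting only on the control wires) and to $X$ (a macro around $P(\pi)$). Without \eqref{phasemobile} your reduction leaves the base case of the $s$/$P$ induction unproved, so the proposal as written does not close.
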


A peculiar property of controlled phase gates (and hence controlled
scalars) is that the target qubit is actually equivalent to the
control qubits, e.g.:
\[ \scalebox{0.8}{\tikzfig{CCP3}} = \scalebox{0.8}{\tikzfig{CCP3-}}\]

This property is also provable in $\qczero$:

\begin{proposition}\label{prop:CP}
  For any $x\in \{0,1\}^k, y\in \{0,1\}^\ell$,
  \begin{equation}
    \qczero\vdash\Lambda^x_{y1} P(\varphi)=\Lambda^{x1y}P(\varphi)
    \label{phasemobile}
  \end{equation}
\end{proposition}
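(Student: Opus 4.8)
The plan is to use the fact that, for a phase gate, the target wire behaves exactly like a positive control: $\interp{\Lambda^x_y P(\varphi)}$ applies the scalar $e^{i\varphi}$ precisely when every control matches its pattern and the target is in state $\ket 1$. Consequently $\Lambda^x_{y1}P(\varphi)$ and $\Lambda^{x1y}P(\varphi)$ denote the \emph{same} diagonal operator: on both sides the phase fires exactly when the $x$-wires read $x$, the $y$-wires read $y$, and the two remaining wires---the target and the extra positive control---are both in state $\ket 1$; the only difference is which of these last two wires is declared to be the target. The first move is to reduce to the all-positive case. By \cref{def:multicontrolled-oriented} each negative control is obtained by conjugating its own wire with $X$ gates; these dressings decorate the same logical controls on both sides of the claimed equation and commute with the repositioning of the target, so stripping them leaves the all-positive instance, in which target and controls are fully interchangeable.

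In the all-positive case, the content is that the presentation of a totally symmetric (on its active wires) controlled phase does not depend on which wire is singled out as the target. I would obtain this from two interchange principles. The first is the commuting property of controls already available as \cref{cor:swap}: two adjacent controls may be exchanged, irrespective of whether they lie above or below the target. The second, which is the heart of the proof, is the control--target symmetry of a single controlled phase, $\Lambda_1 P(\varphi)=\Lambda^1 P(\varphi)$. I would establish it by rewriting $\Lambda^1 P(\varphi)$ (via \cref{def:multicontrolled-pos,def:multicontrolled,ctrlXCNot}) as $\bigl(P(\varphi/2)\otimes P(\varphi/2)\bigr)$ composed with the phase gadget $\cnot\circ(\mathrm{id}\otimes P(-\varphi/2))\circ\cnot$; the two single-qubit phases sit symmetrically on the two wires, while the phase gadget is invariant under the vertical flip exchanging its control and target by \cref{CNotPCNotreversible}. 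Hence the whole circuit is unchanged when the two wires swap roles, which is exactly $\Lambda_1 P(\varphi)=\Lambda^1 P(\varphi)$.

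It then remains to assemble these local moves. The permutation carrying the wire ordering of $\Lambda^x_{y1}P(\varphi)$ to that of $\Lambda^{x1y}P(\varphi)$ is the transposition swapping the target with the extra positive control; I would factor it into adjacent transpositions, discharging each control--control transposition by \cref{cor:swap} and each control--target transposition by the base symmetry above, suitably propagated past the spectator controls (using the commutation \cref{commutationPctrl} and induction on the number of controls via the recursive definitions of \cref{def:multicontrolled-pos}). The main obstacle is precisely this last point: the base symmetry is proved for a bare two-wire controlled phase, and one must verify that it still holds \emph{underneath} an arbitrary stack of controls, i.e.\ that the phase-gadget flip of \cref{CNotPCNotreversible} lifts coherently through the inductive construction of $\lambda^n P$. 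Managing this induction---and keeping the bookkeeping of wire positions consistent while negative controls are reinstated at the end---is where the real work lies; the rest is routine rewriting.
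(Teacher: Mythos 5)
Your overall strategy coincides with the paper's: reduce to all-positive controls via \cref{XX}, establish the two-wire control--target symmetry $\Lambda^1P(\varphi)=\Lambda_1P(\varphi)$ by unfolding the definition into two local phases plus a phase gadget and flipping the gadget with \cref{CNotPCNotreversible}, and then lift this through the recursive construction of $\lambda^nP$, handling control--control exchanges with \cref{cor:swap}. The base case you describe is exactly the paper's computation in \cref{preuvesswapsmultictrl}.

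Two points, however, where your plan is thinner than what is actually needed. First, the inductive lift that you yourself flag as ``where the real work lies'' is the genuine content of the proof, and \cref{commutationPctrl} alone will not carry it: what is required is a \emph{controlled} version of the phase-gadget flip, i.e.\ that a $\Lambda^xR_X$ conjugated by CNots on the last two wires can be flipped vertically (\cref{CZRXCZreversiblecontrole}), together with the commutation of controlled rotations having distinct targets (\cref{commctrlciblesdistinctes}); both are proved by separate inductions resting on \cref{eq:comRX}. Without these your induction step does not close. Second, be careful about circularity when you invoke \cref{cor:swap} ``as already available'': in the paper only the $G=R_X(\theta)$ case of \cref{cor:swap} is proved independently (by induction on \cref{eq:comRX}); the $s$, $P$ and $X$ cases are derived \emph{from} the $y=\epsilon$ instance of \cref{phasemobile}, and the general instance of \cref{phasemobile} in turn uses those cases. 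So your control--control transpositions inside a $\Lambda P$ gate must be discharged using only the $R_X$ case (which suffices, since $\lambda^nP$ is defined through $\lambda^nR_X$), or the whole argument has to be organised as a simultaneous induction as the paper does.
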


\begin{proof}[Proof of \cref{cor:swap} and \cref{prop:CP}] 
  The two properties are proved at once.  The proof relies on the
  following commutation property which can be proved by induction (see
  Appendix \ref{sec:proofcomRX}).
  \begin{multline}
    \label{eq:comRX}
    \qczero\vdash \scalebox{0.8}{\tikzfig{com-L}} \\=  \scalebox{0.8}{\tikzfig{com-R}}
  \end{multline}
  The proof of Equations~\eqref{swapCCZ}-\eqref{swapCCZ3} for $G=R_X(\theta)$ follows by
  induction. We then prove \cref{phasemobile} which requires a few
  technical developments.  The proof of Eq.~\eqref{swapCCZ}-\eqref{swapCCZ3} for the other
  gates then follows from the $R_X(\theta)$ case and
  \cref{phasemobile} (see \cref{preuvesswapsmultictrl}).
\end{proof}

The gates $P(\varphi)$ form a monoid, i.e.
$P(\varphi+\varphi') = P(\varphi)\circ P(\varphi')$ (\cref{PP}) and
$P(0)=\gidspace$ (\cref{P0}). Notice that $R_X(\theta)$ and $s(\varphi)$ also form
monoids. It is provable in $\qczero$ that their multi-controlled
versions enjoy the same property:

\begin{proposition}\label{prop:sum}
For any $x\in \{0,1\}^k$, $y\in \{0,1\}^\ell$,
\begin{align*}
  \qczero &\vdash \Lambda^{x}_y R_X(\theta')\circ \Lambda^{x}_y
  R_X(\theta) =  \Lambda^{x}_y R_X(\theta+\theta'),
  \\
  \qczero &\vdash \Lambda^{x}_y P(\varphi')\circ \Lambda^{x}_y
  P(\varphi)=  \Lambda^{x}_y P(\varphi+\varphi'),
  \\
  \qczero &\vdash \Lambda^{x}_y s(\varphi')\circ \Lambda^{x}_y
  s(\varphi)=  \Lambda^{x}_y s(\varphi+\varphi'),
  \\
  \qczero &\vdash \Lambda^{x}_y R_X(0)=id_{k+\ell+1},
  \\
  \qczero &\vdash \Lambda^{x}_y P(0)= id_{k+\ell+1},
  \\
  \qczero &\vdash \Lambda^{x}_y s(0)= id_{k+\ell},
\end{align*}
where $id_{k}$ is defined as in \cref{fig:axiom}. 
\end{proposition}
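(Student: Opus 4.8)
The plan is to prove all six identities by reducing each to a single core statement: that the positively multi-controlled gates are additive in their angle, namely $\qczero\vdash \lambda^{n}R_X(\theta')\circ\lambda^{n}R_X(\theta)=\lambda^{n}R_X(\theta+\theta')$, together with the unit $\qczero\vdash\lambda^{n}R_X(0)=id$. This is the natural core because in every construction of \cref{def:multicontrolled-pos}, \cref{def:multicontrolled-oriented} and \cref{def:multicontrolled}, the angle or phase parameter enters only through underlying $R_X$ (respectively $P$ or $s$) boxes, whereas the surrounding multi-controlled-$X$, Hadamard and swap gadgets are parameter-free.

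First I would establish the core by induction on $n$. The base case $n=0$ is exactly \eqref{RXRX} for additivity and \eqref{RX0} for the unit. For the inductive step, I would substitute the recursive definition of $\lambda^{n+1}R_X$ from \cref{def:multicontrolled-pos} into the product $\lambda^{n+1}R_X(\theta')\circ\lambda^{n+1}R_X(\theta)$. Each factor decomposes into lower-order controlled-$R_X$ blocks carrying half-angles $\pm\theta/2$ (respectively $\pm\theta'/2$), interleaved with parameter-free multi-controlled-$X$ gadgets. The crux is to push the $R_X$-blocks of the left factor across the parameter-free gadget of the right factor using the commutation lemma \eqref{eq:comRX}, until each block carrying a $\pm\theta/2$ angle sits next to its $\pm\theta'/2$ counterpart of the same order; these then merge into a block with angle $\pm(\theta+\theta')/2$ by the induction hypothesis (grounded in \eqref{RXRX}), while the two multi-controlled-$X$ gadgets that have become adjacent cancel by involutivity of multi-controlled $X$ (provable in $\qczero$, generalising \eqref{CNotCNot}). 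What remains is precisely $\lambda^{n+1}R_X(\theta+\theta')$. The unit $\lambda^{n+1}R_X(0)=id$ is the same computation specialised to $\theta=\theta'=0$: every $R_X$-block collapses to an identity by the induction hypothesis and \eqref{RX0}, and the two surviving multi-controlled-$X$ gadgets, now adjacent, cancel.

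The remaining cases are transfers. Additivity and the unit for $\lambda^{n}P$ follow from the $R_X$ core by the same unfolding, since $\lambda^{n}P(\varphi)$ is built from $\lambda^{n}R_X(\varphi)$ with only parameter-free surrounding gates (the base $n=0$ being \eqref{PP} and \eqref{P0}); then $\lambda^{n}s$ reduces to $\lambda^{n-1}P$ for $n\ge 1$ and to \eqref{SS}, \eqref{S0} for $n=0$. For signed controls, $\Lambda^{x}G$ is $\lambda^{|x|}G$ conjugated by $X$-gates on the negative wires (\cref{def:multicontrolled-oriented}), so composing two copies lets the inner conjugating $X$'s cancel by \eqref{XX}, reducing to the positive case, and conjugating an identity yields an identity. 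Finally, two-sided controls $\Lambda^{x}_{y}G$ reduce to the one-sided case via \cref{def:multicontrolled}: for $G=s(\varphi)$ this is immediate since $\Lambda^{x}_{y}s=\Lambda^{xy}s$, and for the other gates the parameter-free wiring cancels between the two factors (its correctness resting on \cref{cor:swap}, and on \cref{prop:CP} where a phase must cross the target). I expect the inductive step for $\lambda^{n}R_X$ — in particular the bookkeeping of which $R_X$-blocks to commute past which multi-controlled-$X$ gadget via \eqref{eq:comRX} so that exactly the matching half-angle pairs become adjacent — to be the main obstacle; once the core identity is available, all the transfer steps are routine.
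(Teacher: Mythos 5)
Your treatment of the core $R_X$ case matches the paper's proof: induction on the number of controls, unfolding the recursive definition of $\lambda^{n+1}R_X(\theta)$, commuting the half-angle blocks past the parameter-free CNot-gadgets via \eqref{eq:comRX}, and merging matching blocks by the induction hypothesis grounded in \eqref{RXRX} and \eqref{RX0}. The reductions for signed controls (cancelling the conjugating $X$'s via \eqref{XX}) and for two-sided controls are likewise in line with what the paper does.

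The gap is in the transfer from $R_X$ to $P$. You assert that $\lambda^{n}P(\varphi)$ is built from $\lambda^{n}R_X(\varphi)$ ``with only parameter-free surrounding gates'', but that is not what \cref{def:multicontrolled-pos} (see \cref{lem:Lambda-P}) gives you: $\lambda^{n+1}P(\varphi)$ is the composition of a multi-controlled rotation $\lambda^{n+1}R_X(\varphi)$ with a \emph{parametrized} multi-controlled scalar $\lambda^{n+1}s(\varphi/2)=\lambda^{n}P(\varphi/2)$ acting on the control register. Hence when you compose $\lambda^{n}P(\varphi')\circ\lambda^{n}P(\varphi)$ and try to pair matching pieces, the inner two blocks are a scalar block from one factor and an $R_X$ block from the other, and you must commute them. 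This is not an instance of \eqref{eq:comRX}; it is the separate statement that a multi-controlled phase on the controls commutes with a multi-controlled $R_X$ having more controls (\cref{commctrlphaseenhaut} in the paper), and your plan never supplies it. Worse, that commutation lemma cannot simply be proved beforehand: its own induction step for $k\ge 2$ invokes the additivity and unit statements of the present proposition at $k-2$, which is why the paper runs a mutual induction of the two results. As written, your argument for the $P$ case (and therefore the $s$ case, which reduces to it) does not go through.
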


\begin{proof}
First, proving that multi-controlled gates with angle $0$ are equivalent to the identity is straightforward by induction.

To prove the rest of the proposition, we first prove that
  $\qczero \vdash \Lambda^{1..1}R_X(\theta')\circ
  \Lambda^{1..1}R_X(\theta)= \Lambda^{1..1} R_X(\theta+\theta')$. The proof is by induction:
  we unfold the two multi-controlled gates, use \cref{eq:comRX} to put
  the multi-controlled gates with angles $\theta/2$ and $\theta'/2$
  side by side, and merge them using the induction hypothesis. We use
  again \cref{eq:comRX} to allow the combination of the
  multi-controlled gates with angle $-\theta/2$ and $-\theta'/2$,
  closing the case.
  
  The cases with more general controls are derived from this one using \cref{def:multicontrolled-oriented,def:multicontrolled}. The cases of $P$ and $s$ are derived from the $R_X$ case using \cref{def:multicontrolled-pos} and an ancillary lemma stating that a multi-controlled phase commutes with the controls of another multi-controlled gate. The details of the proof are given in \cref{proofpropsum}.
\end{proof}
 
\begin{remark}
  Note that \cref{prop:sum} does not imply the periodicity of
  controlled gates. The latter is proven in \cref{prop:period} with
  the help of the rules of \cref{fig:euler}.
\end{remark}
Combining a control and anti-control on the same qubit makes the
evolution independent of this qubit, as in the following example in
which the evolution is independent of the second
qubit:\footnote{Notice that in the above example we implicitly use
  \cref{cor:swap} to swap the first two qubits and apply
  \cref{prop:comb}. As a consequence, the resulting multi-controlled gate
  acts on non-adjacent qubits. Similarly to the \cnot{} case (see
  \cref{CNot-def,NotC-def}), we use some syntactic sugar to represent such
  multi-controlled gates acting on non-adjacent qubits.}
\[
  \scalebox{0.8}{\tikzfig{com-fu-L}} =
  \scalebox{0.8}{\tikzfig{com-fu-R}}
\]
Such simplifications can be derived in $\qczero$.
\begin{proposition}\label{prop:comb}
  For all bitstrings $x\in \{0,1\}^k$, $y\in \{0,1\}^\ell$, and for all
  $G\in\{s(\varphi),X,R_X(\theta),P(\varphi)\}$,
  \[
    \qczero \vdash \Lambda^{0x}_y G\circ \Lambda^{1x}_{y} {G} =
    \gidspace\otimes\Lambda^{x}_{y} {G}.
  \]
\end{proposition}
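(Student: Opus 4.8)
The plan is to peel off the distinguished first wire, reduce everything to an all‑positive, one‑sided statement about $R_X$, and prove that core identity by induction on the number of controls.

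First I would record that $\Lambda^{0x}_y G$ and $\Lambda^{1x}_y G$ differ only in the orientation of the control carried by the first wire. By \cref{def:multicontrolled-oriented,def:multicontrolled} this orientation is implemented by conjugating with an $X$ gate on that wire, which is disjoint from the rest of the diagram; hence, using the structural equations of \cref{fig:axiom} together with \eqref{XX},
\[
  \qczero \vdash \Lambda^{0x}_y G = (X\otimes id)\circ \Lambda^{1x}_y G\circ (X\otimes id).
\]
Substituting this into the left‑hand side of the proposition, it remains to prove $\qczero \vdash (X\otimes id)\circ\Lambda^{1x}_y G\circ(X\otimes id)\circ\Lambda^{1x}_y G = id\otimes\Lambda^{x}_y G$. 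The remaining negative controls (the zeros appearing in $x$ and $y$) are likewise $X$‑conjugations on their own wires; being disjoint from the first wire they commute with the outer $X\otimes id$ and cancel in pairs by \eqref{XX}, so it suffices to treat $x=1\cdots1$ and $y=1\cdots1$. Finally the below‑controls $y$ can be brought above the target with \cref{cor:swap} while leaving the first wire fixed (and for $G=s(\varphi)$ one simply uses $\Lambda^x_y s=\Lambda^{xy}s$ from \cref{def:multicontrolled}). This reduces the proposition to the one‑sided, all‑positive core
\begin{equation}
  \qczero \vdash (X\otimes id)\circ\lambda^{n+1} G\circ(X\otimes id)\circ\lambda^{n+1} G = id\otimes\lambda^{n} G. \tag{$\star$}
\end{equation}

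Next I would reduce $(\star)$ across the four gates to the single case $G=R_X(\theta)$. Indeed, by \cref{def:multicontrolled-pos} the circuits $\lambda^n P(\varphi)$ and $\lambda^n X$ are obtained from $\lambda^n R_X(\cdot)$ by conjugating the target with a Hadamard (and, for $X$, a fixed phase), while $\lambda^{n+1}s(\psi)=\lambda^n P(\psi)$ is a mere index shift. Since the target Hadamard acts on a wire disjoint from the first wire, it commutes with the control‑toggling $X\otimes id$, so conjugating $(\star)$ for $R_X$ by it yields $(\star)$ for $P$ and $X$, and the scalar case follows by the index shift. Thus the whole proposition rests on $(\star)$ for $R_X(\theta)$.

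Finally I would prove $(\star)$ for $R_X(\theta)$ by induction on $n$, unfolding the inductive definition of $\lambda^{n+1}R_X(\theta)$ from \cref{def:multicontrolled-pos}. In the base case $n=0$ one unfolds $\lambda^1 R_X(\theta)$, pushes the two $X$ gates sitting on the control wire through the intervening CNots using \eqref{CNotX} and \eqref{commutationXCNot}, which lines up the two half‑angle $R_X$ rotations on the target; these merge by \eqref{RXRX} and the spurious $X$ gates cancel by \eqref{XX}, leaving $R_X(\theta)$ on the target and the identity on the first wire. In the inductive step one unfolds both copies of $\lambda^{n+1}R_X(\theta)$, commutes the control‑toggling $X$ through the CNot ladder (again \eqref{CNotX} and \eqref{commutationXCNot}), uses the commutation law \eqref{eq:comRX} to bring the inner $\lambda^{n}R_X(\pm\theta/2)$ blocks of the two copies side by side, and merges them via the induction hypothesis and \cref{prop:sum}. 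The main obstacle is precisely this bookkeeping: tracking how the toggling $X$ propagates through the CNot ladder of the recursive construction and aligning the inner rotation blocks so that the half angles recombine into $R_X(\theta)$ rather than cancelling. Everything else is routine rewriting with the structural equations of \cref{fig:axiom} and with \eqref{XX}, \eqref{RX0} and \eqref{RXRX}.
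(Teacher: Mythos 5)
Your high-level strategy (strip the statement down to all-positive controls with $y=\epsilon$ and to the single case $G=R_X(\theta)$, then attack the $R_X$ core) matches the paper's, but the proposal has two concrete gaps.

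First, the reduction of $P(\varphi)$ (and hence of $X$ and $s$) to $R_X$ is not a mere Hadamard conjugation of the target. By \cref{def:multicontrolled-pos} (see also \cref{lem:Lambda-P}), $\lambda^{n+1}P(\varphi)$ contains, besides an $H$-conjugated $\lambda^{n}R_X(\varphi)$ block, an extra multi-controlled phase $\lambda^{n+1}s(\varphi/2)=\lambda^{n}P(\varphi/2)$ acting on the control wires. That factor does not disappear under your conjugation argument; the paper handles the $P$ case by a separate induction together with \cref{commctrlphaseenhaut} to commute the phase part out of the way.

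Second, and more seriously, the core induction for $R_X$ cannot close with the rewrite rules you cite. Equations \eqref{CNotX}, \eqref{commutationXCNot}, \eqref{XX}, \eqref{RX0} and \eqref{RXRX} only move and cancel $X$ gates or add $R_X$ angles; none of them negates an $R_X$ angle. But for $(X\otimes id)\circ\lambda^{n+1}R_X(\theta)\circ(X\otimes id)\circ\lambda^{n+1}R_X(\theta)$ to rewrite to $id\otimes\lambda^{n}R_X(\theta)$ rather than to the identity, the anti-control must flip the signs of the inner half-angle rotations — otherwise the $\pm\theta/2$ blocks of the two copies pair up and cancel via \cref{prop:sum} instead of recombining into $\theta$. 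The sign flip comes from the anticommutation \eqref{XPX} (equivalently $ZR_X(\alpha)=R_X(-\alpha)Z$ after $H$-conjugation), lifted to multi-controls; this is exactly what the paper's \cref{decompctrlblancRX} packages: $\Lambda^{0x}R_X(\theta)$ equals the positive-control decomposition with the half-angles swapped. Once that lemma is available no fresh induction is needed — the paper's proof of the $R_X$ case is a short direct computation using \cref{lem:Lambda-Rx}, \cref{decompctrlblancRX}, one application of \eqref{eq:comRX}, \eqref{CNotCNot} and \cref{prop:sum}. Note also that in your inductive step the $X$'s generated on the target wire would have to be commuted past the inner multi-controlled $R_X(\pm\theta/2)$ blocks; that commutation is \cref{symmetriesemicontrolee}, which the paper derives \emph{from} \cref{prop:comb}, so invoking it here would be circular.
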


\begin{proof}
  Without loss of generality, we assume $y$ as the empty string $\epsilon$ and
  $G = R_X(\theta)$, as it can derive the other cases. The proof is by induction: we unfold the multi-controlled and
  multi-anti-controlled gates. We can then move the $X$ gate through $H$
  and $\cnot$ gates due to the anti-control, changing the sign of an
  $R_X$ rotation from $-\theta/2$ to $\theta/2$.  The rest of the
  proof is similar to the one of \cref{prop:sum}, except that two
  $R_X$ gates cancel out, leading to the identity on the first qubit
  and the desired multi-controlled gate on the second
  one. The details of the proof are given in \cref{preuvepropcomb}.
\end{proof}

\cref{prop:comb} shows how control and anti-control can be combined on the
first qubit of a multi-controlled gate. Note, however, that it can be
generalised to any control qubit thanks to \cref{cor:swap}.

Another useful  property of multi-controlled gates is that they commute when
there is a control and anti-control on the same qubit, as in the following
example in which their controls differ on the third (and last) qubit:
\[
  \scalebox{0.8}{\tikzfig{com-ex-L}} =\scalebox{0.8}{
    \tikzfig{com-ex-R}}
\]

When the target qubit is the same, such a commutation property can be
derived in $\qczero$, using in particular
\cref{commutationdecompctrlPRY}.

\begin{proposition}\label{commctrl}
  For any $x,x'\in \{0,1\}^k$, $y,y'\in \{0,1\}^\ell$, and
  $G,G'\in \{X,R_X(\theta),P(\varphi)\}$, if
  $xy\neq x'y'$\footnote{$xy\neq x'y'$ iff
    $\exists i, x_i\neq x'_i \vee y_i\neq y'_i$.} then
  $$\qczero \vdash \Lambda^x_y G\circ \Lambda^{x'}_{y'} {G'} =
  \Lambda^{x'}_{y'} {G'}\circ \Lambda^x_y G \, .$$
\end{proposition}

\begin{proof}
  The proof relies on a generalisation of \cref{eq:comRX}, and follows
  by an induction argument whose base case can be derived thanks to
  \cref{commutationdecompctrlPRY}. The details of the proof are given in \cref{preuvecommctrl}.
\end{proof}

Controlled and anti-controlled gates also commute when the target
qubits are not the same in both gates, as in:
\[
  \scalebox{0.8}{\tikzfig{com-ex-L-diff}} =\scalebox{0.8}{
    \tikzfig{com-ex-R-diff}}.
\]
This property can also be derived in $\qczero$, using in particular
\cref{commutationdecompctrlRXpasenface}:

\begin{proposition}\label{commctrlpasenface}
  For any $a,b\in \{0,1\}$, $x,x'\in \{0,1\}^k$,
  $y,y'\in \{0,1\}^\ell$, $z,z'\in \{0,1\}^m$ and
  $G,G'\in \{X,R_X(\theta),P(\varphi)\}$, if $xyz\neq x'y'z'$ then
   $$\qczero \vdash \Lambda^x_{yaz} G\circ \Lambda^{x'by'}_{z'} {G'} =
   \Lambda^{x'by'}_{z'} {G'}\circ \Lambda^x_{yaz} G$$
\end{proposition}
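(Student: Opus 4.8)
The plan is to follow the same template as the proof of \cref{commctrl}, adapted to the situation where the two gates act on \emph{different} target wires. The guiding observation is semantic: the hypothesis $xyz\neq x'y'z'$ says that among the wires common to both control sets---and distinct from the two target wires---there is at least one on which the two gates impose opposite control values. Hence on every computational basis state at most one of the two gates is active, and the two operators commute. The whole difficulty is to realise this argument syntactically in $\qczero$.

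I would first perform three reductions. (i) By \cref{def:multicontrolled-oriented} every negative control is a positive control conjugated by two $X$ gates; since those $X$ gates sit on wires that are neither of the two targets, they can be pulled outside the whole equality, so it is enough to treat positive controls while keeping track of the bits through the hypothesis $xyz\neq x'y'z'$. (ii) Using \cref{def:multicontrolled-pos} together with \cref{prop:CP}, I would reduce $G,G'\in\{X,P(\varphi),s(\varphi)\}$ to the single case $G=G'=R_X(\theta)$, the remaining gates being assembled from controlled $R_X$'s and phases that pass through controls on foreign wires. (iii) Because the swap is an involution and because conjugating a multi-controlled gate by a wire permutation yields another multi-controlled gate (\cref{cor:swap}), I may conjugate the whole identity by a permutation of wires; this lets me normalise the wire order so that one \emph{conflicting} control is placed immediately next to the two adjacent targets and the remaining \emph{agreeing} controls are grouped together.

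The core is then an induction on the number $k+\ell+m$ of control wires. In the base case a single conflicting control remains, in the not-facing configuration, and the required identity is exactly \cref{commutationdecompctrlRXpasenface} (after at most one rewriting step). For the inductive step I would unfold the outermost control of each gate with \cref{def:multicontrolled-pos}, obtaining sub-gates with one fewer control; a generalisation of the commutation lemma \cref{eq:comRX}---now formulated for two controlled $R_X$'s whose targets lie on distinct wires---then lets me slide the pieces of one gate past those of the other, apply the induction hypothesis to the smaller gates, and recombine.

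The hard part is establishing this generalised form of \cref{eq:comRX} in the not-facing configuration and carrying out the bookkeeping so that, after the \cref{cor:swap} normalisation, the base case reduces precisely to \cref{commutationdecompctrlRXpasenface}. The subtle point---absent from \cref{commctrl}---is that here the target wire of each gate carries a control of the \emph{other} gate (the bits $a$ and $b$), so sliding the two decompositions past one another must respect these extra controls; this is exactly the configuration that \cref{commutationdecompctrlRXpasenface} is designed to handle. The detailed derivation would be deferred to the appendix.
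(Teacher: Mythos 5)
Your plan matches the paper's proof in all essentials: the same preliminary reductions (absorbing negative controls via $X$-conjugation, reducing the $X$ and $P$ cases to the $R_X$ case with the help of \cref{prop:CP} and the phase-commutation lemmas, and normalising the wire configuration via \cref{cor:swap} and \cref{symmetriesemicontrolee}), followed by an induction on the number of controls whose base case is precisely \cref{commutationdecompctrlRXpasenface} and whose inductive step slides the unfolded decompositions past each other using a disjoint-target generalisation of \cref{eq:comRX}. This is the paper's approach, so there is nothing to flag beyond the deferred bookkeeping you already acknowledge.
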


\begin{proof}
  The proof is also based on the generalisation of \cref{eq:comRX},
  using an inductive argument whose base case can be derived thanks to
  \cref{commutationdecompctrlRXpasenface}. The details of the proof are given in \cref{preuvecommctrlpasenface}.
\end{proof}

\subsection{Euler angles and Periodicity}\label{Eulerandperiod}

$\qczero$ is not complete. In particular equations based on Euler
angles, which require non-trivial calculations on the angles, cannot
be derived. As a consequence we add to the equational theory the three rules shown in \cref{fig:euler},
leading to the equational theory $\textup{QC}$. We write $\textup{QC}\vdash C_1=C_2$ when $C_1$ can be
rewritten into $C_2$ using equations of \cref{eq:qc0} and
\cref{fig:euler} (together with the deformation rules).

\begin{figure*}[htb]
\centering
\scalebox{.83}{\fbox{\begin{minipage}{1.187\textwidth}
\newcommand{\nspazer}{1em}
\vspace{\nspazer}
\begin{equation}\label{EulerH}\tag{p}\begin{array}{rcl}\tikzfig{H}&=&\tikzfig{EulerHv2}\end{array}\end{equation}
\vspace{\nspazer}
\begin{equation}\label{Euler2d}\tag{q}\begin{array}{rcl}\tikzfig{RXPRXalphas}&=&\tikzfig{RXPRXbetas}\end{array}\end{equation}
\vspace{\nspazer}
\vspace{\nspazer}
\begin{equation}\label{Euler3dmulticontrolled}\tag{r}\begin{array}{rcl}\tikzfig{Euler3dleft-multicontrolled-simp-gammas}&=&\tikzfig{Euler3dright-multicontrolled-simp-deltas}\end{array}\end{equation}
\vspace{\nspazer}
\end{minipage}}}
\caption{\textbf{Non-structural equations.} In Equations
  (\ref{Euler2d}) and (\ref{Euler3dmulticontrolled}) the LHS circuit
  has arbitrary parameters which uniquely determine the parameters of
  the RHS circuit.  \cref{Euler2d} is nothing but the well-known
  Euler-decomposition rule which states that any unitary can be
  decomposed, up to a global phase, into basic $X$- and
  $Z$-rotations. Thus for any $\alpha_i\in \mathbb R$, there exist
  $\beta_j\in [0,2\pi)$ such that \cref{Euler2d} is sound. We make the
  angles $\beta_j$ unique by assuming that $\beta_1 \in [0,\pi)$,
  $\beta_2\in[0,2\pi)$ and if $\beta_2\in\{0,\pi\}$ then $\beta_1=0$.
  \cref{EulerH} is the particular Euler decomposition of $H$.
  \cref{Euler3dmulticontrolled} reads as follows: the equation is
  defined for any $n\ge 2$ input qubits, in such a way that all gates
  are controlled by the first $n-2$ qubits.
  \cref{Euler3dmulticontrolled} can be seen as a generalisation of the
  Euler rule, using multi-controlled gates.  Similarly to
  \cref{Euler2d}, for any $\gamma_i\in\mathbb R$, there exist
  $\delta_j\in[0,2\pi)$ such that \cref{Euler3dmulticontrolled} is
  sound. We can ensure that the angles $\delta_j$ are uniquely
  determined by assuming that $\delta_1,\delta_2,\delta_5,\in[0,\pi)$,
  $\delta_3,\delta_4,\delta_6\in[0,2\pi)$, if $\delta_3=0$
  then $\delta_2=0$, if $\delta_3=\pi$ then $\delta_1=0$, if $\delta_4=0$ then $\delta_1=\delta_3\mathrel{(=}\delta_2)=0$, if
  $\delta_4=\pi$ then $\delta_2=0$, if $\delta_4=\pi$ and $\delta_3=0$ then $\delta_1=0$, and if $\delta_6\in\{0, \pi\}$ then
  $\delta_5=0$.\label{fig:euler}}
\end{figure*}

The Euler decomposition of $H$ (\cref{EulerH}) is not unique: 

\begin{proposition}\label{EulerHmoins}
  $\textup{QC}\vdash \!\!\!\scalebox{0.8}{\tikzfig{H}} {=}
  \scalebox{0.75}{\tikzfig{EulerHmoins}}$
\end{proposition}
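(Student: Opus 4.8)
The plan is to exploit that $H$ is self-inverse. Write $C$ for the circuit on the right-hand side of \cref{EulerH}, so that $\textup{QC}\vdash H = C$, and observe that the circuit EulerHmoins is exactly the \emph{reverse} of $C$ with every angle negated; call this circuit $C^{\dagger}$. Semantically $\interp{C^{\dagger}} = \interp{C}^{-1}$, so since $\interp{H} = \interp{H}^{-1}$ the circuit $C^{\dagger}$ indeed denotes $H$, and the task is to promote this to a syntactic derivation in $\textup{QC}$. I would prove $\textup{QC}\vdash C = C^{\dagger}$, from which $\textup{QC}\vdash H = C = C^{\dagger} = \text{EulerHmoins}$ follows by \cref{EulerH}.

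First I would establish two auxiliary reductions. Because $C^{\dagger}$ is $C$ read backwards with opposite angles, the sequential composition $C^{\dagger}\circ C$ telescopes: each gate meets its inverse, and the successive cancellations $R_X(\theta)\circ R_X(-\theta)=R_X(0)=id$, $P(\varphi)\circ P(-\varphi)=P(0)=id$ and $s(\psi)\circ s(-\psi)=s(0)=id$ are furnished by \cref{RXRX,RX0,PP,P0,SS,S0}; hence $\textup{QC}\vdash C^{\dagger}\circ C = id$ (in fact this much is already derivable in $\textup{QC}_0$). Second, rewriting each factor of $C\circ C$ back into $H$ with \cref{EulerH} and then contracting $H\circ H$ with \cref{HH} gives $\textup{QC}\vdash C\circ C = H\circ H = id$.

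With these two equalities in hand the conclusion is a one-line manipulation using only neutrality of the identity \eqref{idneutre} and associativity of $\circ$ \eqref{assoccomp}:
\[ C^{\dagger} = C^{\dagger}\circ id = C^{\dagger}\circ(C\circ C) = (C^{\dagger}\circ C)\circ C = id\circ C = C, \]
which combined with $\textup{QC}\vdash H=C$ yields $\textup{QC}\vdash H = \text{EulerHmoins}$. The only delicate point is bookkeeping rather than mathematics: one must check that EulerHmoins really is the reverse-and-negate $C^{\dagger}$ (this uses that the $R_X$--$P$--$R_X$ part of the Hadamard decomposition is palindromic, so reversing leaves the gate pattern unchanged and merely flips every sign) and that the two global-phase scalars can be slid next to each other so that the telescoping cancellations are genuinely between adjacent gates. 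An alternative route, avoiding the self-inverse trick, would be to feed $C^{\dagger}$ through the Euler rule \cref{Euler2d} and invoke the uniqueness of the canonical angles to identify it with $C$; but the argument above is shorter and relies only on the structural monoid rules together with \cref{EulerH,HH}.
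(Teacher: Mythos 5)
Your proposal is correct and rests on exactly the same three ingredients as the paper's short derivation — the telescoping cancellation of opposite-angle gates via \cref{P0,PP,RX0,RXRX}, the axiom \cref{EulerH}, and \cref{HH} — merely rearranged as $C^{\dagger}=C^{\dagger}\circ(C\circ C)=(C^{\dagger}\circ C)\circ C=C$ where the paper writes $H=C^{\dagger}\circ C\circ H=C^{\dagger}\circ H\circ H=C^{\dagger}$. This is essentially the same approach.
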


\begin{proof}The proof is given in \cref{preuveEulerHmoins}.
\end{proof}

More generally the Euler angles are not unique, but can be made unique by adding some constraints on the angles, like choosing them in the appropriate intervals (see \cref{fig:euler}).

\begin{proposition}\label{soundnessEuleraxioms}
  Equations (\ref{Euler2d}) and (\ref{Euler3dmulticontrolled}) are
  sound.  Moreover, the choice of parameters in the RHS-circuits
  to make the equations sound is unique (under the constraints given
  in \cref{fig:euler}).
\end{proposition}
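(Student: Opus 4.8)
The plan is to argue entirely at the level of the semantics $\interp{\cdot}$: both the soundness and the uniqueness claims are statements about the $2\times2$ (resp.\ $4\times4$) matrices obtained by interpreting the two sides of \eqref{Euler2d} and \eqref{Euler3dmulticontrolled}. I would first dispose of the multi-controlled equation \eqref{Euler3dmulticontrolled} by reducing it to its two-qubit base case. Since, by construction, every gate occurring in \eqref{Euler3dmulticontrolled} is controlled by the first $n-2$ qubits, the semantics of multi-controlled gates (the proposition computing $\interp{\Lambda^x_y G}$) shows that both sides act as the identity on every basis vector whose first $n-2$ components are not in the state that activates the controls, and hence agree there trivially. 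On the complementary subspace the equation restricts to the $n=2$ instance, an identity between two unitaries built from $R_X$, $P$ and $s$ gates acting on the last two wires. Thus it suffices to prove soundness and parameter-uniqueness for the base cases of \eqref{Euler2d} and \eqref{Euler3dmulticontrolled}.

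For soundness (existence of suitable right-hand-side parameters), I would compute the RHS matrix $M(\beta)$ (resp.\ $M(\delta)$) explicitly as a product of the elementary matrices $\interp{R_X(\cdot)}$, $\interp{P(\cdot)}$ and $\interp{s(\cdot)}$, and observe that the LHS matrix is of the same shape. \eqref{Euler2d} is exactly the classical Euler decomposition of a single-qubit unitary into $X$- and $Z$-rotations up to a global phase, so surjectivity of the map $\beta\mapsto M(\beta)$ onto the relevant family of unitaries is the standard fact recalled in \cref{fig:euler}; this yields a solution with each $\beta_j$ in $[0,2\pi)$. For \eqref{Euler3dmulticontrolled} the analogous $4\times4$ computation reduces, block by block along the innermost control, to classical Euler decompositions applied to the two diagonal blocks, which again guarantees a solution with all $\delta_j$ in $[0,2\pi)$.

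For uniqueness I would invert the parametrisation, i.e.\ show that on the constrained domain the map from parameters to matrices is injective. Generically the moduli of the entries of $M$ determine the ``middle'' angle while the phases of the entries determine the outer angles and the global phase, so on the open stratum where no entry degenerates the constrained ranges ($\beta_1\in[0,\pi)$, $\beta_2\in[0,2\pi)$, and the corresponding ranges for the $\delta_j$) single out a unique preimage. The remaining work, and the real content of the proof, is the boundary analysis: at the parameter values where entries of $M$ vanish the generic inversion breaks down and the decomposition acquires a one-parameter redundancy (a gimbal-lock phenomenon). Each degeneracy clause in \cref{fig:euler}---for \eqref{Euler2d} the single condition ``if $\beta_2\in\{0,\pi\}$ then $\beta_1=0$'', and for \eqref{Euler3dmulticontrolled} the seven conditions on $\delta_1,\dots,\delta_6$---is precisely the normal-form choice that removes the corresponding redundancy, and I would verify case by case that these conditions pick out exactly one representative in each degenerate stratum. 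This case analysis over the degenerate strata, routine but delicate for the $4\times4$ equation because of the number of coincidences to track, is where I expect the main difficulty to lie.
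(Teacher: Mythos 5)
Your handling of \eqref{Euler2d} and your overall strategy --- reduce \eqref{Euler3dmulticontrolled} to its two-qubit instance via the semantics of multi-controlled gates, then prove uniqueness by explicitly inverting the parametrisation and checking that the normal-form clauses of \cref{fig:euler} select one representative on each degenerate stratum --- agree with the paper's proof. The gap is in your description of the two-qubit case of \eqref{Euler3dmulticontrolled}. The $4\times4$ matrix there is \emph{not} block-diagonal into two $2\times2$ blocks each admitting an independent Euler decomposition: it has the form $\left(\begin{array}{c|c}I&0\\\hline 0&U\end{array}\right)$ with $U$ a $3\times3$ unitary, and the nine right-hand-side gates act on \emph{overlapping} two-dimensional coordinate subspaces of that $3\times3$ block (the groups $\delta_1,\delta_2,\delta_3$ and $\delta_5,\delta_6$ rotate one pair of coordinates, $\delta_4$ rotates an adjacent pair sharing a coordinate with the first, and $\delta_7,\delta_8,\delta_9$ are diagonal phases). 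So what is needed is a triangular/staircase factorisation of a $3\times3$ unitary in the style of Lemmas 10 and 11 of \cite{clement2022LOv}, not two independent single-qubit Euler decompositions; your ``two diagonal blocks'' picture makes both the generic inversion and the enumeration of degenerate strata come out wrong.

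Concretely, the paper proceeds by sequential elimination: setting $U_{\mathrm I}=U_{123}\circ U^\dag$, $U_{\mathrm{II}}=U_4\circ U_{\mathrm I}$ and $U_{\mathrm{III}}=U_{56}\circ U_{\mathrm{II}}$, the requirement that $U_{\mathrm{III}}$ be diagonal forces, entry by entry, explicit formulas for $\delta_1,\dots,\delta_4$ in terms of the first row of $U$, then for $\delta_5,\delta_6$ in terms of $U_{\mathrm{II}}$, then for $\delta_7,\delta_8,\delta_9$ in terms of $U_{\mathrm{III}}$; existence follows because these formulas always produce admissible angles, and the case analysis is governed by which of $U_{0,0},U_{0,1},U_{0,2}$ vanish. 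You would need to replace your block-diagonal reduction by this staircase argument (or an equivalent one) before the rest of your plan, which is otherwise sound, can go through.
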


\begin{proof}
  The soundness and uniqueness of \cref{Euler2d} are well-known
  properties. Regarding \cref{Euler3dmulticontrolled}, we first notice
  that the semantics of both circuits is of the form
  $\left(\begin{array}{c|c}I&0\\\hline 0&U\end{array}\right)$ where
  $U$ is a $3\times 3$ matrix. We then use the fact that this matrix
  can be decomposed into basic rotations that can be proved to be
  unique \cite{clement2022LOv}.  The details of the proof are given in
  \cref{appendix:euler3d}.
\end{proof}

Notice that \cref{Euler2d} subsumes Equations (\ref{PP}) and
(\ref{XPX}), which can now be derived using the other axioms of
$\textup{QC}$.

\begin{proposition}\label{klfollowfromEuler}The following two equations of $\textup{QC}$,
\[\begin{array}{rclr}
 \scalebox{0.9}{\tikzfig{PP}}&{=}&\scalebox{0.9}{\tikzfig{Pphiplusphiprime}} &\text{~~~~(\ref{PP})} \\[0.4cm]
   \scalebox{0.9}{\tikzfig{XPX}}&{=}&\scalebox{0.9}{\tikzfig{Pmoinsphieiphi}}&\text{~~~~(\ref{XPX})}
  \end{array}\]
 can be
  derived from the other axioms of $\textup{QC}$.
\end{proposition}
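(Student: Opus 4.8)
The plan is to derive both equations from the Euler rule \cref{Euler2d} together with the uniqueness of its normal form (\cref{soundnessEuleraxioms}), reducing each equation to a comparison of normal forms, while being careful to invoke only axioms \emph{other than} \cref{PP} and \cref{XPX} so that the argument is not circular. The only facts I need beyond the structural axioms are the purely definitional identities $X=R_X(\pi)$ and $HP(\theta)H=R_X(\theta)$ (from \cref{xgate,zgate,RXgate}), hence $HR_X(\theta)H=P(\theta)$, and the trivial cases $R_X(0)=id$ and $P(0)=id$, all of which follow from \cref{HH} and \cref{P0} alone, together with the scalar laws \cref{S0,SS}. Crucially, none of these uses \cref{PP} or \cref{XPX}.

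For \cref{XPX}, I first rewrite the left-hand side as $R_X(\pi)\circ P(\varphi)\circ R_X(\pi)$, which is directly an instance of the left template of \cref{Euler2d}; applying the rule produces its Euler normal form. For the right-hand side $s(\varphi)\circ P(-\varphi)$, I pad $P(-\varphi)$ into a template instance $R_X(0)\circ P(-\varphi)\circ R_X(0)$ using $R_X(0)=id$ and apply \cref{Euler2d} to that scalar-free part. The two unitaries $\interp{X\circ P(\varphi)\circ X}$ and $\interp{P(-\varphi)}$ coincide up to the global phase $e^{i\varphi}$, so by the uniqueness in \cref{soundnessEuleraxioms} the two applications of \cref{Euler2d} yield the same rotation word, their scalar prefixes differing exactly by $s(\varphi)$. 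Combining the remaining prefactor $s(\varphi)$ with that scalar via \cref{SS} (and $2\pi$-periodicity of $s$ from \cref{S0,SS}) makes the two normal forms identical, giving $\textup{QC}\vdash X\circ P(\varphi)\circ X=s(\varphi)\circ P(-\varphi)$.

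For \cref{PP} I cannot present $P(\varphi)\circ P(\varphi')$ as a single template instance without first merging the two $P$ gates, which is exactly \cref{PP}; so I conjugate by $H$ instead. Using \cref{HH} and $HP(\theta)H=R_X(\theta)$ one rewrites $P(\varphi)\circ P(\varphi')=H\circ R_X(\varphi)\circ R_X(\varphi')\circ H$. The pair $R_X(\varphi)\circ R_X(\varphi')$ is padded to the template instance $R_X(\varphi)\circ P(0)\circ R_X(\varphi')$ via $P(0)=id$, and $R_X(\varphi+\varphi')$ to $R_X(\varphi+\varphi')\circ P(0)\circ R_X(0)$; applying \cref{Euler2d} to both and invoking uniqueness (the two have equal semantics, with matching determinant and no phase discrepancy) yields $\textup{QC}\vdash R_X(\varphi)\circ R_X(\varphi')=R_X(\varphi+\varphi')$. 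Conjugating back with $HR_X(\theta)H=P(\theta)$ then gives $P(\varphi)\circ P(\varphi')=P(\varphi+\varphi')$.

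The main obstacle is not any single calculation but the bookkeeping that keeps the derivation honest: every reshaping into a template instance must avoid \cref{PP} and \cref{XPX}, and the comparison of normal forms must carefully track the global phases carried by the scalar gates (this is where \cref{XPX}, unlike \cref{PP}, forces an explicit phase reconciliation). The conceptual crux is the uniqueness of the Euler decomposition established in \cref{soundnessEuleraxioms}, which is precisely what forces two circuits with the same semantics --- or the same semantics up to a global phase, once the scalar is pulled out --- to be rewritten by \cref{Euler2d} into one and the same normal form.
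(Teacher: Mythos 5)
Your proposal is correct and follows essentially the same route as the paper: both arguments convert $P$ and $X$ into $R_X$-gates via the definitional equations \eqref{xgate}, \eqref{zgate} and \eqref{RXgate}, pad with zero-angle gates to obtain instances of the Euler template, and then use \cref{Euler2d} together with the uniqueness of its right-hand side (\cref{soundnessEuleraxioms}) to identify the resulting normal forms, reconciling the leftover global phases with \cref{S0} and \cref{SS}. One small correction: in this formalism $X$ and $HP(\theta)H$ are not literally $R_X(\pi)$ and $R_X(\theta)$ but differ from them by the scalars $s(\pi/2)$ and $s(\theta/2)$ (which is exactly why the paper's derivation threads \cref{S0} and \cref{SS} through every conversion step), though your explicit commitment to tracking the scalar prefixes absorbs this discrepancy.
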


\begin{proof}
The proofs are given in \cref{preuveklfollowfromEuler}.
\end{proof}

The introduction of the additional equations of \cref{fig:euler} allows us to
prove some extra properties about multi-controlled gates, like periodicity (for
those with a parameter) in \cref{prop:period} and the fact that a
multi-controlled X gate is self-inverse.

\begin{proposition}\label{prop:CCX}
  For any $x\in \{0,1\}^k$, $y\in \{0,1\}^\ell$,
  \[
    \textup{QC} \vdash \Lambda^{x}_y X\circ \Lambda^{x}_y X=
    id_{k+\ell+1}
  \]
\end{proposition}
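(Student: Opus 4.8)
The plan is to peel the statement down to one clean case and then isolate the single fact that genuinely needs the Euler axioms. First I would reduce to the all-positive, single-sided, adjacent multi-controlled gate $\lambda^n X$. By \cref{def:multicontrolled-oriented,def:multicontrolled}, every $\Lambda^x_y X$ is of the form $W\circ\lambda^n X\circ W'$, where $W,W'$ are built only from $X$ gates on the anti-control wires (turning negative controls into positive ones) and from swaps (gathering the controls and positioning the target), and where $W'\circ W=id$ and $W\circ W'=id$. Both kinds of wrapper are self-inverse: the $X$ gates by \eqref{XX} and the swaps by \eqref{doubleswap}, while \cref{cor:swap} justifies the rearrangement of controls. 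Hence the square telescopes, $\Lambda^x_y X\circ\Lambda^x_y X=W\circ(\lambda^n X\circ\lambda^n X)\circ W'$, and it suffices to prove $\textup{QC}\vdash\lambda^n X\circ\lambda^n X=id_{n+1}$. The base case $n=0$ is exactly \eqref{XX}.

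For the core case I would expand the macro of \cref{def:multicontrolled-pos}, which presents $\lambda^n X$ as $\lambda^n P(\pi)$ conjugated by a Hadamard on the target wire, $\lambda^n X=(id_n\otimes H)\circ\lambda^n P(\pi)\circ(id_n\otimes H)$. Forming the square and cancelling the two central Hadamards with \eqref{HH} leaves $(id_n\otimes H)\circ\bigl(\lambda^n P(\pi)\circ\lambda^n P(\pi)\bigr)\circ(id_n\otimes H)$. The monoid law for multi-controlled phases, \cref{prop:sum}, then merges the two copies into $\lambda^n P(2\pi)$, so the whole expression reduces to $(id_n\otimes H)\circ\lambda^n P(2\pi)\circ(id_n\otimes H)$.

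Everything now hinges on the single equality $\lambda^n P(2\pi)=id_{n+1}$, the $2\pi$-periodicity of the $n$-controlled phase; once it holds, a final application of \eqref{HH} collapses the remaining Hadamards and closes the argument. This is the main obstacle, and it is precisely where $\qczero$ is insufficient. For $n=0$ the identity $P(2\pi)=id$ is already derivable in $\qczero$ (indeed it is what underlies \eqref{XX}). For $n\ge 1$, however, the Remark following \cref{prop:sum} warns that periodicity of genuinely controlled gates does not follow from the monoid law, so here I would invoke the non-structural Euler equations of \cref{fig:euler}: this periodicity is exactly the content of \cref{prop:period}, which I would establish by induction on $n$, peeling off the top control with \cref{prop:comb} and \cref{cor:swap} and using the multi-controlled Euler rule \eqref{Euler3dmulticontrolled} to absorb the $2\pi$ rotation into the identity, the base case being the $\qczero$ fact $P(2\pi)=id$. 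I expect the bookkeeping in this inductive periodicity step -- keeping the controls aligned while the Euler rule rewrites the two-qubit block -- to be the only delicate part; the remaining manipulations are purely structural cancellations.
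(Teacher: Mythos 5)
Your reduction is exactly the paper's: strip the wrappers coming from \cref{def:multicontrolled-oriented,def:multicontrolled} using \eqref{XX} and the swap rules, expand $\lambda^n X$ as an $H$-conjugated $\lambda^n P(\pi)$, cancel the middle Hadamards with \eqref{HH}, merge the phases with \cref{prop:sum}, and observe that everything hinges on $\lambda^n P(2\pi)=id_{n+1}$, which indeed is the one place the Euler axioms enter. Up to that point the proposal matches the paper's proof step for step.

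The divergence, and the weak point, is how you propose to get $\lambda^n P(2\pi)=id_{n+1}$. You describe an induction on $n$ that peels off the top control with \cref{prop:comb} and \cref{cor:swap}. That mechanism does not close: \cref{prop:comb} only combines a control with an \emph{anti}-control of a second copy of the gate, so with the induction hypothesis it yields $\Lambda^{1x}P(2\pi)=\Lambda^{0x}P(-2\pi)$ -- you have traded a positive control for a negative one and negated the angle, and you are back where you started. Moreover the inductive unfolding of \cref{def:multicontrolled-pos} turns a $2\pi$ phase into half-angle gates $R_X(\pm\pi)$ and $s(\pi)$, so the statement for $n$ does not reduce to the statement for $n-1$. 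The paper avoids induction entirely: \eqref{Euler3dmulticontrolled} is an axiom \emph{schema} with one instance per number of controls, and the instance of the right arity, instantiated with $\gamma_1=\gamma_3=\gamma_4=0$ and $\gamma_2=2\pi$ (padding with zero-angle gates via \cref{prop:sum}), rewrites $\Lambda^{1\ldots1}P(2\pi)$ directly into a product of zero-angle multi-controlled gates, which \cref{prop:sum} collapses to the identity. Note also that the dependency runs the other way in the paper: \cref{prop:period} is proved \emph{from} the lemma $\Lambda^x P(2\pi)=id$ established inside the proof of \cref{prop:CCX}, so you cannot simply cite \cref{prop:period} here without supplying an independent derivation -- and the direct application of \eqref{Euler3dmulticontrolled} is that derivation.
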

\begin{proof}
  The case $x=y=\epsilon$ is a direct consequence of \cref{XX}.
  For the other cases, by \cref{def:multicontrolled-pos,def:multicontrolled-oriented,def:multicontrolled,HH,XX,prop:sum}, it is equivalent to
  show that, for any $x\in \{0,1\}^k$,
  \[
    \qc \vdash \Lambda^x P(2\pi) = id_{k+1}.
  \]
  Without loss of generality, we can consider $x\in \{1\}^k$.  
  Then the result is a consequence of \cref{prop:sum} and
  \cref{Euler3dmulticontrolled}.
  Indeed, by taking
  $\gamma_1=\gamma_3=\gamma_4=0$ and $\gamma_2=2\pi$ in the 
  LHS of \cref{Euler3dmulticontrolled}, the unique angles on the
  right are all zeros:
  $\delta_1=\delta_2=\delta_3=\delta_4=\delta_5=\delta_6=\delta_7=\delta_8=\delta_9=0$.
  By \cref{prop:sum}, any multi-controlled gate with zero angle is the identity, which gives us the desired equality.
  Further
  details can be found in \cref{proof:CCX}.
\end{proof}

\begin{proposition}\label{prop:period}
  For any $x\in \{0,1\}^k$, $y\in \{0,1\}^\ell$,
  $\theta\in \mathbb R$,
  \begin{align*}
    \qc &{}\vdash \Lambda^x_y R_X(\theta+4\pi) = \Lambda^x_y R_X(\theta)
    \\
    \qc &{}\vdash \Lambda^x_y P(\theta+2\pi) = \Lambda^x_y
    P(\theta)
    \\
    \qc &{}\vdash \Lambda^x_y s(\theta+2\pi) =
    \Lambda^x_y s(\theta)
  \end{align*}
\end{proposition}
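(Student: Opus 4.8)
The plan is to reduce all three periodicity statements to a single kind of fact --- that the ``full-period'' multi-controlled gate is provably the identity --- and then discharge the three base facts separately. The common reduction is immediate from the monoid law of \cref{prop:sum}: writing $p$ for the period of $G$ (so $p=4\pi$ for $R_X$ and $p=2\pi$ for $P$ and $s$), we have
\[
  \qc\vdash \Lambda^x_y G(\theta+p) = \Lambda^x_y G(p)\circ \Lambda^x_y G(\theta),
\]
so it suffices to prove that $\qc\vdash \Lambda^x_y G(p)$ equals the identity circuit of the appropriate arity. To establish the latter I would first reduce to the all-positive, top-only control configuration $\Lambda^{1\ldots1}G(p)$: negative controls are removed by $X$-conjugation using \cref{def:multicontrolled-oriented}, controls are permuted to the top via \cref{cor:swap}, and --- crucially --- the angle-independent gates introduced by \cref{def:multicontrolled-oriented,def:multicontrolled} cancel in pairs via \cref{HH,XX,CNotCNot,doubleswap} as soon as the inner gate has been rewritten to the identity. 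This is exactly the pattern already used in the proof of \cref{prop:CCX}.

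For $G=P$ and $G=s$ the base fact is essentially already available. The proof of \cref{prop:CCX} establishes $\qc\vdash \Lambda^z P(2\pi)=id$ for every top-control string $z$; moving the below-controls $y$ up to the top (after making them positive) by repeated use of \cref{prop:CP} then yields $\qc\vdash \Lambda^x_y P(2\pi)=id$. The scalar case follows from this one through the definitional identity $\lambda^{n+1}s(\psi)=\lambda^n P(\psi)$ of \cref{def:multicontrolled-pos} together with \cref{def:multicontrolled} (which gives $\Lambda^x_y s(\varphi)=\Lambda^{xy}s(\varphi)$), the remaining zero-control case $s(2\pi)=id$ being exactly \cref{S0}.

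The genuinely new ingredient is the $R_X$ case, where the period-gate identity $\qc\vdash \Lambda^{1\ldots1}R_X(4\pi)=id$ cannot be obtained from the structural fragment alone (it requires a non-trivial computation on the angles) and so must pass through the Euler rule \cref{Euler3dmulticontrolled}. I would instantiate its left-hand side with the parameter of one $R_X$-factor set to $4\pi$ and all remaining parameters set to $0$, simplifying the zero-angle multi-controlled factors to the identity with \cref{prop:sum}; the left-hand side then reduces to $\Lambda^{1\ldots1}R_X(4\pi)$. Since $\interp{R_X(4\pi)}=I$, the semantics of this circuit is the identity, so by the soundness and uniqueness statement of \cref{soundnessEuleraxioms} the uniquely determined right-hand angles $\delta_j$ are all $0$; applying \cref{prop:sum} once more collapses the right-hand side to $id$, which closes the base fact. (The single-qubit analogue $\qc\vdash R_X(4\pi)=id$ arises in precisely the same way from \cref{Euler2d}.)

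I expect this $R_X$ step to be the main obstacle. Two points require care: one must verify that the chosen instantiation of \cref{Euler3dmulticontrolled} really simplifies in $\qc$ to the bare multi-controlled $R_X(4\pi)$ (rather than to it up to some residual controlled phase or scalar), and one must check that the interval and normalisation constraints of \cref{fig:euler} genuinely force every $\delta_j=0$ for the identity semantics --- this is where the uniqueness half of \cref{soundnessEuleraxioms} does the essential work, turning a semantic equality into the syntactic conclusion. Once $\Lambda^{1\ldots1}R_X(4\pi)=id$ is in hand, the reduction of the first paragraph lifts it to arbitrary $\Lambda^x_y R_X(4\pi)=id$, and the monoid split completes the proof.
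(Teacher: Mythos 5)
Your proposal is correct and follows the paper's overall architecture: reduce periodicity to ``the full-period gate is the identity'' via the additivity of \cref{prop:sum}, normalise to all-positive top controls via \cref{XX} and \crefnosort{def:multicontrolled-oriented,def:multicontrolled,cor:swap}, and discharge the $P$ and $s$ cases from $\qc\vdash\Lambda^xP(2\pi)=id$ as established in \cref{prop:CCX}. Where you genuinely diverge is the multi-controlled $R_X(4\pi)$ base fact. The paper does \emph{not} instantiate \cref{Euler3dmulticontrolled} a second time with an $R_X$ angle of $4\pi$; instead it unfolds $\Lambda^{1x}R_X(4\pi)$ by its inductive definition (\cref{lem:Lambda-Rx}), inserts a provably-trivial $\Lambda^{1x}s(2\pi)$, merges angles with \cref{prop:sum}, and reduces everything to the already-proved identities $\Lambda^{1x}P(2\pi)=id$ and $\Lambda^{1x}s(2\pi)=id$, so that the Euler family is invoked only once, inside \cref{prop:CCX}. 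Your route --- setting one $R_X$ parameter of the LHS of \cref{Euler3dmulticontrolled} to $4\pi$, killing the zero-angle factors with \cref{prop:sum}, and using the uniqueness half of \cref{soundnessEuleraxioms} to force all $\delta_j=0$ --- is also valid (the uniqueness computation in \cref{appendix:euler3d} with $U=I$ indeed yields $\delta_4=0$, hence $\delta_1=\delta_2=\delta_3=0$, then $\delta_5=\delta_6=0$ and $\delta_7=\delta_8=\delta_9=0$), and is arguably more direct since it avoids unfolding the $\lambda^{n}R_X$ construction; the price is a second, independent appeal to the unbounded axiom family and a fresh verification that the instantiated LHS collapses to the bare $\Lambda^{1\ldots1}R_X(4\pi)$, whereas the paper's reduction keeps all the Euler-rule reasoning concentrated in a single place.
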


\begin{proof}
  Following the additivity of \cref{prop:sum}, it is sufficient to
  show that for any $x\in \{0,1\}^k$, $y\in \{0,1\}^\ell$,
  \begin{align*}
    \qc &{}\vdash \Lambda^x_y R_X(4\pi) = id_{k+\ell+1},
    \\
    \qc &{}\vdash \Lambda^x_y P(2\pi) = id_{k+\ell+1},
    \\
    \qc &{}\vdash \Lambda^x_y s(2\pi) = id_{k+\ell}.
  \end{align*}
  Also, with Equations \eqref{XX} and 
  \cref{def:multicontrolled-oriented,def:multicontrolled}, it is sufficient to
  show that for any $x\in \{1\}^k$,
  \begin{align*}
    \qc &{}\vdash \Lambda^x R_X(4\pi) = id_{k+1},
    \\
    \qc &{}\vdash \Lambda^x P(2\pi) = id_{k+1},
    \\
    \qc &{}\vdash \Lambda^x s(2\pi) = id_{k}.
  \end{align*}
  First, we prove the three cases with $x=\epsilon$. Then,
  we use $\qc \vdash \Lambda^x P(2\pi) = id_{k+1}$, proven in
  \cref{prop:CCX} 
  using \cref{Euler3dmulticontrolled}. We obtain the
  other 
  statements as direct consequences of the $2\pi$-periodicity of
  $P$. Further details are provided in \cref{proof:propperiod}.
\end{proof}

\section{Completeness}\label{sec:completeness}

In this section we prove the main result of the paper, namely the completeness
of $\textup{QC}$. To this end, a back and forth encoding of quantum circuits
into linear optical quantum circuits is introduced. We use the graphical
language for linear optical circuits introduced in \cite{clement2022LOv}.

\subsection{Optical circuits}\label{sec:loppcircuits}

A \emph{linear optical polarisation-preserving} (LOPP for short)
circuit is an optical circuit made of beam splitters
($\tikzfig{bs-xs}$) and phase shifters
($\tikzfig{convtp-phase-shift-xs}$):

\begin{definition}
  \label{def:LOPP}
  Let $\textup{\textbf {LOPP}}$ be the prop generated by
 $\tikzfig{convtp-phase-shift-xs}$, $\tikzfig{bs-xs}$ with $\varphi, \theta \in \mathbb R$.
\end{definition}

Like quantum circuits, LOPP-circuits are defined as a prop: one can see them as raw circuits quotiented by the $\equiv$-equivalence given in \cref{fig:axiom}.

In the following, we consider the single photon case, hence each input mode (or
wire) represents a possible input position for the photon. The photon moves
from left to right in the circuit. The state of the photon is entirely defined
by its position, and as a consequence the state space is of the form $\mathbb
C^n$ when there are $n$ possible modes. We consider the standard orthonormal
basis $\{\ket{p}\}_{p\in [0,n)}$ of $\mathbb C^n$. The semantics is defined as
follows.

\begin{definition}[Semantics]
  \label{semLOPP}
  For any $n$-mode $\textup{LOPP}$-circuit $C$, let
  $\interp C: \mathbb C^{n} \to \mathbb C^{n}$ be a linear map
  inductively defined as follows:
  $\interp{C_2\circ C_1} \coloneqq \interp{C_2}\circ\interp{ C_1}$,
  $\interp{C_1\otimes C_3} \coloneqq \interp{C_1}\oplus\interp{
    C_3}=\left(\begin{array}{c|c}\interp{C_1}&0\\\hline0&\interp{C_3}\end{array}\right)$,
  \begin{align*}
    \interp{\tikzfig{bs-xs}}
    &\coloneqq\ket p
      \mapsto\cos(\theta)\ket{p}+i\sin(\theta)\ket{{1-p}}
    \\
    &=
      \begin{pmatrix}
        \cos(\theta)&i\sin(\theta)\\
        i\sin(\theta)&\cos(\theta)
      \end{pmatrix}
    \\
    \interp{\gswap}
    &\coloneqq \ket{p}\mapsto
      \ket{1-p}
      =
      \begin{pmatrix}
        0&1\\1&0
      \end{pmatrix}
  \end{align*}
  \[
    \interp{\tikzfig{convtp-phase-shift-xs}}
    \coloneqq e^{i\varphi}
    \qquad \interp\gid \coloneqq1
    \qquad
    \interp{\emptyc} \coloneqq0
  \]
\end{definition}

\begin{figure*}[htb]
\centering
\scalebox{.69}{\fbox{\begin{minipage}{1.2\textwidth}
\begin{multicols}{2}
\newcommand{\nspazer}{0em}
\begin{equation}\tag{A}\label{phase0}\begin{array}{rcccl}\tikzfig{phase-shift0}&=&\tikzfig{phase-shift2pi}&=&\tikzfig{filcourt}\end{array}\end{equation}
\vspace{\nspazer}
\begin{equation}\tag{B} \label{bs0}\begin{array}{rcl}\tikzfig{bs0}&=&\tikzfig{filsparalleleslongbs-m}\end{array}\end{equation}
\vspace{\nspazer}
\begin{equation}\tag{C} \label{swapbspisur2}\begin{array}{rcl}\tikzfig{swap}&=&\tikzfig{bspissur2}\end{array}\end{equation}\vspace{\nspazer}
\begin{equation}\tag{D} \label{phaseaddition}\begin{array}{rcl}\tikzfig{convtp-phase-shifts-12}&=&\tikzfig{convtp-phase-shift-1plus2}\end{array}\end{equation}
\vspace{\nspazer}
\begin{equation}\tag{E} \label{globalphasepropagationbs}\begin{array}{rcl}\tikzfig{convtp-thetathetabs}&=&\tikzfig{convtp-bsthetatheta}\end{array}\end{equation}
\vspace{\nspazer}
\begin{equation}\tag{F} \label{Eulerbsphasebs}\begin{array}{rcl}\tikzfig{bsphasebsalpha}&=&\tikzfig{phasebsphasebeta}\end{array}\end{equation}
\end{multicols}
\vspace{-0.5cm}
\begin{equation}\tag{G}\label{Eulerscalaires}\begin{array}{rcl}~\qquad\qquad\tikzfig{bsyangbaxterpointeenbas-simp-gammas}&=&\tikzfig{bsyangbaxterpointeenhaut-deltas}\end{array}\end{equation}
\end{minipage}}}
\caption{Axioms of the LOPP-calculus.  In \cref{Eulerbsphasebs} and
  \cref{Eulerscalaires}, the LHS circuit has arbitrary parameters
  which uniquely determine the parameters of the RHS circuit. For any
  $\alpha_i\in\mathbb R$, there exist $\beta_j\in[0,2\pi)$ such that
  \cref{Eulerbsphasebs} is sound, and for any $\gamma_i\in\mathbb R$,
  there exist $\delta_j\in[0,2\pi)$ such that \cref{Eulerscalaires} is
  sound. We can ensure that the angles $\beta_j$ are unique by
  assuming that $\beta_1,\beta_2\in [0,\pi)$ and if
  $\beta_2\in\{0,\frac\pi2\}$ then $\beta_1=0$, and we can ensure that
  the angles $\delta_j$ are unique by assuming that
  $\delta_1,\delta_2,\delta_3,\delta_4,\delta_5,\delta_6\in[0,\pi)$. If
  $\delta_3\in\{0,\frac\pi2\}$ then $\delta_1=0$, if
  $\delta_4\in\{0,\frac\pi2\}$ then $\delta_2=0$, if $\delta_4=0$ then
  $\delta_3=0$, and if $\delta_6\in\{0,\frac\pi2\}$ then $\delta_5=0$.
  The existence and uniqueness of such $\beta_j$ and $\delta_j$ are
  given by Lemmas 10 and 11 of \cite{clement2022LOv}. 
  \label{axiomsLOPP}}
\end{figure*}

\begin{remark}
The definition of $\interp .$ relies on the inductive structure of raw \LOPP-circuits, it is however well-defined on \LOPP-circuits as for any raw \LOPP-circuits $C,C'$, $C\equiv C'$ implies $\interp C = \interp{C'}$. 
\end{remark}

We consider a simple equational theory for \LOPP-circuits
(\cref{axiomsLOPP}), which is derived from the rewriting system
introduced in \cite{clement2022LOv}. Contrary to the rewriting system
of \cite{clement2022LOv}, the swap is part of \LOPP-circuits.
Moreover, the most elaborate equation -- \cref{Eulerscalaires} -- is
slightly simplified in the present paper to have one parameter less.

We use the notation $\textup{LOPP}\vdash C_1=C_2$ whenever $C_1$ can
be transformed into $C_2$ using the equations of \cref{axiomsLOPP} (and
circuit deformations of \cref{fig:axiom}). 

\begin{theorem} \label{thm:LOPPcompleteness} The equational theory
  given by \cref{axiomsLOPP} is sound and complete: for any
  $\textup{LOPP}$-circuits $C_1, C_2$, $\textup{LOPP}\vdash C_1=C_2$
  iff $\interp {C_1} = \interp{C_2}$.
\end{theorem}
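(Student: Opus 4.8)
The plan is to split the statement into soundness and completeness, and to obtain completeness by transferring the already-established completeness of the LOv-calculus of \cite{clement2022LOv} across the small set of modifications made here.

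For soundness I would proceed exactly as announced, by inspection of \cref{axiomsLOPP}. Since $\interp{\cdot}$ sends phase shifters to scalars $e^{i\varphi}$ and beam splitters to the explicit $2\times2$ unitary of \cref{semLOPP}, each axiom reduces to a direct matrix identity: \cref{phase0} is the $2\pi$-periodicity of a scalar phase, \cref{bs0} and \cref{globalphasepropagationbs} are immediate from the beam-splitter matrix, \cref{swapbspisur2} is the computation of the beam-splitter unitary at the relevant angle against the swap permutation, \cref{phaseaddition} is additivity of phases, and \cref{Eulerbsphasebs} together with \cref{Eulerscalaires} are Euler-type decompositions whose existence and uniqueness of right-hand-side parameters are guaranteed by Lemmas 10 and 11 of \cite{clement2022LOv}. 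No genuinely new computation is needed beyond those two lemmas.

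The substance is completeness. The plan is to reduce to the completeness theorem for the LOv-calculus of \cite{clement2022LOv}, which is the same calculus up to two changes: the swap is now a generator, constrained by the extra equation \cref{swapbspisur2}, and the most involved equation \cref{Eulerscalaires} carries one fewer parameter than its LOv counterpart. Given \LOPP-circuits $C_1,C_2$ with $\interp{C_1}=\interp{C_2}$, I would first use \cref{swapbspisur2} to rewrite every explicit swap generator as a beam splitter, producing circuits $C_1',C_2'$ with $\LOPP\vdash C_i=C_i'$ that live in the swap-generator-free fragment treated in \cite{clement2022LOv}; the remaining structural swaps are absorbed by the prop deformations of \cref{fig:axiom}. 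Since $\interp{C_1'}=\interp{C_2'}$, the LOv equational theory derives $C_1'=C_2'$. It then suffices to show that every axiom of the LOv-calculus is provable in $\LOPP$, so that each step of this derivation can be replayed in $\LOPP$; composing with $\LOPP\vdash C_i=C_i'$ then yields $\LOPP\vdash C_1=C_2$.

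Checking that $\LOPP$ proves each LOv axiom is immediate for the equations shared verbatim by the two systems, so the whole difficulty concentrates on the single modified equation. The hard part will be to derive the original, more-parameterised Euler-scalar equation of \cite{clement2022LOv} from the simplified \cref{Eulerscalaires} together with \cref{phase0}--\cref{Eulerbsphasebs}: one must show that dropping a parameter loses no deductive power, which amounts to reintroducing the missing degree of freedom by careful bookkeeping of the angle constraints listed in \cref{axiomsLOPP} and invoking the existence/uniqueness of Lemmas 10 and 11 of \cite{clement2022LOv} to match both normal forms. Once this inter-derivability is established in both directions (using soundness to transport the converse), the two theories prove exactly the same equations, and the completeness of \cref{axiomsLOPP} follows directly from that of the LOv-calculus.
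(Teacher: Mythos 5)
Your proposal is correct and follows essentially the same route as the paper: soundness by inspection of the semantics, and completeness by using \eqref{swapbspisur2} to eliminate swaps and then deriving from \cref{axiomsLOPP} every rule of the complete, strongly normalising system of \cite{clement2022LOv} (reproduced as \cref{PPRS}), with the only substantive work concentrated on the weakened Euler-scalar equation. The paper resolves that last point much as you anticipate: it applies \eqref{Eulerscalaires} to the left-hand side and then invokes the derivability of the remaining rewrite rules together with uniqueness of normal forms to recover the fully parameterised rule.
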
 

\begin{proof}
  The soundness can be shown with the semantics given in
  \cref{semLOPP}.  Regarding completeness, we show that we can derive
  from \cref{axiomsLOPP} the rules of the strongly normalising
  rewriting system of \cite{clement2022LOv}.  The full proof is given
  in \cref{proof:thmLOPPcompleteness}.
\end{proof}

\subsection{Forgetting the monoidal structure}\label{sec:bare}

The proof of completeness for quantum circuits is based on a back and
forth translation from linear optical circuits. While both kinds of
circuits form a prop, so both have a monoidal structure, these
monoidal structures do not coincide. The monoidal structure of quantum
circuits corresponds to the tensor product, whereas that of linear
optical circuits is a direct sum. Hence the translations do not
preserve the monoidal structure.

As a consequence there is a technical issue around defining the
translation directly on circuits. We instead define the transformations
on \emph{raw} circuits (cf. \cref{sec:raw}). The collection of raw quantum (resp. LOPP) circuits is denoted $\QCbarebf$ (resp. $\LOPPbarebf$).
Notice that we
recover the standard circuits by considering the raw circuits up to
the equivalence relation $\equiv$ given in \cref{fig:axiom}:
$ \QCbf ={\QCbarebf}{/{\equiv}}$ and
$\LOPPbf = {\LOPPbarebf}{/{\equiv}} $.

To avoid ambiguity in the graphical representation of raw circuits
one can use boxes like $\scalebox{0.7}{\tikzfig{XXXbox}}$ for
$(\gx \otimes \gx)\otimes \gx$. We also use box-free graphical
representation that we interpret as a layer-by-layer description of a
raw circuit, more precisely we associate with any box-free graphical
representation, a raw-circuit of the form
$C=(\ldots((L_1\circ L_2)\circ L_3)\circ \ldots )\circ L_k$ where
$L_i=(\ldots((g_{i,1}\otimes g_{i,2})\otimes g_{i,3})\otimes \ldots
)\otimes g_{i,\ell_i}$.

For instance, $((id_1\otimes id_1)\otimes X)\circ (CNot
  \otimes H)$ is
\[
  \scalebox{.8}{\tikzfig{bare-circuit-ex}} = \scalebox{.8}{\tikzfig{bare-circuit-ex-2}} \circ
  \scalebox{.8}{\tikzfig{bare-circuit-ex-1}}
\]

We extend the notation $\textup{QC}\vdash \cdot = \cdot$ and
$\textup{LOPP}\vdash \cdot = \cdot$ to raw circuits. For any raw
quantum circuits (resp.\ raw optical circuits) $C_1,C_2$, we write
$\textup{QC}\vdash C_1 = C_2$ (resp. $\textup{LOPP}\vdash C_1=C_2$) if
$C_1$ and $C_2$ are equivalent by the congruence defined in
\cref{eq:qc0}, \cref{fig:euler} and \cref{fig:axiom} (resp.\
\cref{axiomsLOPP} and \cref{fig:axiom}).\footnote{In this context, the circuits depicted in Figures \ref{eq:qc0}, \ref{fig:euler} and \ref{axiomsLOPP} are interpreted as box-free graphical representations of raw circuits.\label{rawinfigures}}

Notice that there exists a derivation between two circuits if and only
if there exists a derivation between two of their representative raw
circuits. Indeed, intuitively the only difference is that the
derivation on raw circuits is more fine-grained as the equivalence
relation $\equiv$ is made explicit.

\subsection{Encoding quantum circuits into optical ones}

We are now ready to define the encoding of (raw) quantum circuits
into (raw) linear optical circuits. For dimension reasons, an  $n$-qubit system is encoded into $2^n$ modes. 
One can
naturally choose to encode $\ket x$, with $x\in \{0,1\}^n$, into the
mode $\ket{\underline x}$ where $\underline x =\sum_{i=1}^{n}x_i2^{n-i}$ is the
usual binary encoding.  Alternatively, we use Gray codes to produce
circuits with a simpler connectivity, in particular two adjacent modes
encode basis qubit states which differ on exactly one qubit.

\begin{definition}[Gray code]\label{defgraycode}
  Let
  $\mathfrak G_n: \mathbb C^{2^n} \to \mathbb C^{\{0,1\}^n}$ be the map $\ket k
  \mapsto \ket{G_n(k)}$ where $G_n(k)$ is the Gray code of $k$,
  inductively defined by $G_0(0)=\epsilon$ and
  \[
    G_n(k)=\begin{cases}
      0G_{n-1}(k)&\text{if $k<2^{n-1}$,}
      \\
      1G_{n-1}(2^n-1-k)
      &
      \text{if $k\geq 2^{n-1}$.}
    \end{cases}
  \]
\end{definition}

For instance $G_3$ is defined as follows:
\[\begin{array}{rclrcl}
0&\mapsto& 000&\qquad 4&\mapsto 110\\
1&\mapsto& 001&\qquad 5&\mapsto 111\\
2&\mapsto& 011&\qquad 6&\mapsto 101\\
3&\mapsto& 010&\qquad 7&\mapsto 100\\
\end{array}\]
In order to get around the fact
that the encoding an $n$-qubit circuit into a $2^n$-mode optical circuit cannot preserve the parallel composition, we
proceed by `sequentialising' the circuit:
roughly speaking, an $n$-qubit circuit is seen as a sequential composition of layers, each layer being an $n$-qubit circuit made of an elementary gate $g$ acting on at most two qubits in parallel with the identity on all other qubits, \emph{e.g.} $id_k\otimes g
\otimes id_{l}$. The encoding of such a layer, denoted $E_{k,l}(g)$, is a $2^n$-mode optical circuit acting non-trivially on potentially all the modes.  

For instance,  consider a $3$-qubit layer which consists in applying $P(\varphi)$ on the second qubit. Its semantics is $\ket{x,y,z}\mapsto e^{i\varphi y}\ket{x,y,z}$. Such a circuit is encoded into an $8$-mode  optical circuit $E_{1,1}(P(\varphi))$
made of $4$ phase shifters acting on the modes $p\in[2,5]$ (those s.t. $G_3(p)=x1z$). 
Indeed, the semantics of $E_{1,1}(P(\varphi))$ is $\ket p\mapsto \begin{cases}e^{i\varphi }\ket{p} & \text{if~} p\in[2,5]\\ \ket{p} &\text{otherwise}\end{cases}$.

The encoding map is formally defined as follows: 

\begin{definition}[Encoding]\label{def:encoding}
  Let $E:\QCbarebf \to \LOPPbarebf$ be defined as follows: for any
  $n$-qubit circuit $C$, $E(C)=E_{0,0}(C)$ where $E_{k,\ell}$ is
  inductively defined as:
  \begin{itemize}
  \item $E_{k,\ell}(C_1\otimes C_2) = E_{k+n_1,\ell}(C_2)\circ
    E_{k,\ell+n_2}(C_1)$, where $C_1$ (resp.\ $C_2$) is acting on $n_1$ (resp.\ $n_2$) qubits;
  \item $E_{k,\ell}(C_2\circ C_1) = E_{k,\ell}(C_2)\circ  E_{k,\ell}(C_1)$;  
  \end{itemize}
Let us define $\sigma_{k,n,\ell}$ as a $2^{k+n+\ell}$-mode linear optical
circuit made only of swaps (that is, without any
$\tikzfig{phase-shift-xs}$ or $\tikzfig{bs-xs}$) such that
$\mathfrak G_n\circ\interp{\sigma_{k,n,\ell}}\circ\mathfrak
G_n^{-1}(\ket{x,y,z})=\ket{x,z,y}$ for any $x\in \{0,1\}^k$,
$y\in \{0,1\}^n$ and $z\in \{0,1\}^\ell$. We then define
    \begin{align*}
      E_{k,\ell}(\gswap)
      &= \sigma_{k,\ell,2}\circ\sigma_{k+\ell,1,1}\circ   \sigma_{k,2,\ell},
      \\
      E_{k,\ell}(\tikzfig{diagrammevide-s})
      &=
      (\tikzfig{filcourt-s})^{\otimes
        {2^{k+\ell}}},
      \\
      E_{k,\ell}(\gid)
      &= (\tikzfig{filcourt-s})^{\otimes {2^{k+\ell+1}}},
      \\
      E_{k,\ell}(s(\varphi))
      &=
        \left(\tikzfig{convtp-phase-shift-xs}\right)^{\otimes
      {2^{k+\ell}}}.
    \end{align*}
 where $C^{\otimes n}$ means $C$ $n$ times in parallel: $C^{\otimes 0} = \tikzfig{diagrammevide-s} $ and $C^{\otimes n+1} = C\otimes C^{\otimes n}$.

  For the remaining generators, we have:
  \begin{align*}
    E_{0,0}(\gh) &= \tikzfig{H-LOPP-xs}, 
    \\
    E_{0,0}(\gp) &= \tikzfig{Z-PHOL}, 
    \\
    E_{0,0}(\gcnot) &= \tikzfig{CNot-PHOL},
  \intertext{and whenever $(k,\ell)\neq(0,0)$:}
   E_{k,\ell}(\gh) &=
    \sigma_{k,\ell,1}\circ  \left(\tikzfig{H-LOPP-doublesym-xs}\right)^{\otimes {2^{k+\ell-1}}}\circ   \sigma_{k,1,\ell},
    \\
    E_{k,\ell}(\gp) &=
    \sigma_{k,\ell,1}\circ
                      \left(\tikzfig{Z-PHOL-doublesym}\right)^{\otimes
                      {2^{k+\ell-1}}}\circ   \sigma_{k,1,\ell},
    \\
    E_{k,\ell}(\gcnot) &=
    \sigma_{k,\ell,2}\circ \left(\tikzfig{CNot-PHOL-doublesym}\right)^{\otimes {2^{k+\ell-1}}}\circ   \sigma_{k,2,\ell}.
  \end{align*}

\end{definition}

\begin{remark}
Note that for any $n$-qubit circuit $C$, $E_{k,\ell}(C)$ is a $2^{k+n+\ell}$-mode optical circuit. Also note that $\sigma_{k,n,\ell}$ is nothing but a permutation of wires. By
\cref{decodingtoporules} -- which is independent from the definition
of $E$ -- any actual circuit satisfying the above property ($\mathfrak G_n\circ\interp{\sigma_{k,n,\ell}}\circ\mathfrak
G_n^{-1}(\ket{x,y,z})=\ket{x,z,y}$) is convenient for our purposes.
A formal definition of $\sigma_{k,n,\ell}$ is however given in \cref{defsigma}.
\end{remark}

\begin{figure*}
  \fbox{\begin{minipage}{1\textwidth}\centering
  $\begin{array}{rcl}
     E(C_0)
     &=&E_{0,0}(\gcnot \otimes \gh)\\[0.2cm]
     &=&E_{2,0}(\gh)\circ E_{0,1}(\gcnot)\\[0.2cm]
     &=&\sigma_{2,0,1} \circ
         \left(\tikzfig{H-LOPP-doublesym-xs}\right)^{\otimes 2}
         \circ \sigma_{2,1,0} \circ \sigma_{0,1,2} \circ
         \left(\tikzfig{CNot-PHOL-doublesym}\right) \circ
         \sigma_{0,2,1}
     \\[0.2cm]
     &=& id_8 \circ
         \left(\tikzfig{H-LOPP-doublesym-xs}\right)^{\otimes 2}
         \circ id_8 \circ \left(\gsigmazot\right)
         \circ \left(\tikzfig{CNot-PHOL-doublesym}\right)
         \circ \left(\gsigmazto\right)
     \\[0.2cm]
   \end{array}$
 \end{minipage}}
 \caption{Encoding of the circuit discussed in \cref{ex:encod}.}
   \label{fig:ex:encod}
\end{figure*}

\begin{example}\label{ex:encod}
  Consider the simple circuit $C_0=\gcnotexqc$. The
  encoding is as shown in \cref{fig:ex:encod}.
Using the topological rules (\cref{fig:axiom}), one can simplify $E (C_0)$ into the circuit $C_1$:
\[
     \scalebox{1.1}{\gcnotexloppgray} \\[0.2cm]
\]
\end{example}

\bigskip
The encoding of quantum circuits into linear optical circuits preserves the
semantics, up to Gray codes.

\begin{proposition} 
\label{prop:sem-preserving} For any $n$-qubit
  quantum circuit $C$,
  \[
    \mathfrak G_n \circ \interp{E(C)} = \interp{C}\circ \mathfrak
    G_n
  \]
\end{proposition}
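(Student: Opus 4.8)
The plan is to prove, by structural induction on the \emph{raw} circuit, a statement slightly more general than the proposition, from which the claim follows by specialisation. Concretely, I would show that for every $n$-qubit raw quantum circuit $C$ and all $k,\ell\ge 0$,
\[
  \mathfrak G_{k+n+\ell}\circ\interp{E_{k,\ell}(C)}
  = \big(\interp{id_k}\otimes\interp{C}\otimes\interp{id_\ell}\big)\circ\mathfrak G_{k+n+\ell},
\]
where the right-hand operator acts on $\mathbb C^{\{0,1\}^{k}}\otimes\mathbb C^{\{0,1\}^{n}}\otimes\mathbb C^{\{0,1\}^{\ell}}=\mathbb C^{\{0,1\}^{k+n+\ell}}$. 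Taking $k=\ell=0$ recovers the stated identity, since $E(C)=E_{0,0}(C)$ and $\interp{id_0}=1$. Generalising over $k,\ell$ is what makes the induction close, because the clause $E_{k,\ell}(C_1\otimes C_2)=E_{k+n_1,\ell}(C_2)\circ E_{k,\ell+n_2}(C_1)$ shifts these parameters, and the single Gray code $\mathfrak G_{k+n+\ell}$ (a fixed unitary, not a tensor factorisation) appears on both sides so that compositions line up.

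The two compositional cases are pure linear algebra driven by the homomorphism properties of $\interp\cdot$. For a sequential composition $C_2\circ C_1$ I would use $E_{k,\ell}(C_2\circ C_1)=E_{k,\ell}(C_2)\circ E_{k,\ell}(C_1)$, apply the induction hypothesis to each factor, and conclude with $\big(\interp{id_k}\otimes\interp{C_2}\otimes\interp{id_\ell}\big)\circ\big(\interp{id_k}\otimes\interp{C_1}\otimes\interp{id_\ell}\big)=\interp{id_k}\otimes\interp{C_2\circ C_1}\otimes\interp{id_\ell}$. For a parallel composition $C_1\otimes C_2$ with $C_i$ on $n_i$ qubits, the definition unfolds to $E_{k+n_1,\ell}(C_2)\circ E_{k,\ell+n_2}(C_1)$; both factors live on the same $2^{k+n_1+n_2+\ell}$ modes, so the two induction hypotheses share the single code $\mathfrak G_{k+n_1+n_2+\ell}$. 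After splitting $\interp{id_{\ell+n_2}}=\interp{id_{n_2}}\otimes\interp{id_\ell}$ and $\interp{id_{k+n_1}}=\interp{id_k}\otimes\interp{id_{n_1}}$, the two qubit-side operators act on disjoint tensor blocks and their product is exactly $\interp{id_k}\otimes\interp{C_1}\otimes\interp{C_2}\otimes\interp{id_\ell}=\interp{id_k}\otimes\interp{C_1\otimes C_2}\otimes\interp{id_\ell}$.

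The base cases are the generators. The cases $\tikzfig{diagrammevide-s}$, $\gid$ and $s(\varphi)$ are immediate: their encodings are tensor powers of a wire or of a phase shifter, whose semantics are scalar multiples of the identity on $\mathbb C^{2^{k+\ell}}$, matching $\interp{id_k}\otimes\interp{s(\varphi)}\otimes\interp{id_\ell}=e^{i\varphi}\interp{id_{k+\ell}}$ through the Gray code. For $\gswap$ I would invoke the defining property of the permutation circuits $\sigma_{k,n,\ell}$, pushing $\mathfrak G$ through each of the three factors of $\sigma_{k,\ell,2}\circ\sigma_{k+\ell,1,1}\circ\sigma_{k,2,\ell}$ and checking that the resulting composite qubit permutation sends $\ket{x,a,b,z}$ to $\ket{x,b,a,z}$, i.e. realises $\interp{id_k}\otimes\interp{\gswap}\otimes\interp{id_\ell}$.

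The genuine work lies in $\gh$, $\gp$ and $\gcnot$. For $(k,\ell)=(0,0)$ this is a finite matrix computation: one checks that the fixed two- and three-mode LOPP gadgets realise the Hadamard, phase and CNot matrices, using that $\mathfrak G_1$ and $\mathfrak G_2$ act trivially on the relevant registers. For $(k,\ell)\neq(0,0)$ I would first use the $\sigma$ circuits to move the target qubit(s) to the bottom, reducing to showing that a tensor power of the gadget implements the gate on the last qubit(s) uniformly over all $2^{k+\ell}$ spectator configurations. I expect \textbf{this to be the main obstacle}. Because the \emph{reflected} Gray code is used, the two modes paired by the gadget encode the target bit with an orientation that alternates as the spectators vary (for $G_3$, the pair $\{011,010\}$ is reversed relative to $\{000,001\}$, the least-significant bit following the period-four pattern $0,1,1,0$). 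The ``double-sym'' gadgets $\tikzfig{H-LOPP-doublesym-xs}$ are precisely designed to act on the target block together with one adjacent spectator qubit, so that a single $4$- (or $8$-)mode gadget implements the gate correctly in both the direct and reversed orientation; consequently the same gadget can be repeated. The crux is thus a careful bookkeeping of Gray-code adjacency, best isolated as an auxiliary lemma describing how $\mathfrak G_{k+n+\ell}$ decomposes relative to the target block, proved by induction on $k+\ell$ from the recursion defining $G_n$.
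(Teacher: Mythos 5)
Your proposal is correct and follows essentially the same route as the paper, whose proof is simply stated as ``by induction'': the strengthened statement generalising over $k,\ell$ is exactly the invariant needed for the structural induction to close, and your identification of the Gray-code orientation issue (handled by the doubly-symmetric gadgets) as the only delicate point in the generator cases is accurate.
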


\begin{proof}
  By induction.
\end{proof}

\subsection{Decoding}

Regarding the decoding, i.e.\ the translation back from linear optical
circuits to quantum circuits, we use the same sequentialisation
approach. Note that such a decoding is defined only for optical
circuits with a power of two number of modes. 

The decoding of a $2^n$-mode layer $id_{k}\otimes g \otimes id_{l}$ is a $n$-qubit circuit denoted $D_{k,n}(g)$.  For instance consider a $16$-mode layer which consists in applying $\tikzfig{convtp-phase-shift-xs}$ on the fourth mode.
Its semantics is $\ket{p}\mapsto\begin{cases}e^{i\varphi }\ket{p} & \text{if~} p=3\\ \ket{p} &\text{otherwise}\end{cases}$. Such a circuit is decoded into a $4$-qubit circuit $D_{3,4}(\tikzfig{convtp-phase-shift-xs})$
implementing the multi-controlled phase $\Lambda^{G_4(3)}s(\varphi)$, whose semantics is $\ket{x,y,z,t}\mapsto  \begin{cases}e^{i\varphi }\ket{x,y,z,t} & \text{if~} xyzt = G_4(3)\\ \ket{x,y,z,t} &\text{otherwise}\end{cases}$.

The decoding map is formally defined as follows:

\begin{definition}[Decoding]\label{defdecoding}
  Let $D: \LOPPbarebf\to \QCbarebf$ be defined as follows: for any
  $2^n$-mode circuit $C$, $D(C)=D_{0,n}(C)$ where for any $n,k,\ell$
  with $k+\ell\leq2^n$ and $C:\ell\to\ell$, $D_{k,n}(C)$ is
  inductively defined as follows.
  \begin{itemize}
  \item $D_{k,n}(C_1\otimes C_2) = D_{k+\ell_1,n}(C_2)\circ  D_{k,n}(C_1)$, where $C_1$ is acting on $\ell_1$ modes;
  \item $D_{k,n}(C_2\circ C_1) = D_{k,n}(C_2)\circ  D_{k,n}(C_1)$;
  \item $D_{k,n}(\tikzfig{filcourt-s}) = id_{n}$.
  \end{itemize}
  The remaining generators are treated as follows.
  \begin{align*}
    D_{k,n}(\tikzfig{diagrammevide-s}) &= id_{n},
    &
    D_{k,n}(\tikzfig{convtp-phase-shift-xs}) &= \Lambda^{G_n(k)}
                                               s(\varphi),
    \\
    D_{k,n}(\tikzfig{swap-s}) &= \Lambda^{x_{k,n}}_{y_{k,n}} X,
    &
      D_{k,n}(\tikzfig{bs-xs}) &= \Lambda^{x_{k,n}}_{y_{k,n}} R_X(-2\theta),
  \end{align*}
  where $x_{2k,n}\coloneqq G_{n-1}(k)$, $y_{2k,n}\coloneqq\epsilon$, $x_{2k+1,n}\coloneqq w$ and $y_{2k+1,n}\coloneqq 1.0^q$, where $q\in\{0,...,n-2\}$ and $w\in\{0,1\}^{n-q-2}$ are such that $G_{n}(2k+1)=wa1.0^q$ for some $a\in\{0,1\}$.
\end{definition}

\begin{example}\label{ex:decod}
  We consider the optical circuit $C_1$ obtained in \cref{ex:encod}. 
  With all of the gates $P$ and $R_X$ parametrized with
  $\frac{-\pi}{2}$, we can show that $D(C_1)\equiv$
  \[
      \scalebox{1}{\gcnotdec}
  \]
\end{example}
Similarly to the encoding function, the decoding function preserves the
semantics up to Gray codes.

\begin{proposition}
  For any $2^n$-mode optical circuit $C$,
  $$\interp{D(C)}\circ \mathfrak G_n = \mathfrak G_n \circ
  \interp{C}.$$
\end{proposition}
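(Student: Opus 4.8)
The plan is to prove, by induction on the raw-circuit structure of $C$, a strengthened statement for the auxiliary map $D_{k,n}$: for all $n,k,\ell$ with $k+\ell\le 2^n$ and every $\ell$-mode circuit $C$,
\[
  \interp{D_{k,n}(C)}\circ\mathfrak G_n
  = \mathfrak G_n\circ\bigl(I_k\oplus\interp{C}\oplus I_{2^n-k-\ell}\bigr),
\]
where the right-hand block operator embeds $\interp C$ on the modes $[k,k+\ell)$ and acts as the identity elsewhere. Setting $k=0$ and $\ell=2^n$ collapses the block operator to $\interp C$ and yields the proposition, since $D(C)=D_{0,n}(C)$ by \cref{defdecoding}.

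The two compositional cases are routine. For sequential composition $D_{k,n}(C_2\circ C_1)=D_{k,n}(C_2)\circ D_{k,n}(C_1)$, the embedding of a product of \LOPP-semantics is the product of the embeddings, so the claim follows from the two induction hypotheses. For parallel composition $D_{k,n}(C_1\otimes C_2)=D_{k+\ell_1,n}(C_2)\circ D_{k,n}(C_1)$ (with $C_1$ on $\ell_1$ and $C_2$ on $\ell_2$ modes), the induction hypotheses place the embedding of $\interp{C_1}$ on $[k,k+\ell_1)$ and that of $\interp{C_2}$ on the disjoint modes $[k+\ell_1,k+\ell_1+\ell_2)$; since these act on disjoint contiguous blocks they commute, and their composite is exactly the embedding of $\interp{C_1}\oplus\interp{C_2}=\interp{C_1\otimes C_2}$.

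The content lies in the generator base cases. The identity wire and the empty circuit are immediate, as both sides reduce to $\mathfrak G_n$. For the phase shift, $D_{k,n}$ produces $\Lambda^{G_n(k)}s(\varphi)$; by the semantic characterisation of multi-controlled gates given after \cref{def:multicontrolled}, this multiplies $\ket{G_n(k)}$ by $e^{i\varphi}$ and fixes every other basis vector, which—using that $\mathfrak G_n\colon\ket k\mapsto\ket{G_n(k)}$ is a bijection—is precisely the $\mathfrak G_n$-conjugate of the diagonal operator $I_k\oplus(e^{i\varphi})\oplus I_{2^n-k-1}$ scaling mode $k$. The swap and beam-splitter cases are the crux and rest on a combinatorial lemma about Gray codes: for every $k$ with $k+1<2^n$, the codewords $G_n(k)$ and $G_n(k+1)$ differ in exactly one position, and the index data $x_{k,n},y_{k,n}$ of \cref{defdecoding} exactly locate that position and the surrounding agreeing bits, i.e. $G_n(k)=x_{k,n}\,a\,y_{k,n}$ and $G_n(k+1)=x_{k,n}\,\bar a\,y_{k,n}$ for some bit $a$. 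Granting this, the semantics of $\Lambda^{x}_{y}G$ shows that $\interp{\Lambda^{x_{k,n}}_{y_{k,n}}X}$ transposes $\ket{G_n(k)}$ and $\ket{G_n(k+1)}$ while fixing all other basis vectors, which is the $\mathfrak G_n$-conjugate of the transposition of modes $k,k+1$, matching $I_k\oplus\interp{\gswap}\oplus I_{2^n-k-2}$. For the beam splitter we additionally read off from \cref{semLOPP} that $R_X(-2\theta)$ and $\interp{\tikzfig{bs-xs}}$ are the same symmetric matrix $\left(\begin{smallmatrix}\cos\theta&i\sin\theta\\ i\sin\theta&\cos\theta\end{smallmatrix}\right)$; hence $\interp{\Lambda^{x_{k,n}}_{y_{k,n}}R_X(-2\theta)}$ applies exactly this matrix on $\mathrm{span}(\ket{G_n(k)},\ket{G_n(k+1)})$ and the identity elsewhere, and symmetry of the matrix makes the ordering of the pair irrelevant, so it matches $I_k\oplus\interp{\tikzfig{bs-xs}}\oplus I_{2^n-k-2}$.

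The main obstacle is this Gray-code lemma together with the bookkeeping of the index formulas $x_{2k,n}=G_{n-1}(k)$, $y_{2k,n}=\epsilon$ and the odd-index case $G_n(2k+1)=wa1.0^q$: one must verify, by induction following the recursive definition of $G_n$ in \cref{defgraycode}, both that consecutive codewords differ in a single bit (with the flipped position governed by the parity of the index) and that the defined $x_{k,n},y_{k,n}$ precisely select the target qubit and its control pattern. Everything else reduces to direct matrix computation with \cref{semLOPP} and the multi-controlled-gate semantics.
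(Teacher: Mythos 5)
Your proof is correct and follows the same route as the paper, which simply states ``The proof is by induction'' for this proposition: you strengthen the statement to the auxiliary maps $D_{k,n}$ with a block-embedding on modes $[k,k+\ell)$, dispatch the compositional cases, and reduce the generator cases to the semantics of multi-controlled gates plus the reflected-Gray-code property that $G_n(k)$ and $G_n(k+1)$ differ in exactly the bit singled out by $x_{k,n},y_{k,n}$ (with the $R_X(-2\theta)$/beam-splitter matrix identification and its invariance under swapping the two basis states handling the ordering ambiguity). The only remaining verification you correctly flag — the Gray-code bookkeeping by induction on \cref{defgraycode} — is routine, so the argument is complete.
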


\begin{proof}
  The proof is by induction. 
\end{proof}

\subsection{Quantum circuit completeness}\label{sec:QCcompleteness}

The proof of completeness is based on the encoding/decoding of quantum
circuits into optical circuits. Intuitively, given two quantum
circuits representing the same unitary map, one can encode them as
linear optical circuits. Since the encoding preserves the semantics
and LOPP is complete, there exists a derivation proving the
equivalence of the encoded circuits. In order to lift this proof to
quantum circuits, it remains to prove that the decoding of an
encoded quantum circuit is provably equivalent to the original quantum
circuit, and that each axiom of LOPP can be mimicked in
\textup{QC}. Notice that since the encoding/decoding is defined on
raw circuits, an extra step in the proof consists in showing that the
axioms of $\equiv$ can also be mimicked in \textup{QC}.

Examples (\ref{ex:encod}) and (\ref{ex:decod}) point out that
composing encoding and decoding does not lead, in general, to the
original circuit, the decoded circuit being made of multi-controlled
gates. However, we show that the equivalence with the initial circuit can always be
derived in $\textup{QC}$:

\begin{lemma}\label{DE}
  For any $n$-qubit raw quantum circuit $C$,
  \[\textup{QC}\vdash D(E(C))=C.\]
\end{lemma}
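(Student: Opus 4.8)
The plan is to prove a strengthened statement by structural induction on the raw circuit $C$, carrying the context offsets explicitly. Concretely I would show that for all $k,\ell\ge 0$ and every $n$-qubit raw quantum circuit $C$,
\[
  \textup{QC}\vdash D_{0,k+n+\ell}\bigl(E_{k,\ell}(C)\bigr)=id_k\otimes C\otimes id_\ell ,
\]
where $D_{0,k+n+\ell}=D$ is the full decoding of the resulting $2^{k+n+\ell}$-mode optical circuit. The lemma is the special case $k=\ell=0$. Keeping track of $k$ and $\ell$ is essential, since $E$ is defined inductively through $E_{k,\ell}$ and the offsets shift under parallel composition; this generalised form is exactly what makes the induction close, and \cref{prop:sem-preserving} guarantees that the target $id_k\otimes C\otimes id_\ell$ is the semantically correct answer.

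The two compositional cases are the easy ones. For sequential composition, both $E$ and $D$ are functorial with respect to $\circ$ (by \cref{def:encoding,defdecoding}), so $D(E_{k,\ell}(C_2\circ C_1))=D(E_{k,\ell}(C_2))\circ D(E_{k,\ell}(C_1))$ and the induction hypothesis applies termwise. For parallel composition, $E_{k,\ell}(C_1\otimes C_2)=E_{k+n_1,\ell}(C_2)\circ E_{k,\ell+n_2}(C_1)$; decoding and applying the induction hypothesis to each factor yields $(id_{k+n_1}\otimes C_2\otimes id_\ell)\circ(id_k\otimes C_1\otimes id_{\ell+n_2})$, which the interchange law \cref{mixedprod} together with the identity rules \cref{idneutre} rewrites into $id_k\otimes(C_1\otimes C_2)\otimes id_\ell$. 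These manipulations live entirely in the $\equiv$-fragment of $\textup{QC}$.

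The real content is in the generator base cases. For each generator $g$ (the empty circuit, the identity wire, $s(\varphi)$, $P(\varphi)$, $H$, $\cnot$, and the swap) I would compute $E_{k,\ell}(g)$ explicitly from \cref{def:encoding}, decode it via \cref{defdecoding} into an explicit circuit built from multi-controlled gates (and, when $(k,\ell)\neq(0,0)$, the permutation circuits $\sigma_{k,n,\ell}$), and then collapse this circuit to $id_k\otimes g\otimes id_\ell$ using the multi-controlled-gate algebra established earlier. The empty circuit and identity are immediate, their encodings being bundles of bare wires decoded to identities. For $s(\varphi)$ and $P(\varphi)$ the decoding produces a product of controlled scalars/phases, one per relevant Gray-code mode; because adjacent Gray-code modes differ in exactly one qubit, one aligns that qubit to a control position via the control-symmetry \cref{cor:swap}, merges the two terms with the control/anti-control combination \cref{prop:comb}, and reorders remaining terms using the commutation of controlled phases (\cref{commctrl}, \cref{prop:CP}) and the additivity \cref{prop:sum}. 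The whole product then telescopes layer by layer into a single local phase. In parallel I would prove an auxiliary lemma identifying $D$ of each elementary optical swap with a multi-controlled $X$ (equivalently a $\cnot$, by \cref{ctrlXCNot}) and showing that each decoded permutation $\sigma_{k,n,\ell}$ realises the intended qubit reordering.

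The main obstacle is the two-qubit generators $H$, $\cnot$ and the swap. Here the decoding yields a layer of many multi-controlled $R_X$ and $X$ gates---one per pair of Gray-code-adjacent modes, and in the $(k,\ell)\neq(0,0)$ case conjugated by the $\sigma$ permutations---and reassembling them into a single local gate needs the full strength of the calculus: \cref{prop:sum} to merge rotations sharing a target, \cref{prop:comb} to eliminate redundant control/anti-control pairs, the commutations \cref{commctrl} and \cref{commctrlpasenface} together with \cref{cor:swap} to move the pieces into mergeable position, and, for the swap, the triple-$\cnot$ identity \cref{tripleCNotswap} to turn the composite permutation $\sigma_{k,\ell,2}\circ\sigma_{k+\ell,1,1}\circ\sigma_{k,2,\ell}$ into an actual qubit transposition. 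The Gray-code adjacency is precisely the combinatorial device guaranteeing that every neighbouring pair of decoded gates matches the hypotheses of \cref{prop:comb}, so that the long product collapses; organising this bookkeeping uniformly in $k$ and $\ell$, rather than only for the illustrative cases of \cref{ex:encod,ex:decod}, is the delicate part of the argument.
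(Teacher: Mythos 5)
Your proposal matches the paper's proof: the same strengthened induction hypothesis $\textup{QC}\vdash D(E_{k,\ell}(C))=id_k\otimes C\otimes id_\ell$, the same functoriality argument for the two compositional cases, and the same treatment of the generator base cases by decoding into Gray-code-indexed products of multi-controlled gates and collapsing them with \cref{prop:comb}, \cref{commctrl}, \cref{cor:swap}, \cref{prop:sum} and \cref{ctrlXCNot}, together with the auxiliary identification of the decoded permutations $\sigma_{k,n,\ell}$ with wire swaps via \cref{tripleCNotswap} (the paper's \cref{DupsilonLambda,sigmatoswaps,basecaseDE}). The only detail left implicit in your sketch is that the decoded $H$-block is identified with $H$ via the alternative Euler decomposition of \cref{EulerHmoins}, which falls under the "established algebra" you invoke.
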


\begin{proof}
  We prove by structural induction on $C$ that
  \[\forall k,\ell,\ \textup{QC}\vdash D(E_{k,\ell}(C))=id_k\otimes C\otimes id_\ell.\]
  For any two $n$-qubit raw circuits $C_1,C_2$, one has
  \[D(E_{k,\ell}(C_2\circ C_1))= D(E_{k,\ell}(C_2))\circ D(E_{k,\ell}(C_1))\]
  and for any $m$-qubit raw circuit $C_3$,
  \[D(E_{k,\ell}(C_1\otimes C_3))=D(E_{k+n,\ell}(C_3))\circ D(E_{k,\ell+m}(C_1)).\]
  Hence, it remains the basis cases which are proved as \cref{basecaseDE} in \cref{lemmasforbasecaseDE}.
\end{proof}

Note that in general, the decoding function does not preserve the
topological equivalence. For instance, with the raw circuits
$C_1=\scalebox{0.56}{$\tikzfig{exnaturalitebsL}$}$ and
$C_2=\scalebox{0.56}{$\tikzfig{exnaturalitebsR}$}$, we have
$C_1\equiv C_2$ but
$D(C_1)=\scalebox{0.8}{$\tikzfig{exdecodagenaturalitebsL}$}$ and
$D(C_2)=\scalebox{0.8}{$\tikzfig{exdecodagenaturalitebsR}$}$\smallskip. Thus,
the topological rules also have to be mimicked in \textup{QC}:

\begin{lemma}\label{decodingtoporules}
  For any $2^n$-mode raw optical circuits $C_1, C_2$, if
  ${C_1} \equiv C_2$ then $\textup{QC}\vdash D(C_1) = D(C_2)$.
\end{lemma}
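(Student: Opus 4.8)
The plan is to reduce the claim to the structural axioms $\equiv$ of Figure~\ref{fig:axiom}, since $\equiv$ is the smallest congruence generated by the seven equations there, together with the deformation rules implicitly built into the prop structure. Because $D_{k,n}$ is defined compositionally with respect to both $\circ$ and $\otimes$ (\cref{defdecoding}), and because $\textup{QC}\vdash\cdot=\cdot$ is itself a congruence, it suffices to verify the claim for each generating equation of $\equiv$ individually: whenever $C_1\equiv C_2$ is one of the instances \eqref{idneutre}--\eqref{doubleswap}, I would show $\textup{QC}\vdash D(C_1)=D(C_2)$, and the general case then follows by a straightforward induction on the derivation of $C_1\equiv C_2$, using that congruence closure is preserved under $D$.

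First I would dispatch the purely bureaucratic axioms. For the neutrality of identities \eqref{idneutre}, the neutrality of the empty circuit \eqref{videneutre}, the associativity of $\circ$ \eqref{assoccomp}, and the associativity of $\otimes$ \eqref{assoctens}, the two sides decode to the same raw quantum circuit up to the corresponding $\equiv$-rule on the \emph{quantum} side — the inductive clauses $D_{k,n}(C_2\circ C_1)=D_{k,n}(C_2)\circ D_{k,n}(C_1)$ and $D_{k,n}(C_1\otimes C_2)=D_{k+\ell_1,n}(C_2)\circ D_{k,n}(C_1)$, combined with $D_{k,n}(\tikzfig{diagrammevide-s})=id_n$ and $D_{k,n}(\tikzfig{filcourt-s})=id_n$, turn each optical structural rewrite into a quantum structural rewrite, which holds already by $\equiv$ (hence a fortiori in $\textup{QC}$). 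The interchange law \eqref{mixedprod} and the double-swap identity \eqref{doubleswap} require a little more: for \eqref{doubleswap} I would appeal to \cref{prop:CCX}, since the swap decodes (via the $\sigma$-conventions in \cref{defdecoding}) to a multi-controlled $X$ gate which is self-inverse in $\textup{QC}$.

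The genuine obstacle is the naturality of the swap, \eqref{naturaliteswap}, precisely the case flagged in the paragraph preceding the lemma with $C_1=\tikzfig{exnaturalitebsL}$ and $C_2=\tikzfig{exnaturalitebsR}$: here $D(C_1)$ and $D(C_2)$ are genuinely different raw quantum circuits, not related by $\equiv$, so one must exhibit an honest $\textup{QC}$-derivation. The swap decodes to a multi-controlled $X$, and sliding a generator past it corresponds to relabelling which control line the decoded gate is attached to. The technical content is therefore exactly the commutation/relabelling algebra of multi-controlled gates developed earlier: I would invoke \cref{cor:swap} to permute control wires through the decoded swap, \cref{commctrl} and \cref{commctrlpasenface} to commute the two decoded multi-controlled gates whose controls differ, and \cref{prop:comb} where a control meets its anti-control. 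The plan is to reduce \eqref{naturaliteswap} to a finite case analysis on whether the wire carrying $C$ lies above, below, or is itself the target of the decoded swap-gate, and in each case assemble the derivation from these propositions.

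In summary, the proof is an induction on the congruence-generating steps of $\equiv$, where six of the seven axioms decode to corresponding quantum structural rules almost verbatim, and the remaining axiom \eqref{naturaliteswap} is handled by the multi-controlled-gate commutation machinery of \cref{cor:swap}, \cref{commctrl}, \cref{commctrlpasenface}, and \cref{prop:comb}; the bookkeeping of the $\sigma_{k,n,\ell}$ permutations and the Gray-code indices $x_{k,n},y_{k,n}$ from \cref{defdecoding} is where I expect most of the detailed effort to lie, and I would relegate those routine index computations to an appendix.
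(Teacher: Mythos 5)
Your proposal follows essentially the same route as the paper: reduce to the seven generating axioms of $\equiv$ (the paper formalises your ``congruence closure is preserved under $D$'' step as an explicit induction on raw contexts, which is needed precisely because $D$ sends $\otimes$ to $\circ$ with shifted indices), dispatch the purely structural axioms by the corresponding quantum-circuit deformations, use \cref{prop:CCX} for \eqref{doubleswap}, and attack \eqref{naturaliteswap} by structural induction with the multi-controlled commutation machinery of \cref{cor:swap,commctrl,commctrlpasenface,prop:comb}.

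The one place you leave a real hole is the interchange law \eqref{mixedprod}: you note that it ``requires a little more'' but then give an argument only for \eqref{doubleswap}. Decoding the two sides of \eqref{mixedprod} yields sequential compositions that differ by sliding $D_{k,n}(C_2)$ past $D_{k+\ell,n}(C_3)$, i.e.\ by commuting decodings of subcircuits supported on \emph{disjoint mode blocks} --- this is exactly where the failure of $D$ to preserve the monoidal structure bites. The paper isolates this as \cref{decodagefilsdisjoints}: by the Gray-code indexing, decoded generators on disjoint mode ranges always yield multi-controlled gates whose control patterns differ, so they commute by \cref{commctrl} or \cref{commctrlpasenface}, and the general statement follows by structural induction on the two subcircuits. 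You already have these propositions in your toolkit for \eqref{naturaliteswap}, so the missing step is recoverable, but as written the interchange case is unproved.
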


\begin{proof}
  The proof consists intuitively in verifying that the decoding of
  every equation of \cref{fig:axiom} is provable in $\textup{QC}$. The
  proof is given in
  \cref{preuvedecodingtoporules}.
\end{proof}

\begin{lemma}\label{decodingLOPPrules}
  For any $2^n$-mode raw optical circuits $C_1, C_2$, if
  $\textup{LOPP}\vdash C_1 = C_2$ then
  $\textup{QC} \vdash D(C_1) = D(C_2)$.
\end{lemma}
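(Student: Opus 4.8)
The plan is to reduce \cref{decodingLOPPrules} to a finite, rule-by-rule verification: it suffices to show that for \emph{each} axiom \eqref{phase0}--\eqref{Eulerscalaires} of the LOPP-calculus (\cref{axiomsLOPP}), the decoding of its left-hand side is provably equal in $\textup{QC}$ to the decoding of its right-hand side. Since $\textup{LOPP}\vdash C_1=C_2$ means $C_1$ and $C_2$ are related by the congruence generated by these axioms together with the deformation rules of \cref{fig:axiom}, and since $D$ is compatible with both compositions ($D_{k,n}(C_2\circ C_1)=D_{k,n}(C_2)\circ D_{k,n}(C_1)$ and $D_{k,n}(C_1\otimes C_2)=D_{k+\ell_1,n}(C_2)\circ D_{k,n}(C_1)$ by \cref{defdecoding}), a standard induction on the length of the $\textup{LOPP}$-derivation lets one propagate equalities through contexts. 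The deformation rules are already handled by \cref{decodingtoporules}, so I would first invoke that lemma to dispose of the $\equiv$-steps, leaving only the genuine LOPP-axioms to check.

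First I would set up the induction on the derivation witnessing $\textup{LOPP}\vdash C_1=C_2$, using reflexivity, symmetry, transitivity, and the two congruence-under-composition rules as the inductive steps; each of these is immediate from the functoriality of $D$ with respect to $\circ$ and $\otimes$ and from \cref{decodingtoporules} for the topological part. This reduces the whole lemma to the base cases, one per axiom of \cref{axiomsLOPP}, each of which must be established for every admissible $(k,n)$ so that the surrounding control qubits $G_n(k)$ (and the side bitstrings $x_{k,n},y_{k,n}$) are carried along uniformly. Concretely, decoding a single phase shifter or beam splitter at mode position $k$ in a $2^n$-mode circuit produces a multi-controlled $s(\varphi)$, $X$, or $R_X(-2\theta)$ whose control pattern is dictated by the Gray code, so each LOPP-axiom becomes an identity between multi-controlled gates that must be derived in $\textup{QC}$.

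The main work, and the main obstacle, is verifying the two nontrivial Euler-type axioms \eqref{Eulerbsphasebs} and especially \eqref{Eulerscalaires}. The simpler axioms translate into facts already in hand: \eqref{phase0} and \eqref{phaseaddition} decode to periodicity and additivity of multi-controlled $s(\varphi)$, covered by \cref{prop:period} and \cref{prop:sum}; \eqref{bs0} decodes to $\Lambda R_X(0)=\textup{id}$, again \cref{prop:sum}; \eqref{swapbspisur2} relates $\Lambda X$ to $\Lambda R_X(-\pi)$ up to a scalar, a local computation; \eqref{globalphasepropagationbs} decodes to a commutation of a controlled scalar past a controlled rotation, obtainable from \cref{commctrl}/\cref{prop:CP}. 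The decoding of \eqref{Eulerscalaires}, however, yields an equality between two $\textup{QC}$-circuits built from several multi-controlled rotations and phases whose angles are related by the same Euler transformation; I expect this to be provable precisely by appeal to \cref{Euler3dmulticontrolled}, the multi-controlled Euler rule added in \cref{fig:euler}, together with the algebraic lemmas \cref{cor:swap}, \cref{prop:CP}, \cref{prop:sum} and the commutation results \cref{commctrl,commctrlpasenface} to bring the decoded circuit into the exact shape of the rule. Matching the Gray-code-induced control structure and angle conventions on both sides so that the uniqueness clauses of \cref{Euler3dmulticontrolled} and \cref{Euler2d} apply is where the real bookkeeping lies, and I would defer the full case analysis to an appendix while recording here that each axiom reduces to one of these previously established $\textup{QC}$-provable identities.
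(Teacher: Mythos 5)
Your proposal follows essentially the same route as the paper: reduce to an axiom-by-axiom check of \cref{axiomsLOPP} (handling contexts and $\equiv$-steps via the compositionality of $D$ and \cref{decodingtoporules}), dispatch the easy axioms with \cref{prop:sum,prop:period,prop:comb,commctrl}, and resolve the two Euler-type axioms by massaging their decodings into instances of \cref{Euler3dmulticontrolled} (the paper routes \cref{Eulerbsphasebs} through a derived two-qubit-style consequence, \cref{Euler2dmulticontrolled}, and \cref{Eulerscalaires} through \cref{Euler3dmulticontrolled} directly). Two remarks: the commutation needed for \cref{globalphasepropagationbs} is really \cref{commctrlphaseenhaut} (a controlled phase commuting past the \emph{controls} of another multi-controlled gate) rather than \cref{commctrl} or \cref{prop:CP} alone; and what you defer as ``bookkeeping'' --- deriving in $\textup{QC}$ that the decoded right-hand sides can be renormalised so that the uniqueness constraints of \cref{fig:euler} hold --- is in fact the bulk of the paper's argument, a fifteen-case analysis on the $\delta_j$ requiring several further ancillary lemmas (e.g.\ \cref{CRX2pi,passagephasepicircuits,passagephasepihbcircuits,antisymmetriecontrolee}), so your sketch is correct in strategy but leaves the decisive technical step unexecuted.
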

\begin{proof}
  The proof consists intuitively in verifying that the decoding of
  every equation of \cref{axiomsLOPP} is provable in
  $\textup{QC}$. The proof is given in \cref{preuvedecodingLOPPrules}.
\end{proof}

We are now ready to prove the main result of the paper.

\begin{theorem}[Quantum circuit
  completeness] \label{thm:QCcompleteness} \textup{QC} is a complete
  equational theory for quantum circuits: for any quantum circuits
  $C_1$, $C_2$, if $\interp{C_1}=\interp {C_2}$ then
  $\textup{QC}\vdash C_1=C_2$.
\end{theorem}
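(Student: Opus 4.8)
The plan is to obtain completeness by assembling the three decoding lemmas with the completeness of \textup{LOPP}, using the encoding/decoding pair $E,D$ as a bridge between the two calculi. Since \cref{DE}, \cref{decodingtoporules} and \cref{decodingLOPPrules} are all phrased on raw circuits, I would first pass from the actual quantum circuits $C_1,C_2$ to raw representatives $\hat C_1,\hat C_2\in\QCbarebf$, invoking the earlier observation that a derivation exists between two circuits iff one exists between two of their raw representatives. Because $\interp{C_1}=\interp{C_2}$ forces both maps to act on the same $\mathbb C^{\{0,1\}^n}$, the two circuits are $n$-qubit for a common $n$, so $E(\hat C_1)$ and $E(\hat C_2)$ are $2^n$-mode raw optical circuits.

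The core is then a short diagram chase. From the hypothesis $\interp{C_1}=\interp{C_2}$ and \cref{prop:sem-preserving}, which gives $\mathfrak G_n\circ\interp{E(\hat C_i)}=\interp{C_i}\circ\mathfrak G_n$, I would cancel the invertible map $\mathfrak G_n$ to deduce $\interp{E(\hat C_1)}=\interp{E(\hat C_2)}$. Completeness of \textup{LOPP} (\cref{thm:LOPPcompleteness}), lifted to raw circuits in the same manner as above, then yields $\textup{LOPP}\vdash E(\hat C_1)=E(\hat C_2)$. Applying \cref{decodingLOPPrules} transports this derivation across the decoding to give $\textup{QC}\vdash D(E(\hat C_1))=D(E(\hat C_2))$. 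Finally, \cref{DE} supplies $\textup{QC}\vdash D(E(\hat C_i))=\hat C_i$ for $i=1,2$, and chaining by transitivity produces $\textup{QC}\vdash \hat C_1=\hat C_2$, hence $\textup{QC}\vdash C_1=C_2$.

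The genuine difficulty is not in this assembly but is already discharged in the supporting lemmas, and I would flag where it sits. \cref{DE} must reconcile the fact that $D\circ E$ does not return the original circuit but a composite of multi-controlled gates, so the entire algebra of controlled gates (\cref{cor:swap,prop:CP,prop:sum,prop:comb,commctrl,commctrlpasenface}) together with the Euler and periodicity rules is needed there. \cref{decodingLOPPrules} must verify, axiom by axiom, that the \textup{QC}-decoding of every \textup{LOPP} rule of \cref{axiomsLOPP} is \textup{QC}-provable, and its proof in turn leans on \cref{decodingtoporules} for the topological steps, since $\textup{LOPP}\vdash$ on raw circuits also absorbs the rules of \cref{fig:axiom} while $D$ does not respect $\equiv$ on the nose. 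The one point I would check carefully within the assembly itself is the two liftings between circuits and raw circuits --- once for \textup{LOPP} completeness, which is stated for actual \textup{LOPP}-circuits, and once to carry the final raw-level derivation back to $C_1=C_2$ --- ensuring that the Gray-code index $n$ is the same throughout, so that each $E(\hat C_i)$ genuinely has $2^n$ modes and the $\mathfrak G_n$ on both sides of \cref{prop:sem-preserving} is identical.
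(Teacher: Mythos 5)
Your proposal is correct and follows essentially the same route as the paper's own proof: pass to raw representatives, use \cref{prop:sem-preserving} and the completeness of \textup{LOPP} to obtain a derivation between the encodings, transport it via \cref{decodingLOPPrules}, and close with \cref{DE}. Your additional remarks about where the real work is hidden and about the raw-circuit liftings match the paper's framing.
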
 

\begin{proof}
  Given two quantum circuits $C_1$, $C_2$ s.t.
  $\interp{C_1}=\interp {C_2}$, let $C_1'$ (resp.\ $C_2'$) be a raw
  quantum circuit, representative of $C_1$ (resp.\ $C_2$).  Thanks to
  \cref{prop:sem-preserving} we have
  $\interp{E(C'_1)}=\interp {E(C'_2)}$. The completeness of
  $\textup{LOPP}$ implies $\textup{LOPP}\vdash E(C'_1)=E(C'_2)$. By
  \cref{decodingLOPPrules}, we have
  $\textup{QC}\vdash D(E(C'_1)) = D(E(C'_2))$. Moreover \cref{DE}
  implies $\textup{QC}\vdash C'_1 = C'_2$. From this derivation we
  obtain a derivation of $\textup{QC}\vdash C_1 = C_2$, where the
  steps corresponding to the equivalence relation $\equiv$ are
  trivialised.
\end{proof}

\section{Discussions}

We have introduced the first complete equational theory for quantum
circuits. Although this equational theory is fairly simple,
\cref{Euler3dmulticontrolled} is an unbounded family of equations
---one for each possible number of control qubits. Such a family of
equations is a natural byproduct of our proof technique: The decoding
of each axiom of LOPP produces an equation made of multi-controlled
gates that has to be derived using QC.
It is actually quite surprising that \cref{Euler3dmulticontrolled} is
the only remaining equation with multi-controlled
gates. 

Notice that one can get rid of these multi-controlled gates by
extending the context rule as described below. Indeed,
\cref{Euler3dmulticontrolled} can be derived from its 2-qubit case
\begin{multline}\label{Euler3d}
  \tag{r'}
  \scalebox{.6}{\tikzfig{Euler3D-2qubit-L}}=\\
  \scalebox{.6}{\tikzfig{Euler3D-2qubit-R}}
\end{multline}

 \noindent if one allows the following control context rule  $\vdash \Lambda C_1 =
 \Lambda C_2$ when $\vdash C_1=C_2$. 
 Notice that it requires extending the $\Lambda$-construction to any circuit -- which can be done in an inductive way like $\Lambda(C_2\circ  C_1)  = \Lambda C_2 \circ \Lambda C_1$ and $\Lambda (C_1\otimes C_2) = (\Lambda C_1\otimes id_m)\circ (id_1\otimes \sigma_{m,n})\circ (\Lambda C_2\otimes id_n)\circ (id_1\otimes \sigma_{n,m})$. 
 
\vspace{0.2cm}

A natural application of
the completeness result is to design procedures for quantum circuit
optimisation based on this equational theory. One can take advantage of the terminating and confluent rewriting system for optical circuits \cite{clement2022LOv} by mimicking the applications of the rewrite rules on quantum circuits. However, the exponential blowup of the encoding map makes this approach probably inefficient as it is and requires some improvements.

Another future work is to prove (upper or lower) bounds on the size of
a derivation between two given equivalent circuits, as well as a bound
on the size of the intermediate quantum circuits. This might be
useful for providing a verifiable quantum advantage, in particular if
there exist polysize quantum circuits requiring exponentially
many rewrites \cite{aaronson-slide}.

\subsection*{Acknowledgements}

The authors wish to thank Emmanuel Jeandel for his comments on an
early version of this paper.  This work is supported by the PEPR
integrated project EPiQ ANR-22-PETQ-0007 part of Plan France 2030,
the CIFRE 2022/0081, the French National
Research Agency (ANR) under the research projects SoftQPro
ANR-17-CE25-0009-02 and VanQuTe ANR-17-CE24-0035, BPI France under the Concours Innovation PIA3 projects DOS0148634/00 and DOS0148633/00, by the STIC-AmSud
project Qapla’ 21-STIC-10, and by the European projects NEASQC and
HPCQS.

\bibliographystyle{IEEEtran}
\bibliography{ref}

\onecolumn
\appendices

\crefalias{section}{appendix}
\crefalias{subsection}{appendix}
\crefalias{subsubsection}{appendix}
\crefalias{paragraph}{appendix}
\crefalias{subparagraph}{appendix}

\section{Contents}

All the equations from Equation \eqref{CNotPCNotreversible} to Equation \eqref{CNotHCNot} are proven either directly from the axioms of $\textup{QC}_0$, given in
\cref{eq:qc0}, or from the equations already proven. Those proofs are in \cref{proof:usefuleq}.

Proposition \ref{ctrlXCNot} is proven in \cref{preuvectrlXCNot}.

In Appendix \ref{proofinductiveprop}, we highlight the inductive properties of multicontrolled gates which will be used in the inductive proofs of the following appendices, in the form of Lemmas \ref{lem:Lambda0} to \ref{lem:Lambda-Rx}.

Lemmas \ref{commctrlciblesdistinctes} to \ref{decompctrlblancRX} are introduced and proven by induction in \cref{proofmultictrl}. Alongside with Equation \eqref{eq:comRX} proven in \cref{sec:proofcomRX}, those properties are used to prove Propositions \ref{cor:swap} and \ref{prop:CP} in \cref{preuvesswapsmultictrl}.

To prove \cref{prop:sum}, we introduce Lemma \ref{commctrlphaseenhaut}. 
We do a proof by induction with both hypotheses, to prove at the same time \cref{prop:sum} and Lemma \ref{commctrlphaseenhaut}, as detailed in \cref{proofpropsum}.

Appendices \ref{usefulmulctrl}, \ref{proof:commctrlPNotCvar} and \ref{commutationsXtarget} introduce and prove  
Equations \eqref{mctrlX} and \eqref{Euler2dmulticontrolled} and Lemmas \ref{commctrlPNotCvar} to \ref{commctrlphaseenhautP}. Those properties on multi-controlled gates are to be used in other later proofs.

Propositions \ref{prop:comb}, \ref{commctrl}, \ref{commctrlpasenface}, \ref{EulerHmoins}, \ref{soundnessEuleraxioms}, \ref{klfollowfromEuler}, \ref{prop:CCX} and \ref{prop:period} are respectively proven in Appendix \ref{preuvepropcomb}, \ref{preuvecommctrl}, \ref{preuvecommctrlpasenface}, \ref{preuveEulerHmoins}, \ref{appendix:euler3d}, \ref{preuveklfollowfromEuler}, \ref{proof:CCX} and \ref{proof:propperiod}.

Theorem \ref{thm:LOPPcompleteness} is proven in \cref{proof:thmLOPPcompleteness}.

Appendices \ref{usefuldef} and \ref{proof:usefullemmas} introduce convenient notations and Lemmas \ref{antisymmetriecontrolee} to \ref{decodagefilsdisjoints}, useful for proving the main result.

Finally, Lemmas \ref{DE}, \ref{decodingtoporules} and \ref{decodingLOPPrules} are proven in Appendices \ref{lemmasforbasecaseDE}, \ref{preuvedecodingtoporules} and \ref{preuvedecodingLOPPrules}.

 The $\sigma_{k,n,\ell}$ are defined in \cref{defsigma}.

\section{Useful Quantum Circuits Equations}
\subsection{Proofs of Equations \eqref{CNotPCNotreversible} to \eqref{CNotHCNot}}\label{proof:usefuleq}

\noindent Proof of \cref{CNotPCNotreversible}:
\begin{eqnarray*}\!\!\!\!\!\!\!\!\!\!\!\!\!\!\!\!\tikzfig{CNotPbCNot}&=&\tikzfig{CNotPbCNot1}\\[0.5cm]
&\eqeqref{tripleCNotswap}&\tikzfig{CNotPbCNot2}\\[0.5cm]
&\eqeqref{CNotCNot}&\tikzfig{CNotPbCNot3}\\[0.5cm]
&\eqeqref{commutationPctrl}&\tikzfig{CNotPbCNot4}\\[0.5cm]
&\eqeqref{CNotCNot}&\tikzfig{CNotPbCNotalenvers}\\[0.5cm]
\end{eqnarray*}

\noindent Proof of \cref{commutationCNotsbas}:
\begin{eqnarray*} \label{proofcommutationCNotsbas}\!\!\!\!\!\!\!\!\!\!\!\!\!\!\!\!\tikzfig{CNotgrandCNotbas}&\eqeqref{CNotCNot}&\tikzfig{CNotgrandCNotbas1}\\[0.5cm]
&\eqeqref{CNotlift}&\tikzfig{CNotgrandCNotbas2}\\[0.5cm]
&\eqeqref{CNotCNot}&\tikzfig{CNotbasCNotgrand}\\[0.5cm]
\end{eqnarray*}

\noindent Proof of \cref{CNotHH}:
\begin{eqnarray*}\!\!\!\!\!\!\!\!\!\!\!\!\!\!\!\!\tikzfig{CNotHH}&\eqeqref{HH}&\tikzfig{CNotHH1}\\[0.5cm]
&\eqeqref{CZ}&\tikzfig{CNotHH2}\\[0.5cm]
&\eqeqref{CNotPCNotreversible}&\tikzfig{CNotHH3}\\[0.5cm]
&\eqeqref{CZ}&\tikzfig{CNotHH4}\\[0.5cm]
&\eqeqref{HH}&\tikzfig{HHNotC}
\end{eqnarray*}
Note that the second use of \cref{CZ} relies on the fact that \gcnot{} is defined as \minitikzfig{NotCdef-s}, and %
uses a few topological rules\bigskip.

\noindent Proof of \cref{NotCRXNotCreversible}:
\begin{longtable}{RCL}
\tikzfig{RXhconjCNot}&\eqeqref{HH}&\tikzfig{RXhconjCNot1}\\\\
&\eqeqref{CNotHH}&\tikzfig{RXhconjCNot2}\\\\
&\eqeqref{CNotPCNotreversible}&\tikzfig{RXhconjCNot3}\\\\
&\eqeqref{CNotHH}&\tikzfig{RXhconjCNot4}\\\\
&\eqeqref{HH}&\tikzfig{RXbconjCNot}.
\end{longtable}

\noindent Proof of \cref{XX}:
\begin{eqnarray*}\!\!\!\!\!\!\!\!\!\!\!\!\!\!\!\!\tikzfig{XX}&\eqeqref{P0}&\tikzfig{XX1}\\[0.5cm]
&\eqeqref{XPX}&\tikzfig{XX2}\\[0.5cm]
&\eqdeuxeqref{S0}{P0}&\tikzfig{filcourt-s}\\[0.5cm]
\end{eqnarray*}

\noindent Proof of \cref{CNotliftvar}:
\begin{eqnarray*}\!\!\!\!\!\!\!\!\!\!\!\!\!\!\!\!\tikzfig{CNotgrandCNotbas}&\eqeqref{commutationCNotsbas}&\tikzfig{CNotgrandconjNotChauts1}\\[0.5cm]
&\eqeqref{CNotlift}&\tikzfig{CNotgrandconjNotChauts2}\\[0.5cm]
&=&\tikzfig{CNotgrandconjNotChauts3}\\[0.5cm]
&=&\tikzfig{CNotgrandconjNotChauts}\\[0.5cm]
\end{eqnarray*}

\noindent Proof of \cref{commutationXCNot}:
\begin{eqnarray*}\!\!\!\!\!\!\!\!\!\!\!\!\!\!\!\!\tikzfig{XbCNot}&\eqdeuxeqref{HH}{xgate}&\tikzfig{XbCNot1}\\[0.5cm]
&\eqdeuxeqref{CNotHH}{zgate}&\tikzfig{XbCNot2}\\[0.5cm]
&\eqdeuxeqref{commutationPctrl}{CNotHH}&\tikzfig{XbCNot3}\\[0.5cm]
&\eqdeuxeqref{HH}{xgate}&\tikzfig{CNotXb}\\[0.5cm]
\end{eqnarray*}

\noindent Proof of \cref{ZZ}:
\begin{eqnarray*}\!\!\!\!\!\!\!\!\!\!\!\!\!\!\!\!\tikzfig{ZZ}&\eqdeuxeqref{xgate}{HH}&\tikzfig{ZZ1}\\[0.5cm]
&\eqeqref{HH}&\tikzfig{ZZ2}\\[0.5cm]
&\eqeqref{XX}&\tikzfig{HH}\\[0.5cm]
&\eqeqref{HH}&\tikzfig{filcourt-s}\\[0.5cm]
\end{eqnarray*}

\noindent Proof of \cref{NotClift}:
\begin{eqnarray*}\!\!\!\!\!\!\!\!\!\!\!\!\!\!\!\!\tikzfig{NotCgrandNotCbas}&=&\tikzfig{NotCNotCNotC1}\\[0.5cm]
&\eqdeuxeqref{commutationCNotsbas}{CNotlift}&\tikzfig{NotCNotCNotC2}\\[0.5cm]
&\eqeqref{CNotCNot}&\tikzfig{NotCNotCNotC3}\\[0.5cm]
&=&\tikzfig{NotCNotCNotC}\\[0.5cm]
\end{eqnarray*}

\noindent Proof of \cref{ZCNot}:
\begin{eqnarray*}\!\!\!\!\!\!\!\!\!\!\!\!\!\!\!\!\tikzfig{ZbCNot}&\eqdeuxeqref{xgate}{HH}&\tikzfig{ZbCNot1}\\[0.5cm]
&\eqeqref{CNotHH}&\tikzfig{ZbCNot2}\\[0.5cm]
&\eqeqref{CNotX}&\tikzfig{ZbCNot3}\\[0.5cm]
&\eqeqref{CNotHH}&\tikzfig{ZbCNot4}\\[0.5cm]
&\eqdeuxeqref{xgate}{HH}&\tikzfig{CNotZZ}\\[0.5cm]
\end{eqnarray*}

\noindent Proof of \cref{commutationRXCNot}:
\begin{eqnarray*}\!\!\!\!\!\!\!\!\!\!\!\!\!\!\!\!\tikzfig{XCNot}&\eqdeuxeqref{HH}{RXgate}&\tikzfig{XCNot1}\\[0.5cm]
&\eqeqref{CNotHH}&\tikzfig{XCNot2}\\[0.5cm]
&\eqeqref{commutationPctrl}&\tikzfig{XCNot3}\\[0.5cm]
&\eqeqref{CNotHH}&\tikzfig{XCNot4}\\[0.5cm]
&\eqdeuxeqref{HH}{RXgate}&\tikzfig{CNotX}\\[0.5cm]
\end{eqnarray*}

\noindent Proof of \cref{RX0}:
\begin{longtable}{RCL}
\tikzfig{RX0}&\eqeqref{RXgate}&\tikzfig{HP0H}\\\\
&\eqdeuxeqref{S0}{P0}&\tikzfig{HH}\\\\
&\eqeqref{HH}&\tikzfig{filcourt-s}\\\\
\end{longtable}

\noindent Proof of \cref{RXRX}:
\begin{eqnarray*}\!\!\!\!\!\!\!\!\!\!\!\!\!\!\!\!\tikzfig{RXRX}&\eqeqref{RXgate}&\tikzfig{RXRX1}\\[0.5cm]
&\eqeqref{HH}&\tikzfig{RXRX2}\\[0.5cm]
&\eqdeuxeqref{SS}{PP}&\tikzfig{RXRX3}\\[0.5cm]
&\eqeqref{RXgate}&\tikzfig{RXthetaplusthetaprime}\\[0.5cm]
\end{eqnarray*}

\noindent Proof of \cref{CNotHCNot}:
\begin{eqnarray*}\!\!\!\!\!\!\!\!\!\!\!\!\!\!\!\! \tikzfig{XbHhCNotHhCNotHhCNOTHhCNOT}&\eqeqref{HH}&\tikzfig{XbHhCNotHhCNotHhCNOTHhCNOT1}\\[0.5cm]
&\eqdeuxeqref{CNotHH}{HH}&\tikzfig{XbHhCNotHhCNotHhCNOTHhCNOT2}\\[0.5cm]
&\eqeqref{CZ}&\tikzfig{XbHhCNotHhCNotHhCNOTHhCNOT3}\\[0.5cm]
&\eqdeuxeqref{CNotPCNotreversible}{CNotCNot}&\tikzfig{XbHhCNotHhCNotHhCNOTHhCNOT4}\\[0.5cm]
&\eqeqref{commutationPctrl}&\tikzfig{XbHhCNotHhCNotHhCNOTHhCNOT4bis}\\[0.5cm]
&\eqdeuxeqref{ZZ}{ZCNot}&\tikzfig{XbHhCNotHhCNotHhCNOTHhCNOT5}\\[0.5cm]
&\eqtroiseqref{zgate}{PP}{ZZ}&\tikzfig{XbHhCNotHhCNotHhCNOTHhCNOT6}\\[0.5cm]
&\eqeqref{commutationPctrl}&\tikzfig{XbHhCNotHhCNotHhCNOTHhCNOT6bis}\\[0.5cm]
&\eqdeuxeqref{CNotPCNotreversible}{commutationPctrl}&\tikzfig{XbHhCNotHhCNotHhCNOTHhCNOT6ter}\\[0.5cm]
&\eqeqref{CZ}&\tikzfig{XbHhCNotHhCNotHhCNOTHhCNOT7}\\[0.5cm]
&\eqdeuxeqref{zgate}{commutationPctrl}&\tikzfig{XbHhCNotHhCNotHhCNOTHhCNOT8}\\[0.5cm]
&\eqdeuxeqref{xgate}{HH}&\tikzfig{XbHhCNotHhCNotHhCNOTHhCNOT9}\\[0.5cm]
&\eqdeuxeqref{XX}{CNotHH}&\tikzfig{XbHhCNotHhCNotHhCNOTHhCNOT10}\\[0.5cm]
&\eqdeuxeqref{HH}{CNotCNot}&\tikzfig{XbHhCNotHhCNotHhCNOTHhCNOT11}\\[0.5cm]
\end{eqnarray*}

It follows that

\begin{eqnarray*}\!\!\!\!\!\!\!\!\!\!\!\!\!\!\!\! \tikzfig{CNotHhCNot}&\eqdeuxeqref{CNotCNot}{HH}&\tikzfig{XbHhCNotHhCNotHh}\\[0.5cm]
\end{eqnarray*}

\subsection{Proof of Proposition \ref{ctrlXCNot}}\label{preuvectrlXCNot}
     First, we can notice that 
     \begin{longtable}{RCL}
\Lambda^{1} P(\pi)&\eqdef&\tikzfig{Ppidec1} \\\\ 
          &\overset{\text{def}}{=}& \tikzfig{Ppidec2} \\\\
          &\eqeqref{RXgate}& \tikzfig{Ppidec2bis1} \\\\
          &\eqdeuxeqref{HH}{SS}& \tikzfig{Ppidec2bis2} \\\\
          &\eqeqref{S0}& \tikzfig{Ppidec3} \\\\
          &\eqeqref{CNotHH}& \tikzfig{Ppidec4} \\\\
          &\eqdeuxeqref{HH}{CNotHH}& \tikzfig{Ppidec5} \\\\
          &\eqeqref{HH}& \tikzfig{Ppidec6} \\\\
          &\eqeqref{CZ}& \tikzfig{CZ}
\end{longtable}
It follows that 
\begin{longtable}{RRCL}
  \qczero\vdash&\Lambda^1 X &\eqdef& \tikzfig{Lambda1Xdec1} \\\\
               &&=& \tikzfig{Lambda1Xdec2} \\\\
               &&\eqeqref{HH}& \tikzfig{CNot-m}.
\end{longtable}

\subsection{Inductive properties for multi-controls: Lemmas \ref{lem:Lambda0} to \ref{lem:Lambda-Rx}}
\label{proofinductiveprop}
The following technical lemmas highlight the inductive properties of
the circuits $\Lambda^xG$. They are at the heart of the proof of the
completeness result.

\begin{lemma}[Base case for the inductive properties]
  \label{lem:Lambda0}
  For all $G\in\{s(\psi),X,R_X(\theta),P(\varphi)\}$, if
  $\epsilon$ is the empty list, $\Lambda^{\epsilon} G = G$.
\end{lemma}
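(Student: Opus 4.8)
The plan is to prove the statement purely by unfolding the relevant definitions, case by case on the generator $G$; no appeal to the axioms of $\qczero$ should be needed, since the equality is definitional (up to the topological rules $\equiv$ of \cref{fig:axiom}). First I would observe that in \cref{def:multicontrolled-oriented} the circuit $\Lambda^x G$ is built from the all-positive circuit $\lambda^{|x|} G$ by conjugating with an $X$ gate each control wire on which $x_i = 0$. When $x = \epsilon$ there are no control wires at all, hence no $X$-conjugations and no auxiliary structure, so $\Lambda^\epsilon G$ reduces directly to $\lambda^0 G$. It therefore suffices to verify $\lambda^0 G = G$ for each of the four generators.

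For $G \in \{s(\psi), R_X(\theta), P(\varphi)\}$ this is immediate from \cref{def:multicontrolled-pos}, which sets $\lambda^0 s(\psi) \coloneqq s(\psi)$, $\lambda^0 R_X(\theta) \coloneqq R_X(\theta)$, and $\lambda^0 P(\varphi) \coloneqq P(\varphi)$ by definition, leaving nothing to prove. The only case requiring a short computation is $G = X$: here I would unfold the macro of \cref{def:multicontrolled-pos} for $\lambda^0 X$, which specialises (at zero controls) to $H \circ \lambda^0 P(\pi) \circ H$. Substituting $\lambda^0 P(\pi) = P(\pi)$ yields $H \circ P(\pi) \circ H$, and this is exactly the abbreviation defining $X$ through \eqref{zgate} ($Z \coloneqq P(\pi)$) and \eqref{xgate} ($X \coloneqq HZH$). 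Hence $\Lambda^\epsilon X = \lambda^0 X = X$.

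I do not expect a genuine obstacle, as this is a base-case lemma whose role is precisely to anchor the later inductions: every step is a definitional unfolding. The only point that needs mild care is confirming that the $\epsilon$-instance of the \cref{def:multicontrolled-oriented} diagram carries no leftover identity wires or permutation gates, so that the reduction $\Lambda^\epsilon G = \lambda^0 G$ holds already up to $\equiv$ and not merely up to a derivation in $\qczero$. Once that is checked, the three trivial cases and the one-line unfolding for $X$ complete the proof.
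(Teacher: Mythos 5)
Your proposal is correct and follows the same route as the paper's proof: reduce $\Lambda^{\epsilon}G$ to $\lambda^{0}G$ by noting the absence of $X^{\overline{x_i}}$ conjugations, read off the three definitional cases, and handle $G=X$ by unfolding the macro to $HP(\pi)H=HZH$ via \eqref{zgate} and \eqref{xgate}. Nothing to add.
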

\begin{proof}
  In the case of an empty list, in
  \cref{def:multicontrolled-oriented} there are no gates $X^{\overline{x_i}}$, and
  $\Lambda^\epsilon G = \lambda^0G$. We can then check in
  \cref{def:multicontrolled-pos} that each $\lambda^0G$ is
  $G$: by definition this is true for $R_X(\theta)$, $s(\psi)$ and
  $P(\theta)$. For $X$ we fall back on the definition of $X$ as
  $HP(\pi)H=HZH$.
\end{proof}

\begin{lemma}[Inductive properties for $\Lambda^xG$]
  \label{lem:Lambda0x}
  For all $x\in \{0,1\}^k$,
  and $G\in \{s(\varphi),X,R_X(\theta),P(\varphi)\}$,
  \[\Lambda^{0x} G=\scalebox{0.8}{\tikzfig{neg-control}}\]
\end{lemma}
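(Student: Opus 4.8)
The plan is to prove this by directly unfolding \cref{def:multicontrolled-oriented}, since the statement is an equality of circuits that already holds up to the structural equivalence $\equiv$ of \cref{fig:axiom}, with no axiom of $\qczero$ required (consistent with the absence of a $\qczero\vdash$ in the statement, just as in the base case \cref{lem:Lambda0}). Recall that for $G\in\{X,R_X(\theta),P(\varphi)\}$ and $x\in\{0,1\}^k$, the circuit $\Lambda^{x}G$ is obtained by conjugating the all-positive gate $\lambda^{k}G$ by a layer $X^{\overline{x_1}}\otimes\cdots\otimes X^{\overline{x_k}}$ acting on the control wires (and similarly for $\Lambda^{x}s(\varphi)$, which has no target wire), where $\overline{x_i}=1-x_i$, $X^1=X$ and $X^0=id_1$. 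The right-hand side \scalebox{0.8}{\tikzfig{neg-control}} displays $\Lambda^{1x}G$ surrounded by an $X$ on the new (first) wire, i.e.\ the standard realisation of a negative control by an $X$-sandwich.

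First I would expand the left-hand side: for the length-$(k+1)$ list $0x_1\cdots x_k$ the first control carries $X^{\overline 0}=X^1=X$ while control $i+1$ carries $X^{\overline{x_i}}$, so by \cref{def:multicontrolled-oriented} the left-hand side is $\lambda^{k+1}G$ conjugated by $X\otimes X^{\overline{x_1}}\otimes\cdots\otimes X^{\overline{x_k}}$ (with the identity on the target). Next I would expand the right-hand side: unfolding $\Lambda^{1x}G$ gives the same $\lambda^{k+1}G$ conjugated by $id_1\otimes X^{\overline{x_1}}\otimes\cdots\otimes X^{\overline{x_k}}$, since $X^{\overline 1}=X^0=id_1$ leaves the first wire untouched. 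It then remains only to merge the outer $X$ on the first wire with this inner conjugation layer: by the interchange law \eqref{mixedprod} together with the associativity and identity rules of \cref{fig:axiom} one has $(X\otimes id)\circ(id_1\otimes X^{\overline{x_1}}\otimes\cdots)=X\otimes X^{\overline{x_1}}\otimes\cdots$ on each side, which turns the right-hand side into exactly the conjugated form obtained for $\Lambda^{0x}G$.

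The argument is uniform in $G$: the $s(\varphi)$ case is identical except that there is no target wire and $\lambda^{k+1}s(\varphi)$ (which is $\lambda^{k}P(\varphi)$ by \cref{def:multicontrolled-pos}) replaces $\lambda^{k+1}G$, following the $s$-variant of \cref{def:multicontrolled-oriented}. There is no genuine obstacle here beyond careful bookkeeping; the only points to watch are the sign conventions $\overline 0=1$ and $\overline 1=0$, and the correct use of the interchange law to pull the first-wire $X$ into the conjugation layer without disturbing the gates $X^{\overline{x_i}}$ on the other control wires. Since every step is an instance of \cref{fig:axiom}, the resulting equality is purely definitional.
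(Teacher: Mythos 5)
Your proof is correct and follows exactly the paper's own argument: unfold \cref{def:multicontrolled-oriented}, observe that the first control of $\Lambda^{0x}G$ carries $X^{\overline{0}}=X$ while $\Lambda^{1x}G$ has the identity there and all other layers coincide, so the $X$-sandwich on the first wire of the right-hand side reproduces the left-hand side up to $\equiv$. The paper states this in one sentence; you merely spell out the bookkeeping (including the $s(\varphi)$ case and the use of the interchange law) in more detail.
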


\begin{proof}
  This is directly derived from the definition of $\Lambda^xG$: the
  $X^{\overline{x_1}}$'s on the top wire are $X$ for $\Lambda^{0x} G$ and the identity
  for $\Lambda^{1x} G$, while the $X^{\overline{x_i}}$'s on the lower wires are the
  same.
\end{proof}

\begin{lemma}[Inductive properties for $\Lambda^xs(\varphi)$]
  \label{lem:Lambda-s}
  Suppose that $x$ is a $k$-length list of booleans. We then have
  $\Lambda^1 s(\varphi) = P(\varphi)$, $\Lambda^{1x1}s(\varphi) =
  \Lambda^{1x} P(\varphi)$, and
  \[
    \Lambda^{1x0}s(\varphi)= \scalebox{0.8}{\tikzfig{mctrlsphi0}}
  \]
\end{lemma}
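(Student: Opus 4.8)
The plan is to unfold \cref{def:multicontrolled-pos,def:multicontrolled-oriented} and track the conjugating $X$ gates; all three identities are purely definitional, holding up to the topological equivalence $\equiv$, with no appeal to the axioms of $\qczero$. The single fact doing the real work is the defining clause $\lambda^{n+1}s(\psi)=\lambda^{n}P(\psi)$ of \cref{def:multicontrolled-pos}: turning the bottom control of a positively-multi-controlled scalar into the phase target of a controlled $P$ is a definitional equality.

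First I would dispatch $\Lambda^{1}s(\varphi)=P(\varphi)$. Since the single control bit equals $1$, \cref{def:multicontrolled-oriented} contributes no conjugating gate ($X^{\overline{x_1}}=X^{0}=id$), so $\Lambda^{1}s(\varphi)=\lambda^{1}s(\varphi)$, and by \cref{def:multicontrolled-pos} this is $\lambda^{0}P(\varphi)=P(\varphi)$. For $\Lambda^{1x1}s(\varphi)=\Lambda^{1x}P(\varphi)$ I would observe that, by \cref{def:multicontrolled-oriented}, both sides are obtained by conjugating a positively-multi-controlled core by $X$ gates placed exactly on the wires whose control bit is $0$ --- that is, on the positions of $x$ equal to $0$, since the leading $1$ (and, on the left, the trailing $1$) yield identity conjugations and the target wire is never conjugated. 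The two cores are $\lambda^{k+2}s(\varphi)$ on the left and $\lambda^{k+1}P(\varphi)$ on the right, and these coincide by the clause $\lambda^{n+1}s=\lambda^{n}P$ taken at $n=k+1$. As the conjugating $X$'s and the cores then match wire-for-wire, the two circuits are equal.

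The last identity $\Lambda^{1x0}s(\varphi)$ is the same computation with the trailing bit changed from $1$ to $0$: \cref{def:multicontrolled-oriented} now additionally conjugates the last wire by $X$. Rewriting the core as $\lambda^{k+1}P(\varphi)$ makes this last wire the phase target, and since the extra $X$ acts on a wire disjoint from the control-flipping $X$'s it factors out, giving $\Lambda^{1x0}s(\varphi)=X\circ\Lambda^{1x}P(\varphi)\circ X$ conjugated on the target wire, which is precisely the displayed circuit. (The analogues \cref{lem:Lambda0,lem:Lambda0x} are proved in the same definition-unfolding style.)

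The only delicate point --- and the single place where a reader should slow down --- is the bookkeeping of wire roles: the same positive core is read as an all-control scalar on $k+2$ wires on the $s$-side and as a $(k+1)$-control phase on the $P$-side, so one must verify that the conjugating $X$'s land on matching wires and that the leading (and trailing) positive control bits really do drop out. I do not expect any genuine obstacle beyond this indexing check.
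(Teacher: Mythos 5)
Your proof is correct and follows essentially the same route as the paper's: both unfold \cref{def:multicontrolled-oriented}, use the defining clause $\lambda^{k+2}s(\varphi)=\lambda^{k+1}P(\varphi)$ to identify the positive cores, and observe that the trailing bit $b=1$ contributes nothing while $b=0$ adds two $X$ gates on the bottom wire. Your remark that no $\qczero$ axiom is needed (only topological equivalence) matches the paper, which reserves the use of \cref{XX} for \cref{lem:Lambda-P,lem:Lambda-Rx}.
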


\begin{proof}
  By definition, $\Lambda^1 s(\varphi)$  is $\lambda^1 s(\varphi)$:
  there are no $X^{\overline{x_i}}$ since the list only contains a single $1$. By
  definition, $\lambda^1 s(\varphi)$ is $\lambda^0 P(\varphi)$, which
  is $P(\varphi)$.
  
  Suppose now that $x$ is a $k$-length list of booleans, and $b$ is a
  single boolean. Consider $\Lambda^{1xb}s(\varphi)$: by definition
  it is
  \[
    \scalebox{0.8}{\tikzfig{generalControledGateS1xb}}.
  \]
  By definition, $\lambda^{k+2}s(\varphi) =
  \lambda^{k+1}P(\varphi)$. Now, $\Lambda^{1x}P(\varphi)$ is
  \[
    \scalebox{0.8}{\tikzfig{generalControledGateP1x}}.
  \]
  We directly recover $\Lambda^{1x1}s(\varphi)$, i.e. when $b=1$, and
  the case $b=0$ since this just amounts to add the two gates $X^{\overline{0}}=X^1 =
  X$ on the bottom wire.
\end{proof}

\begin{lemma}[Inductive properties of $\Lambda^xX$]
  \label{lem:Lambda-X}
  Suppose that $x$ is a $k$-length list of boolean. Then
  \[\Lambda^{1x}X = \scalebox{0.8}{\tikzfig{mctrlPH}}.\]
\end{lemma}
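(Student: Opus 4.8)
The plan is to proceed purely by unfolding the definitions, exactly as in the proof of \cref{lem:Lambda-s}: the stated equality is a definitional equality of circuits as a prop, so no $\qczero$-derivation is needed, only a rearrangement valid up to the structural equivalence $\equiv$ of \cref{fig:axiom}.

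First I would apply \cref{def:multicontrolled-oriented} to the $(k{+}1)$-length list $1x = 1,x_1,\dots,x_k$. This presents $\Lambda^{1x}X$ as the core circuit $\lambda^{k+1}X$ with a gate $X^{\overline{x_i}}$ placed on each control wire on both sides of the core. Since the leading control bit is $1$, we have $X^{\overline 1}=X^0=\gid$, so no gate is added on the first (top) control wire, and the conjugating gates appear only on the wires coming from $x_1,\dots,x_k$. Next I would expand the core using the macro $\lambda^{n}X \coloneqq \scalebox{0.8}{\tikzfig{mctrlPHOnlyBlack}}$ of \cref{def:multicontrolled-pos}, which sandwiches $\lambda^{k+1}P(\pi)$ between two Hadamard gates on the target (bottom) wire.

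The final step is to observe that the conjugating $X^{\overline{x_i}}$ gates act only on control wires, which are disjoint from the target wire carrying the Hadamards. Hence they slide freely across the $H$ gates using only the structural rules of \cref{fig:axiom} --- the interchange law \eqref{mixedprod} together with \eqref{naturaliteswap} and the associativity rules. After this rearrangement the $X^{\overline{x_i}}$ gates conjugate $\lambda^{k+1}P(\pi)$ directly, and that conjugated circuit is, by \cref{def:multicontrolled-oriented}, exactly $\Lambda^{1x}P(\pi)$. Thus $\Lambda^{1x}X$ equals $H$ on the target, then $\Lambda^{1x}P(\pi)$, then $H$ on the target, which is the claimed circuit $\scalebox{0.8}{\tikzfig{mctrlPH}}$.

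There is no genuine obstacle here beyond bookkeeping; the only point deserving care is checking that the $X^{\overline{x_i}}$ gates really commute past the target Hadamards at the level of the prop, i.e.\ that the rearrangement holds up to $\equiv$ and does not secretly invoke an axiom of $\qczero$. This follows immediately from the disjointness of the wires on which the two families of gates act, so the argument stays at the same definitional level as the surrounding lemmas.
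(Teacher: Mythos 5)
Your proof is correct and follows essentially the same route as the paper's: unfold \cref{def:multicontrolled-oriented} and the macro $\lambda^{n}X$, slide the $X^{\overline{x_i}}$ conjugating gates past the target-wire Hadamards using only the structural ($\equiv$) rules, and recognise the conjugated core as $\Lambda^{1x}P(\pi)$. Your explicit remark that no $\qczero$ axiom is needed matches the paper's observation that, unlike \cref{lem:Lambda-P,lem:Lambda-Rx}, this lemma is purely definitional.
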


\begin{proof}
  By definition,
  \[
    \Lambda^{1x}X = \scalebox{0.8}{\tikzfig{mctrlPHdef}}
    = \scalebox{0.8}{\tikzfig{mctrlPHdef2}},
  \]
  which is exactly the right-hand-side of the desired equation.
\end{proof}

\begin{lemma}[Inductive properties of $\Lambda^xP(\varphi)$]
  \label{lem:Lambda-P}
  Suppose that $x$ is a $k$-length list of boolean. Then
  \[\qczero\vdash \Lambda^{1x} P(\varphi) = \scalebox{0.8}{\tikzfig{mctrlPphi2}} \]
\end{lemma}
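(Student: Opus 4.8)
The plan is to mirror the treatment of \cref{lem:Lambda-X}: unfold both controlled-gate definitions and then push the conjugating $X$ gates inward. First I would apply \cref{def:multicontrolled-oriented} to the $(k+1)$-length control word $1x$. Since the leading control bit is the fixed value $1$, its companion gate $X^{\overline{1}}=X^{0}$ is the identity, so $\Lambda^{1x}P(\varphi)$ is exactly $\lambda^{k+1}P(\varphi)$ pre- and post-composed with the gates $X^{\overline{x_i}}$ on the lower control wires (those indexed by $x$), with nothing added on the top wire. Next I would expand $\lambda^{k+1}P(\varphi)$ using \cref{def:multicontrolled-pos}, exposing its recursive shape built from the inner gates $\lambda^{k}R_X(\pm\varphi/2)$ together with the Hadamard/\cnot{} structure carried by the new control wire and the target.

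The core of the argument is then to absorb the conjugating $X^{\overline{x_i}}$ gates into this expansion. On every wire that serves as a control of an inner $\lambda^{k}R_X$ gate, the surrounding pair $X^{\overline{x_i}}\,\cdots\,X^{\overline{x_i}}$ is, by the reverse reading of \cref{def:multicontrolled-oriented}, precisely the passage from an all-positive control to the mixed control prescribed by $x$; this converts each inner $\lambda^{k}R_X$ into the corresponding $\Lambda^{\cdot}R_X$, i.e. into the building blocks of the displayed right-hand side. The only non-definitional steps occur where a conjugating $X$ sitting on the newest control wire must first travel past a \cnot{} belonging to the $P$-construction before reaching the inner rotations; these crossings are handled by the structural commutations \cref{commutationXCNot} (and \cref{commutationRXCNot} where a rotation is crossed), with \cref{HH,CNotCNot} used to tidy up afterwards. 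This is exactly why the statement is phrased as a $\qczero$-derivation rather than a strict graphical equality as in \cref{lem:Lambda-X}. Once all the $X$ gates have been relocated and recognised, the resulting circuit is exactly the one displayed in the statement.

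I expect the main obstacle to be the bookkeeping in this absorption step rather than any deep new idea: one must verify that every conjugating $X$ either cancels against a partner or is consumed into the control pattern of an inner rotation, so that no stray $X$ remains on the target or on the top (positively controlled) wire, and that the induced control word matches $x$ bit-for-bit. Because the top wire carries no conjugating gate, particular care is needed wherever that wire participates in the internal \cnot{}s of the $P$-construction, to confirm that those \cnot{}s survive intact and do not spawn spurious $X$ gates on the remaining wires.
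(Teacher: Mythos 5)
Your overall strategy --- unfold $\Lambda^{1x}P(\varphi)$ and $\lambda^{k+1}P(\varphi)$ by definition, then redistribute the conjugating $X^{\overline{x_i}}$ gates so that each inner all-positive gate becomes the corresponding mixed-control gate --- is exactly the paper's. But the mechanism you give for the redistribution does not work as stated, and it omits the one step that carries the entire $\qczero$ content. After unfolding, each control wire indexed by $x$ carries a \emph{single} pair of conjugating gates, one at the far left and one at the far right of the whole circuit, while the expansion of $\lambda^{k+1}P(\varphi)$ contains \emph{several} inner sub-circuits (the blocks recognised as $\Lambda^{1x}s(\frac\varphi2)$ and $\Lambda^{1x}R_X(\varphi)$ in the displayed right-hand side), each of which needs its own conjugating pair. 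One pair cannot simultaneously ``surround'' all of them; you must first insert cancelling pairs $XX=\mathrm{id}$ via \cref{XX} between consecutive inner gates on each such wire. This insertion is the sole non-definitional ingredient --- the paper flags it explicitly when it remarks that these inductive lemmas follow from the definitions except for the use of \cref{XX} in \cref{lem:Lambda-P,lem:Lambda-Rx} --- and your write-up never invokes it.

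Conversely, the obstacle you do flag --- conjugating $X$'s having to travel past the \cnot{}s of the construction via \cref{commutationXCNot} and \cref{commutationRXCNot} --- does not arise. As you yourself observe, the newest control wire (the one carrying the fixed bit $1$) receives no conjugating gate, and it is precisely that wire and the target that the internal \cnot{}s connect; the conjugating $X^{\overline{x_i}}$'s sit on the remaining control wires, which those \cnot{}s never touch. So no commutation is needed, and there is likewise no need to unfold down to the $\lambda^{k}R_X(\pm\varphi/2)$ level: the right-hand side keeps $\Lambda^{1x}R_X(\varphi)$ and $\Lambda^{1x}s(\frac\varphi2)$ as unopened blocks, and the proof closes by recognising them directly from \cref{def:multicontrolled-oriented} once the $XX$ pairs have been inserted.
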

\begin{proof}
  By definition,
  \[ \Lambda^{1x} P(\varphi)
    = \scalebox{0.8}{\tikzfig{mctrlP1xdef}}
    = \scalebox{0.8}{\tikzfig{mctrlP1xdef2}}
  \]
  Since $XX$ is the identity according to \cref{XX}, this is equal to
  \[
    \scalebox{0.8}{\tikzfig{mctrlP1xdef3}}.
  \]
  We can conclude by noting that
  \[
    \Lambda^{1x} s(\frac\varphi2) =
    \scalebox{0.8}{\tikzfig{mctrlP1xdef3-2}}
    \text{ and }
    \Lambda^{1x} R_X(\varphi) =
    \scalebox{0.8}{\tikzfig{mctrlP1xdef3-1}}.
  \]
\end{proof}

\begin{lemma}[Inductive properties of $\Lambda^xR_X(\varphi)$]
  \label{lem:Lambda-Rx}
  Suppose that $x$ is a $k$-length list of boolean. Then
  \[\qczero\vdash \Lambda^{1x} R_X(\theta)=\scalebox{0.8}{\tikzfig{mctrlXtheta}}.
  \]
\end{lemma}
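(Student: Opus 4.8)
The plan is to prove this the same way as the preceding \cref{lem:Lambda-P} and \cref{lem:Lambda-X}, namely by a direct unfolding of the definitions, with \cref{XX} as the only equational ingredient beyond pure definitional rewriting; no induction is required here, since the inductive content is already carried by the recursive definition of $\lambda^{k+1}R_X(\theta)$. The starting observation is that, because the leading control of the list $1x$ is positive, \cref{def:multicontrolled-oriented} places $X^{\overline 1}=X^{0}=id_1$ on the top control wire --- that is, no gate at all --- so that $\Lambda^{1x}R_X(\theta)$ is exactly the all-positive circuit $\lambda^{k+1}R_X(\theta)$ conjugated by the gates $X^{\overline{x_i}}$ on the lower control wires only, with nothing on the target wire.

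First I would unfold $\lambda^{k+1}R_X(\theta)$ through its inductive definition in \cref{def:multicontrolled-pos}. This peels off the top control and exposes two inner rotation blocks $\lambda^{k}R_X(\theta/2)$ and $\lambda^{k}R_X(-\theta/2)$ (on the remaining control wires together with the target) separated by the connecting gates that realise the effect of the peeled-off top control, which are supported on the top-control and target wires. Next I would insert the identity in the form $X^{\overline{x_i}}X^{\overline{x_i}}$ --- valid by \cref{XX} --- at the junctions between the blocks, so as to flank each copy of $\lambda^{k}R_X(\pm\theta/2)$ on both sides by the conjugating gates $X^{\overline{x_i}}$. Since these gates act on wires disjoint from the connecting structure, they commute freely across it and the spurious pairs cancel again by \cref{XX}; by \cref{def:multicontrolled-oriented} each flanked block $X^{\overline{x_i}}\,\lambda^{k}R_X(\pm\theta/2)\,X^{\overline{x_i}}$ is precisely $\Lambda^{x}R_X(\pm\theta/2)$. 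Reassembling the pieces, with the connecting gates left intact, yields the right-hand side of the claimed equation.

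The point requiring the most care --- and the main potential obstacle --- is the bookkeeping of exactly where the $X^{\overline{x_i}}$ gates are inserted and cancelled, so that precisely two occurrences of $\Lambda^{x}R_X(\pm\theta/2)$ reform while all remaining $X$-gates recombine into the outer conjugation and the connecting structure is untouched. Because the conjugating gates never touch either the top control wire or the target wire, there is no nontrivial commutation relation to verify (only the structural commutation of gates on disjoint wires from \cref{fig:axiom}), and the entire derivation reduces to this alignment of $X$-gates together with repeated use of \cref{XX}. Hence the equality holds already in $\qczero$.
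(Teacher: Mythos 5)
Your proposal is correct and follows essentially the same route as the paper: unfold $\Lambda^{1x}R_X(\theta)$ and $\lambda^{k+1}R_X(\theta)$ by definition, insert and cancel $X$-pairs via \cref{XX} so that each inner block reassembles into $\Lambda^{x}R_X(\pm\frac\theta2)$, and finish with the structural deformations of \cref{fig:axiom}. The paper's proof is exactly this definitional rewriting with \cref{XX} as the sole equational ingredient, so no further comment is needed.
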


\begin{proof}
  By definition of $\Lambda^{1x} R_X(\theta)$ and $\lambda^{k+1}
  R_X(\theta)$, we have:
  \[ \Lambda^{1x} R_X(\theta) =
    \scalebox{0.8}{\tikzfig{mctrlXtheta2}}.
  \]
  Using \cref{XX}, we infer that
  \[ \Lambda^{1x} R_X(\theta) =
    \scalebox{0.8}{\tikzfig{mctrlXtheta3}}.
  \]
  We can then conclude by using the definition of $\Lambda^{x}
  R_X(\frac\theta2)$ and $\Lambda^{x} R_X(\text{-}\frac\theta2)$ (and
  the deformation of circuits coming from the prop structure).
\end{proof}

Since these lemmas are essentially consequences of the definitions (except for the use of \cref{XX} in \cref{lem:Lambda-P,lem:Lambda-Rx}), in the following we will mostly keep their uses implicit.

\subsection{Ancillary lemmas: Lemmas \ref{commctrlciblesdistinctes} to \ref{decompctrlblancRX}}
\label{proofmultictrl}
\newcommand{\tripleindice}[3]{\begingroup\arraycolsep=0pt\def\arraystretch{0.5}\begin{scriptarray}{c}#1\\#2\\#3\end{scriptarray}\kern.08333em\endgroup}

For the following lemmas, it is convenient to introduce a graphical notation of multi-controlled gate which allows for more flexibility in the position of the target qubit, relatively to the control qubits:

\[ \scalebox{1}{\tikzfig{LambdaGxynew}} \coloneqq \scalebox{1}{\tikzfig{LambdaGxynewdef}}\]

\begin{lemma}\label{commctrlciblesdistinctes}
For any $x\in\{0,1\}^k$,
\[\tikzfig{LambdaRXhLambdaRXb}\ =\ \tikzfig{LambdaRXbLambdaRXh}.\]
\end{lemma}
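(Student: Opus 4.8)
The plan is to prove \cref{commctrlciblesdistinctes} by induction on the length $k$ of the control word $x$, following the inductive definition of the circuits $\Lambda^x R_X$. In the base case $k=0$, \cref{lem:Lambda0} gives $\Lambda^{\epsilon} R_X(\theta)=R_X(\theta)$, so both factors are ordinary uncontrolled $R_X$-rotations acting on two distinct wires (the upper and the lower target). They thus act on disjoint sets of wires and commute by the interchange law of the prop structure (\cref{fig:axiom}).

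For the inductive step I write $x=ax'$ with $a\in\{0,1\}$ and $|x'|=k$, and first reduce to the case $a=1$. By \cref{lem:Lambda0x} a leading $0$-control is a leading $1$-control conjugated by an $X$ on the first control wire; since both factors share the same control word $x$, both get conjugated by the same $X$ on the same wire, so the inner pair of $X$'s cancels via \eqref{XX} between the two factors while the outer $X$'s are identical on both sides of the claimed equality. The case $a=0$ therefore follows from the case $a=1$. For $a=1$, I unfold each factor using \cref{lem:Lambda-Rx}, which writes $\Lambda^{1x'} R_X(\theta)$ as two CNots (each with control the first control wire and target the relevant target wire) interleaved with the smaller gates $\Lambda^{x'} R_X(\pm\tfrac\theta2)$ on that target. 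After unfolding both factors, the two CNots share their control but have the two distinct targets, hence commute by \eqref{commutationCNotsbas}; each CNot commutes with the smaller rotation attached to the other target because they act on disjoint wires; and the two smaller rotations $\Lambda^{x'}R_X(\pm\tfrac\theta2)$ and $\Lambda^{x'}R_X(\pm\tfrac{\theta'}2)$ have control word $x'$ of length $k$ with distinct targets, so they commute by the induction hypothesis. Reordering the factors accordingly transforms the left-hand side into the right-hand side.

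The main obstacle is the bookkeeping of this inductive step: a fixed number of CNots and controlled rotations must be slid past one another in the correct order, and each elementary exchange must be licensed---either by the interchange law for gates on disjoint wires, by \eqref{commutationCNotsbas} for CNots sharing a control, by \eqref{commutationRXCNot}, or by the induction hypothesis for the smaller multi-controlled rotations. Making explicit that the two targets sit respectively above and below the block of control wires is what guarantees the disjointness needed for these exchanges to go through.
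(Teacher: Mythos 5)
Your proof is correct and follows essentially the same route as the paper's: induction on $k$, base case by the topological rules, reduction to a leading $1$-control via \eqref{XX}, unfolding both gates with \cref{lem:Lambda-Rx}, and commuting the resulting pieces using the induction hypothesis, \eqref{commutationCNotsbas} and circuit deformation (the paper additionally cites \eqref{HH} for the Hadamards appearing in the decomposition). The only difference is cosmetic: you spell out the individual exchanges that the paper dismisses as ``easy to see''.
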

\begin{proof}
We proceed by induction on $k$. If $k=0$, then the equality is a consequence of the topological rules. If $k\geq1$, by \cref{XX} we can assume without loss of generality that $x=1z$ with $z\in\{0,1\}^{k-1}$. One has
\[\tikzfig{LambdaRXhLambdaRXbgrand}\ \overset{\text{\cref{lem:Lambda-Rx}}}{=}\ \tikzfig{LambdaRXhLambdaRXbdecomp}\]
then it is easy to see that the two parts commute by induction hypothesis and \cref{HH,commutationCNotsbas}, together with topological rules.
\end{proof}

\begin{lemma}\label{CZRXCZreversiblecontrole}
For any $x\in\{0,1\}^k$,
\[\tikzfig{LambdaRXhconjCNot}\ =\ \tikzfig{LambdaRXbconjCNot}.\]
\end{lemma}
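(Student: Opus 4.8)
The plan is to prove the statement by induction on $k$, following closely the pattern used in the proof of \cref{commctrlciblesdistinctes} just above.

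For the base case $k=0$ the control list is empty, so by \cref{lem:Lambda0} the multi-controlled gate collapses to a bare $R_X(\theta)$. The asserted identity then becomes exactly the vertical flip of the $R_X$-phase-gadget, i.e.\ Equation \eqref{NotCRXNotCreversible}, which is already established in $\qczero$. Hence the base case requires no new work.

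For the inductive step with $k\geq 1$, I would first use \cref{XX} to insert cancelling pairs of $X$ gates on the leading control wire, so that without loss of generality $x=1z$ for some $z\in\{0,1\}^{k-1}$. I then unfold the leading control using \cref{lem:Lambda-Rx}, which rewrites $\Lambda^{1z}R_X(\theta)$ as a pair of \cnot{}s surrounding the two subcircuits $\Lambda^{z}R_X(\theta/2)$ and $\Lambda^{z}R_X(-\theta/2)$. The conjugating \cnot{} appearing in the statement of the lemma acts on a wire that interacts only topologically with the \cnot{}s produced by this unfolding, so it can be commuted inward using the structural relations \cref{commutationCNotsbas,CNotlift,HH} together with the prop deformation rules. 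Once the conjugating \cnot{} has been pushed onto each factor $\Lambda^{z}R_X(\pm\theta/2)$ separately, the induction hypothesis applies to each of them, flipping the target position. Reassembling the pieces and eliminating the now-redundant pairs of \cnot{}s via \cref{CNotCNot} yields the right-hand side.

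The main obstacle will be the bookkeeping of the \cnot{} commutations in the inductive step: one must check carefully that the conjugating \cnot{} genuinely passes through the control-decomposition \cnot{}s (so that it distributes independently onto the two inner controlled rotations), and that after applying the induction hypothesis to both factors no spurious gates remain. This is routine given the commutation and cancellation lemmas already available, but it demands precise tracking of which wire each \cnot{} acts on relative to the target.
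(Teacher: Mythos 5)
Your proposal matches the paper's proof in all essentials: induction on $k$, base case reducing via $\Lambda^\epsilon R_X(\theta)=R_X(\theta)$ to Equation~\eqref{NotCRXNotCreversible}, and an inductive step that assumes $x=1z$ WLOG, unfolds the leading control via the $\Lambda^{1z}R_X(\theta)$ decomposition, pushes the conjugating \cnot{} through the decomposition \cnot{}s with the structural commutation equations, and invokes the induction hypothesis before cancelling the spare \cnot{}s with \eqref{CNotCNot}. The only differences are in the bookkeeping details you already flag (the paper happens to rely on \eqref{commutationCNothaut} and \eqref{NotClift} rather than \eqref{commutationCNotsbas} and \eqref{CNotlift}), which do not change the argument.
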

\begin{proof}
We proceed by induction on $k$. If $k=0$, then 
the result is just \cref{NotCRXNotCreversible}.
If $k\geq1$, then we can assume without loss of generality that $x=1z$ with $z\in\{0,1\}^{k-1}$. One has
\begin{longtable}{RCL}
\tikzfig{LambdaRXhconjCNotgrand}&=&\tikzfig{LambdaRXhconjCNot1}\\\\
&\eqeqref{CNotCNot}&\tikzfig{LambdaRXhconjCNot2}\\\\
&\eqeqref{commutationCNothaut}&\tikzfig{LambdaRXhconjCNot3}\\\\
&\overset{\text{induction}}{\overset{\text{hypothesis}}{=}}&\tikzfig{LambdaRXhconjCNot4}\\\\
&\eqeqref{CNotCNot}&\tikzfig{LambdaRXhconjCNot5}\\\\
&\eqdeuxeqref{NotClift}{commutationCNothaut}&\tikzfig{LambdaRXhconjCNot6}\\\\
&\eqeqref{CNotCNot}&\tikzfig{LambdaRXhconjCNot7}\\\\
&=&\tikzfig{LambdaRXbconjCNotgrand}
\end{longtable}
\end{proof}

\begin{lemma}\label{decompctrlblancRX}
For any $x\in\{0,1\}^k$,
\[\qczero\vdash\Lambda^{0x}R_X(\theta)=\tikzfig{mctrlXthetablanc}.\]
\end{lemma}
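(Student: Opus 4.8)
The plan is to obtain the claimed decomposition directly from the two inductive descriptions already available, rather than through a fresh induction. First I would apply \cref{lem:Lambda0x} to rewrite $\Lambda^{0x}R_X(\theta)$ as $\Lambda^{1x}R_X(\theta)$ framed by an $X$ gate on the top control wire, one immediately before and one immediately after the circuit. Then I would unfold $\Lambda^{1x}R_X(\theta)$ with \cref{lem:Lambda-Rx}, which exposes it as two inner controlled rotations $\Lambda^{x}R_X(\tfrac\theta2)$ and $\Lambda^{x}R_X(-\tfrac\theta2)$ separated by the two $\cnot$ gates whose control is precisely this top wire.

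The heart of the argument is then to commute the two framing $X$ gates inward through those $\cnot$ gates. Pushing an $X$ on the control of a $\cnot$ produces, by \cref{CNotX}, an additional $X$ on the target wire; doing this for both framing gates leaves two $X$'s on the top wire between the $\cnot$'s, which cancel by \cref{XX}, and a pair of $X$'s on the target wire that are either annihilated by \cref{XX} or moved past the inner rotations using the commutation of an $R_X$ with a $\cnot$, \cref{commutationRXCNot}. After simplification the circuit is exactly the right-hand side of the statement, i.e.\ the negative-control analogue of the decomposition in \cref{lem:Lambda-Rx}.

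As an alternative, one could mirror the inductive proofs of \cref{commctrlciblesdistinctes,CZRXCZreversiblecontrole}: treat the base case $k=0$ by hand from \cref{NotCRXNotCreversible} and \cref{XX}, and for $k\ge 1$ assume $x=1z$ (justified by \cref{XX}), unfold one layer via \cref{lem:Lambda-Rx}, and close using the induction hypothesis together with \cref{HH,commutationCNotsbas} and the $\cnot$ commutations \cref{CNotCNot,commutationCNothaut}. Either way, the only real difficulty is combinatorial bookkeeping: one must track exactly which $X$ gates are created on the target wire as the framing $X$'s traverse the two $\cnot$ gates, and check that they all cancel through \cref{XX} or land in the positions needed to match the pattern of \cref{lem:Lambda-Rx}, so that the result is the intended circuit and not a sign-flipped or reflected variant. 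No individual step is conceptually deep.
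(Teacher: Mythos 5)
There is a genuine gap, and it sits exactly where you wave it away as ``combinatorial bookkeeping''. The right-hand side of the lemma is not the circuit of \cref{lem:Lambda-Rx} with the framing $X$'s merely absorbed: it is that circuit with the two inner angles $\tfrac\theta2$ and $-\tfrac\theta2$ \emph{negated} (this is forced by soundness, since otherwise the lemma would derive $\Lambda^{0x}R_X(\theta)=\Lambda^{1x}R_X(\theta)$). Your plan manipulates only $X$ gates, via \cref{CNotX}, \cref{XX} and \cref{commutationRXCNot}, and none of these steps --- nor commuting an $X$ past an $R_X$ on the same wire, since $X$ and $R_X$ commute without altering the angle --- can ever negate a rotation parameter. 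So whatever circuit your gate-pushing terminates at, its inner rotations still carry $+\tfrac\theta2$ and $-\tfrac\theta2$ in their original positions, and you would end up ``proving'' the unsound identity above. The mechanism you are missing is $Z$-conjugation: the paper rewrites the framing $X$'s as $HZH$ (\cref{xgate}), cancels those $H$'s against the Hadamards sitting at the boundary of the \cref{lem:Lambda-Rx} decomposition (which your description also overlooks), pushes the resulting $Z$'s through the CNots with \cref{ZCNot} so that a $Z$ lands on the rotation wire, and then invokes a separately proved ancillary identity, $Z\circ\Lambda^xR_X(\theta)=\Lambda^xR_X(-\theta)\circ Z$, itself resting on \cref{XPX}, to flip the inner angles. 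That identity, or something equivalent, is indispensable and is the actual content of the paper's proof.

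A secondary but consequential error: the CNots exposed by \cref{lem:Lambda-Rx} do not have their control on the top wire. If they did, conjugating the target-wire rotations $R_X(\mp\tfrac\theta2)$ by an $X$ fired from the top qubit would do nothing (again because $X$ commutes with $R_X$), and the gadget could not implement a controlled rotation at all; it necessarily has its control on the rotation wire and its target on the top wire, which is also why the paper's $Z$'s interact with it through \cref{ZCNot} and \cref{commutationPctrl} rather than \cref{CNotX}. Your fallback inductive route suffers from the same defect: its base case cannot follow from \cref{NotCRXNotCreversible} and \cref{XX} alone, because neither of those equations changes an angle either.
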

\begin{proof}
The proof relies on the following property:
\begin{equation}\label{ZctrlRX}
\qczero\vdash\tikzfig{ZmctrlRX}\ =\ \tikzfig{mctrlRXZmoins}
\end{equation}
that we prove by induction on the length of $x$ as follows:

If $x=\epsilon$, then
\begin{longtable}{RCL}
\tikzfig{ZRX}&\overset{\eqref{RXgate}}{=}&\tikzfig{ZHPH}\\\\
&\eqdeuxeqref{HH}{xgate}&\tikzfig{HXPH}\\\\
&\eqtroiseqref{XX}{XPX}{SS}&\tikzfig{HPXH}\\\\
&\eqtroiseqref{xgate}{RXgate}{HH}&\tikzfig{RXZmoins}
\end{longtable}
If $x\neq\epsilon$, then the commutation is a direct consequence of the induction hypothesis and \cref{commutationPctrl}.\bigskip

Given this property, the result can be deduced as follows:
\begin{longtable}{RCL}
\Lambda^{0x}R_X(\theta)&=&\tikzfig{mctrlRX1}\\\\
&\eqdeuxeqref{xgate}{HH}&\tikzfig{mctrlRX2}\\\\
&\eqeqref{ZCNot}&\tikzfig{mctrlRX3}\\\\
&\eqeqref{ZctrlRX}&\tikzfig{mctrlRX4}\\\\
&\eqquatreeqref{ZCNot}{zgate}{commutationPctrl}{ZZ}&\tikzfig{mctrlRX5}\\\\
&\eqtroiseqref{xgate}{HH}{ZZ}&\tikzfig{mctrlXthetablanc}
\end{longtable}
\end{proof}

\subsection{Proof of Equation \eqref{eq:comRX}}
\label{sec:proofcomRX}

We actually prove a slightly more general result: for any $x,x'\in\{0,1\}^k$,
\begin{equation}\label{commctrlRXconjCNots}\qczero\vdash\tikzfig{com-L-prime}\ =\ \tikzfig{com-R-prime}\end{equation}

\Cref{eq:comRX} corresponds to the case where $x=x'$.

\begin{proof}[Proof of \cref{commctrlRXconjCNots}]

 The proof is by induction on $x$. 
 \\\noindent If $x=\epsilon$ (i.e. $k=0$), 
\begin{eqnarray*}\tikzfig{RXNotCRXNotC}&\eqeqref{RXgate}&\tikzfig{RXNotCRXNotC-1}\\[0.5cm]
&\eqdeuxeqref{HH}{CNotHH}&\tikzfig{RXNotCRXNotC-2}\\[0.5cm]
&\eqquatreeqref{CNotPCNotreversible}{RXgate}{HH}{SS}&\tikzfig{RXNotCRXNotC-3}\\[0.5cm]
&\eqeqref{commutationPctrl}&\tikzfig{RXNotCRXNotC-4}\\[0.5cm]
&\eqquatreeqref{CNotPCNotreversible}{HH}{SS}{RXgate}&\tikzfig{RXNotCRXNotC-5}\\[0.5cm]
&\eqtroiseqref{CNotHH}{HH}{RXgate}&\tikzfig{NotCRXNotCRX}\end{eqnarray*}

If $k\geq 1$, then we can write $x=az$ and $x'=a'z'$ with $a,a'\in\{0,1\}$. One has (where the $\pm$ signs correspond respectively to $(-1)^a$ and $(-1)^{a'}$):
\begin{longtable}{CL}
\tikzfig{com-L-prime}\hspace*{-15em}\\\\
\overset{\text{\cref{decompctrlblancRX}}}{=}&\tikzfig{ctrlzRXctrlzprimeRXCNots1}\\\\
\eqtroiseqref{HH}{commutationCNothaut}{NotClift}&\tikzfig{ctrlzRXctrlzprimeRXCNots2}\\\\
\eqdeuxeqref{CNotCNot}{NotClift}&\tikzfig{ctrlzRXctrlzprimeRXCNots3}\\\\
\overset{\text{induction}}{\overset{\text{hypothesis}}{=}}&\tikzfig{ctrlzRXctrlzprimeRXCNots4}\\\\
\eqeqref{CNotlift}&\tikzfig{ctrlzRXctrlzprimeRXCNots5}\\\\
\eqdeuxeqref{commutationCNotsbas}{commutationCNothaut}&\tikzfig{ctrlzRXctrlzprimeRXCNots6}\\\\
\overset{\text{induction}}{\overset{\text{hypothesis}}{=}}&\tikzfig{ctrlzRXctrlzprimeRXCNots7}\\\\
\eqdeuxeqref{commutationCNotsbas}{commutationCNothaut}&\tikzfig{ctrlzRXctrlzprimeRXCNots8}\\\\
\overset{\text{induction}}{\overset{\text{hypothesis}}{=}}&\tikzfig{ctrlzRXctrlzprimeRXCNots9}\\\\
\eqeqref{commutationCNotsbas}&\tikzfig{ctrlzRXctrlzprimeRXCNots10}\\\\
\eqeqref{CNotlift}&\tikzfig{ctrlzRXctrlzprimeRXCNots11}\\\\
\overset{\text{induction}}{\overset{\text{hypothesis}}{=}}&\tikzfig{ctrlzRXctrlzprimeRXCNots12}\\\\
\eqeqref{NotClift}&\tikzfig{ctrlzRXctrlzprimeRXCNots13}\\\\
\eqdeuxeqref{commutationCNotsbas}{CNotCNot}&\tikzfig{ctrlzRXctrlzprimeRXCNots14}\\\\
=&\tikzfig{ctrlzRXctrlzprimeRXCNots15}\\\\
\eqdeuxeqref{NotClift}{commutationCNothaut}&\tikzfig{ctrlzRXctrlzprimeRXCNots16}\\\\
\eqeqref{HH}&\tikzfig{ctrlzRXctrlzprimeRXCNots17}\\\\
\overset{\text{\cref{decompctrlblancRX}}}{=}&\tikzfig{com-R-prime}.
\end{longtable}
 \end{proof}

\subsection{Proof of Propositions \ref{cor:swap} and \ref{prop:CP}}\label{preuvesswapsmultictrl}

First, we consider the case $G=R_X(\theta)$ of Equations~\eqref{swapCCZ}-\eqref{swapCCZ3}, for which the proof is a direct induction based on \cref{eq:comRX} that is proven in \cref{sec:proofcomRX}.  

Next, we prove \cref{phasemobile} in the case $y=\epsilon$.

We can assume without loss of generality that $x=1^k$. We proceed by induction on $k$.
If $k=0$, then
\begin{longtable}{RCL}
\Lambda^1P(\varphi)&\eqdef&\tikzfig{ctrlPdecbrut}\\\\
&\eqquatreeqref{SS}{S0}{HH}{CNotHH}&\tikzfig{ctrlPdecsimple}\\\\
&\eqeqref{CNotPCNotreversible}&\tikzfig{ctrlPdecsimplealenvers}\\\\
&\eqquatreeqref{SS}{S0}{HH}{CNotHH}&\Lambda^\epsilon_1 P(\varphi).
\end{longtable}

If $k\geq1$, 
then one has
\begin{longtable}{RCL}
\Lambda^{x1}P(\varphi)&\eqdef&\tikzfig{LambdaPmove1}\\\\\\
&\overset{\text{Equations~\eqref{swapCCZ}-\eqref{swapCCZ3}}}{\overset{\text{(case $G=R_X(\theta)$)}}{=}}&\tikzfig{LambdaPmove2}\\\\\\
&\overset{\text{def}}{=}&\tikzfig{LambdaPmove3}\\\\\\
&=&\tikzfig{LambdaPmove4}\\\\\\
&\overset{\text{def}}{=}&\scalebox{0.8}{$\tikzfig{LambdaPmove5}$}\\\\\\
&\overset{\text{\cref{commctrlciblesdistinctes,CZRXCZreversiblecontrole}}}{=}&\scalebox{0.8}{$\tikzfig{LambdaPmove6}$}\\\\\\
&=&\scalebox{0.8}{$\tikzfig{LambdaPmove7}$}\\\\\\
&=&\tikzfig{LambdaPmove8}\\\\\\
&=&\Lambda^x_1 P(\varphi).
\end{longtable}

Now, we can prove Equations~\eqref{swapCCZ}-\eqref{swapCCZ3} in the case $G=s(\psi)$ (the cases $G=P(\varphi)$ and $G=X$ are direct consequences of this case). Without loss of generality we can assume $y=\epsilon$ and consider only \cref{swapCCZ}.

The proof is by induction on the number $r$ of input qubits of $\Lambda^{xabz}G$. If $z=\epsilon$, which is necessarily the case in the base case $r=2$, then the result is a direct consequence of the case $y=\epsilon$ of \cref{phasemobile}. If $z\neq\epsilon$, then using \cref{def:multicontrolled-pos,def:multicontrolled-oriented} (in particular in the case of $\Lambda^{1x}P(\varphi)$), the result is a direct consequence of the induction hypothesis and the case $G=R_X(\theta)$ of Equations~\eqref{swapCCZ}-\eqref{swapCCZ3}.\bigskip

Finally, using the definition of $\Lambda^x_{y1} P(\varphi)$ in terms of $\Lambda^{xy1} P(\varphi)$, the general case of \cref{phasemobile} follows directly from the case $y=\epsilon$ and Equations~\eqref{swapCCZ}-\eqref{swapCCZ3}.

\subsection{Proof of Proposition \ref{prop:sum}}
\label{proofpropsum}

It remains to treat the $\Lambda^xP$ and $\Lambda^xs$ cases of \cref{prop:sum}. 
Those cases are a direct consequence of the following lemma:
\begin{lemma}\label{commctrlphaseenhaut}
For any $x\in\{0,1\}^k$ and $y\in\{0,1\}^\ell$ with $\ell\geq k$,
\[\qczero\vdash\tikzfig{mctrlxeiphihmctrlyRX}\ =\ \tikzfig{mctrlyRXmctrlxeiphih}.\]
\end{lemma}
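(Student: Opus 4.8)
The plan is to read both sides as operators on the wires $1,\dots,\ell+1$, where the left factor is a diagonal (phase) gate supported on wires $1,\dots,k$ and the right factor $\Lambda^y R_X(\theta)$ is controlled by wires $1,\dots,\ell$ with target $\ell+1$. Semantically the commutation is transparent: $\Lambda^y R_X(\theta)$ fixes every control wire in the computational basis (it acts as the identity on the first $\ell$ tensor factors), while the phase gate depends only on wires $1,\dots,k\subseteq\{1,\dots,\ell\}$ and is diagonal there, so the two diagonal-versus-rotation actions never collide. The hypothesis $\ell\geq k$ is precisely what guarantees the phase gate never meets the target wire $\ell+1$. The entire task is to convert this observation into a derivation in $\qczero$.

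First I would normalise the controls of the phase gate. Using \cref{XX} to conjugate single wires by $X$, I may assume the phase has only positive controls, i.e.\ $x=1^k$; flipping a wire also flips the matching bit of $y$, but $y$ is arbitrary, so this is harmless. Invoking the phase-mobility equation \cref{phasemobile} together with \cref{cor:swap}, I would then treat the phase factor as the fully symmetric diagonal gate it is, free to relocate its ``target'' among wires $1,\dots,k$ and to reorder the controls of $\Lambda^y R_X(\theta)$. The argument then runs by induction on $\ell$. The base case $k=\ell=0$ is immediate, since the phase factor is the global scalar $s(\varphi)$ (\cref{lem:Lambda0}), which floats freely through any circuit by the topological rules of \cref{fig:axiom}.

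For the inductive step with $\ell>k$, I would first use \cref{cor:swap} to pull a control wire \emph{not} shared with the phase (one of $k+1,\dots,\ell$) to the top of $\Lambda^y R_X(\theta)$, then unfold that top control via \cref{lem:Lambda-Rx}. This exposes Hadamards and $R_X$-rotations living on the target wire $\ell+1$, CNots joining the pulled-up control to the target, and two smaller rotations $\Lambda^z R_X(\pm\theta/2)$ whose controls still contain all of $1,\dots,k$ with $\lvert z\rvert=\ell-1\geq k$. The phase factor commutes with the target-wire operations and with the newly exposed CNots for free, as they act on wires disjoint from $1,\dots,k$ (\cref{HH} and \cref{fig:axiom}), and it commutes with the two inner rotations by the induction hypothesis; reassembling the pieces uses \cref{eq:comRX}/\cref{commctrlRXconjCNots} to slide the rotations back through the CNot cage and \cref{commutationCNotsbas} for the residual CNot rearrangements. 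Since this is run simultaneously with \cref{prop:sum}, the already-established $R_X$ case of additivity (derived from \cref{eq:comRX}) is available to merge rotations where needed.

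The step I expect to be the main obstacle is the boundary case $k=\ell$, where the phase controls \emph{every} control wire of $\Lambda^y R_X(\theta)$, so no unshared control can be pulled up and the naive unfolding produces an inner rotation with fewer controls than the phase has wires, breaking the $\ell\geq k$ invariant that the induction hypothesis needs. Here I would instead use \cref{phasemobile} to move the phase's active wire onto the last control $\ell$, unfold $\Lambda^y R_X(\theta)$ along that same wire, commute the resulting $P$ through the two CNots at their control end via \cref{commutationPctrl}, and reduce what remains to a commutation with the inner $\Lambda^z R_X(\pm\theta/2)$ gates; this last reduction is the delicate point and is where the simultaneous induction with \cref{prop:sum} and the heaviest CNot-conjugation bookkeeping are really required.
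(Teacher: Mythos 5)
Your strategy for the case $\ell>k$ is sound and in fact takes a different (and lighter) route than the paper: you decompose the \emph{rotation} $\Lambda^y R_X(\theta)$ along a control wire that the phase does not occupy, so that the exposed CNots and target-wire gates live on wires disjoint from the phase and commute by the prop structure alone, while the two inner rotations $\Lambda^z R_X(\pm\theta/2)$ still have $|z|=\ell-1\geq k$ controls and are handled by the induction hypothesis. The paper instead decomposes the \emph{phase} gate and inducts on $k$, which forces the CNots coming from that decomposition onto the control wires of $\Lambda^y R_X(\theta)$; the price it pays is a dedicated lemma (Equation~\eqref{passageCNOtcontroles}) showing that a CNot acting on two control wires of a multi-controlled gate commutes with it up to flipping one control bit. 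Your route avoids that lemma entirely when $\ell>k$.

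The genuine gap is the boundary case $k=\ell$, which is not a corner case but the essential one (it is exactly the instance "$\Lambda^{x}s(\varphi/2)$ versus $\Lambda^{x}R_X(\varphi)$ with identical controls" needed, via \cref{lem:Lambda-P}, to push \cref{prop:sum} through for $P$). Your sketch for it does not go through as written, for two concrete reasons. First, after moving the phase's active wire onto the shared control $\ell$ and unfolding $\Lambda^y R_X(\theta)$ along that wire, the gate you must commute past the CNots is a \emph{multi-controlled} $P$ whose target sits on the CNot's control wire; \cref{commutationPctrl} only covers the single-qubit $P$, and the multi-controlled generalisation is precisely the kind of statement that needs its own derivation (it is essentially what \cref{passageCNOtcontroles} exists for). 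Second, the residual commutation is between a diagonal gate occupying wires $1,\dots,k$ and inner rotations $\Lambda^{1^{k-1}}R_X(\pm\theta/2)$ whose controls are $1,\dots,k-1$ and whose target is $k+1$: the phase occupies a wire ($k$) that the rotation skips, so this is \emph{not} an instance of the statement being proven (the $\ell'\geq k$ invariant fails, with $\ell'=k-1<k$), and it cannot be discharged by appealing to \cref{commctrlpasenface} either, since that proposition is proven later using the present lemma. A correct completion needs either the paper's \cref{passageCNOtcontroles} machinery or an explicit further decomposition of the phase into a smaller phase plus an $R_X$ with a distinct target (so that \cref{commctrlciblesdistinctes} applies), together with a careful treatment of the Hadamards this introduces on a control wire of $\Lambda^y R_X(\theta)$. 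Deferring all of this to "the heaviest CNot-conjugation bookkeeping" leaves the crux of the lemma unproven.
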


To prove the previous lemma, we do a proof by induction on $k$. 
However, to prove the induction step for $k\geq 2$, we use $\qczero \vdash \Lambda^{1^{k-2}} s(\varphi) \circ \Lambda^{1^{k-2}} s(\varphi')=  \Lambda^{1^{k-2}} s(\varphi+\varphi')$ and 
$\qczero\vdash \Lambda^{1^{k-2}} s(0)= id_{k-1}$, which are the statements of \cref{prop:sum}.

Therefore, we will do a common induction proof for both the other cases of \cref{prop:sum} and for \cref{commctrlphaseenhaut}.
The plan of the proof is the following. First we prove an ancillary equation (\cref{passageCNOtcontroles}) which is derived from previous lemmas. Then we proceed with the induction proof: for $k\geq2$, \cref{commctrlphaseenhaut} is proved with \cref{prop:sum} for $k-2$, while the induction step of \cref{prop:sum} is directly a consequence of \cref{commctrlphaseenhaut} and \cref{prop:sum} for $k-1$, and the $\Lambda^{x}R_X$ case which is already proven.

\begin{proof}
First we prove the following property, which is true for any $a,b\in\{0,1\}$, $z\in\{0,1\}^m$ and $G\in\{s(\varphi),P(\varphi),R_X(\theta),X\}$:
\begin{equation}\label{passageCNOtcontroles}
\qczero\vdash\tikzfig{CNothmctrlG}\ =\ \tikzfig{mctrlGCNoth}\qquad\text{where $c=\begin{cases}b&\text{if $a=0$}\\\bar b&\text{if $a=1$}\end{cases}$}
\end{equation}
To prove \cref{passageCNOtcontroles}, by \cref{CNotX,commutationXCNot,XX} we can assume without loss of generality that $a=b=1$. If $G=R_X(\theta)$, then
\begin{longtable}{CL}
\tikzfig{CNothmctrl11zRX}\hspace*{-7em}\\\\
=&\tikzfig{CNotmctrlRX1}\\\\
\eqdeuxeqref{CNotHH}{HH}&\tikzfig{CNotmctrlRX2}\\\\
\eqdeuxeqref{CNotCNot}{NotClift}&\tikzfig{CNotmctrlRX3}\\\\
\eqdeuxeqref{CNotCNot}{NotClift}&\tikzfig{CNotmctrlRX4}\\\\
\eqtroiseqref{commutationCNotsbas}{commutationCNothaut}{CNotCNot}&\tikzfig{CNotmctrlRX5}\\\\
\eqdeuxeqref{CNotCNot}{NotClift}&\tikzfig{CNotmctrlRX6}\\\\
\eqdeuxeqref{CNotCNot}{NotClift}&\tikzfig{CNotmctrlRX7}\\\\
\eqtroiseqref{commutationCNotsbas}{commutationCNothaut}{CNotCNot}&\tikzfig{CNotmctrlRX8}\\\\
\eqeqref{CNotHH}&\tikzfig{CNotmctrlRX9}\\\\
\eqeqref{CNotCNot}&\tikzfig{CNotmctrlRX10}\\\\
\eqeqref{NotClift}&\tikzfig{CNotmctrlRX11}\\\\
\eqeqref{CNotCNot}&\tikzfig{CNotmctrlRX11-1}\\\\
\eqtroiseqref{NotClift}{commutationCNothaut}{commutationCNotsbas}&\tikzfig{CNotmctrlRX12}\\\\
\eqeqref{eq:comRX}&\tikzfig{CNotmctrlRX13}\\\\
\eqdeuxeqref{commutationCNothaut}{CNotCNot}&\tikzfig{CNotmctrlRX14}\\\\
\eqeqref{eq:comRX}&\tikzfig{CNotmctrlRX15}\\\\
\eqeqref{CNotlift}&\tikzfig{CNotmctrlRX16}\\\\
\eqeqref{NotClift}&\tikzfig{CNotmctrlRX17}\\\\
\eqeqref{eq:comRX}&\tikzfig{CNotmctrlRX18}\\\\
\eqdeuxeqref{commutationCNothaut}{HH}&\tikzfig{CNotmctrlRX19}\\\\
\overset{\text{\cref{decompctrlblancRX},}}{\overset{\text{def}}{=}}&\tikzfig{mctrl10zRXCNoth}
\end{longtable}
Now, to prove \cref{prop:sum} and \cref{commctrlphaseenhaut}, by \cref{XX} we can assume without loss of generality that $x=1^k$. We proceed by induction on $k$. If $k=0$, then \cref{prop:sum} is a consequence of \cref{S0,,SS,,P0,,PP}, and \cref{commctrlphaseenhaut} is a consequence of the topological rules. If $k=1$, then $\Lambda^x s(\varphi)=P(\varphi)$. Let $y=az$ with $a\in\{0,1\}$. By \cref{decompctrlblancRX}, one has
\begin{longtable}{RCL}
\qczero\vdash\tikzfig{PphihmctrlyRX}&=&\tikzfig{PphihmctrlyRXdecomp}\\\\
&\eqquatreeqref{HH}{S0}{SS}{RXgate}&\tikzfig{PphihmctrlyRXdecomp1}\\\\
&\eqeqref{commutationRXCNot}&\tikzfig{PphihmctrlyRXdecomp2}\\\\
&\eqquatreeqref{RXgate}{SS}{S0}{HH}&\tikzfig{PphihmctrlyRXdecomp3}\\\\
&\overset{\text{\cref{decompctrlblancRX}}}=&\tikzfig{mctrlyRXPphih}
\end{longtable}
where the $\pm$ sign is $(-1)^a$.
The case of $k=1$ for \cref{prop:sum} is then a direct consequence of the previous result, the case with $R_X$,  \cref{def:multicontrolled-pos} (case $\lambda^nP(\varphi)$) and \cref{HH,P0,,PP}.

If $k\geq2$, let $z=1^{k-1}$ and $t=1^{k-2}$. To prove \cref{commctrlphaseenhaut}, one has
\begin{longtable}{RCL}
\Lambda^xs(\varphi)&=&\tikzfig{mctrlxeiphi1}\\\\
&\overset{\text{induction hypothesis}}{\overset{\text{of \cref{prop:sum}}}{=}}&\tikzfig{mctrlxeiphi2}\\\\
&\overset{\text{induction hypothesis}}{\overset{\text{of \cref{commctrlphaseenhaut}}}{=}}&\tikzfig{mctrlxeiphi3}\\\\
&\overset{\text{\eqref{HH}, def}}{=}&\tikzfig{mctrlxeiphi4}\\\\
&\eqdeuxeqref{CNotHH}{HH}&\tikzfig{mctrlxeiphi5}\\\\
&\overset{\text{def}}{=}&\tikzfig{mctrlxeiphi6}.
\end{longtable}
\noindent Hence, the commutation with $\Lambda^yR_X(\theta)$ follows by induction hypothesis and \cref{passageCNOtcontroles}, together with \cref{cor:swap}.

Then to prove the $\Lambda^xP$ case of \cref{prop:sum}, one has
\begin{longtable}{RCL}
\Lambda^xP(\varphi')\circ\Lambda^xP(\varphi)&=&\tikzfig{propsumcasP1}\\\\
&\eqeqref{HH}&\tikzfig{propsumcasP2}\\\\
&\overset{\text{induction hypothesis}}{\overset{\text{of \cref{commctrlphaseenhaut}}}{=}}&\tikzfig{propsumcasP3}\\\\
&\overset{\text{$\Lambda^{x}R_X$ case and}}{\overset{\text{induction hypothesis}}{\overset{\text{of \cref{prop:sum}}}{=}}}&\tikzfig{propsumcasP4}\\\\
&=&\lambda^xP(\varphi+\varphi').
\end{longtable}
Finally, the $\Lambda^xs$ case is a direct consequence of the $\Lambda^zP$ case.
\end{proof}

\subsection{Ancillary equations: Equations \eqref{mctrlX} and \eqref{Euler2dmulticontrolled}}
\label{usefulmulctrl}
\begin{lemma}
The following equations can be derived in $\textup{QC}$:
\begin{equation}\label{mctrlX}\begin{array}{rcl}\Lambda^xX&=&\tikzfig{mctrlXsimpl}\end{array}\end{equation}
\begin{equation}\label{Euler2dmulticontrolled}\begin{array}{rcl}\tikzfig{Euler2dleft-multicontrolled}&=&\tikzfig{Euler2dright-multicontrolled}\end{array}\end{equation}
where in \cref{Euler2dmulticontrolled}, the angles are the same as in \cref{Euler2d}.
\end{lemma}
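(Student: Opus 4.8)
The plan is to prove the two equations in turn, \cref{mctrlX} first, as it reconciles the macro definition of the multi-controlled $X$ with the inductively defined $R_X$ gates and thereby keeps the later argument inside the family $\{R_X(\theta),P(\varphi),s(\varphi)\}$, on which \cref{prop:sum} endows a monoid structure.

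For \cref{mctrlX} I would argue by unfolding definitions. At the single-qubit level one has $R_X(\pi)=X$ directly from \cref{RXgate,zgate,xgate}, since $R_X(\pi)=HP(\pi)H=HZH=X$. The macro for $\lambda^nX$ is a multi-controlled $P(\pi)$ conjugated on the target by $H$, so using \cref{lem:Lambda-X,lem:Lambda-P} together with \cref{HH,XX} one rewrites $\Lambda^xX$ into that $H$-conjugated multi-controlled phase and then identifies it, via \cref{lem:Lambda-Rx} and the relation $R_X(\pi)=X$, with the claimed simplified circuit; any residual global phase is absorbed using the $s$-component of \cref{prop:sum}. This step is essentially bookkeeping and should present no real difficulty.

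For \cref{Euler2dmulticontrolled} I would induct on the number $k=|x|$ of control qubits. By \cref{lem:Lambda0x} a negative control merely conjugates both sides by an $X$ on that wire, and since both sides receive the identical conjugation the equation is preserved; hence it suffices to treat all-positive controls $x=1^k$. The base case $k=0$ is, by \cref{lem:Lambda0}, exactly the axiom \cref{Euler2d}. For the inductive step $k\to k+1$ I would peel off the top control on every gate of both circuits using the inductive properties \cref{lem:Lambda-Rx,lem:Lambda-P,lem:Lambda-s}, which re-express each $(k{+}1)$-controlled gate through $k$-controlled gates together with auxiliary $\cnot$s and half-angle phases. I would then bring the $\Lambda^{1^{k}}$-gates of the ``all-controls-fired'' branch next to one another with the commutation lemmas \cref{eq:comRX,cor:swap,commctrl,commctrlphaseenhaut} and merge rotations of equal type with \cref{prop:sum}, so that the inner $k$-controlled sub-circuit is precisely an instance of \cref{Euler2dmulticontrolled} for $x=1^{k}$, to which the induction hypothesis applies; the auxiliary $\cnot$s and phases introduced by the unfolding must then cancel symmetrically on the two sides.

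The main obstacle is this inductive step. The $R_X$ and $P$ factors of the Euler decomposition unfold with structurally different gadgets---a $\cnot$-sandwich for $R_X$ versus a phase-on-the-controls pattern for $P$---so aligning them to expose the induction hypothesis demands a careful, coordinated use of the commutation lemmas, and the conversions among $P$, $R_X$ and $X$ at half angles generate global phases that have to be tracked through the $s$-component of \cref{prop:sum}. Checking that every auxiliary gate cancels and that the phase bookkeeping closes is where the real work lies; the uniqueness constraints on the angles $\beta_j$ play no role, since we only derive the equation for the values already fixed by \cref{Euler2d}.
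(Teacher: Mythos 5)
Your treatment of \cref{mctrlX} is fine and matches the paper: the case $x=\epsilon$ reduces to $X=HP(\pi)H=s(\pi/2)R_X(\pi)$ via \cref{lem:Lambda0,xgate,S0,SS,RXgate}, and the case $x\neq\epsilon$ follows from the inductive unfoldings \cref{lem:Lambda0x,lem:Lambda-X,lem:Lambda-P} together with \cref{XX,HH}.

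Your proof of \cref{Euler2dmulticontrolled}, however, has a genuine gap: the induction on the number of controls cannot close. When you peel off the top control, \cref{lem:Lambda-Rx,lem:Lambda-P} replace each $(k{+}1)$-controlled rotation of angle $\alpha_i$ by $k$-controlled rotations of angles $\pm\alpha_i/2$ (plus \cnot{}s and controlled half-phases). Even if the commutation lemmas let you regroup these into candidate Euler patterns at level $k$, those patterns carry the angles $\alpha_i/2$, and the Euler-angle map is not linear: the decomposition of $R_X(\alpha_3/2)P(\alpha_2/2)R_X(\alpha_1/2)$ is not $(\beta_j/2)$ where $(\beta_j)$ is the decomposition of $R_X(\alpha_3)P(\alpha_2)R_X(\alpha_1)$. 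So the induction hypothesis, applied to the half-angle sub-circuits, does not produce pieces that reassemble into the target right-hand side. More structurally, your argument never invokes \cref{Euler3dmulticontrolled}, i.e.\ it would derive the $(k{+}1)$-controlled Euler identity from the $k$-controlled one using only structural rules --- precisely the kind of ``control context'' lifting that $\qc$ does not provide, and the reason \cref{Euler3dmulticontrolled} must be kept as an unbounded family of axioms (see the paper's Discussion). The paper's actual proof is short and non-inductive: it uses \cref{prop:CP,prop:sum} to rewrite the left-hand side of \cref{Euler2dmulticontrolled} into an instance of the left-hand side of the axiom \cref{Euler3dmulticontrolled}, applies that axiom, and then uses the \emph{uniqueness} of the angles in \cref{Euler2d,Euler3dmulticontrolled} to identify the resulting $\delta_i$ with the $\beta_j$ (and zeros); so, contrary to your closing remark, the uniqueness constraints are essential to the derivation.
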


\begin{proof}
If $x=\epsilon$, then \cref{mctrlX} is a direct consequence of \crefnosort{lem:Lambda0,xgate,S0,SS,RXgate}. If $x\neq\epsilon$, then \cref{mctrlX} is a direct consequence of \cref{lem:Lambda0x,lem:Lambda-X,lem:Lambda-P,XX,HH}. 

\noindent Proof of \cref{Euler2dmulticontrolled}:
\begin{longtable}{RCL}
\scalebox{0.8}{$\tikzfig{Euler2dleft-multicontrolled}$}&\equiv&\scalebox{0.8}{$\tikzfig{Euler2dleft-multicontrolled1}$}\\\\
&\overset{\text{\cref{prop:CP,prop:sum}}}{=}&\scalebox{0.8}{$\tikzfig{Euler2dleft-multicontrolled2}$}\\\\
&\eqeqref{Euler3dmulticontrolled}&\scalebox{0.8}{$\tikzfig{Euler3dright-multicontrolled-simp-deltas-swaps}$}
\end{longtable}
By uniqueness of the right-hand side in \cref{Euler2d,Euler3dmulticontrolled}, the $\delta_i$ are such that the last circuit is equal to \vspacebeforeline{0.5em}$\scalebox{0.8}{$\tikzfig{Euler3dright-multicontrolled-simp-anglesEuler2d}$}$\vspace{0.5em}, where the $\beta_j$ are computed in the same way as in \cref{Euler2d}. It follows from \cref{prop:CP,prop:sum} that this is equal modulo $\qczero$ to the right-hand side of \cref{Euler2dmulticontrolled}.
\end{proof}

\section{Proofs of Sections \ref{propertiesmultictrl} and \ref{Eulerandperiod}}
\label{}

\subsection{Proof of Proposition \ref{prop:comb}}\label{preuvepropcomb}
Without loss of generality, we can assume that $y=\epsilon$.

The case where $G=s(\varphi)$ and $x=\epsilon$ follows directly from \crefnosort{XPX,,PP,,P0}. The cases where $G=s(\varphi)$ and $x\neq\epsilon$ follow directly from the case $G=P(\varphi)$, together with \cref{XX}.

By \cref{HH,XX}, the case $G=X$ follows directly from the case $G=P(\pi)$.

The case $G=P(\varphi)$ follows from the case $G=R_X(\theta)$ by a straightforward induction, using \cref{lem:Lambda-P,HH,commctrlphaseenhaut}.

Thus, it suffices to treat the case where $G=R_X(\theta)$. One has
\begin{longtable}{RCL}
\Lambda^{0x}R_X(\theta)\circ\Lambda^{1x}R_X(\theta)&\overset{\text{\cref{lem:Lambda-Rx,decompctrlblancRX}}}{=}&\tikzfig{propcomb1}\\\\
&\eqeqref{HH}&\tikzfig{propcomb2}\\\\
&\eqeqref{eq:comRX}&\tikzfig{propcomb3}\\\\\\
&\overset{\text{\eqref{CNotCNot}, \cref{prop:sum},}}{\overset{\eqref{CNotCNot}\eqref{HH}}{=}}&\tikzfig{propcombfinal}.
\end{longtable}

\subsection{Ancillary lemmas: Lemmas \ref{commctrlPNotCvar} to \ref{commutationctrldotsCNotRX}}
\label{proof:commctrlPNotCvar}

\begin{lemma}\label{commctrlPNotCvar}
For any $x\in\{0,1\}^k$,
\[\tikzfig{HmctrlRXHNotC}\ =\ \tikzfig{NotCHmctrlRXH}\]
\end{lemma}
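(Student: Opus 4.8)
The plan is to prove the identity by induction on the length $k$ of the control string $x$, following the same pattern as the proofs of \cref{commctrlciblesdistinctes,CZRXCZreversiblecontrole} and of \cref{eq:comRX}. By \cref{XX} one may assume without loss of generality that $x = 1z$ for some $z \in \{0,1\}^{k-1}$, so that the multi-controlled gate can be peeled off one control at a time using \cref{lem:Lambda-Rx}. The conceptual reason the identity holds is that conjugating $R_X$ by Hadamards on the target turns it into a phase-like (diagonal) behaviour, and such a gate commutes with the control wire of a $\mathrm{NotC}$; the induction simply propagates this fact through the nested controlled-not structure.

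For the base case $k = 0$, we have $\Lambda^{\epsilon} R_X(\theta) = R_X(\theta)$ by \cref{lem:Lambda0}. The conjugation $H \circ R_X(\theta) \circ H$ collapses to a single phase gate: rewriting $R_X(\theta) = H P(\theta) H$ via \cref{RXgate} and cancelling the outer Hadamards with \cref{HH} leaves $P(\theta)$ on the appropriate wire. Its commutation with the $\mathrm{NotC}$ then follows from \cref{commutationPctrl} (a phase on the control passes freely through the controlled-not), together with the definition of $\mathrm{NotC}$ and a handful of topological rules.

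For the inductive step $k \geq 1$, I would unfold $\Lambda^{1z} R_X(\theta)$ through \cref{lem:Lambda-Rx} into two copies of $\Lambda^{z} R_X(\pm\tfrac\theta2)$ separated by $\mathrm{CNot}$ gates, and distribute the conjugating Hadamards onto the corresponding sub-blocks. The $\mathrm{NotC}$ is then pushed across the whole block in stages: on each $\Lambda^{z} R_X$ half the induction hypothesis applies directly, while the intervening $\mathrm{CNot}$ layers are rearranged using the structural commutations \cref{CNotCNot,commutationCNothaut,NotClift,commutationCNotsbas} together with \cref{HH}. These are exactly the rewrites already exploited to commute $R_X$ through nested controlled-nots in the proof of \cref{eq:comRX}, so the two applications of the induction hypothesis can be made to line up cleanly.

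The main obstacle is the bookkeeping in the inductive step: unfolding the multi-controlled gate introduces a proliferation of $\mathrm{CNot}$ gates, and one must track precisely which of them commute with the $\mathrm{NotC}$ and in what order the structural rules are applied, so that after pushing the $\mathrm{NotC}$ past the first $\Lambda^{z}R_X$ half the remaining configuration still matches the hypothesis needed for the second half. Once the ordering of the $\mathrm{CNot}$ commutations is organised correctly, the identity closes exactly as in the analogous ancillary lemmas of this section.
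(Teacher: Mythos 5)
Your proposal matches the paper's proof in essentially every respect: the same induction on $k$, the same base case via \cref{RXgate}, \cref{HH} and \cref{commutationPctrl}, and the same inductive step that unfolds one control via \cref{lem:Lambda-Rx}, applies the induction hypothesis to each $\Lambda^{z}R_X(\pm\frac\theta2)$ half, and rearranges the intervening $\mathrm{CNot}$ layers with the derived structural identities (the paper happens to use \cref{CNotHH} and \cref{CNotliftvar} where you cite closely related ones, but these are interchangeable here). The plan is sound and the bookkeeping you flag is exactly what the paper's displayed derivation carries out.
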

\begin{proof}
We proceed by induction on $k$. If $k=0$ then the result is a direct consequence of \cref{RXgate,HH,commutationPctrl}. If $k\geq1$, then without loss of generality we can assume that $x=1z$ with $z\in\{0,1\}^{k-1}$. One has
\begin{longtable}{RCL}
\tikzfig{HmctrlRXHNotC-grand}&=&\tikzfig{HmctrlRXHNotCdecomp}\\\\
&\eqeqref{HH}&\tikzfig{HmctrlRXHNotCdecomp1}\\\\
&\eqeqref{CNotHH}&\tikzfig{HmctrlRXHNotCdecomp2}\\\\
&\eqeqref{HH}&\tikzfig{HmctrlRXHNotCdecomp3}\\\\
&\eqdeuxeqref{CNotCNot}{CNotliftvar}&\tikzfig{HmctrlRXHNotCdecomp4}\\\\
&\overset{\text{induction}}{\overset{\text{hypothesis}}{=}}&\tikzfig{HmctrlRXHNotCdecomp5}\\\\
&\eqdeuxeqref{CNotliftvar}{CNotCNot}&\tikzfig{HmctrlRXHNotCdecomp7}\\\\
&\overset{\text{induction}}{\overset{\text{hypothesis}}{=}}&\tikzfig{HmctrlRXHNotCdecomp8}\\\\
&\eqtroiseqref{HH}{CNotHH}{HH}&\tikzfig{HmctrlRXHNotCdecomp9}\\\\
&=&\tikzfig{NotCHmctrlRXH-grand}.
\end{longtable}
\end{proof}

\begin{lemma}\label{commutationctrldotsCNotRX}

For any $x\in\{0,1\}^k$,
\[\tikzfig{cLambdaRXCNoth}\ =\ \tikzfig{CNothcLambdaRX}.\]
\end{lemma}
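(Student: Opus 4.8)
The plan is to prove the identity by induction on $k$, the length of the control list $x$, following the same strategy as the neighbouring commutation lemmas of this appendix (notably \cref{commctrlPNotCvar} and the proof of \cref{eq:comRX}).

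For the base case $k=0$ we have $\Lambda^\epsilon R_X(\theta) = R_X(\theta)$ by \cref{lem:Lambda0}, so the left-hand gate is a single controlled-$R_X$. I would establish the commutation with the CNot directly: unfold $R_X(\theta)=HP(\theta)H$ through \eqref{RXgate}, push the Hadamards across the CNot with \eqref{CNotHH}, slide the resulting controlled phase past the control by \eqref{commutationPctrl}, and fold the Hadamards back using \eqref{HH}. Since the $H$-conjugated controlled phase is exactly the object treated by \cref{commctrlPNotCvar}, one may alternatively invoke that lemma at $k=0$ to close the base case.

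For the inductive step $k\ge 1$, I would first use \cref{XX} to reduce, without loss of generality, to the case $x=1z$ with $z\in\{0,1\}^{k-1}$. Unfolding $\Lambda^{1z}R_X(\theta)$ via \cref{lem:Lambda-Rx} rewrites it as two CNots surrounding the pair $\Lambda^z R_X(\theta/2)$ and $\Lambda^z R_X(-\theta/2)$. The external CNot of the statement then has to be transported across this ladder: I would slide it inward using the CNot-manipulation identities \eqref{CNotCNot}, \eqref{CNotlift}, \eqref{NotClift}, \eqref{commutationCNothaut} and \eqref{commutationCNotsbas}, apply the induction hypothesis to move it past the two smaller controlled rotations $\Lambda^z R_X(\pm\theta/2)$, and then run the same CNot identities in reverse to recombine the ladder and obtain the right-hand side.

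The main obstacle is the bookkeeping in the inductive step: one must track precisely which pair of wires each CNot acts on as the external CNot threads through the internal ladder, so that at each stage the correct lift or commutation axiom is applicable---this is exactly the kind of long rewrite chain already seen in the proofs of \cref{commctrlPNotCvar} and \eqref{eq:comRX}. I expect the delicate point to be the moment the external CNot meets the innermost rotations, where the sign of the rotation angle and the orientation of the surrounding CNots must be matched up correctly before the induction hypothesis can be invoked.
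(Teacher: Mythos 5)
Your plan is a genuine departure from the paper's proof, which is not an induction at all: it is a short direct derivation. The paper unfolds the multi-controlled $R_X$ once, inserts a pair of Hadamards by \eqref{HH}, uses \eqref{CNotHH} to flip orientations so that the external CNot and the long-range CNots produced by the decomposition share a common \emph{target} wire, commutes them by \eqref{commutationCNotsbas}, and then undoes the flips. The structural reason no induction is needed is the point your sketch does not exploit: the external CNot meets the multi-controlled gate only on its outermost control wire, and after a single application of \cref{lem:Lambda-Rx} that wire occurs in the decomposition solely as the shared control of the two long-range CNots, while the residual blocks $\Lambda^{z}R_X(\pm\theta/2)$ act on wires disjoint from those of the external CNot and therefore commute with it by the deformation rules of \cref{fig:axiom} alone.

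This also exposes the one concretely shaky step in your inductive argument: you propose to apply the induction hypothesis to move the CNot past $\Lambda^{z}R_X(\pm\theta/2)$, but once the first control has been peeled off, those gates and the external CNot no longer share any wire, so the induction hypothesis is not in the configuration required by the statement---and it is also not needed, since that commutation is purely topological. Replacing that appeal by the observation that the wires are disjoint collapses your induction into the paper's one-shot argument; the only non-trivial interaction left is between the external CNot and the two internal CNots, which share a control wire and are handled exactly by \eqref{CNotHH} together with \eqref{commutationCNotsbas}. Your base case is sound in spirit (the same $H$-conjugation idea), but it need not be treated separately once the argument is set up uniformly in $k$.
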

\begin{proof}~

\begin{longtable}{RCL}
\tikzfig{cLambdaRXCNoth}&=&\tikzfig{cLambdaRXCNoth1}\\\\
&\eqeqref{HH}&\tikzfig{cLambdaRXCNoth2}\\\\
&\eqeqref{CNotHH}&\tikzfig{cLambdaRXCNoth3}\\\\
&\eqeqref{commutationCNotsbas}&\tikzfig{cLambdaRXCNoth4}\\\\
&\eqeqref{CNotHH}&\tikzfig{cLambdaRXCNoth5}\\\\
&\eqeqref{HH}&\tikzfig{cLambdaRXCNoth6}\\\\
&=&\tikzfig{CNothcLambdaRX}
\end{longtable}
\end{proof}

\subsection{Proof of Proposition \ref{commctrl}}\label{preuvecommctrl}

We assume without loss of generality that $y=y'=\epsilon$.\bigskip

First, for the case where $G=R_X(\theta)$ and $G'=R_X(\theta')$, we prove by induction on $k$ 
that for any $x,x'\in\{0,1\}^k$,
\begin{equation}\label{commctrlRX}\qczero\vdash\Lambda^xR_X(\theta)\circ\Lambda^{x'}R_X(\theta')=\Lambda^{x'}R_X(\theta')\circ\Lambda^xR_X(\theta).\end{equation}
The desired result corresponds to \cref{commctrlRX} with $x\neq x'$. Note that when $x=x'$, \cref{commctrlRX} is a consequence of \cref{prop:sum}.

If $k=0$, then %it
\cref{commctrlRX} is a direct consequence of \cref{RXRX}. If $k\geq 1$, then we can write $x=az$ and $x'=a'z'$ with $a,a'\in\{0,1\}$. One has (where the $\pm$ signs correspond respectively to $(-1)^a$ and $(-1)^{a'}$):
\begin{longtable}{RCL}
\Lambda^{x'}R_X(\theta')\circ\Lambda^xR_X(\theta)&\overset{\text{\cref{decompctrlblancRX}}}{=}&\tikzfig{ctrlzRXctrlzprimeRX1}\\\\
&\eqeqref{HH}&\tikzfig{ctrlzRXctrlzprimeRX2}\\\\
&\eqeqref{commctrlRXconjCNots}&\tikzfig{ctrlzRXctrlzprimeRX3}\\\\
&\overset{\text{induction}}{\overset{\text{hypothesis}}{=}}&\tikzfig{ctrlzRXctrlzprimeRX4}\\\\
&\eqeqref{CNotCNot}&\tikzfig{ctrlzRXctrlzprimeRX5}\\\\
&\overset{\text{induction}}{\overset{\text{hypothesis,}}{\eqeqref{CNotCNot}}}&\tikzfig{ctrlzRXctrlzprimeRX6}\\\\
&\eqeqref{commctrlRXconjCNots}&\tikzfig{ctrlzRXctrlzprimeRX7}\\\\
&\eqeqref{HH}&\tikzfig{ctrlzRXctrlzprimeRX8}\\\\
&\overset{\text{\cref{decompctrlblancRX}}}{=}&\Lambda^xR_X(\theta)\circ\Lambda^{x'}R_X(\theta')
\end{longtable}

If $G=P(\theta)$ and $G'=P(\theta')$, we prove by induction on $k$ that for any $z,z'\in\{0,1\}^k$, %
\begin{equation}\label{commutationLambdaeiphi}
\Lambda^zs(\varphi)\circ\Lambda^{z'}s(\varphi')=\Lambda^{z'}s(\varphi')\circ\Lambda^zs(\varphi).
\end{equation}
The result corresponds to the case where %
$z=x1$ and $z'=x'1$ with $x\neq x'$. Note that the case where $x=x'$ is a consequence of \cref{prop:sum}.

If $k=0$, then %
\cref{commutationLambdaeiphi} is a consequence of the topological rules.

If $k=1$, then 
it is a consequence of \cref{PP,XPX}.

If $k\geq2$, note first that by Equations \eqref{xgate}, \eqref{HH}, \eqref{ZctrlRX}, and \eqref{ZZ} (or \eqref{XPX}, \eqref{HH} and \eqref{RXgate} if $m=0$), for any $x\in\{0,1\}^m$,
\begin{equation}\label{ctrlblancbaseiphi}\qczero\vdash\Lambda^{x0}s(\varphi)\ =\ \tikzfig{mctrleiphiblancbas}.\end{equation}
Let $z=xa$ and $z'=x'a'$ with $a,a'\in\{0,1\}$ and $x,x'\in\{0,1\}^{k-1}$. 
One has (with the $\pm$ signs being $(-1)^{1-a}$ and $(-1)^{1-a'}$ respectively): 
\begin{longtable}{RCL}
\Lambda^{z'}s(\varphi')\circ\Lambda^zs(\varphi)&%
\eqtroiseqref{XX}{HH}{ctrlblancbaseiphi}&\tikzfig{mctrlPphicomp-pm}\\\\
&\overset{\text{\cref{commctrlphaseenhaut}}}{=}&\tikzfig{mctrlPphicomp1-pm}\\\\
&\overset{\text{induction}}{\overset{\text{hypothesis}}{=}}&\tikzfig{mctrlPphicomp2-pm}\\\\
&\eqeqref{commctrlRX}&\tikzfig{mctrlPphicomp3-pm}\\\\
&\overset{\text{\cref{commctrlphaseenhaut}}}{=}&\tikzfig{mctrlPphicomp4-pm}\\\\
&
\eqtroiseqref{XX}{HH}{ctrlblancbaseiphi}&
\Lambda^zs(\varphi)\circ\Lambda^{z'}s(\varphi').
\end{longtable}
For the case where $G=R_X(\theta)$ and $G'=P(\theta')$, we prove by induction on $k\geq1$ that for any $x,x'\in\{0,1\}^k$ with $x\neq x'$,
\begin{equation}\label{commctrlRXHRXH}\qczero\vdash\tikzfig{mctrlRXHRXH}=\tikzfig{mctrlHRXHRX}\end{equation}
Note that by \cref{commctrlphaseenhaut} and the preceding case, \cref{commctrlRXHRXH} is equivalent to the desired result.

If $k=1$, then without loss of generality we can assume that $x=1$ and $x'=0$. One has
\begin{longtable}{RCL}
\tikzfig{cRXHcbRXH}&\overset{\text{\cref{decompctrlblancRX}}}{=}&\tikzfig{cRXHcbRXHdecomp}\\\\
&\eqdeuxeqref{HH}{NotCRXNotCreversible}&\tikzfig{cRXHcbRXHdecomp1}\\\\
&\eqeqref{commutationRXCNot}&\tikzfig{cRXHcbRXHdecomp2}\\\\
&\eqeqref{commutationdecompctrlPRY}&\tikzfig{cRXHcbRXHdecomp3}\\\\
&\eqeqref{commutationRXCNot}&\tikzfig{cRXHcbRXHdecomp4}\\\\
&\eqdeuxeqref{NotCRXNotCreversible}{HH}&\tikzfig{cRXHcbRXHdecomp5}\\\\
&\overset{\text{\cref{decompctrlblancRX}}}{=}&\tikzfig{HcbRXHcRX}
\end{longtable}
If $k\geq2$, then by \cref{cor:swap}, we can assume without loss of generality that we can write $x=az$ and $x'=az'$ with $a,a'\in\{0,1\}$ and $z\neq z'$. One has (where the $\pm$ signs correspond respectively to $(-1)^a$ and $(-1)^{a'}$):
\begin{longtable}{CL}
\tikzfig{mctrlRXHRXH-grand}\hspace*{-15em}\\\\
\overset{\text{\cref{decompctrlblancRX}}}{=}&\tikzfig{mctrlRXHRXHdecomp}\\\\
\eqeqref{HH}&\tikzfig{mctrlRXHRXHdecomp1}\\\\
\overset{\text{\cref{commctrlPNotCvar}}}{=}&\tikzfig{mctrlRXHRXHdecomp2}\\\\
\overset{\text{induction}}{\overset{\text{hypothesis}}{=}}&\tikzfig{mctrlRXHRXHdecomp3}\\\\
\overset{\text{\cref{commctrlPNotCvar}}}{=}&\tikzfig{mctrlRXHRXHdecomp4}\\\\
\overset{\text{induction}}{\overset{\text{hypothesis}}{=}}&\tikzfig{mctrlRXHRXHdecomp5}\\\\
\eqeqref{CNotHCNot}&\tikzfig{mctrlRXHRXHdecomp6}\\\\
\eqeqref{HH}&\scalebox{0.9}{$\tikzfig{mctrlRXHRXHdecomp7}$}\\\\
\overset{\text{\cref{commctrlPNotCvar},}}{\eqeqref{HH}}&\tikzfig{mctrlRXHRXHdecomp8}\\\\
\overset{\text{\cref{commctrlPNotCvar}}}{=}&\tikzfig{mctrlRXHRXHdecomp9}\\\\
\overset{\text{induction}}{\overset{\text{hypothesis,}}{\eqeqref{HH}}}&\tikzfig{mctrlRXHRXHdecomp10}\\\\
\overset{\eqref{commutationXCNot}\eqref{CNotHCNot}}{\eqeqref{XX}}&\tikzfig{mctrlRXHRXHdecomp11}\\\\
\overset{\text{\cref{commctrlPNotCvar},}}{\eqeqref{HH}}&\tikzfig{mctrlRXHRXHdecomp12}\\\\
\overset{\text{\cref{commctrlPNotCvar}}}{=}&\tikzfig{mctrlRXHRXHdecomp13}\\\\
\overset{\text{induction}}{\overset{\text{hypothesis,}}{\eqeqref{HH}}}&\tikzfig{mctrlRXHRXHdecomp14}\\\\
\eqdeuxeqref{CNotHCNot}{HH}&\tikzfig{mctrlRXHRXHdecomp15}\\\\
\eqeqref{HH}&\tikzfig{mctrlRXHRXHdecomp16}\\\\
\overset{\text{\cref{commctrlPNotCvar},}}{\eqeqref{HH}}&\tikzfig{mctrlRXHRXHdecomp17}\\\\
\overset{\eqref{commutationXCNot}\eqref{CNotHCNot}}{\eqdeuxeqref{XX}{HH}}&\tikzfig{mctrlRXHRXHdecomp18}\\\\
\overset{\text{\cref{commctrlPNotCvar},}}{\eqeqref{HH}}&\tikzfig{mctrlRXHRXHdecomp19}\\\\
\overset{\text{\cref{decompctrlblancRX}}}{=}&\tikzfig{mctrlHRXHRX-grand}
\end{longtable}

If $G=X$ or $G'=X$, then by \cref{mctrlX}, the result follows from the preceding cases together with \cref{commctrlphaseenhaut} and \cref{commutationLambdaeiphi}.

\subsection{Ancillary lemmas: Lemmas \ref{symmetriesemicontrolee} to \ref{commctrlphaseenhautP}}\label{commutationsXtarget}

\begin{lemma}\label{symmetriesemicontrolee}
For any $x\in\{0,1\}^k$ and $y\in\{0,1\}^\ell$,
\[\qczero\vdash(id_k\otimes X\otimes id_\ell)\circ\Lambda^x_yX=\Lambda^x_yX\circ(id_k\otimes X\otimes id_\ell)\]
and
\[\qczero\vdash(id_k\otimes X\otimes id_\ell)\circ\Lambda^x_yR_X(\theta)=\Lambda^x_yR_X(\theta)\circ(id_k\otimes X\otimes id_\ell)\]
\end{lemma}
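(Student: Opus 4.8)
The plan is to prove both equalities by reducing to a single-qubit commutation fact and then an induction on the number of control wires, exploiting that the target wire always carries an $X$-axis operation. Semantically the statement is immediate: $\interp{X}$ commutes with $\interp{X}$ and with $\interp{R_X(\theta)}$ (all of them act on the $X$-axis), so an unconditional $X$ on the target commutes with any gate whose only action on that wire is a (possibly controlled) $X$ or $R_X$. The work is to realise this inside $\qczero$. I will treat the $R_X(\theta)$-statement as the engine and deduce the $X$-statement from it via \eqref{mctrlX}, whose derivation stays within $\qczero$.

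First I would perform two reductions that cost nothing against the target. By \cref{def:multicontrolled-oriented}, $\Lambda^x_y G$ is $\Lambda^{1^k}_{1^\ell} G$ conjugated by $X$ gates placed on those control wires where $x_i=0$ or $y_j=0$; since all these $X$ gates sit on control wires, disjoint from the target, the unconditional $X = id_k\otimes X\otimes id_\ell$ commutes with them for free, so I may assume all controls positive. Next, by \cref{def:multicontrolled}, $\Lambda^{1^k}_{1^\ell} G$ is obtained from $\Lambda^{1^{k+\ell}} G$ by composing with wire permutations (built from swaps) that move the target below the last $\ell$ controls; by naturality of the swap (\cref{fig:axiom}) the unconditional $X$ on the relocated target wire is carried by these permutations onto the unconditional $X$ on the target of $\Lambda^{1^{k+\ell}} G$. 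This reduces everything to the case $y=\epsilon$ with all-positive controls, i.e.\ to $\Lambda^{1^m} G$.

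For the main induction (on $m$) in the $R_X$ case, the base $m=0$ is the single-qubit identity $X\circ R_X(\theta)=R_X(\theta)\circ X$: writing $X=HP(\pi)H$ by \eqref{xgate} and \eqref{zgate}, $R_X(\theta)=HP(\theta)H$ by \eqref{RXgate}, cancelling adjacent Hadamards by \eqref{HH}, commuting the two phase gates by \eqref{PP}, and letting the global scalar coming from the $R_X$ normalisation float through by \eqref{SS}. For the step $m\to m+1$, I would expand $\Lambda^{1^{m+1}}R_X(\theta)$ using \cref{lem:Lambda-Rx}, which writes it as a composite of multi-controlled $R_X(\pm\theta/2)$ gates sharing the same target but with strictly fewer controls, interleaved with $\cnot$ gates; in this decomposition the target is only ever a $\cnot$-target (never a $\cnot$-control). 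The unconditional target-$X$ then passes through each multi-controlled $R_X(\pm\theta/2)$ by the induction hypothesis, and through each $\cnot$ by \eqref{commutationXCNot} (or trivially, when the $\cnot$ avoids the target), hence through the whole composite.

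Finally, for the $\Lambda^x_y X$ case I would apply \eqref{mctrlX}, which (after the same reductions) rewrites $\Lambda^{1^m}X$ as a phase gate acting only on the control wires composed with $\Lambda^{1^m}R_X(\pi)$ on the target: the target-$X$ commutes with the former trivially and with the latter by the $R_X$ case just established. I expect the main obstacle to be precisely this $X$-target case, because $X$ does \emph{not} commute with $H$, so one cannot push the target-$X$ naively through the Hadamard-based decomposition of a multi-controlled $X$; routing through \eqref{mctrlX} to expose a bare $R_X$ on the target is what makes it go through. A secondary technical point to verify carefully is that the $\cnot$ gates produced by \cref{lem:Lambda-Rx} never use the target as a control (so that \eqref{commutationXCNot} genuinely applies), together with the consistent swap-naturality bookkeeping used to eliminate the below-target controls.
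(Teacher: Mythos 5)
Your proof is correct, but it takes a genuinely different route from the paper's. The paper derives this lemma as a near-immediate corollary of machinery it has already built: it runs \cref{prop:comb} backwards to decompose the unconditional target-$X$ (after first rewriting $X$ as $s(\frac\pi2)\cdot R_X(\pi)$ in the rotation case) into a product of $\Lambda^{x'}_{y'}X$ (resp.\ $\Lambda^{x'}_{y'}R_X(\pi)$) over \emph{all} control words $x'y'$, and then commutes each factor past $\Lambda^x_y G$ using \cref{commctrl} when $x'y'\neq xy$ and trivially (resp.\ by the additivity of \cref{prop:sum}) when $x'y'=xy$. You instead run a fresh structural induction on the multi-control construction: reduce to all-positive controls and $y=\epsilon$ by disjointness of wires and swap naturality, unfold $\lambda^{m+1}R_X(\theta)$ via \cref{lem:Lambda-Rx}, and push the target-$X$ through the two smaller multi-controlled rotations (induction hypothesis) and the two $\cnot$s via \eqref{commutationXCNot} --- which applies because those $\cnot$s do target the gate's target wire, as you correctly flag as the point to check --- then handle the $\Lambda^x_yX$ case through \eqref{mctrlX}, whose derivation indeed stays inside $\qczero$. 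What each approach buys: the paper's argument is essentially free given \cref{prop:comb} and \cref{commctrl} (each of which required a long induction of its own), whereas yours bypasses both propositions entirely and uses only the definitional unfolding plus a few elementary derived equations, so it could sit much earlier in the dependency chain and makes the underlying mechanism (everything on the target wire is an $X$-axis operation) explicit. The only cosmetic imprecision is that the scalar in your base case floats through by the prop/interchange axioms rather than by \eqref{SS}; this does not affect correctness.
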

\begin{proof}
The case of $\Lambda^x_yX$ is a direct consequence of \cref{prop:comb,commctrl}. Indeed, using \cref{prop:comb}, $(id_k\otimes X\otimes id_\ell)$ can be decomposed into a product of multi-controlled gates of the form $\Lambda^{x'}_{y'}X$ with $x'\in\{0,1\}^k$ and $y'\in\{0,1\}^\ell$. Then these multi-controlled gates commute with $\Lambda^x_yX$, trivially in the case where $x'y'=xy$, and by \cref{commctrl} in the other cases. %
For the case  of $\Lambda^x_yR_X(\theta)$, note that $\eqref{S0},\eqref{SS}\vdash\tikzfig{QgateX}=\tikzfig{eipisur2RXpi}$. Then $s(\frac\pi2)$ commutes by the topological rules, while the commutation of $(id_k\otimes R_X(\pi)\otimes id_\ell)$ is a direct consequence of \crefnosort{prop:comb,commctrl,prop:sum}: using \cref{prop:comb}, it can be decomposed into a product of multi-controlled gates of the form $\Lambda^{x'}_{y'}R_X(\pi)$ with $x'\in\{0,1\}^k$ and $y'\in\{0,1\}^\ell$. Then these multi-controlled gates commute with $\Lambda^x_yR_X(\theta)$, by \cref{commctrl} in the cases where $x'y'\neq xy$, and by \cref{prop:sum} in the case where $x'y'=xy$.
\end{proof}

\begin{lemma}\label{antisymmetriesemicontrolee}
\[\qczero\vdash\tikzfig{XmctrlPX}\ =\ \tikzfig{mctrlPmoinsphictrlPphihaut}\]
\end{lemma}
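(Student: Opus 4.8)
The plan is to recognise this as the multi-controlled lift of the single-qubit relation \eqref{XPX}, namely $X P(\varphi) X = s(\varphi)\,P(-\varphi)$: conjugating the controlled phase by an $X$ on its \emph{target} wire should flip the phase to $-\varphi$ and deposit a scalar $s(\varphi)$ on the control block, which by \cref{lem:Lambda-s} is precisely the controlled-$P(\varphi)$ sitting on top in the right-hand side. Writing $H_t$ and $Z_t$ for a Hadamard, resp.\ a $Z$, on the target (bottom) wire, I would first replace the target phase by a Hadamard-conjugated rotation, i.e.\ use \eqref{RXgate} and \eqref{HH} (together with \cref{lem:Lambda-P,lem:Lambda-Rx}) to rewrite $\Lambda^x P(\varphi)$ as $H_t\,\Lambda^x R_X(\varphi)\,H_t$; the Hadamards act only on the target, so they pass harmlessly across the control structure.

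Next I would propagate the two conjugating $X$'s inward. Using $X = HZH$ (\eqref{xgate}) together with \eqref{HH} gives $X_t H_t = H_t Z_t$ and $H_t X_t = Z_t H_t$, so the outer Hadamards absorb the $X$'s and the expression becomes $H_t\,(Z_t\,\Lambda^x R_X(\varphi)\,Z_t)\,H_t$. The heart of the argument is then the commutation \eqref{ZctrlRX} established in the proof of \cref{decompctrlblancRX}: pushing $Z_t$ through the controlled rotation reverses the rotation angle and releases a controlled scalar, and cancelling the two $Z$'s via \eqref{ZZ} yields $\Lambda^x s(\varphi)\circ H_t\,\Lambda^x R_X(-\varphi)\,H_t$. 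Folding the Hadamards back in (again \eqref{RXgate}, \eqref{HH}) turns the rotation block into $\Lambda^x P(-\varphi)$, while the controlled scalar---being supported on the control wires only---commutes out; \cref{prop:sum} absorbs any half-angle duplication, and \cref{lem:Lambda-s} re-expresses $\Lambda^x s(\varphi)$ as the top controlled-$P(\varphi)$, giving the claimed right-hand side.

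I expect the main obstacle to be the scalar bookkeeping in \eqref{ZctrlRX}: the sign flip $R_X(\varphi)\mapsto R_X(-\varphi)$ is accompanied by a scalar (at the uncontrolled level one has $Z R_X(\varphi) = s(\varphi)\,R_X(-\varphi)\,Z$), and one must verify that its controlled lift is exactly $\Lambda^x s(\varphi)$ and that it detaches cleanly onto the control block rather than interfering with the target Hadamards. A secondary subtlety is the reduction to all-positive controls---negative entries of $x$ are handled up front via \eqref{XX} and \cref{def:multicontrolled-oriented}---and the justification of $\Lambda^x P(\varphi) = H_t\,\Lambda^x R_X(\varphi)\,H_t$ from the inductive lemmas, so that the black-box rotation really carries its internal scalars correctly. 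If one prefers to avoid \eqref{ZctrlRX} altogether, the same conclusion follows by induction on $|x|$ with base case \eqref{XPX}, using \cref{symmetriesemicontrolee} to commute the target $X$ past the $R_X$-part at each step.
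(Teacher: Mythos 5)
Your proposal follows essentially the same route as the paper's proof: unfold $\Lambda^x P(\varphi)$ into its $s(\varphi/2)$-plus-$H_t\,\Lambda^x R_X(\varphi)\,H_t$ decomposition, absorb the conjugating $X$'s into the target Hadamards via \eqref{xgate} and \eqref{HH}, push the resulting $Z$'s through with \eqref{ZctrlRX} (flipping the angle and releasing the controlled scalar), cancel them with \eqref{ZZ}, and tidy up with \cref{prop:sum} and the commutation of the controlled phase (\cref{commctrlphaseenhaut}). The scalar bookkeeping you flag as the main risk is exactly what \eqref{ZctrlRX} packages, so the argument goes through as you describe.
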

\begin{proof}~
\begin{longtable}{RCL}
\tikzfig{XmctrlPX}&=&\tikzfig{XmctrlPX1}\\\\
&\eqdeuxeqref{xgate}{HH}&\tikzfig{XmctrlPX2}\\\\
&\eqtroiseqref{ZctrlRX}{HH}{ZZ}&\tikzfig{XmctrlPX3}\\\\
&\overset{\text{\crefnosort{prop:sum,commctrlphaseenhaut}}}{=}&\tikzfig{XmctrlPX4}\\\\
&=&\tikzfig{mctrlPmoinsphictrlPphihaut}
\end{longtable}
\end{proof}

\begin{lemma}\label{commctrlphaseenhautP}
  For any $x\in\{0,1\}^k$ and $y\in\{0,1\}^\ell$ with $\ell\geq k$,
  \[\qczero\vdash\tikzfig{mctrlxeiphihmctrlyeiphiprime}\ =\ \tikzfig{mctrlyeiphiprimemctrlxeiphih}.\]
  \end{lemma}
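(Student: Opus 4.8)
The plan is to reduce the claim to the commutation of two \emph{equal-length} multi-controlled scalars, namely Equation~\eqref{commutationLambdaeiphi} established in the proof of \cref{commctrl}. Both gates appearing in the statement are diagonal in the computational basis, so there is no genuine algebraic content to the commutation itself; the only subtlety is the length mismatch, the upper phase $\Lambda^x s(\varphi)$ acting on the top $k$ wires while $\Lambda^y s(\varphi')$ acts on all $\ell$ of them. Reconciling this mismatch is precisely what the control/anti-control combination rule \cref{prop:comb} is designed for, which is why I expect this to be a short proof rather than a long induction like \cref{commctrlphaseenhaut}.

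Concretely, I would proceed by induction on $\ell-k$. When $\ell=k$ the two gates have the same support and the statement is exactly \eqref{commutationLambdaeiphi}, which is proved there for arbitrary control strings $z,z'$ of a common length (including the case $z=z'$, handled via \cref{prop:sum}). For the inductive step I would view the upper gate on the $\ell$-wire register as $\Lambda^x s(\varphi)\otimes id_{\ell-k}$ and apply \cref{prop:comb}, using \cref{cor:swap} to relocate the combined control/anti-control onto one of the bottom $\ell-k$ wires, so as to express it as $\Lambda^{0x'} s(\varphi)\circ \Lambda^{1x'} s(\varphi)$ with a control string $x'$ that is one position longer. Iterating this $\ell-k$ times turns the upper gate into a composition of the $2^{\ell-k}$ full-length multi-controlled scalars $\Lambda^{xb} s(\varphi)$, one for each $b\in\{0,1\}^{\ell-k}$.

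Each factor $\Lambda^{xb} s(\varphi)$ then has length $\ell$, matching $\Lambda^y s(\varphi')$, so \eqref{commutationLambdaeiphi} allows me to slide $\Lambda^y s(\varphi')$ across every factor with no hypothesis on the controls; collapsing the product back with \cref{prop:comb} yields the asserted equality. Equivalently, one could carry out the whole expansion at once and invoke \eqref{commutationLambdaeiphi} on each factor in one pass, but the induction on $\ell-k$ keeps the bookkeeping lightest.

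The step I expect to be the main obstacle is not the commutation but the combinatorial bookkeeping of this expansion: tracking the control/anti-control positions as \cref{prop:comb} is iterated, checking via \cref{cor:swap} that the newly created controls land on the intended bottom wires, and verifying that the intermediate circuits stay well typed on the full $\ell$-wire register. In contrast to the $R_X$-flavoured \cref{commctrlphaseenhaut}, no angle arithmetic or delicate circuit rewriting is required once the supports have been aligned, since the underlying gates are all diagonal.
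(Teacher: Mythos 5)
Your argument is correct, and all the ingredients you invoke (\cref{prop:comb}, \cref{cor:swap}, \cref{prop:sum} and the equal-length commutation of controlled scalars, Equation~\eqref{commutationLambdaeiphi}) are established before this lemma in the paper's dependency order, so there is no circularity. It is, however, a genuinely different route from the paper's. The paper also inducts on $\ell-k$, but in the inductive step it attacks the \emph{longer} gate: writing $y=t1$ (up to conjugation by $X$), it unfolds $\Lambda^{y}s(\varphi')=\Lambda^{t}P(\varphi')$ via \cref{lem:Lambda-P} into a shorter controlled scalar $\Lambda^{t}s(\varphi'/2)$ together with CNots and controlled $R_X$ pieces, then commutes $\Lambda^{x}s(\varphi)$ past the scalar piece by the induction hypothesis and past the $R_X$ pieces by \cref{commctrlphaseenhaut}. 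You instead expand the \emph{shorter} gate $\Lambda^{x}s(\varphi)\otimes id_{\ell-k}$ into the product of the $2^{\ell-k}$ full-length gates $\Lambda^{xb}s(\varphi)$, $b\in\{0,1\}^{\ell-k}$, and slide $\Lambda^{y}s(\varphi')$ through factor by factor using \eqref{commutationLambdaeiphi}. Your version avoids any appeal to the $R_X$-flavoured commutation lemma and keeps every intermediate gate diagonal, at the cost of an exponential number of factors and the control-relocation bookkeeping (applying \cref{prop:comb} on a bottom wire via \cref{cor:swap}, a manoeuvre the paper itself sanctions in the remark following \cref{prop:comb}); the paper's version keeps the derivation linear in $\ell-k$ but leans on the heavier machinery of \cref{lem:Lambda-P} and \cref{commctrlphaseenhaut}. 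Both yield a valid $\qczero$-derivation.
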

  \begin{proof}
  We proceed by induction on $\ell-k$. If $\ell=k$ then the result is a consequence of \cref{prop:sum} or \ref{commctrl} (or just of the topological rules if $k=\ell=0$). If $\ell\geq k+1$, then %
  without loss of generality, we can assume that $y=t1$ for some $t\in\{0,1\}^{\ell-1}$. Then by \cref{lem:Lambda-P} (together with \cref{lem:Lambda0x} and \cref{XX}),
  \begin{multline*}
    \qczero\vdash\tikzfig{mctrlxeiphihmctrlyeiphiprime}\\=\tikzfig{mctrlxeiphihmctrlyeiphiprime1}
  \end{multline*}
  so that the commutation follows by induction hypothesis and \cref{commctrlphaseenhaut}.
  \end{proof}

\subsection{Proof of Proposition \ref{commctrlpasenface}}\label{preuvecommctrlpasenface}

First, the cases where $G$ or $G'=X$ follow from the other cases. Indeed, using \cref{mctrlX} and \cref{prop:comb} (together with \cref{cor:swap}), and then \cref{prop:CP}, one gets that for any $t\in\{0,1\}^p$,
\[\qczero\vdash\Lambda^tX=\tikzfig{mctrlXsimplmemetarget}.\]
Then, if $G$ or $G'=X$, one can use this decomposition and make the multi-controlled parts commute using the other cases. The non-controlled $X$ gates commute with the control dots by changing their colour, with the help of \cref{XX}.  This does not alter the fact that the multi-controlled gates commute, since the $X$ gates are not on the same wire than the control dots of different colours. And since the decomposition produces each time two $X$ gates on the same wire, any control dot gets changed twice, so that it is the same at the end as at the beginning.

Thus, it suffices to treat the cases where $G,G'\in \{R_X(\theta),P(\varphi)\}$.

If $G=R_X(\theta)$ and $G'=P(\varphi)$ (or conversely), then by \cref{prop:CP}, the result is a consequence of \cref{symmetriesemicontrolee,commctrl}.

If $G=P(\varphi)$ and $G'=P(\varphi')$, then by \cref{prop:CP}, the result is a consequence of \cref{antisymmetriesemicontrolee,commctrlphaseenhautP} (together with \cref{XX}) and \cref{commctrl}.

It remains to treat the case where $G=R_X(\theta)$ and $G'=R_X(\theta')$. By \cref{symmetriesemicontrolee}, we can assume without loss of generality that $a=b=1$. By definition of $\Lambda^t_u$, we can also assume without loss of generality that $k=m=0$. Then the hypothesis $xyz\neq x'y'z'$ becomes $y\neq y'$. We proceed by induction on $\ell$. If $\ell=1$, then without loss of generality we can assume that $x=1$ and $x'=0$. One has
\begin{longtable}{CL}
&\tikzfig{ccRXRXcbc}\\\\
\overset{\text{\cref{cor:swap},}}{\overset{\text{def}}{=}}&\tikzfig{ccRXRXcbcsemidec1}\\\\
\eqeqref{eq:comRX}&\tikzfig{ccRXRXcbcsemidec2}\\\\
\overset{\text{\cref{decompctrlblancRX},}}{\overset{\text{def}}{=}}&\scalebox{0.8}{$\tikzfig{ccRXRXcbcdec}$}\\\\
\eqeqref{HH}&\scalebox{0.8}{$\tikzfig{ccRXRXcbcdec1}$}\\\\
\eqeqref{NotCRXNotCreversible}&\scalebox{0.8}{$\tikzfig{ccRXRXcbcdec2}$}\\\\
\eqeqref{commutationRXCNot}&\tikzfig{ccRXRXcbcdec4}\\\\
\eqdeuxeqref{CNotHH}{CNotCNot}&\tikzfig{ccRXRXcbcdec5}\\\\
\eqeqref{commutationdecompctrlRXpasenface}&\tikzfig{ccRXRXcbcdec6}\\\\
\eqdeuxeqref{CNotCNot}{CNotHH}&\tikzfig{ccRXRXcbcdec8}\\\\
\eqeqref{commutationRXCNot}&\scalebox{0.8}{$\tikzfig{ccRXRXcbcdec9}$}\\\\
\eqeqref{NotCRXNotCreversible}&\scalebox{0.8}{$\tikzfig{ccRXRXcbcdec10}$}\\\\
\eqeqref{HH}&\scalebox{0.8}{$\tikzfig{ccRXRXcbcdec11}$}\\\\
\overset{\text{\cref{decompctrlblancRX},}}{\overset{\text{def}}{=}}&\tikzfig{ccRXRXcbcsemidec3}\\\\
\eqeqref{eq:comRX}&\tikzfig{ccRXRXcbcsemidec4}\\\\
\overset{\text{\cref{cor:swap},}}{\overset{\text{def}}{=}}&\tikzfig{RXcbcccRX}.
\end{longtable}
If $k\geq2$, by \cref{cor:swap} we can assume without loss of generality that $y=at$ and $y'=a't'$ with $a,a'\in\{0,1\}$ and $t\neq t'$. One has (with the $\pm$ signs being $(-1)^a$ and $(-1)^{a'}$ respectively):
\begin{longtable}{RCL}
\tikzfig{RXhLambdaccLambdaRxbgrand}&=&\tikzfig{RXhLambdaccLambdaRxbdecomp}\\\\
&\eqeqref{HH}&\tikzfig{RXhLambdaccLambdaRxbdecomp1}\\\\
&\overset{\text{\cref{commutationctrldotsCNotRX}}}{=}&\tikzfig{RXhLambdaccLambdaRxbdecomp2}\\\\
&\overset{\text{induction}}{\overset{\text{hypothesis}}{=}}&\tikzfig{RXhLambdaccLambdaRxbdecomp3}\\\\
&\overset{\text{\cref{commutationctrldotsCNotRX},}}{\overset{\text{induction}}{\overset{\text{hypothesis}}{=}}}&\tikzfig{RXhLambdaccLambdaRxbdecomp4}\\\\
&\eqeqref{commutationCNotsbas}&\tikzfig{RXhLambdaccLambdaRxbdecomp5}\\\\
&\overset{\text{\cref{cor:swap},}}{\overset{\text{\cref{commutationctrldotsCNotRX}}}{=}}&\tikzfig{RXhLambdaccLambdaRxbdecomp6}\\\\
&\overset{\eqref{commutationCNotsbas},}{\overset{\text{\cref{cor:swap},}}{\overset{\text{\cref{commutationctrldotsCNotRX}}}{=}}}&\tikzfig{RXhLambdaccLambdaRxbdecomp7}\\\\
&\overset{\text{\cref{commutationctrldotsCNotRX},}}{\overset{\text{induction}}{\overset{\text{hypothesis}}{=}}}&\tikzfig{RXhLambdaccLambdaRxbdecomp8}\\\\
&\overset{\eqref{commutationCNotsbas},}{\overset{\text{\cref{cor:swap},}}{\overset{\text{\cref{commutationctrldotsCNotRX}}}{=}}}&\tikzfig{RXhLambdaccLambdaRxbdecomp9}\\\\
&\eqeqref{HH}&\tikzfig{RXhLambdaccLambdaRxbdecomp10}\\\\
&=&\tikzfig{cLambdaRxbRXhLambdacgrand}.
\end{longtable}

\subsection{Proof of Proposition \ref{EulerHmoins}}\label{preuveEulerHmoins}
\begin{longtable}{RCL}
\tikzfig{H}&\eqquatreeqref{P0}{PP}{RX0}{RXRX}&\tikzfig{EulerHmoins1}\\\\
&\eqeqref{EulerH}&\tikzfig{EulerHmoins2}\\\\
&\eqeqref{HH}&\tikzfig{EulerHmoins}
\end{longtable}

\subsection{Proof of Proposition \ref{soundnessEuleraxioms}}
\label{appendix:euler3d}

The proof is inspired by the proofs of Lemmas 10 and 11 of \cite{clement2022LOv}.
Given any $n$-qubit quantum circuit $C$, let $\interpg{C}\coloneqq\mathfrak G_n^{-1}\circ\interp{C}\circ \mathfrak G_n$.

\subsubsection{Soundness of Equation \eqref{Euler2d}}

Given any $\alpha_1,\alpha_2,\alpha_3\in\mathbb R$, let $U\coloneqq\interpg{\minitikzfig{RXPRXalphas}}$. We have to prove that there exist unique $\beta_0,\beta_1,\beta_2,\beta_3$ satisfying the conditions of \cref{fig:euler} such that $\interpg{\minitikzfig{RXPRXbetas}}=U$. We are going to first prove that assuming that such $\beta_j$ exist, their values are uniquely determined by $U$. Since we are going do so by giving explicit expressions of the unique possible value of each $\beta_j$ in terms of the entries of $U$, it will then be easy to check that these expressions indeed define angles with the desired properties.

One has
\[U=\interpg{\minitikzfig[0.83]{RXPRXbetas}}=e^{i\beta_0}\begin{pmatrix}\cos\bigl(\frac{\beta_2}2\bigr)&-ie^{i\beta_1}\sin\bigl(\frac{\beta_2}2\bigr)\\-ie^{i\beta_3}\sin\bigl(\frac{\beta_2}2\bigr)&e^{i(\beta_1+\beta_3)}\cos\bigl(\frac{\beta_2}2\bigr)\end{pmatrix}\]

If $U$ has a null entry, then since it is unitary, it is either diagonal or anti-diagonal. If it is diagonal, then $\sin\bigl(\frac{\beta_2}2\bigr)=0$, which, since $\beta_2\in[0,2\pi)$, implies that $\beta_2=0$, which by the constraint on $\beta_1$ and $\beta_2$, implies that $\beta_1=0$. Consequently, $\beta_0=\arg(U_{0,0})$ and $\beta_3=\arg\left(\frac{U_{1,1}}{U_{0,0}}\right)$. If $U$ is anti-diagonal, then $\cos\bigl(\frac{\beta_2}2\bigr)=0$, which, since $\beta_2\in[0,2\pi)$, implies that $\beta_2=\pi$, which by the constraint on $\beta_1$ and $\beta_2$, implies that $\beta_1=0$. Consequently, $\beta_0=\arg\left(\frac{U_{0,1}}{-i}\right)$ and $\beta_3=\arg\left(\frac{U_{1,0}}{U_{0,1}}\right)$.

If $U$ has no null entry, then one has $\beta_2\neq\pi$ and $\dfrac{ie^{-i\beta_1}U_{0,1}}{U_{0,0}}=\tan\bigl(\frac{\beta_2}2\bigr)$. Hence, $\beta_1$ is the unique angle in $[0,\pi)$ such that $\dfrac{ie^{-i\beta_1}U_{0,1}}{U_{0,0}}\in\mathbb R$, namely $\arg\left(\frac{iU_{0,1}}{U_{0,0}}\right)\bmod\pi$. In turn, $\beta_2$ is the unique angle in $[0,2\pi)\setminus\{\pi\}$ such that $\tan\bigl(\frac{\beta_2}2\bigr)=\dfrac{ie^{-i\beta_1}U_{0,1}}{U_{0,0}}$. Finally, one has $e^{i\beta_3}=\frac{\cos(\frac{\beta_2}2)U_{1,0}}{-i\sin(\frac{\beta_2}2)U_{0,0}}$, so that $\beta_3=\arg\left(\frac{\cos(\frac{\beta_2}2)U_{1,0}}{-i\sin(\frac{\beta_2}2)U_{0,0}}\right)$, and $e^{i\beta_0}=\frac{U_{0,0}}{\cos(\frac{\beta_2}2)}$, so that $\beta_0=\arg\left(\frac{U_{0,0}}{\cos(\frac{\beta_2}2)}\right)$.

\subsubsection{Soundness of Equation \eqref{Euler3dmulticontrolled}}

Given any $n$-qubit quantum circuit $C$ such that $\interpg{C}$ is of the form $\left(\begin{array}{c|c}I&0\\\hline 0&U\end{array}\right)$ with $U\in\mathbb C^{3\times3}$, let $\interpt{C}\coloneqq U$.

Given any $\gamma_1,\gamma_2,\gamma_3,\gamma_4\in\mathbb R$, let $U\coloneqq\interpt{\minitikzfig{Euler3dleft-multicontrolled-simp-gammas}}$. We have to prove that there exist unique $\delta_1,\delta_2,\delta_3,\delta_4,\delta_5,\delta_6,\delta_7,\delta_8,\delta_9$ satisfying the conditions of \cref{fig:euler} such that
\[\interpt{\minitikzfig{Euler3dright-multicontrolled-simp-deltas}}=U,\]
or equivalently,
\[\interpt{\minitikzfig{Euler3D-2qubit-R}}=U.\]
We are going to first prove that assuming that such $\delta_j$ exist, their values are uniquely determined by $U$. Since we are going do so by giving explicit expressions of the unique possible value of each $\delta_j$ in terms of the entries of $U$, it will then be easy to check that these expressions indeed define angles with the desired properties.

Let $U_{123}\coloneqq\interpt{\minitikzfig{Euler3dright-123}}=\begin{pmatrix}e^{i\delta_2}&0&0\\0&e^{i(\delta_1+\delta_2)}\cos\bigl(\frac{\delta_3}2\bigr)&-i\sin\bigl(\frac{\delta_3}2\bigr)\\0&-ie^{i(\delta_1+\delta_2)}\sin\bigl(\frac{\delta_3}2\bigr)&\cos\bigl(\frac{\delta_3}2\bigr)\end{pmatrix}$, $U_{4}\coloneqq\interpt{\minitikzfig{Euler3dright-4}}=\begin{pmatrix}\cos\bigl(\frac{\delta_4}2\bigr)&-i\sin\bigl(\frac{\delta_4}2\bigr)&0\\-i\sin\bigl(\frac{\delta_4}2\bigr)&\cos\bigl(\frac{\delta_4}2\bigr)&0\\0&0&1\end{pmatrix}$ and $U_{56}\coloneqq\interpt{\minitikzfig{Euler3dright-56}}=\begin{pmatrix}1&0&0\\0&e^{i\delta_5}\cos\bigl(\frac{\delta_6}2\bigr)&-i\sin\bigl(\frac{\delta_6}2\bigr)\\0&-ie^{i\delta_5}\sin\bigl(\frac{\delta_6}2\bigr)&\cos\bigl(\frac{\delta_6}2\bigr)\end{pmatrix}$. Let also $U_{\mathrm{I}}\coloneqq U_{123}\circ U^\dag$, $U_{\mathrm{II}}\coloneqq U_4\circ U_{\mathrm{I}}$ and $U_{\mathrm{III}}\coloneqq U_{56}\circ U_{\mathrm{II}}$.

By construction, %
\begin{eqnexpr}\label{U3}U_{\mathrm{III}}=\interpt{\minitikzfig{Euler3dright-789}}^\dag=\begin{pmatrix}e^{-i\delta_9}&0&0\\0&e^{-i(\delta_7+\delta_8+\delta_9)}&0\\0&0&e^{-i\delta_8}\end{pmatrix}\end{eqnexpr}
so that
\begin{eqnexpr}\label{U2}U_{\mathrm{II}}=U_{56}^\dag\circ U_{\mathrm{III}}=\begin{pmatrix}e^{-i\delta_9}&0&0\\0&e^{-i(\delta_5+\delta_7+\delta_8+\delta_9)}\cos\bigl(\frac{\delta_6}2\bigr)&ie^{-i(\delta_5+\delta_8)}\sin\bigl(\frac{\delta_6}2\bigr)\\0&ie^{-i(\delta_7+\delta_8+\delta_9)}\sin\bigl(\frac{\delta_6}2\bigr)&e^{-i\delta_8}\cos\bigl(\frac{\delta_6}2\bigr)\end{pmatrix}\end{eqnexpr}
and $U_{\mathrm{I}}=U_4^\dag\circ U_{\mathrm{II}}$. Since $U_4$ acts as the identity on the last entry, this implies that $(U_{\mathrm{I}})_{2,0}=0$.\footnote{Where we denote by $M_{i,j}$ the entry of indices $(i,j)$ of any matrix $M$, the index of the first row and column being $0$.} That is, by definition of $U_{\mathrm{I}}$,
\begin{eqnexpr}\label{premiercoef}\textstyle-ie^{i(\delta_1+\delta_2)}\sin\bigl(\frac{\delta_3}2\bigr)U_{0,1}^\dag+\cos\bigl(\frac{\delta_3}2\bigr)U_{0,2}^\dag=0.\end{eqnexpr}

By direct calculation using the definitions of $U_{\mathrm{I}}$ and $U_{\mathrm{II}}$, one gets $(U_{\mathrm{I}})_{0,0}=e^{i\delta_2}U_{0,0}^\dag$ and $(U_{\mathrm{I}})_{1,0}=e^{i(\delta_1+\delta_2)}\cos\bigl(\frac{\delta_3}2\bigr)U_{0,1}^\dag-i\sin\bigl(\frac{\delta_3}2\bigr)U_{0,2}^\dag$, so that $(U_{\mathrm{II}})_{1,0}=-i\sin\bigl(\frac{\delta_4}2\bigr)(U_{\mathrm{I}})_{0,0}+\cos\bigl(\frac{\delta_4}2\bigr)(U_{\mathrm{I}})_{1,0}=-i\sin\bigl(\frac{\delta_4}2\bigr)e^{i\delta_2}U_{0,0}^\dag+\cos\bigl(\frac{\delta_4}2\bigr)(e^{i(\delta_1+\delta_2)}\cos\bigl(\frac{\delta_3}2\bigr)U_{0,1}^\dag-i\sin\bigl(\frac{\delta_3}2\bigr)U_{0,2}^\dag)$. That is, since by \eqref{U2}, $(U_{\mathrm{II}})_{1,0}=0$:
\begin{eqnexpr}\label{deuxiemecoef}\textstyle-i\sin\bigl(\frac{\delta_4}2\bigr)e^{i\delta_2}U_{0,0}^\dag+\cos\bigl(\frac{\delta_4}2\bigr)\left(e^{i(\delta_1+\delta_2)}\cos\bigl(\frac{\delta_3}2\bigr)U_{0,1}^\dag-i\sin\bigl(\frac{\delta_3}2\bigr)U_{0,2}^\dag\right)=0\end{eqnexpr}
\begin{itemize}
\item If $U_{0,1}=U_{0,2}=0$, then since $U$ is unitary, $U_{0,0}\neq0$ and \eqref{deuxiemecoef} becomes $-i\sin\bigl(\frac{\delta_4}2\bigr)e^{i\delta_2}U_{0,0}^\dag=0$, that is $\sin\bigl(\frac{\delta_4}2\bigr)=0$. Since $\delta_4\in[0,2\pi)$, this implies that $\delta_4=0$, which by the conditions of \cref{fig:euler}, implies that $\delta_1=\delta_2=\delta_3=0$.
\item If $(U_{0,1},U_{0,2})\neq(0,0)$, then $e^{i(\delta_1+\delta_2)}\cos\bigl(\frac{\delta_3}2\bigr)U_{0,1}^\dag-i\sin\bigl(\frac{\delta_3}2\bigr)U_{0,2}^\dag\neq0$. Indeed, if this expression was equal to $0$, by \eqref{premiercoef} this would mean that the non-zero vector $\begin{pmatrix}e^{i(\delta_1+\delta_2)}U_{0,1}^\dag\\U_{0,2}^\dag\end{pmatrix}$ is in the kernel of the matrix $\begin{pmatrix}\cos\bigl(\frac{\delta_3}2\bigr)&-i\sin\bigl(\frac{\delta_3}2\bigr)\\-i\sin\bigl(\frac{\delta_3}2\bigr)&\cos\bigl(\frac{\delta_3}2\bigr)\end{pmatrix}$, whereas this matrix is invertible. Then:
\begin{itemize}
\item If $U_{0,0}=0$, then \eqref{deuxiemecoef} implies that $\cos\bigl(\frac{\delta_4}2\bigr)=0$, which, since $\delta_4\in[0,2\pi)$, implies that $\delta_4=\pi$. By the conditions of \cref{fig:euler}, this implies that $\delta_2=0$. Then:
\begin{itemize}
\item If  $U_{0,2}=0$, then $U_{0,1}\neq0$, and \eqref{premiercoef} implies that $\sin\bigl(\frac{\delta_3}2\bigr)=0$, that is, since $\delta_3\in[0,2\pi)$, that $\delta_3=0$. By the conditions of \cref{fig:euler}, together with the fact that $\delta_4=\pi$, this implies that $\delta_1=0$.
\item If  $U_{0,1}=0$, then $U_{0,2}\neq0$, and \eqref{premiercoef} implies that $\cos\bigl(\frac{\delta_3}2\bigr)=0$, that is, since $\delta_3\in[0,2\pi)$, that $\delta_3=\pi$. By the conditions of \cref{fig:euler}, this implies that $\delta_1=0$.
\item If $U_{0,1},U_{0,2}\neq0$, then \eqref{premiercoef}, on the one hand, implies that $\delta_3\neq\pi$, and on the other hand, is equivalent to
\[\tan\bigl(\tfrac{\delta_3}2\bigr)=\frac{e^{-i\delta_1}U_{0,2}^\dag}{iU_{0,1}^\dag}.\]
Hence, $\delta_1$ is the unique angle in $[0,\pi)$ such that $\frac{e^{-i\delta_1}U_{0,2}^\dag}{iU_{0,1}^\dag}\in\mathbb R$. In turn, $\delta_3$ is the unique angle in $[0,2\pi)$ such that $\tan\bigl(\frac{\delta_3}2\bigr)=\frac{e^{-i\delta_1}U_{0,2}^\dag}{iU_{0,1}^\dag}$.
\end{itemize}
\item If $U_{0,0}\neq0$, then \eqref{deuxiemecoef} can be simplified into
\begin{eqnexpr}\label{deuxiemecoefbis}\textstyle-i\tan\bigl(\frac{\delta_4}2\bigr)e^{i\delta_2}U_{0,0}^\dag+e^{i(\delta_1+\delta_2)}\cos\bigl(\frac{\delta_3}2\bigr)U_{0,1}^\dag-i\sin\bigl(\frac{\delta_3}2\bigr)U_{0,2}^\dag=0.\end{eqnexpr}
\begin{itemize}
\item If  $U_{0,2}=0$, then $U_{0,1}\neq0$, and \eqref{premiercoef} 
implies that $\sin\bigl(\frac{\delta_3}2\bigr)=0$, that is, since $\delta_3\in[0,2\pi)$, that $\delta_3=0$. By the conditions of \cref{fig:euler}, this implies that $\delta_2=0$. Then \eqref{deuxiemecoefbis} becomes
\[\textstyle-i\tan\bigl(\frac{\delta_4}2\bigr)U_{0,0}^\dag+e^{i\delta_1}U_{0,1}^\dag=0\]
that is,
\[\tan\bigl(\tfrac{\delta_4}2\bigr)=\frac{e^{i\delta_1}U_{0,1}^\dag}{iU_{0,0}^\dag}.\]
Hence, $\delta_1$ is the unique angle in $[0,\pi)$ such that $\frac{e^{i\delta_1}U_{0,1}^\dag}{iU_{0,0}^\dag}\in\mathbb R$. In turn, $\delta_4$ is the unique angle in $[0,2\pi)$ such that $\tan\bigl(\frac{\delta_4}2\bigr)=\frac{e^{i\delta_1}U_{0,1}^\dag}{iU_{0,0}^\dag}$.
\item If  $U_{0,1}=0$, then $U_{0,2}\neq0$, and \eqref{premiercoef} 
implies that $\cos\bigl(\frac{\delta_3}2\bigr)=0$, that is, since $\delta_3\in[0,2\pi)$, that $\delta_3=\pi$. By the conditions of \cref{fig:euler}, this implies that $\delta_1=0$. Then \eqref{deuxiemecoefbis} becomes
\[\textstyle-i\tan\bigl(\frac{\delta_4}2\bigr)e^{i\delta_2}U_{0,0}^\dag-iU_{0,2}^\dag=0\]
that is,
\[\tan\bigl(\tfrac{\delta_4}2\bigr)=-\frac{e^{-i\delta_2}U_{0,2}^\dag}{U_{0,0}^\dag}.\]
Hence, $\delta_2$ is the unique angle in $[0,\pi)$ such that $\frac{e^{-i\delta_2}U_{0,2}^\dag}{U_{0,0}^\dag}\in\mathbb R$. In turn, $\delta_4$ is the unique angle in $[0,2\pi)$ such that $\tan\bigl(\frac{\delta_4}2\bigr)=-\frac{e^{-i\delta_2}U_{0,2}^\dag}{U_{0,0}^\dag}$.
\item If $U_{0,1},U_{0,2}\neq0$, then \eqref{premiercoef}, on the one hand, implies that $\delta_3%
\notin\{0,\pi\}$, and on the other hand, is equivalent to
\begin{eqnexpr}\label{eidelta12}e^{i(\delta_1+\delta_2)}=\frac{\cos\bigl(\frac{\delta_3}2\bigr)U_{0,2}^\dag}{i\sin\bigl(\frac{\delta_3}2\bigr)U_{0,1}^\dag}.\end{eqnexpr}
Then by substituting in \eqref{deuxiemecoefbis}, we get
\[\textstyle-i\tan\bigl(\frac{\delta_4}2\bigr)e^{i\delta_2}U_{0,0}^\dag+\dfrac{\cos^2\bigl(\frac{\delta_3}2\bigr)U_{0,2}^\dag}{i\sin\bigl(\frac{\delta_3}2\bigr)}-i\sin\bigl(\frac{\delta_3}2\bigr)U_{0,2}^\dag=0\]
which can be simplified into
\[\textstyle-i\tan\bigl(\frac{\delta_4}2\bigr)e^{i\delta_2}U_{0,0}^\dag+\dfrac{U_{0,2}^\dag}{i\sin\bigl(\frac{\delta_3}2\bigr)}=0\]
which is equivalent to
\begin{eqnexpr}\label{tandelta4}\textstyle\tan\bigl(\frac{\delta_4}2\bigr)=-\dfrac{e^{-i\delta_2}U_{0,2}^\dag}{\sin\bigl(\frac{\delta_3}2\bigr)U_{0,0}^\dag}.\end{eqnexpr}
Hence, $\delta_2$ is the unique angle in $[0,\pi)$ such that $\dfrac{e^{-i\delta_2}U_{0,2}^\dag}{U_{0,0}^\dag}\in\mathbb R$. Then \eqref{eidelta12} can be rephrased into
\[\tan\bigl(\tfrac{\delta_3}2\bigr)=\frac{e^{-i(\delta_1+\delta_2)}U_{0,2}^\dag}{iU_{0,1}^\dag}.\]
Hence, $\delta_1$ is the unique angle in $[0,\pi)$ such that $\frac{e^{-i(\delta_1+\delta_2)}U_{0,2}^\dag}{iU_{0,1}^\dag}\in\mathbb R$. In turn, $\delta_3$ is the unique angle in $[0,2\pi)$ such that $\tan\bigl(\tfrac{\delta_3}2\bigr)=\frac{e^{-i(\delta_1+\delta_2)}U_{0,2}^\dag}{iU_{0,1}^\dag}$. Finally, % 
$\delta_4$ is the unique angle in $[0,2\pi)$ satisfying \eqref{tandelta4}.
\end{itemize}
\end{itemize}
\end{itemize}
Thus, assuming that the $\delta_j$ exist, since $U_{\mathrm{I}}$ and $U_{\mathrm{II}}$ only depend on $\delta_1$, $\delta_2$, $\delta_3$, $\delta_4$ and $U$, they are uniquely determined by $U$. Then \eqref{U2} implies that
\begin{itemize}
\item If $(U_{\mathrm{II}})_{1,2}=0$, then $\sin\bigl(\frac{\delta_6}2\bigr)=0$, which means, since $\delta_6\in[0,2\pi)$, that $\delta_6=0$. By the conditions of \cref{fig:euler}, this implies that $\delta_5=0$.
\item If $(U_{\mathrm{II}})_{2,2}=0$, then $\cos\bigl(\frac{\delta_6}2\bigr)=0$, which means, since $\delta_6\in[0,2\pi)$, that $\delta_6=\pi$. By the conditions of \cref{fig:euler}, this implies that $\delta_5=0$.
\item If $(U_{\mathrm{II}})_{1,2}=0,(U_{\mathrm{II}})_{2,2}\neq0$, then
\[\tan\bigl(\tfrac{\delta_6}2\bigr)=\frac{e^{i\delta_5}(U_{\mathrm{II}})_{1,2}}{i(U_{\mathrm{II}})_{2,2}}.\]
Hence, $\delta_5$ is the unique angle in $[0,\pi)$ such that $\frac{e^{i\delta_5}(U_{\mathrm{II}})_{1,2}}{i(U_{\mathrm{II}})_{2,2}}\in\mathbb R$. In turn, $\delta_6$ is the unique angle in $[0,2\pi)$ such that $\tan\bigl(\frac{\delta_6}2\bigr)=\frac{e^{i\delta_5}(U_{\mathrm{II}})_{1,2}}{i(U_{\mathrm{II}})_{2,2}}$.
\end{itemize}
Thus, assuming that the $\delta_j$ exist, since $U_{\mathrm{III}}$ only depends on $\delta_5$, $\delta_6$ and $U_{\mathrm{II}}$, it is uniquely determined by $U$. Then by \eqref{U3}, $\delta_8=\arg((U_{\mathrm{III}})_{2,2}^\dag)$, $\delta_9=\arg((U_{\mathrm{III}})_{0,0}^\dag)$ and $\delta_7=\arg\left(\frac{(U_{\mathrm{III}})_{0,0}(U_{\mathrm{III}})_{2,2}}{(U_{\mathrm{III}})_{1,1}}\right)$.

\subsection{Proof of Proposition \ref{klfollowfromEuler}}\label{preuveklfollowfromEuler}
\noindent Proof of Equation \eqref{PP}:
\begin{longtable}{RCL}
\tikzfig{PP}&\eqeqref{HH}&\tikzfig{PP1}\\\\
&\eqtroiseqref{S0}{SS}{RXgate}&\tikzfig{PP2}\\\\
&\eqeqref{P0}&\tikzfig{PP3}\\\\
&\eqeqref{Euler2d}&\tikzfig{PP4}\\\\
&\eqeqref{Euler2d}&\tikzfig{PP5}\\\\
&\eqdeuxeqref{P0}{RX0}&\tikzfig{PP6}\\\\
&\eqquatreeqref{RXgate}{HH}{SS}{S0}&\tikzfig{Pphiplusphiprime}
\end{longtable}
The first use of \cref{Euler2d} is valid since \cref{Euler2d} is applied from the left to the right. The second use of \cref{Euler2d} is valid since it %is sound
preserves the semantics. Note that one can show that $\beta_1=\beta_3=0$, $\beta_2=\varphi_1+\varphi_2\bmod 2\pi$ and $\beta_0=\begin{cases}0&\text{if $(\varphi_1+\varphi_2\bmod 4\pi)\in[0,2\pi)$}\\\pi&\text{if $(\varphi_1+\varphi_2\bmod 4\pi)\in[2\pi,4\pi)$}\end{cases}$\bigskip.

\noindent Proof of Equation \eqref{XPX}:
  \begin{longtable}{RCL}
  \tikzfig{XPX}&\eqdeuxeqref{xgate}{zgate}&\tikzfig{XPX0}\\\\
  &\eqtroiseqref{S0}{SS}{RXgate}&\tikzfig{XPX1}\\\\
  &\eqdeuxeqref{Euler2d}{SS}&\tikzfig{RXPRXbetas1}
  \end{longtable}
One has $\beta_1=\beta_2=0$,
  $\beta_3=-\varphi \bmod 2\pi$ and $\beta_0=\varphi-\pi \bmod 2\pi$%  
.  Indeed, this choice of angles
  satisfies the conditions of \cref{Euler2d} and is sound with respect to the semantics, and \cref{soundnessEuleraxioms} guarantees that this is the only possible choice.
  Thus, by \crefnosort{P0,RX0}, this implies that one can
  transform $\scalebox{0.8}{\tikzfig{XPX}}$ into
  $\scalebox{0.8}{\tikzfig{RXPRXbetas4bis}}\eqdeuxeqref{S0}{SS}
  \scalebox{0.8}{\tikzfig{RXPRXbetas5}}$\smallskip. Finally,
  $\scalebox{0.8}{\tikzfig{RXPRXbetas6}}\eqeqref{RX0}
    \scalebox{0.8}{\tikzfig{RXPRXbetas7}}\eqdeuxeqref{Euler2d}{S0}
  \scalebox{0.8}{\tikzfig{RXPRXbetas3}} \eqdeuxeqref{P0}{RX0} $
  $\scalebox{0.8}{\tikzfig{RXPRXbetas8}}$, which terminates the proof.

\subsection{Proof of Proposition \ref{prop:CCX}}
\label{proof:CCX}

First, we can show that $\qc \vdash \Lambda^x P(2\pi) =  id_{k+1}$ as follows: 

\begin{eqnarray*}\tikzfig{CCP2Pi1}&\overset{\text{Proposition }\ref{prop:sum}}{=}&\tikzfig{CCP2Pi2}\\[0.5cm]
     &\eqeqref{Euler3dmulticontrolled}&\tikzfig{CCP2Pi3}\\[0.5cm]
     &\overset{\text{Proposition }\ref{prop:sum}}{=}&\tikzfig{idk1}\\[0.5cm]
     \end{eqnarray*}

It follows that, for $x\in \{1\}^k$:

\begin{eqnarray*}\tikzfig{lambdaXlambdaX1}&\overset{\text{def}}{=}&\tikzfig{lambdaXlambdaX2}\\[0.5cm]
     &\eqeqref{HH}&\tikzfig{lambdaXlambdaX3}\\[0.5cm]
     &\overset{\text{Proposition }\ref{prop:sum}}{=}&\tikzfig{lambdaXlambdaX4}\\[0.5cm]
     &\overset{\textup{QC} \vdash \Lambda^{x}P(2\pi)=id_{k+1}}{\overset{}{=}}&\tikzfig{lambdaXlambdaX5}\\[0.5cm]
     &\eqeqref{HH}&\tikzfig{idk1}\\[0.5cm]
     \end{eqnarray*}

\subsection{Proof of Proposition \ref{prop:period}}
\label{proof:propperiod}
First, we prove the case for $x=\epsilon$ : 
\[\qczero \vdash R_X(4\pi) = id_1 \qquad\qczero \vdash  P(2\pi) =  id_1  \qquad\qczero \vdash  s(2\pi) =  id_0\].
  
\begin{eqnarray*}\tikzfig{P2pi}&\eqdeuxeqref{zgate}{PP}&\tikzfig{ZZ}\\[0.5cm]
     &\eqeqref{ZZ}&\tikzfig{filcourt-s}\\[0.5cm]
     \end{eqnarray*}

\begin{eqnarray*}\tikzfig{s2pi}&\eqeqref{S0}&\tikzfig{diagrammevide-s}\\[0.5cm]
          \end{eqnarray*}

It follows that:
\begin{eqnarray*}\tikzfig{RX4pi}&\eqeqref{RXgate}&\tikzfig{HP4piH}\\[0.5cm]
          &=&\tikzfig{HH}\\[0.5cm]
          &\eqeqref{HH}&\tikzfig{filcourt-s}\\[0.5cm]
          \end{eqnarray*}

We can now prove the general case, first by noticing that $\qc \vdash \Lambda^x P(2\pi)= id_{k+1}$, as proven in \cref{proof:CCX}.

As $\Lambda^{x1}s(2\pi) = \Lambda^x P(2\pi)$, we have for any $x \in \{1\}^k$, $\qc \vdash \Lambda^{x}s(2\pi)=id_{k}$.

Finally:

\begin{eqnarray*}\tikzfig{lambdaxRX4pi}&=&\tikzfig{lambdaxRX4pi1}\\[0.5cm]
     &\overset{\qc \vdash id_{k+1}=\Lambda^{1x}s(2\pi)}{\overset{}{=}}&\tikzfig{lambdaxRX4pi2}\\[0.5cm]
     &\overset{\text{Proposition \ref{prop:sum}}}{=}&\tikzfig{lambdaxRX4pi3}\\[0.5cm]
     &\overset{\qc \vdash \Lambda^{1x}s(2\pi)=id_{k+1}}{\overset{\qc \vdash \Lambda^{1x}P(2\pi)=id_{k+2}}{\overset{}{=}}}&\tikzfig{lambdaxRX4pi4}\\[0.5cm]
     &\eqeqref{HH}&\tikzfig{id1k1}\\[0.5cm]
     \end{eqnarray*}

\section{Proofs of Section \ref{sec:completeness}}

\subsection{Proof of Theorem \ref{thm:LOPPcompleteness}}
\label{proof:thmLOPPcompleteness}

One can easily show that every equation of \cref{axiomsLOPP} is sound with respect to the semantics. Regarding the completeness proof, we use the rewriting system of \cref{PPRS}  that has been introduced in 
\cite{clement2022LOv}. %, and
This rewriting system has been proved to be strongly normalising, moreover it has been proved that any two swap-free circuits having the same semantics are reduced to the same normal form \cite{clement2022LOv}.

Using \cref{swapbspisur2} one can transform any circuit into a swap-free circuit. As a consequence, to prove the completeness it only remains to show that every rule of \cref{PPRS} can be derived using the equations of  \cref{axiomsLOPP}. 

\begin{figure}[tb]
\newcommand{\nspazer}{-0.2em}
\centering
\scalebox{0.9}{\begin{minipage}{\textwidth}
\begin{vwcol}[widths={0.5,0.5},
 sep=.1cm, justify=flush,rule=0pt,indent=0em] 
\begin{align}
  \label{phasemod2pi}\tikzfig{convtp-phase-shift-}\ &\to\ \tikzfig{convtp-phase-shiftthetamod2pi-}\\[1.5em]
    \label{bsmod2pi}\tikzfig{bs-}\ &\to\ \tikzfig{convtp-bsphimod2pi-}\\[1.5em]
      \label{fusionphaseshifts}\tikzfig{convtp-phase-shifts-12}\ &\to\ \tikzfig{convtp-phase-shift-1plus2}\\[1.5em]
         \label{zerophaseshifts}\tikzfig{convtp-phase-shift-zero}\ &\to\ \tikzfig{filcourt}\\[1.5em]
         \label{zerobs}\tikzfig{bs0}\ &\to \ \tikzfig{filsparalleleslongbs-m-}
  \end{align}

  \begin{align}
  \label{removebottomphase}\tikzfig{convtp-phasebbs}\ &\to\ \tikzfig{convtp-moinsthetahbsthetatheta}\\[0.6em]
  \label{passagepisur2}\tikzfig{convtp-phasehbspisur2}\ &\to\ 
\tikzfig{convtp-bspisur2thetab}\\[0.6em]
  \label{passagephasepi}\tikzfig{convtp-phasehbs}\ &\to\ \tikzfig{convtp-thetamoinspihbspimoinsphipib}\\[0.6em]
  \label{soustractionpi}
\tikzfig{bstheta4}\ &\to\ \tikzfig{bstheta4moinspipis}
  \end{align}
   \end{vwcol}

   \begin{equation}\label{glissadeEulerscalaires}\begin{array}{rcl}\tikzfig{convtp-bsyangbaxterpointeenbas-etoiles}&\to&\tikzfig{bsyangbaxterpointeenhaut}\end{array}\end{equation}

  \begin{equation}  \label{fusionEulerbsphasebs}\qquad\qquad\begin{array}{rcl}\tikzfig{bsphasebsalpha-etoile}&\to&\tikzfig{phasebsphasebeta}\end{array}\end{equation}
\end{minipage}}
  \caption{Rewriting rules of PPRS. $\protect\minitikzfig[0.75]{convtp-phase-shift-etoile-s}$ denotes either $\protect\minitikzfig[0.75]{convtp-phase-shift-s}$ or $\protect\minitikzfig[0.75]{filcourt-s}$. The conditions on the angles are given in \cite{clement2022LOv}, note that for \cref{glissadeEulerscalaires,fusionEulerbsphasebs} they are the same as in \cref{axiomsLOPP} (with $\varphi_1$, $\varphi_2$ and $\alpha_2$ taken as being $0$ if missing).
   \label{PPRS}
  }
\end{figure}
First we can notice that Rule \eqref{fusionEulerbsphasebs} is exactly the same as \cref{Eulerbsphasebs} (up to \cref{phase0}). %Therefore, the proofs for the existence and the unicity of $\beta_1,\beta_2,\beta_3$ and $\beta_4$ can be found in \cite{clement2022LOv}.

Rule \eqref{phasemod2pi} is derived from \cref{phase0} and \cref{phaseaddition}.

Rule \eqref{bsmod2pi} is derived from \cref{Eulerbsphasebs} with $\alpha_1=\alpha_2=0$ and $\alpha_3=\psi+2k\pi$.

Rule \eqref{fusionphaseshifts} is derived from \cref{phaseaddition}.

Rule \eqref{zerophaseshifts} is derived from \cref{phase0}.

Rule \eqref{zerobs} is derived from \cref{bs0}.

Rule \eqref{removebottomphase} is derived from \cref{phaseaddition}, \cref{phase0} and \cref{globalphasepropagationbs}.

Rule \eqref{passagepisur2} is derived from \cref{swapbspisur2} and \cref{phaseaddition}.

Rule \eqref{passagephasepi} is derived from \cref{Eulerbsphasebs} with $\alpha_1=0$, $\alpha_2=\varphi_0$ and $\alpha_3=\theta_0$.

Rule \eqref{soustractionpi} is derived from \cref{Eulerbsphasebs} with $\alpha_1=\alpha_2=0$ and $\alpha_3=\theta_4$.

Regarding Rule \eqref{glissadeEulerscalaires}, its LHS can be transformed as follows: 
\begin{eqnarray*}\!\!\!\!\!\!\!\!\!\!\!\!\!\!\!\!\tikzfig{convtp-bsyangbaxterpointeenbas-etoiles}&\eqtroiseqref{phase0}{phaseaddition}{globalphasepropagationbs}&\tikzfig{convtp-bsyangbaxterpointeenbas-etoiles1}\\[0.5cm]
     &\eqeqref{Eulerscalaires}&\tikzfig{bsyangbaxterpointeenhaut1}\\[0.5cm]
     &\eqeqref{phaseaddition}&\tikzfig{bsyangbaxterpointeenhaut2}\\[0.5cm]
     \end{eqnarray*}

Note that the angles in the resulting circuit are not necessarily those of the RHS of Rule \eqref{glissadeEulerscalaires}.

However, one can show that 
it can be put in normal %forms
form using the rules of \cref{PPRS} except 
 
Rule \eqref{glissadeEulerscalaires}. 
As we have seen above that  each of these rules can be derived using equations of \cref{axiomsLOPP}, this shows that  Rule \eqref{glissadeEulerscalaires} can also be derived using the equations of \cref{axiomsLOPP}.

\subsection{Useful Definitions}\label{usefuldef}

\begin{definition}
Given $x\in\{0,1\}^k$, $y\in\{0,1\}^\ell$ and $G\in \{s(\psi),X,R_X(\theta),P(\varphi)\}$, we define
\[\bar\Lambda^x_y G\coloneqq\prod_{\begin{scriptarray}{c}\\[-1.5em]x'\in\{0,1\}^k\\[-0.2em]y'\in\{0,1\}^\ell\\[-0.2em]x'y'\neq xy\end{scriptarray}}\Lambda^{x'}_{y'}G\]
where the product denotes a sequential composition taken in an arbitrary order.

\end{definition}

\begin{definition}
Given $x\in\{0,1\}^k$, $y\in\{0,1\}^\ell$ and $z\in\{0,1\}^m$, we define 
\[\Lambda\tripleindice xyz\gcnot\coloneqq\Lambda^{x1y}_z X,\qquad \Lambda\tripleindice xyz\gnotc\coloneqq\Lambda^x_{y1z}X,\qquad \bar\Lambda\tripleindice xyz\gcnot\coloneqq\prod_{\begin{scriptarray}{c}\\[-1.5em]x'\in\{0,1\}^k\\[-0.2em]y'\in\{0,1\}^\ell\\[-0.2em]z'\in\{0,1\}^m\\[-0.2em]x'y'z'\neq xyz\end{scriptarray}}\Lambda^{x'1y'}_{z'}X\quad\text{ and }\quad\bar\Lambda\tripleindice xyz\gnotc\coloneqq\prod_{\begin{scriptarray}{c}\\[-1.5em]x'\in\{0,1\}^k\\[-0.2em]y'\in\{0,1\}^\ell\\[-0.2em]z'\in\{0,1\}^m\\[-0.2em]x'y'z'\neq xyz\end{scriptarray}}\Lambda^{x'}_{y'1z'}X.\]

\end{definition}

\subsection{Ancillary lemmas: Lemmas \ref{antisymmetriecontrolee} to \ref{decodagefilsdisjoints}}
\label{proof:usefullemmas}

\begin{lemma}\label{antisymmetriecontrolee}
\[\qc\vdash\tikzfig{mctrlXmctrlPmctrlX}\ =\ \tikzfig{mctrlPmoinsphictrlPphihaut}\]

\end{lemma}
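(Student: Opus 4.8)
The plan is to reduce this statement to \cref{antisymmetriesemicontrolee}, whose right-hand side is identical and which treats the special case where the two outer gates are plain $X$ gates instead of multi-controlled ones. Write $x\in\{0,1\}^k$ for the control pattern shared by the two multi-controlled $X$ gates and by the multi-controlled phase gate on the left-hand side, all three acting on the same (bottom) target qubit. Abbreviating $A\coloneqq\Lambda^x X$, $B\coloneqq X$ (the plain $X$ on the target) and $C\coloneqq\Lambda^x P(\varphi)$, the left-hand side is $A\circ C\circ A$, while \cref{antisymmetriesemicontrolee} asserts that $B\circ C\circ B$ equals the desired right-hand side. Hence it suffices to prove $\qc\vdash A\circ C\circ A=B\circ C\circ B$, after which \cref{antisymmetriesemicontrolee} closes the argument.

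The crux is the commutation $\qc\vdash D\circ C=C\circ D$, where $D\coloneqq B\circ A$. To establish it, I would first rewrite the plain $X$ on the target as a product over all control patterns, $X=\prod_{x'\in\{0,1\}^k}\Lambda^{x'}X$, which is available from the decomposition of \cref{mctrlX} combined with the argument already used in the proof of \cref{commctrlpasenface} (the factors pairwise commute by \cref{commctrl}). Then, since all multi-controlled $X$ gates on a common target with distinct patterns commute by \cref{commctrl}, I can reorder and cancel the trailing factor against $A=\Lambda^xX$ using the self-inverse property $\Lambda^x X\circ\Lambda^x X=\mathrm{id}$ of \cref{prop:CCX}, obtaining $D=\prod_{x'\neq x}\Lambda^{x'}X$. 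Every surviving factor $\Lambda^{x'}X$ has pattern $x'\neq x$ and shares the target of $C=\Lambda^xP(\varphi)$, so it commutes with $C$ by \cref{commctrl}; composing these commutations yields $D\circ C=C\circ D$.

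With this commutation in hand, the equality $A\circ C\circ A=B\circ C\circ B$ follows by pure rewriting using only the self-inverse relations $A\circ A=\mathrm{id}$ (\cref{prop:CCX}) and $B\circ B=\mathrm{id}$ (\cref{XX}). Indeed, $A=B\circ D$ (since $B\circ D=B\circ B\circ A=A$), and one computes $D\circ B\circ D=(B\circ A)\circ B\circ(B\circ A)=B\circ A\circ A=B$, so that
\[A\circ C\circ A = B\circ D\circ C\circ B\circ D = B\circ C\circ D\circ B\circ D = B\circ C\circ B,\]
where the middle step moves $D$ past $C$. Applying \cref{antisymmetriesemicontrolee} to $B\circ C\circ B$ then gives the announced right-hand side. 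I expect the main obstacle to be the crux commutation $D\circ C=C\circ D$: it is the only step requiring the full power of \cref{commctrl} together with the exhaustive decomposition of $X$ into multi-controlled $X$ gates, and it demands careful bookkeeping of the control patterns to ensure the $x'=x$ factor is the one that cancels while all others genuinely commute past the phase gate.
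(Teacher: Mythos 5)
Your proposal is correct and follows essentially the same route as the paper: both decompose the target-qubit $X$ into a product of multi-controlled $X$ gates over all control patterns via \cref{prop:comb}, cancel the matching factor against $\Lambda^x X$ with \cref{prop:CCX}, commute the remaining factors past $\Lambda^x P(\varphi)$ with \cref{commctrl}, and conclude with \cref{antisymmetriesemicontrolee}. The only slight imprecision is attributing the decomposition $id_k\otimes X=\prod_{x'}\Lambda^{x'}X$ to \cref{mctrlX}; it really comes from iterating \cref{prop:comb} (as in the proof of \cref{symmetriesemicontrolee}), which is what the paper cites.
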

\begin{proof}~
\begin{longtable}{RCL}
\tikzfig{mctrlXmctrlPmctrlX}&\overset{\text{\crefnosort{XX,prop:CCX,commctrl,prop:comb}}}{=}&\tikzfig{mctrlXmctrlPmctrlX1}\\\\
&\overset{\text{\crefnosort{commctrl,prop:CCX}}}{=}&\tikzfig{XmctrlPX}\\\\
&\overset{\text{\cref{antisymmetriesemicontrolee}}}{=}&\tikzfig{mctrlPmoinsphictrlPphihaut}
\end{longtable}
where $\vec 1$ denotes a list of appropriate length whose elements are all equal to $1$.
\end{proof}

\begin{lemma}\label{passagephasepicircuits}
\[\qczero\vdash\tikzfig{mctrlZmctrlRX}\ =\ \tikzfig{mctrlRXmoinsthetamctrlZ}\]

\end{lemma}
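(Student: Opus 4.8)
The plan is to reduce the statement to its single-target instance \eqref{ZctrlRX}, which already holds in $\qczero$ for every $x$ and reads $Z\circ\Lambda^xR_X(\theta)=\Lambda^xR_X(-\theta)\circ Z$ with an \emph{uncontrolled} $Z$ on the target. The bridge is a control-resolution identity for the target $Z$: summing the controlled $Z$ over all control patterns recovers the uncontrolled one, i.e.\ $\qczero\vdash id_k\otimes Z=\prod_{w\in\{0,1\}^k}\Lambda^wZ=\Lambda^xZ\circ\bar\Lambda^xZ$, where $\bar\Lambda^xZ=\prod_{w\neq x}\Lambda^wZ$ in the notation of \cref{usefuldef}. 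I would derive this by iterating \cref{prop:comb} (extended to an arbitrary control qubit via \cref{cor:swap}): combining the control and anti-control versions on one wire erases that control, and after $k$ such steps every control is gone, leaving $id_k\otimes Z$. The reorderings needed to bring a control next to its anti-control are legitimate because all the $\Lambda^wZ$ are diagonal and pairwise commute by \cref{commctrl}.

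Two facts about $\bar\Lambda^xZ$ drive the argument. First, $\bar\Lambda^xZ$ commutes with $\Lambda^xR_X(\pm\theta)$: each factor $\Lambda^wZ=\Lambda^wP(\pi)$ has $w\neq x$ and shares its target with $\Lambda^xR_X(\pm\theta)$, so \cref{commctrl} applies termwise. Second, $\bar\Lambda^xZ$ admits a right inverse inside $\qczero$, namely $\bar\Lambda^xP(-\pi)\coloneqq\prod_{w\neq x}\Lambda^wP(-\pi)$: reordering by \cref{commctrl} and cancelling each pair with \cref{prop:sum} (via $P(\pi)\circ P(-\pi)=P(0)$) gives $\qczero\vdash\bar\Lambda^xZ\circ\bar\Lambda^xP(-\pi)=id$. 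It is essential that this uses only additivity of the angles and never the periodicity $\Lambda^xP(2\pi)=id$, which is unavailable in $\qczero$; this is exactly why the lemma can be proved structurally.

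Given these ingredients the derivation is immediate. Substituting the resolution into \eqref{ZctrlRX} yields $\Lambda^xZ\circ\bar\Lambda^xZ\circ\Lambda^xR_X(\theta)=\Lambda^xR_X(-\theta)\circ\Lambda^xZ\circ\bar\Lambda^xZ$; sliding $\bar\Lambda^xZ$ past $\Lambda^xR_X(\theta)$ on the left (by the first fact) puts the factor $\bar\Lambda^xZ$ at the far right on both sides, and composing on the right with $\bar\Lambda^xP(-\pi)$ cancels it, leaving $\Lambda^xZ\circ\Lambda^xR_X(\theta)=\Lambda^xR_X(-\theta)\circ\Lambda^xZ$, which is the claim.

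I expect the main obstacle to be the careful justification of the control-resolution identity $id_k\otimes Z=\Lambda^xZ\circ\bar\Lambda^xZ$ and of the cancellation, ensuring that every intermediate step stays within $\qczero$ (in particular never invoking \cref{prop:period}). Should one prefer a self-contained route, the same result follows by induction on $k=\lvert x\rvert$: the base case $k=0$ is the $x=\epsilon$ instance of \eqref{ZctrlRX}, and the inductive step (taking $x=1z$ without loss of generality by \cref{XX}) unfolds $\Lambda^{1z}Z$ and $\Lambda^{1z}R_X(\theta)$ through \cref{lem:Lambda-P} and \cref{lem:Lambda-Rx}, then propagates the sign flip across the resulting \cnot-layers with \eqref{commutationPctrl}, \cref{CNotHH} and \eqref{commctrlRXconjCNots}; this is more calculational but rests on the same principles.
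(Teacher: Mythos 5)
Your proposal is correct and follows essentially the same route as the paper's proof: both resolve the multi-controlled $Z$ into an uncontrolled target $Z$ times the complementary product of controlled phases (via \eqref{zgate}, \cref{prop:comb}, \cref{commctrl}, \cref{prop:sum}), commute that complement past $\Lambda^xR_X(\theta)$ with \cref{commctrl}, apply \eqref{ZctrlRX} to flip the sign, and reassemble, carefully using $P(-\pi)$ factors and additivity rather than periodicity so as to stay within $\qczero$. The only difference is bookkeeping: you substitute the resolution into \eqref{ZctrlRX} and cancel $\bar\Lambda^xZ$ at the end with its right inverse, whereas the paper rewrites $\Lambda^xZ$ directly and reassembles it after the sign flip.
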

\begin{proof}~
\begin{longtable}{RCL}
\tikzfig{mctrlZmctrlRX}&\overset{\text{\crefnosort{zgate,prop:sum,commctrl,prop:comb}}}{=}&\tikzfig{mctrlZmctrlRX1}\\\\
&\overset{\text{\crefnosort{commctrl}}}{=}&\tikzfig{mctrlZmctrlRX2}\\\\
&
\eqeqref{ZctrlRX}&\tikzfig{mctrlZmctrlRX3}\\\\
&\overset{\text{\crefnosort{prop:comb,commctrl,prop:sum,zgate}}}{=}&\tikzfig{mctrlRXmoinsthetamctrlZ}
\end{longtable}
\end{proof}

\begin{lemma}\label{CRX2pi}
For any $x\in\{0,1\}^k$,
\[\qc\vdash\Lambda^{x}R_X(2\pi)=\Lambda^xs(\pi)\otimes\gid\]
\end{lemma}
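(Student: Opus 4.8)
The plan is to avoid any direct induction on the rotation angle and instead derive the statement algebraically from the relation between multi-controlled $P$- and $R_X$-gates, combined with the identity $\qc\vdash\Lambda^xP(2\pi)=id_{k+1}$ that is already available from \cref{prop:CCX}. First, exactly as in the proofs of \cref{prop:CCX,prop:period}, by \cref{XX} and \cref{def:multicontrolled-oriented} it suffices to treat the all-positive case $x\in\{1\}^k$: a negative control only amounts to conjugating by an $X$-gate on the corresponding wire, and these $X$-gates occur identically on both sides of the equation to be proved.

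The core ingredient is the decomposition, valid in $\qc$ for every $x\in\{1\}^k$ and every $\varphi$,
\[\Lambda^x P(\varphi)=\bigl(\Lambda^x s(\tfrac\varphi2)\otimes id_1\bigr)\circ(id_k\otimes H)\circ\Lambda^x R_X(\varphi)\circ(id_k\otimes H).\]
For $k=0$ this is just the single-qubit identity $P(\varphi)=s(\tfrac\varphi2)\circ H\circ R_X(\varphi)\circ H$, which follows from the definition of $R_X$ (\cref{RXgate}) together with \cref{HH,S0,SS}. For $k\ge 1$ it is precisely the content of \cref{lem:Lambda-P}, whose proof already identifies the two subcircuits appearing on the right-hand side as $\Lambda^x s(\tfrac\varphi2)$ and $\Lambda^x R_X(\varphi)$, the Hadamards being uncontrolled because $H\circ H=id_1$.

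With this in hand I would instantiate $\varphi=2\pi$ and invoke \cref{prop:CCX} to collapse the left-hand side to $id_{k+1}$. The gate $\Lambda^x s(\pi)\otimes id_1$ is self-inverse, since $\bigl(\Lambda^x s(\pi)\bigr)\circ\bigl(\Lambda^x s(\pi)\bigr)=\Lambda^x s(2\pi)=id_k$ by \cref{prop:sum,prop:period}; composing the resulting equation $id_{k+1}=(\Lambda^x s(\pi)\otimes id_1)\circ(id_k\otimes H)\circ\Lambda^x R_X(2\pi)\circ(id_k\otimes H)$ on the left with $\Lambda^x s(\pi)\otimes id_1$ yields $\Lambda^x s(\pi)\otimes id_1=(id_k\otimes H)\circ\Lambda^x R_X(2\pi)\circ(id_k\otimes H)$. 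Conjugating both sides by $id_k\otimes H$ and using \cref{HH} then gives $\Lambda^x R_X(2\pi)=(id_k\otimes H)\circ(\Lambda^x s(\pi)\otimes id_1)\circ(id_k\otimes H)$; since $\Lambda^x s(\pi)$ and the Hadamard act on disjoint wires, the interchange law \cref{mixedprod} lets the two $H$'s meet and cancel by \cref{HH}, leaving exactly $\Lambda^x s(\pi)\otimes id_1$, as desired.

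The step I expect to require the most care is pinning down the precise graphical shape emitted by \cref{lem:Lambda-P} (the exact placement of the Hadamards and the relative order of the scalar block), so that the decomposition above is literal rather than merely semantically correct; once that is fixed, the remaining reasoning is entirely routine $\qc$-manipulation — scalar self-inverses via \cref{prop:sum,prop:period}, the cancellation $H\circ H=id_1$ of \cref{HH}, and disjoint-wire commutations justified by \cref{mixedprod}.
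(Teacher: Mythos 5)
Your proposal is correct and follows essentially the same route as the paper: the paper's proof likewise reduces to all-positive controls via \cref{XX}, uses the decomposition of \cref{lem:Lambda-P} at $\varphi=2\pi$ together with $\qc\vdash\Lambda^xP(2\pi)=id_{k+1}$ and $\qc\vdash\Lambda^xs(2\pi)=id_k$ (\cref{prop:sum,prop:period}), and cancels the Hadamards with \cref{HH}. The only difference is presentational — the paper inserts $\Lambda^xs(\pi)\circ\Lambda^xs(-\pi)$ and $H\circ H$ around $\Lambda^xR_X(2\pi)$ and simplifies, whereas you solve the \cref{lem:Lambda-P} identity for $\Lambda^xR_X(2\pi)$ — and both are sound.
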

\begin{proof}~
\begin{longtable}{RCL}
\Lambda^{x}R_X(2\pi)&\overset{\text{\eqref{XX}, \eqref{HH}, \crefnosort{prop:sum,lem:Lambda-P}}}{=}&\tikzfig{mctrlRX2pi1}\\\\
&\overset{\text{\crefnosort{prop:period,prop:sum,HH}}}{=}&\Lambda^xs(\pi)\otimes\gid
\end{longtable}

\end{proof}

\begin{lemma}\label{passagephasepihbcircuits}
\[\qc\vdash\tikzfig{lemmepassagepihb1}\ =\ \tikzfig{lemmepassagepihb6}\]
\end{lemma}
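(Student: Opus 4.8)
The plan is to derive this identity in the same spirit as \cref{passagephasepicircuits}, by reducing the two-sided (haut--bas) control configuration to a one-sided one and then invoking the sign-flip commutation already established. Since the gates here carry controls both above and below the target, I would first rewrite each general multi-controlled gate via \cref{def:multicontrolled}, which expresses $\Lambda^x_y G$ in terms of the all-above form $\Lambda^{xy} G$ together with a fixed permutation of wires built from swaps. This moves the problem into the setting where the phase-$\pi$ gate and the $R_X$-rotation share a common block of control qubits sitting above the target, which is exactly the situation handled by \cref{passagephasepicircuits}.

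Next, the heart of the computation is the anticommutation of a $Z$-phase with an $X$-rotation, namely \eqref{ZctrlRX} at the elementary level and \cref{passagephasepicircuits} at the multi-controlled level, which lets me slide the phase-$\pi$ gate past the $R_X(\theta)$ block at the cost of flipping $\theta\mapsto -\theta$. After this slide I expect a residual $R_X(2\pi)$ (equivalently, a doubled phase) to appear; I would absorb it using \cref{CRX2pi}, turning it into a controlled global phase $s(\pi)$, and then simplify using the additivity and periodicity of \cref{prop:sum} and \cref{prop:period}. Throughout, \cref{commctrl} and \cref{cor:swap} serve to reorder the controls so that, after the slide, the pieces recombine into the right-hand circuit. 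Concretely, the left-hand side and the right-hand side would be connected by the successive lines of a step-by-step \emph{longtable} derivation, each line justified by one of these lemmas together with the topological rules of \cref{fig:axiom}.

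The main obstacle I anticipate is the bookkeeping of signs and $2\pi$-corrections as the phase-$\pi$ gate crosses the target: because the controls are split between the top and the bottom, the sign flip of \cref{passagephasepicircuits} interacts with the placement of the $X$-conjugations coming from the controls below the target, and one must verify that the $R_X(2\pi)$ contributions produced on either side cancel (via \cref{CRX2pi} and the $4\pi$-periodicity of \cref{prop:period}) rather than leaving a spurious global phase. Making these cancellations land exactly on the target right-hand circuit, rather than merely up to an $s(\pi)$ factor, is the delicate point, and is where the full equational power of $\qc$---as opposed to $\qczero$---is genuinely needed.
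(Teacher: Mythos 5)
Your proposal is correct and follows essentially the same route as the paper's proof: the paper's derivation is exactly the chain you describe — slide the $\pi$-phase past the $R_X$ via \cref{passagephasepicircuits} (with \cref{prop:period} to fix the angle), absorb the resulting $R_X(2\pi)$ into a controlled $s(\pi)$ via \cref{CRX2pi}, and recombine using \cref{prop:sum}, \cref{prop:CP}, \cref{prop:comb} and \cref{commctrl}. The only cosmetic difference is that the paper repositions the split (haut--bas) controls using \cref{prop:CP} rather than unfolding \cref{def:multicontrolled}, which amounts to the same manipulation.
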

\begin{proof}~
\begin{longtable}{RCL}
\tikzfig{lemmepassagepihb1}&\overset{\text{\crefnosort{passagephasepicircuits,prop:period}}}{=}&\tikzfig{lemmepassagepihb2}\\\\
&\overset{\text{\crefnosort{prop:sum,prop:period}}}{=}&\tikzfig{lemmepassagepihb3}\\\\
&\overset{\text{\cref{CRX2pi}}}{=}&\tikzfig{lemmepassagepihb4}\\\\
&\overset{\text{\cref{prop:CP}}}{=}&\tikzfig{lemmepassagepihb5}\\\\
&\overset{\text{\crefnosort{prop:comb,commctrl,prop:sum}}}{=}&\tikzfig{lemmepassagepihb6}
\end{longtable}
\end{proof}

\begin{lemma}\label{decodagefilsdisjoints}
For any raw optical circuits $C_1:\ell_1\to\ell_1$ and $C_2:\ell_2\to\ell_2$, and any $k,\ell,n$ with $\ell\geq\ell_1$ and $k+\ell\leq 2^n$,
\[\qczero\vdash D_{k+\ell,n}(C_2)\circ D_{k,n}(C_1)=D_{k,n}(C_1)\circ D_{k+\ell,n}(C_2).\]
\end{lemma}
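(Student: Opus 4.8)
The plan is to reduce the statement to the case where $C_1$ and $C_2$ are single generators, and then to recognise the two decoded circuits as multi-controlled gates whose control and target data are read off from \emph{disjoint} ranges of Gray codes, so that the commutation propositions already proved apply. Throughout I work at the level of raw circuits, where $D_{k,n}$ is a morphism for both compositions: $D_{k,n}(C'\circ C'')=D_{k,n}(C')\circ D_{k,n}(C'')$ and $D_{k,n}(C'\otimes C'')=D_{k+\ell',n}(C'')\circ D_{k,n}(C')$ when $C'$ acts on $\ell'$ modes (\cref{defdecoding}). Fixing $C_2$ and inducting on the structure of $C_1$: for a $\circ$-composite the two decoded factors sit at the same offset $k$ and each commutes with $D_{k+\ell,n}(C_2)$ by induction, hence so does their composite; for a $\otimes$-composite $C_1=C'\otimes C''$ with $C'$ on $\ell'\le\ell_1$ modes, the factor $D_{k,n}(C')$ is handled by the hypothesis at gap $\ell$, while $D_{k+\ell',n}(C'')$ is handled at gap $\ell-\ell'\ge\ell_1-\ell'=\ell''$, still within the hypothesis. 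Inducting on $C_2$ symmetrically reduces everything to the case where $C_1$ and $C_2$ are both single generators, occupying the disjoint mode ranges $[k,k+\ell_1)$ and $[k+\ell,k+\ell+\ell_2)$ (disjoint precisely because $\ell\ge\ell_1$).

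For the base case, $D_{k,n}$ of the identity wire and of the empty circuit is $id_n$, which commutes with anything, so only the phase shifter, the swap and the beam splitter remain. A phase shifter on mode $m$ decodes to the controlled scalar $\Lambda^{G_n(m)}s(\varphi)$, and a swap or beam splitter on modes $m,m+1$ decodes to $\Lambda^{x_{m,n}}_{y_{m,n}}X$ or $\Lambda^{x_{m,n}}_{y_{m,n}}R_X(-2\theta)$, whose target qubit is the unique bit in which $G_n(m)$ and $G_n(m+1)$ differ. The combinatorial heart of the argument is that, because $\mathfrak G_n$ is a bijection and the mode ranges are disjoint, all the Gray-code labels involved are distinct, which I would package as the assertion that the control data of the two gates disagree on some wire other than their target(s).

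I would then dispatch the cases built from $X$, $R_X$ and the scalar. For two phase shifters, $G_n(k)\ne G_n(k+\ell)$ gives commutation directly by \cref{commutationLambdaeiphi}. For two swaps/beam splitters with a common target $t$, the mixed pairs $\{G_n(k),G_n(k+1)\}$ and $\{G_n(k+\ell),G_n(k+\ell+1)\}$ are disjoint, so their shared values off $t$ differ and \cref{commctrl} applies; when the targets $t\ne t'$ differ, a two-bit case check shows the controls off $\{t,t'\}$ must differ (equality would force the common value $(a,b)$ at positions $(t,t')$ to lie in both pairs, contradicting disjointness), which is exactly the hypothesis $xyz\ne x'y'z'$ of \cref{commctrlpasenface}.

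The remaining mixed case — a phase shifter against a swap or beam splitter — is where I expect the real work, since the controlled scalar $\Lambda^{G_n(k)}s(\varphi)$ lies outside the $\{X,R_X,P\}$ scope of \cref{commctrl,commctrlpasenface}. Here I would first rewrite it as a controlled phase whose target is the qubit $t$ of the splitter, using \cref{lem:Lambda-s} and \cref{prop:CP} to promote the $t$-th control to the target; if $G_n(k)$ carries a $0$ in position $t$ this produces a conjugation by $X$ on wire $t$ via \cref{ctrlblancbaseiphi}. One then commutes $\Lambda^{x_{k+\ell,n}}_{y_{k+\ell,n}}G$ past the conjugating $X$ gates by \cref{symmetriesemicontrolee} (valid exactly because $G\in\{X,R_X\}$), and past the resulting controlled phase by \cref{commctrl}, the control-disagreement condition holding because $G_n(k)$ differs from both $G_n(k+\ell)$ and $G_n(k+\ell+1)$ on some wire other than $t$. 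Assembling these cases through the reduction of the first paragraph yields the claimed commutation in $\qczero$.
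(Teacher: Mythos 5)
Your proof is correct and follows essentially the same route as the paper's: structural induction on $C_1$ and $C_2$ reduces the claim to pairs of generators acting on disjoint mode ranges, whose decodings are then shown to commute via \cref{commctrl} or \cref{commctrlpasenface} using the injectivity of the Gray code. You are in fact more careful than the paper on one point: the paper merely asserts that the two decoded generators "satisfy the premises of \cref{commctrl} or \cref{commctrlpasenface}", glossing over the fact that a decoded phase shifter is a controlled \emph{scalar} $\Lambda^{G_n(k)}s(\varphi)$, which lies outside the literal scope of those propositions, whereas your explicit conversion to a controlled phase targeting the other gate's target wire (via \cref{lem:Lambda-s}, \cref{prop:CP} and \cref{symmetriesemicontrolee}) supplies the missing step.
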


\begin{proof}
We proceed by structural induction on $C_1$ and $C_2$.
\begin{itemize}
\item If $C_1=C_1''\circ C_1'$, then
\[D_{k+\ell,n}(C_2)\circ D_{k,n}(C_1)= D_{k+\ell,n}(C_2)\circ (D_{k,n}(C_1'')\circ D_{k,n}(C_1'))\]
while
\[D_{k,n}(C_1)\circ D_{k+\ell,n}(C_2)= (D_{k,n}(C_1'')\circ D_{k,n}(C_1'))\circ D_{k+\ell,n}(C_2)\]
so the result follows by \cref{assoccomp} of quantum circuits and the induction hypothesis.
\item The case $C_2=C_2''\circ C_2'$ is similar to the previous one.
\item If $C_1=C_1'\otimes C_1''$ with $C_1':\ell_1'\to\ell_1'$, then
\[D_{k+\ell,n}(C_2)\circ D_{k,n}(C_1)= D_{k+\ell,n}(C_2)\circ (D_{k+\ell_1',n}(C_1'')\circ D_{k,n}(C_1'))\]
while
\[D_{k,n}(C_1)\circ D_{k+\ell,n}(C_2)= (D_{k+\ell_1',n}(C_1'')\circ D_{k,n}(C_1'))\circ D_{k+\ell,n}(C_2)\]
so the result follows by \cref{assoccomp} of quantum circuits and the induction hypothesis.
\item The case $C_2=C_2'\otimes C_2''$ is similar to the previous one.
\item If $C_1$ or $C_2$ is $\tikzfig{diagrammevide-s}$ or $\gid$, then the results follows from \cref{idneutre} of quantum circuits.
\item If $C_1,C_2\in\{\tikzfig{phase-shift-xs},\tikzfig{bs-xs},\tikzfig{swap-xs}\}$, then $D_{k,n}(C_1)=\Lambda^{G_n(k)}s(\varphi)$, $\Lambda^{x_{k,n}}_{y_{k,n}}R_X(-2\theta)$ or $\Lambda^{x_{k,n}}_{y_{k,n}}X$ and $D_{k+\ell,n}(C_2)=\Lambda^{G_n(k+\ell)}s(\varphi)$, $\Lambda^{x_{k+\ell,n}}_{y_{k+\ell,n}}R_X(-2\theta)$ or $\Lambda^{x_{k+\ell,n}}_{y_{k+\ell,n}}X$. %are of the form $\Lambda^{x_{k,n}}_{y_{k,n}}G_1$ and $\Lambda^{x_{k,n}}_{y_{k,n}}G_2$
Using the definitions of $G_n(k)$, $x_{k,n}$ and $y_{k,n}$, it is easy to check that in any case, $D_{k,n}(C_1)$ and $D_{k+\ell,n}(C_2)$ satisfy the premises of either \cref{commctrl} or \ref{commctrlpasenface} and therefore commute.
\end{itemize}
\end{proof}

\subsection{Proof of Lemma \ref{DE}}\label{lemmasforbasecaseDE}

For the sake of clarity, the proofs are given separately in \cref{preuveDupsilonLambda,preuvesigmatoswaps,preuvebasecaseDE}.

\begin{lemma}\label{DupsilonLambda}
For any $N\geq1$, $i\in\{0,...,N-1\}$, $b\in\{0,...,2^i-1\}$ and $a\in\{0,...,2^{N-i-1}-1\}$,
\[\textup{QC}\vdash D(\upsilon_{N,i,b,a})=\Lambda^{G_i(b)}_{G_{N-i-1}(2^{N-i-1}-a-1)}X\]
where $\upsilon_{N,i,b,a}$ is defined in \cref{defsigma}, and given $n\in\mathbb N$ and $k\in\{0,...,2^n-1\}$, $G_n(k)\in\{0,1\}^n$ is the $n$-bit Gray code of $k$, defined in \cref{defgraycode}. Note that $G_{N-i-1}(2^{N-i-1}-a-1)$ differs from $G_{N-i-1}(a)$ by only the first bit.
\end{lemma}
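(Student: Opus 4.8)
The plan is to read off the combinatorial content of $\upsilon_{N,i,b,a}$ from \cref{defsigma} and then reduce the statement to the algebra of multi-controlled gates already developed. Unfolding \cref{defsigma}, the circuit $\upsilon_{N,i,b,a}$ is a composition of adjacent-mode swaps realising the transposition of the two modes $m_0,m_1$ whose $N$-bit Gray codes are $G_i(b)\,0\,G_{N-i-1}(2^{N-i-1}-a-1)$ and $G_i(b)\,1\,G_{N-i-1}(2^{N-i-1}-a-1)$; these two modes sit at symmetric positions inside the Gray block with prefix $G_i(b)$ and are exactly $2a+1$ apart as integers. Since the decoding $D$ is compatible with sequential composition, tensor, identities and the empty circuit (\cref{defdecoding}), I would set up an induction on $a$, exploiting the recursive shape $\upsilon_{N,i,b,a}=\alpha\circ\upsilon_{N,i,b,a-1}\circ\alpha$, where $\alpha$ is the product of the two adjacent swaps bringing $m_0,m_1$ one step inward on each side (two swaps on disjoint modes, hence commuting).

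For the base case $a=0$ the two modes are adjacent, so $\upsilon_{N,i,b,0}$ is a single swap sitting at some position $p$ flanked by identities. Applying the inductive clauses of $D$ (with $D_{k,n}(\gid)=id_n$, $D_{k,n}(\tikzfig{diagrammevide-s})=id_n$ and the $\circ/\otimes$ rules) collapses all the surrounding wires and leaves $D(\upsilon_{N,i,b,0})=D_{p,N}(\gswap)=\Lambda^{x_{p,N}}_{y_{p,N}}X$. The remaining task is the bookkeeping identity $x_{p,N}=G_i(b)$ and $y_{p,N}=G_{N-i-1}(2^{N-i-1}-1)$, which I would verify directly from the defining case split on the parity of $p$ and the recursive definition of the Gray code (\cref{defgraycode}), using the stated reflection fact that $G_{N-i-1}(2^{N-i-1}-a-1)$ differs from $G_{N-i-1}(a)$ only in its first bit. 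At this level the equality is literal (equal bitstrings), so the $\textup{QC}$-derivation is immediate.

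For the inductive step I would decode the recursion using compatibility of $D$ with composition: $D(\upsilon_{N,i,b,a})=\alpha_D\circ D(\upsilon_{N,i,b,a-1})\circ\alpha_D$ with $\alpha_D=D(\alpha)$ a product of two multi-controlled $X$ gates whose targets lie on distinct qubits (so they commute, by \cref{decodagefilsdisjoints} and \cref{commctrlpasenface}) and each of which is self-inverse (\cref{prop:CCX}). Feeding in the induction hypothesis $D(\upsilon_{N,i,b,a-1})=\Lambda^{G_i(b)}_{G_{N-i-1}(2^{N-i-1}-a)}X$, it remains to show in $\textup{QC}$ that conjugating this gate by $\alpha_D$ toggles exactly one control bit of its lower control word. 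This conjugation is handled by the multi-controlled algebra: \cref{cor:swap} to line up the relevant control, \cref{commctrl} and \cref{commctrlpasenface} to commute the non-interacting parts through, and \cref{prop:comb} together with \cref{prop:CCX} to perform the single cancellation that flips the control. Via the reflection fact, flipping that control is precisely what turns the Gray index $2^{N-i-1}-a$ into $2^{N-i-1}-a-1$, giving the claimed gate.

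The main obstacle is this conjugation step together with the base-case Gray-code matching: one must check that $\alpha_D(\cdot)\alpha_D$ changes precisely the intended control and nothing else, and that the symmetric pair of modes indeed shifts its lower Gray index by exactly one unit per induction step. All the genuine $\textup{QC}$-content is concentrated here, the rest of the argument being definitional unfolding of $D$ and $\upsilon_{N,i,b,a}$; the heavy lifting is therefore done by the commutation and combination lemmas (\cref{cor:swap}, \cref{prop:comb}, \cref{commctrl}, \cref{commctrlpasenface}, \cref{prop:CCX}) established earlier, orchestrated by careful Gray-code bookkeeping.
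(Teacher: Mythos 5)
Your proposal follows essentially the same route as the paper: induction on $a$, with the base case handled by definitional unfolding of $D$ plus Gray-code bookkeeping, and the inductive step by merging the two decoded swaps via \cref{prop:comb}, commuting and cancelling via \cref{commctrl} and \cref{prop:CCX}, so that exactly one control bit of the lower control word is toggled. One small correction: the two decoded swaps $D(s_{+a})$ and $D(s_{-a})$ in fact share the \emph{same} target qubit (their control words differ in a single bit $\beta$ vs.\ $\bar\beta$, which is what makes \cref{prop:comb} applicable), so their commutation comes from \cref{commctrl} rather than from disjointness of targets and \cref{commctrlpasenface} --- but this does not affect the viability of the argument.
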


\begin{lemma}\label{sigmatoswaps}
For any $k,\ell,n\in\mathbb N$,
\[\textup{QC}\vdash D(\sigma_{k,n,\ell})=id_k\otimes\sigma_{n,\ell}.\]
where $\sigma_{0,0}\coloneqq\emptyc$ and $\sigma_{n,\ell}\coloneqq\sigma_{n+\ell-1}^\ell$, where $\sigma_{n+\ell-1}$ is defined in \cref{fig:axiom}.
\end{lemma}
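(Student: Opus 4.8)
The plan is to push the decoding functor $D$ through the explicit construction of $\sigma_{k,n,\ell}$ recorded in \cref{defsigma} and then reorganise the resulting quantum circuit inside $\textup{QC}$. Recall from \cref{defdecoding} that $D$ sends sequential composition of raw optical circuits to sequential composition of quantum circuits and parallel composition to a reindexed sequential composition, and that $\sigma_{k,n,\ell}$ is a composite of the elementary wire permutations $\upsilon_{N,i,b,a}$. Hence, up to the topological rules of \cref{fig:axiom}, $D(\sigma_{k,n,\ell})$ is a sequential composite of the factors $D(\upsilon_{N,i,b,a})$, and by \cref{DupsilonLambda} each such factor is the multi-controlled gate $\Lambda^{G_i(b)}_{G_{N-i-1}(2^{N-i-1}-a-1)} X$. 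This reduces the lemma to a purely quantum statement: a specific Gray-code-indexed product of multi-controlled $X$ gates equals $id_k\otimes\sigma_{n,\ell}$.

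I would prove this statement by induction following the recursive shape of $\sigma_{k,n,\ell}$ in \cref{defsigma}, exploiting that on the quantum side $\sigma_{n,\ell}=\sigma_{n+\ell-1}^{\ell}$ is, by the definition of $\sigma_{n+\ell-1}$ in \cref{fig:axiom}, itself a cascade of adjacent swaps $\gswap$ moving the $\ell$-block past the $n$-block one qubit at a time. The inductive steps only reshuffle blocks of already-matched factors and can be handled with the naturality rule \cref{naturaliteswap} together with \cref{decodagefilsdisjoints} (decoded gates on disjoint wires commute). The real content sits in the base case, where I must show $\textup{QC}\vdash D(\sigma_{k,1,1})=id_k\otimes\gswap$: here the top $k$ qubits are spectators and only the last two qubits are permuted, so under the Gray-code encoding of \cref{defgraycode} the operation affects only the modes with last two bits $01$ and $10$, and the corresponding optical transpositions decode, via \cref{DupsilonLambda}, to a whole family of multi-controlled $X$ gates indexed by the spectator values $b$.

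The heart of the argument is to collapse this family into an unconditional adjacent swap. I would use \cref{prop:comb} repeatedly to combine each control/anti-control pair on a spectator qubit and thereby drop that control, first bringing the factors into the order demanded by \cref{prop:comb} with the commutation properties \cref{commctrl}, \cref{commctrlpasenface} and \cref{cor:swap} (and \cref{prop:sum} where two coincident factors meet). Once every spectator control has been eliminated, the remaining multi-controlled $X$ gates act only on the last two qubits as the plain gates $\Lambda^1 X$ and $\Lambda_1 X$, which by \cref{ctrlXCNot} are the two orientations of $\gcnot$; their composite is precisely the triple-CNOT pattern that \cref{tripleCNotswap} rewrites to $\gswap$, yielding $id_k\otimes\gswap$ as required.

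The main obstacle is exactly this combinatorial collapse: tracking the Gray-code control strings $G_i(b)$ and $G_{N-i-1}(2^{N-i-1}-a-1)$ and checking that, after reordering, they pair up into the complementary-control configurations that \cref{prop:comb} consumes, so that no spurious control survives and none is dropped twice (where \cref{prop:CCX} guards against leftover $X$-flips). Verifying that the $\upsilon$-decomposition supplies each such pair exactly once, and that the order produced by \cref{defsigma} can be sorted into the shape needed without changing the semantics, is where the care lies; once the elementary swap $\sigma_{k,1,1}$ is settled, the passage to general $n$ and $\ell$ is routine bookkeeping on already-matched blocks.
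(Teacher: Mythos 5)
Your proposal is correct and follows essentially the same route as the paper's proof: decode each $\upsilon_{N,i,b,a}$ via \cref{DupsilonLambda}, use the Gray-code enumeration to check that the spectator control strings each occur exactly once, collapse them with \cref{prop:comb} and \cref{commctrl} down to $\Lambda^1 X$ and $\Lambda_1 X$, apply \cref{ctrlXCNot} and \cref{tripleCNotswap} to obtain an adjacent $\gswap$, and then handle general $n$ by iterating $\sigma_{k,1,\ell+n-1}$. The only cosmetic difference is that you phrase the base case as $\sigma_{k,1,1}$ with spectators only above, whereas the factors $\mathcal P_j\mathcal Q_j\mathcal P_j$ in $\sigma_{k,1,\ell}$ also carry spectator qubits below the swapped pair; the same collapse argument applies verbatim, as in the paper's treatment of $P_j$ and $Q_j$.
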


\begin{lemma}\label{basecaseDE}
For any $g\in%
\{\emptyc,\gid,s(\varphi),\gh,\gp,\gcnot,\gswap\}$,
\[\textup{QC}\vdash D(E_{k,\ell}(g))=id_k\otimes g\otimes id_\ell.\]
\end{lemma}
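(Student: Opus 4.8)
The plan is to proceed by case analysis on the generator $g$, computing $D(E_{k,\ell}(g))$ directly from \cref{def:encoding,defdecoding} and then rewriting the result to $id_k\otimes g\otimes id_\ell$ inside $\qc$. Since $D$ is defined to commute with both $\circ$ and $\otimes$, for each $g$ the problem reduces to decoding the single explicit optical gadget that $E_{k,\ell}$ produces and simplifying it with the multi-controlled-gate algebra developed earlier.

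First I would dispose of the trivial cases. For $g=\emptyc$ we have $E_{k,\ell}(\emptyc)=(\gid)^{\otimes 2^{k+\ell}}$, whose decoding is a sequential composition of copies of $id_{k+\ell}$, equal to $id_{k+\ell}=id_k\otimes\emptyc\otimes id_\ell$ by \cref{idneutre,videneutre}; the case $g=\gid$ is identical with $2^{k+\ell+1}$ wires. For $g=s(\varphi)$, the encoding is $2^{k+\ell}$ parallel phase shifters on $k+\ell$ qubits, which decodes (using that the Gray code of \cref{defgraycode} is a bijection) to the composite $\prod_{x\in\{0,1\}^{k+\ell}}\Lambda^{x}s(\varphi)$. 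Here I would use \cref{commctrl} to reorder the factors freely and \cref{prop:comb} to merge each control/anti-control pair, $\Lambda^{0x}s(\varphi)\circ\Lambda^{1x}s(\varphi)=id_1\otimes\Lambda^{x}s(\varphi)$, and then conclude by induction on the number of control bits that the entire product collapses to $\Lambda^{\epsilon}s(\varphi)=s(\varphi)$, which is exactly $id_k\otimes s(\varphi)\otimes id_\ell$ since $s(\varphi)$ is a global phase.

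The swap reduces to permutation bookkeeping: $E_{k,\ell}(\gswap)=\sigma_{k,\ell,2}\circ\sigma_{k+\ell,1,1}\circ\sigma_{k,2,\ell}$ is swap-only, so \cref{sigmatoswaps} gives $D(E_{k,\ell}(\gswap))=(id_k\otimes\sigma_{\ell,2})\circ(id_{k+\ell}\otimes\sigma_{1,1})\circ(id_k\otimes\sigma_{2,\ell})$, and a direct check that this permutation equals $id_k\otimes\gswap\otimes id_\ell$ closes the case. The same lemma handles the conjugating permutations of the remaining generators: for $g\in\{\gh,\gp,\gcnot\}$ with $(k,\ell)\neq(0,0)$ the encoding has the form $\sigma_{k,\ell,m}\circ M^{\otimes 2^{k+\ell-1}}\circ\sigma_{k,m,\ell}$, with $m\in\{1,2\}$ the arity of $g$ and $M$ the ``doublesym'' gadget, so \cref{sigmatoswaps} peels off the outer swaps and leaves only the decoding of the middle tensor power to compute.

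This middle computation, together with the base cases $E_{0,0}(\gh)$, $E_{0,0}(\gp)$, $E_{0,0}(\gcnot)$, is where the real work lies and is the main obstacle. Decoding the tensor power of doublesym gadgets produces, through the Gray-code indexing in \cref{defdecoding}, the full family of multi-controlled gates ($\Lambda^{x}_{y}P(\varphi)$, $\Lambda^{x}_{y}R_X(\theta)$, or $\Lambda^{x}_{y}X$) acting on the target qubit with every control pattern on the remaining $k+\ell$ qubits; the gadget is designed to be ``doubly symmetric'' precisely so that its decoding is insensitive to the Gray-code reflection that flips the orientation of the target bit between adjacent blocks. Merging this family by repeated use of \cref{prop:comb} (control/anti-control cancellation), together with \cref{prop:sum,cor:swap,prop:CP}, collapses it to the single uncontrolled gate on the target qubit. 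For $\gcnot$ the beam splitters first decode to controlled $R_X$ rotations that must additionally be recombined into controlled-$X$ form using \cref{ctrlXCNot} and the $R_X$/$P$ identities of \cref{eq:qc0}, while for $\gh$ one recognises the Euler form supplied by \cref{EulerH}; finally conjugating the resulting $id_{k+\ell}\otimes g$ by the swap permutations from the previous step yields $id_k\otimes g\otimes id_\ell$, completing every case.
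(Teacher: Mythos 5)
Your proposal follows essentially the same route as the paper: case analysis on the generator, \cref{sigmatoswaps} to peel off the conjugating permutations, collapse of the product of multi-controlled gates over all control patterns via \cref{prop:comb} and \cref{commctrl}, and \cref{ctrlXCNot} together with an Euler decomposition of $H$ for the base gadgets. Two minor slips worth noting: the $H$ base case decodes to the $-\pi/2$-angle Euler form, so one needs the derived identity of \cref{EulerHmoins} rather than the axiom \cref{EulerH} directly, and the $\gcnot$ gadget is a mode permutation that decodes immediately to $\Lambda^1 X$ (no beam splitters, hence no controlled-$R_X$ recombination step is needed).
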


\subsubsection{Proof of Lemma \ref{DupsilonLambda}}\label{preuveDupsilonLambda}
We proceed by induction on $a$.

It follows from the definition of $D$ that
\[D(\upsilon_{N,i,b,0})\eqdef D\left(\tikzfig{swapib2pNi1-PHOL}\right)=\Lambda^{G_i(b)}_{G_{N-i-1}(2^{N-i-1}-1)}X.\]

Assuming for some $a\in\{1,...,2^{N-i-1}-1\}$ that $\qczero\vdash D(\upsilon_{N,i,b,a-1})=\Lambda^{G_i(b)}_{G_{N-i-1}(2^{N-i-1}-a)}X$, by definition of $\upsilon_{N,i,b,a}$, one has\footnote{Note that this product of five raw circuits should be written with more parentheses since the composition is not associative. We have omitted these parentheses by abuse of language in order to lighten the notations. In the following, we will similarly omit the associativity parentheses whenever this does not create ambiguity.}
\[\qczero\vdash D(\upsilon_{N,i,b,a})=D(s_{-a})\circ D(s_{+a})\circ\left(\Lambda^{G_i(b)}_{G_{N-i-1}(2^{N-i-1}-a)}X\right)\circ D(s_{+a})\circ D(s_{-a})\medskip\]
where $s_{+a}=\tikzfig{swapib2pNi1plusa-PHOL}$\quad and\quad $s_{-a}=\tikzfig{swapib2pNi1moinsa-PHOL}$\bigskip.

Due to the properties of Gray codes, $G_{N-i-1}(2^{N-i-1}-a-1)$ differs from $G_{N-i-1}(2^{N-i-1}-a)$ by only one bit. That is, there exist $k,\ell\geq0$ with $k+\ell=N-i-2$, $x\in\{0,1\}^k$, $y\in\{0,1\}^\ell$ and $\alpha\in\{0,1\}$, such that
\[G_{N-i-1}(2^{N-i-1}-a-1)=x\alpha y\quad\text{and}\quad G_{N-i-1}(2^{N-i-1}-a)=x\bar\alpha y\]
where $\bar\alpha\coloneqq1-\alpha$.

Additionally, $G_{N-i}(2^{N-i-1}-a-1)$ differs from $G_{N-i}(2^{N-i-1}+a)$ by only the first bit, and $G_{N-i}(2^{N-i-1}-a)$ also differs from $G_{N-i}(2^{N-i-1}+a-1)$ by only the first bit. Therefore, there exists $\beta\in\{0,1\}$ such that
\[\begin{array}{ll}G_{N-i}(2^{N-i-1}-a-1)=\beta x\alpha y,&G_{N-i}(2^{N-i-1}+a)=\bar\beta x\alpha y,\\[0.8em]
G_{N-i}(2^{N-i-1}-a)=\beta x\bar\alpha y\quad\text{and}&G_{N-i}(2^{N-i-1}+a-1)=\bar\beta x \bar\alpha y.\end{array}\]

It follows from the definition of $D$ that $D(s_{-a})=\Lambda^{G_i(b).\beta x}_yX$ and $D(s_{+a})=\Lambda^{G_i(b).\bar\beta x}_yX$. Hence, by \cref{commctrl,cor:swap,prop:comb}, $\qczero\vdash D(s_{-a})\circ D(s_{+a})=D(s_{+a})\circ D(s_{-a})=%\Lambda^{G_i(b).2 x}_yX
\scalebox{0.69}{$\tikzfig{LambdaGibxyX}$}=(\sigma_{1,i}\otimes id_{N-i-1})\circ\left(\gid\otimes\Lambda^{G_i(b) x}_yX\right)\circ(\sigma_{i,1}\otimes id_{N-i-1})$, so that
\[\qc\vdash D(\upsilon_{N,i,b,a})=(\sigma_{1,i}\otimes id_{N-i-1})\circ\left(\gid\otimes\Lambda^{G_i(b) x}_yX\right)\circ\left(\Lambda^{\epsilon}_{G_i(b)x\bar\alpha y}X\right)\circ \left(\gid\otimes\Lambda^{G_i(b) x}_yX\right)\circ(\sigma_{i,1}\otimes id_{N-i-1})\]
with
\begin{longtable}{CL}
\multicolumn{1}{R}{\qc\vdash}&\left(\gid\otimes\Lambda^{G_i(b) x}_yX\right)\circ\left(\Lambda^{\epsilon}_{G_i(b)x\bar\alpha y}X\right)\circ \left(\gid\otimes\Lambda^{G_i(b) x}_yX\right)\\\\
\overset{\text{\crefnosort{prop:CCX,commctrl,prop:comb}}}{=}&(id_{N-1}\otimes X)\circ\left(\gid\otimes\bar\Lambda^{G_i(b) x}_yX\right)\circ\left(\Lambda^{\epsilon}_{G_i(b)x\bar\alpha y}X\right)\circ \left(\gid\otimes\bar\Lambda^{G_i(b) x}_yX\right)\circ(id_{N-1}\otimes X)\\\\
\overset{\text{\cref{commctrl,prop:comb}}}{=}&(id_{N-1}\otimes X)\circ\left(\gid\otimes\bar\Lambda^{G_i(b) x}_yX\right)\circ \left(\gid\otimes\bar\Lambda^{G_i(b) x}_yX\right)\circ\left(\Lambda^{\epsilon}_{G_i(b)x\bar\alpha y}X\right)\circ(id_{N-1}\otimes X)\\\\
\overset{\text{\cref{prop:CCX,commctrl}}}{=}&(id_{N-1}\otimes X)\circ\left(\Lambda^{\epsilon}_{G_i(b)x\bar\alpha y}X\right)\circ(id_{N-1}\otimes X)
\end{longtable}

In other words,
\[\qc\vdash D(\upsilon_{N,i,b,a})=\left(id_{i+k+1}\otimes X\otimes id_{\ell}\right)\circ\left(\Lambda^{G_i(b)}_{x\bar\alpha y}X\right)\circ \left(id_{i+k+1}\otimes X\otimes id_{\ell}\right).\]

By definition of $\Lambda^{G_i(b)}_{x\bar\alpha y}X$ and \cref{XX}, this implies that \[\qc\vdash D(\upsilon_{N,i,b,a})=\Lambda^{G_i(b)}_{x\alpha y}X\]
which, since $x\alpha y=G_{N-i-1}(2^{N-i-1}-a-1)$, is the desired property.
%\end{proof}

\begin{remark}
By defining $\upsilon_{N,i,b,a}$ in a less natural way using not only $\gid$ and $\gswap$ but also $\tikzfig{phase-shift-xs}$ and $\tikzfig{bs-xs}$, one could avoid using \cref{prop:CCX} and get the stronger result that $\qczero\vdash D(\upsilon_{N,i,b,a})=\Lambda^{G_i(b)}_{G_{N-i-1}(2^{N-i-1}-a-1)}X$, which would in turn imply that the equalities of \cref{sigmatoswaps,basecaseDE} can also be taken modulo $\qczero$ instead of $\textup{QC}$.
\end{remark}

\subsubsection{Proof of Lemma \ref{sigmatoswaps}}\label{preuvesigmatoswaps}
%\begin{proof}m
First, if $n=1$, by definition (see \cref{defdecoding,defsigma}), one has
\[D(\sigma_{k,1,\ell})=\prod_{j=k+1}^{k+\ell}P_jQ_jP_j\]
where $M\coloneqq k+\ell+1$, $\displaystyle P_j\coloneqq\prod_{\begin{scriptarray}{c}\\[-1.5em]b=0\\[-0.4em]b\bmod 4\in\{1,2\}\end{scriptarray}}^{2^{j}-1}\hspace{-1.5em}\prod_{a=0}^{2^{M-j-1}-1}\hspace{-1em}D(\upsilon_{M,j,b,a})$ and $\displaystyle Q_j\coloneqq\prod_{b=0}^{2^{j-1}-1}\ \ \prod_{a=0}^{2^{M-j-3}-1}\hspace{-1em}D(\upsilon_{M,j-1,b,a})$.

By \cref{DupsilonLambda}, for all $j$,
\[\textup{QC}\vdash P_j=\prod_{\begin{scriptarray}{c}\\[-1.5em]b=0\\[-0.4em]b\bmod 4\in\{1,2\}\end{scriptarray}}^{2^{j}-1}\hspace{-1.5em}\prod_{a=0}^{2^{M-j-1}-1}\hspace{-1em}\Lambda^{G_j(b)}_{G_{M-j-1}(2^{M-j-1}-a-1)}X\]
It is easy to check that when $a$ goes %
from $0$ to $2^{M-j-1}-1$, %
$G_{M-j-1}(2^{M-j-1}-a-1)$ takes all possible values in $\{0,1\}^{M-j-1}$, once each, and that when $b$ takes all possible values between $0$ and $2^j-1$ that are congruent to $1$ or $2$ modulo $4$, $G_j(b)$ takes, once each, all values in $\{0,1\}^j$ in which the last bit has value $1$. Hence, it follows from \crefnosort{prop:comb,commctrl,ctrlXCNot} that
\[\textup{QC}\vdash P_j=id_{j-1}\otimes \gcnot\otimes id_{M-j-1}.\]

Again by \cref{DupsilonLambda}, for all $j$,
\[\textup{QC}\vdash Q_j=\prod_{b=0}^{2^{j-1}-1}\ \ \prod_{a=0}^{2^{M-j-3}-1}\hspace{-1em}\Lambda^{G_{j-1}(b)}_{G_{M-j}(2^{M-j}-a-1)}X\]
Similarly, it is easy to check that when $b$ goes from $0$ to $2^{j-1}-1$, $G_{j-1}(b)$ takes all values in $\{0,1\}^{j-1}$, once each, and that when $a$ goes from $0$ to $2^{M-j-3}$, $G_{M-j}(2^{M-j}-a-1)$ takes, once each, all values in $\{0,1\}^{M-j}$ in which the first bit has value $1$. Hence, it follows from \crefnosort{prop:comb,commctrl,ctrlXCNot} that
\[\textup{QC}\vdash Q_j=id_{j-1}\otimes \gnotc\otimes id_{M-j-1}.\]

Thus,
\[\textup{QC}\vdash D(\sigma_{k,1,\ell})=\prod_{j=k+1}^{k+\ell}id_{j-1}\otimes\scalebox{0.69}{$\tikzfig{tripleCNot-s}$}\otimes id_{M-j-1}.\]
By \cref{tripleCNotswap}, this implies that
\begin{equation}\label{sigmatoswapscase1}\qc\vdash D(\sigma_{k,1,\ell})=\prod_{j=k+1}^{k+\ell}id_{j-1}\otimes \gswap\otimes id_{M-j-1}\equiv id_k\otimes\sigma_{1,\ell}.\end{equation}

Finally, if $n>1$, then
\begin{longtable}{RCL}
%\qc\vdash 
D(\sigma_{k,n,\ell})&\eqdef&D(\sigma_{k,1,\ell+n-1}^n)\\[0.5em]
&\eqdef&D(\sigma_{k,1,\ell+n-1})^n\\[0.5em]
&\eqeqref{sigmatoswapscase1}&(id_k\otimes\sigma_{1,\ell+n-1})^n\\[0.5em]
&\equiv&id_k\otimes\sigma_{n,\ell}.
\end{longtable}
%\end{proof}

\subsubsection{Proof of Lemma \ref{basecaseDE}}\label{preuvebasecaseDE}
%\begin{proof}
If $g=\tikzfig{diagrammevide-s}$ or $\tikzfig{filcourt-s}$ then the result follows directly from the definitions.

If $g=s(\varphi)$, then it follows from the definitions of $E_{k,%n,
\ell}$ and $D$ that
\[D(E_{k,%0,
\ell}(s(\varphi)))=\prod_{x\in\{0,1\}^{k+\ell}}\Lambda^xs(\varphi)\]
where we use the notation $\prod_{x\in\{0,1\}^{k+\ell}}$ to denote the product without specifying the order of the factors. By \cref{prop:comb,commctrl}, this implies that
\[\qc\vdash D(E_{k,%0,
\ell}(s(\varphi)))=%e^{i\varphi}id_{k+\ell}
id_{k+\ell}\otimes s(\varphi)\]
which is equal to $id_k\otimes s(\varphi)\otimes id_{\ell}$ by the topological rules of %$\QCbf$
quantum circuits\bigskip.

If $g=%P(\varphi)
\gp$, then it follows from the definitions that if $k=\ell=0$,
\[D(E_{0,0}(\gp))=D(\tikzfig{Z-PHOL})\equiv\Lambda^1s(\varphi)=P(\varphi).\]
and if $(k,\ell)\neq(0,0)$,
\[D(E_{k,\ell}(P(\varphi))=D(\sigma_{k,\ell,1})%
\circ D\left(\left(\tikzfig{Z-PHOL-doublesym}\right)^{\otimes {2^{k+\ell-1}}}\right)%
\circ D(\sigma_{k,1,\ell})\]
with
\[D\left(\left(\tikzfig{Z-PHOL-doublesym}\right)^{\otimes {2^{k+\ell-1}}}\right)=\prod_{x\in\{0,1\}^{k+\ell}}\Lambda^{x1}s(\varphi)=\prod_{x\in\{0,1\}^{k+\ell}}\Lambda^xP(\varphi).\]
By \cref{prop:comb,commctrl}, this product is equal modulo $\qczero$ to $id_{k+\ell}\otimes P(\varphi)$. Then, \cref{sigmatoswaps} together with topological rules %in $\QCbf$
of quantum circuits gives us the result\bigskip.

If $g=\gh$, then it follows from the definitions that if $k=\ell=0$,
\begin{longtable}{R@{\ }L}D(E_{0,0}(\gh))=D(\tikzfig{H-LOPP-xs})\equiv&\Lambda^1s(-\frac\pi2)\circ\Lambda^\epsilon_\epsilon R_X(-\frac\pi2)\circ\Lambda^1s(-\frac\pi2)\\\\
=&\tikzfig{EulerHmoins}\\\\
\eqeqref{EulerHmoins}&\tikzfig{H}\end{longtable}
and if $(k,\ell)\neq(0,0)$,
\[D(E_{k,\ell}(\gh)=D(\sigma_{k,\ell,1})\circ D\left(\left(\tikzfig{H-LOPP-doublesym-xs}\right)^{\otimes {2^{k+\ell-1}}}\right)\circ D(\sigma_{k,1,\ell})\]
with
\[D\left(\left(\tikzfig{H-LOPP-doublesym-xs}\right)^{\otimes {2^{k+\ell-1}}}\right)\equiv\prod_{x\in\{0,1\}^{k+\ell}}\left(\left(\prod_{a\in\{0,1\}}\Lambda^{xa1}s(-\frac\pi2)\right)\circ\left(\prod_{a\in\{0,1\}}\Lambda^{xa}R_X\bigl(-\frac\pi2\bigr)\right)\circ\left(\prod_{a\in\{0,1\}}\Lambda^{xa1}s(-\frac\pi2)\right)\right).\]
By \cref{prop:comb,commctrl}, this product is equal modulo $\qczero$ to $id_{k+\ell}\otimes \scalebox{0.69}{$\tikzfig{EulerHmoins}$}$, which by \cref{EulerHmoins} is equal modulo $\qczero$ to \gh. Then, \cref{sigmatoswaps} together with topological rules of quantum circuits gives us the result\bigskip.

If $g=\gcnot$, then it follows from the definitions  that if $k=\ell=0$,
\[D(E_{0,0}(\gcnot))=D\left(\tikzfig{CNot-PHOL}\right)\equiv\Lambda^1_\epsilon X\]
which is equal to $\gcnot$ modulo $\qczero$ by \cref{ctrlXCNot};

and if $(k,\ell)\neq(0,0)$,
\[D(E_{k,\ell}(\gcnot)=D(\sigma_{k,\ell,2})%\circ\Delta\left(\rho_1^{\otimes 2^{k+\ell-1}}\right)
\circ D\left(\left(\tikzfig{CNot-PHOL-doublesym}\right)^{\otimes {2^{k+\ell-1}}}\right)%\circ\Delta\left(\rho_1^{\otimes 2^{k+\ell-1}}\right)
\circ D(\sigma_{k,2,\ell})\]
with
\[D\left(\left(\tikzfig{CNot-PHOL-doublesym}\right)^{\otimes {2^{k+\ell-1}}}\right)\equiv\prod_{x\in\{0,1\}^{k+\ell}}\Lambda^{x1}X.\]
By \cref{prop:comb,commctrl}, this product is equal modulo $\qczero$ to $id_{k+\ell}\otimes \Lambda^1X$, which by \cref{ctrlXCNot} is equal modulo $\qczero$ to $id_{k+\ell}\otimes \gcnot$.
Then, \cref{sigmatoswaps} together with topological rules %in $\QCbf$
of quantum circuits gives us the result\bigskip.

If $g=\gswap$, then it follows from the definitions that
\[D(E_{k,2,\ell}(\gswap)=D(\sigma_{k,\ell,2})%\circ\Delta\left(\rho_1^{\otimes 2^{k+\ell-1}}\right)
\circ D(\sigma_{k+\ell,1,1})%
\circ D(\sigma_{k,2,\ell})\]
By \cref{sigmatoswaps}, this is equal modulo $\textup{QC}$ to $(id_k\otimes\sigma_{\ell,2})\circ(id_{k+\ell}\otimes \gswap)\otimes(id_k\otimes\sigma_{\ell,2})$, which by the topological rules of %$\QCbf$
quantum circuits, is equal to $id_k\otimes \gswap\otimes id_\ell$.
%\end{proof}

\subsection{Proof of Lemma \ref{decodingtoporules}}\label{preuvedecodingtoporules}

\begin{definition}[Context]
A $N$-mode raw context $\C[\cdot]_i$ with $i\in\mathbb N$ is inductively defined as follows:
\begin{itemize}
\item $[\cdot]_i$ is a $i$-mode raw context
\item if $\C[\cdot]_i$ is a $N$-mode raw context and $C$ is a $M$-mode raw optical circuit then $\C[\cdot]_i\otimes C$ and $C\otimes \C[\cdot]_i$ are $N{+}M$-mode raw contexts
\item if $\C[\cdot]_i$ is a $N$-mode raw context and $C$ is a $N$-mode raw optical circuit then $\C[\cdot]_i\circ C$ and $C\circ \C[\cdot]_i$ are $N$-mode raw contexts.
\end{itemize}
\end{definition}

\begin{definition}[Substitution]
Given a $N$-mode raw context $\C[\cdot]_i$ and a $i$-mode raw circuit $C$, we define the substituted circuit $\C[C]$ as the $N$-mode raw circuit obtained by replacing the hole $[\cdot]_i$ by $C$ in $\C[\cdot]_i$. 
\end{definition}

To prove \cref{decodingtoporules}, it suffices to prove that for each rule of \cref{fig:axiom}, of the form $C_1=C_2$ with $C_1,C_2\in\LOPPbarebf(i,i)$, and any $2^n$-mode raw context $\C[\cdot]_i$, one has $\qc\vdash D(\C[C_1])=D(\C[C_2])$. For this purpose, we prove a slightly more general result, namely that for any $k,n$ and any $\ell$-mode raw context $\C[\cdot]_i$ with $k+\ell\leq 2^n$, one has $\qc\vdash D_{k,n}(\C[C_1])=D(\C[C_2])$. We proceed by induction on $\C[\cdot]_i$:
\begin{itemize}
\item If $\C[\cdot]_i=C\circ \C'[\cdot]_i$, then $D_{k,n}(\C[C_1])=D_{k,n}(C)\circ D_{k,n}(\C'[C_1])$ and $D_{k,n}(\C[C_2])=D_{k,n}(C)\circ D_{k,n}(\C'[C_2])$, so the result follows by induction hypothesis. The case $\C[\cdot]_i=\C'[\cdot]_i\circ C$ is similar.
\item If $\C[\cdot]_i=C\otimes \C'[\cdot]_i$ with $C:\ell_1\to\ell_1$, then $D_{k,n}(\C[C_1])=D_{k+\ell_1,n}(\C'[C_1])\circ D_{k,n}(C)$ and $D_{k,n}(\C[C_2])=D_{k+\ell_1,n}(\C'[C_2])\circ D_{k,n}(C)$, so the result follows by induction hypothesis. The case $\C[\cdot]_i=\C'[\cdot]_i\otimes C$ is similar.
\end{itemize}
It remains to prove for each rule of \cref{fig:axiom}, of the form $C_1=C_2$ with $C_1,C_2\in\LOPPbarebf(i,i)$, that for any $k,n$ with $k+i\leq 2^n$, one has $\qc\vdash D_{k,n}(C_1)=D_{k,n}(C_2)$.

For \cref{assoccomp}, for any $C_1,C_2,C_3:\ell\to \ell$,
\[D_{k,n}((C_3\circ C_2)\circ C_1)=(D_{k,n}(C_3)\circ D_{k,n}(C_2))\circ D_{k,n}(C_1)\]
and
\[D_{k,n}(C_3\circ (C_2\circ C_1))=D_{k,n}(C_3)\circ (D_{k,n}(C_2)\circ D_{k,n}(C_1)).\]
Both are equal according to \cref{assoccomp} of quantum circuits.

For \cref{assoctens}, for any optical circuits $C_1:\ell_1\to\ell_1$, $C_2:\ell_2\to\ell_2$ and $C_3:\ell_3\to\ell_3$,
\[D_{k,n}((C_1\otimes C_2)\otimes C_3)=D_{k+\ell_1+\ell_2,n}(C_3)\circ (D_{k+\ell_1,n}(C_2)\circ D_{k,n}(C_1))\]
and
\[D_{k,n}(C_1\otimes (C_2\otimes C_3))=(D_{k+\ell_1+\ell_2,n}(C_3)\circ D_{k+\ell_1,n}(C_2))\circ D_{k,n}(C_1).\]
Again, both are equal according to \cref{assoccomp} of quantum circuits.

For \cref{idneutre}, for any $\ell$-mode optical circuit $C$, by definition of $id_\ell$ and $D_{k,n}$,
\[D_{k,n}(id_\ell\circ C)=(id_n\circ(id_n\circ(\cdots\circ(id_n\circ id_n))\cdots))\circ D_{k,n}(C)\]
with $\ell+1$ occurences of $id_n$ in the right-hand side. This is equal to $D_{k,n}(C)$ according to \cref{idneutre} of quantum circuits. Similarly, $D_{k,n}(C\circ id_\ell)\equiv D_{k,n}(C)$.

For \cref{videneutre}, for any $\ell$-mode optical circuit $C$,
\[D_{k,n}(\tikzfig{diagrammevide-s}\otimes C)=D_{k,n}(C)\circ id_\ell\]
which is equal to $D_{k,n}(C)$ according to \cref{idneutre} of quantum circuits. Similarly, $D_{k,n}(C\otimes \tikzfig{diagrammevide-s})\equiv D_{k,n}(C)$.

For \cref{mixedprod}, for any optical circuits $C_1,C_2:\ell\to\ell$ and $C_3,C_4:m\to m$,
\[D_{k,n}((C_2\circ C_1)\otimes (C_4\circ C_3))=(D_{k+\ell,n}(C_4)\circ D_{k+\ell,n}(C_3))\circ(D_{k,n}(C_2)\circ D_{k,n}(C_1))\]
and
\[D_{k,n}((C_2\otimes C_4)\circ (C_1\otimes C_3))=(D_{k+\ell,n}(C_4)\circ D_{k,n}(C_2))\circ(D_{k+\ell,n}(C_3)\circ D_{k,n}(C_1)).\]
The result follows from \cref{assoccomp} of quantum circuits and \cref{decodagefilsdisjoints}.

For \cref{doubleswap}, one has
\[D_{k,n}(\gswap\circ\gswap)=\Lambda^{x_{k,n}}_{y_{k,n}}X\circ\Lambda^{x_{k,n}}_{y_{k,n}}X\]
which by \cref{prop:CCX}, implies that%
\[\qc\vdash D_{k,n}(\gswap\circ\gswap)=id_n.\]
On the other hand,
\[D_{k,n}(\tikzfig{filcourt-s}\otimes\tikzfig{filcourt-s})=id_n\circ id_n\equiv id_n.\]

For \cref{naturaliteswap}, we proceed by induction on $C$.
\begin{itemize}
\item If $C=C_1\circ C_2$, then $\sigma_k\circ((C_1\circ C_2)\otimes\gid)\equiv(\sigma_k\circ(C_1\otimes\gid))\circ(C_2\otimes\gid)$, and the derivation of the equivalence does not use \cref{naturaliteswap}. Hence it follows from the paragraphs above that
\[\qc\vdash D_{k,n}(\sigma_k\circ((C_1\circ C_2)\otimes\gid))=D_{k,n}((\sigma_k\circ(C_1\otimes\gid))\circ(C_2\otimes\gid)).\]
It follows similarly from those paragraphs that
\[\qc\vdash D_{k,n}((\gid\otimes(C_1\circ C_2))\circ\sigma_k)=D_{k,n}((C_1\otimes\gid)\circ((C_2\otimes\gid)\circ\sigma_k)).\]
The equality modulo $\qc$ of the two right-hand sides follows from the induction hypothesis, together with the compatibility of $D_{k,n}$ with \cref{assoccomp} modulo $\qc$, which is proved above.

\item If $C=C_1\otimes C_2$ with $C_1:\ell_1\to\ell_1$ and $C_2:\ell_2\to\ell_2$, then
\[\sigma_k\circ((C_1\otimes C_2)\otimes\gid)\equiv((\sigma_{\ell_1}\circ(C_1\otimes\gid))\otimes id_{\ell_2})\circ(id_{\ell_1}\otimes(\sigma_{\ell_2}\circ(C_2\otimes\gid)))\]
and the derivation of the equivalence does not use \cref{naturaliteswap}, so that by the paragraphs above (together with \cref{idneutre} of quantum circuits),
\[\qc\vdash D_{k,n}(\sigma_k\circ((C_1\otimes C_2)\otimes\gid))=D_{k,n}(\sigma_{\ell_1}\circ(C_1\otimes\gid))\circ D_{k+\ell_1}(\sigma_{\ell_2}\circ(C_2\otimes\gid)).\]
The result follows by applying a similar transformation to the right-hand side of \cref{naturaliteswap} and applying the induction hypothesis.
\item If $C=\emptyc\text{ or }\gid$, then the result follows from \cref{idneutre,videneutre} of quantum circuits.
\item If $C=\tikzfig{phase-shift-xs}$, let us write $G_n(k)$ as $xay$ with $a\in\{0,1\}$ and $y=\epsilon$ if $k$ is even or $y=1.0^q$ for some $q$ if $k$ is odd. Then by definition of $D_{k,n}$ and \cref{phasemobile}, if $a=1$ then
\[\qc\vdash D_{k,n}(\sigma_1\circ(\tikzfig{phase-shift-xs}\otimes\gid))=\Lambda^x_y X\circ\Lambda^x_y P(\varphi)\]
and
\[\qc\vdash D_{k,n}((\gid\otimes\tikzfig{phase-shift-xs})\circ\sigma_1)=(id_{|x|}\otimes X\otimes id_{|y|})\circ\Lambda^x_y P(\varphi)(id_{|x|}\otimes X\otimes id_{|y|})\circ\Lambda^x_y X.\]
By \cref{prop:CCX,commctrl,prop:comb}, the following equalities are true modulo $\qc$:
\begin{longtable}{RCL}
\Lambda^x_y X\circ\Lambda^x_y P(\varphi)&=&(id_{|x|}\otimes X\otimes id_{|y|})\circ\bar\Lambda^x_y X\circ\Lambda^x_y P(\varphi)\\
&=&(id_{|x|}\otimes X\otimes id_{|y|})\circ\Lambda^x_y P(\varphi)\circ\bar\Lambda^x_y X\\
&=&(id_{|x|}\otimes X\otimes id_{|y|})\circ\Lambda^x_y P(\varphi)\circ(id_{|x|}\otimes X\otimes id_{|y|})\circ\Lambda^x_y X
\end{longtable}
which gives us the result. The case $a=0$ is similar.
\item If $C=\tikzfig{bs-xs}$, by the properties of the Gray code, exactly one bit differs between $G_n(k)$ and $G_n(k+1)$, as well as between $G_n(k+1)$ and $G_n(k+2)$, and in exactly one of the two cases this is the last bit that differs (namely between $G_n(k)$ and $G_n(k+1)$ if $k$ is even, and between $G_n(k+1)$ and $G_n(k+2)$ if $k$ is odd). Hence we can write $G_n(k)$ as $xayb$ with $a,b\in\{0,1\}$, in such a way that $G_n(k+2)=x\bar ay\bar b$ and $G_n(k+1)=xay\bar b\text{ or }x\bar ayb$ depending on the parity of $k$. We treat the case where $k$ is even, the case with $k$ odd is similar. Then
\[D_{k,n}(\sigma_2\circ(\tikzfig{bs-xs}\otimes\gid))\equiv\Lambda^{xay}X\circ\Lambda^x_{y\bar b}X\circ\Lambda^{xay}R_X(-2\theta)\]
and
\[D_{k,n}((\gid\otimes\tikzfig{bs-xs})\circ\sigma_2)\equiv\Lambda^x_{y\bar b}R_X(-2\theta)\circ\Lambda^{xay}X\circ\Lambda^x_{y\bar b}X\]
so by \cref{symmetriesemicontrolee,XX}, it suffices to prove that for any $\theta$,
\[\qc\vdash\Lambda^{x1y}X\circ\Lambda^x_{y1}X\circ\Lambda^{x1y}R_X(\theta)=\Lambda^x_{y1}R_X(\theta)\circ\Lambda^{x1y}X\circ\Lambda^x_{y1}X.\]
To prove this, one has, modulo $\qc$ (together with the topological rules of quantum circuits):
\begin{longtable}{RCL}
\Lambda^{x1y}X\circ\Lambda^x_{y1}X\circ\Lambda^{x1y}R_X(\theta)&\overset{\text{\crefnosort{prop:CCX,commctrl,cor:swap,prop:comb,ctrlXCNot}}}{=}&\scalebox{0.85}{$\tikzfig{decodagenaturalite1}$}\\\\
&\overset{\text{\cref{commctrl,commctrlpasenface}}}{=}&\scalebox{0.85}{$\tikzfig{decodagenaturalite2}$}\\\\
&\overset{\text{\crefnosort{prop:CCX,commctrl,cor:swap,prop:comb,ctrlXCNot}}}{=}&\scalebox{0.85}{$\tikzfig{decodagenaturalite3}$}\\\\
&\overset{\text{\cref{commctrlpasenface}}}{=}&\scalebox{0.85}{$\tikzfig{decodagenaturalite4}$}\\\\
&\eqeqref{CNotCNot}&\scalebox{0.85}{$\tikzfig{decodagenaturalite5}$}\\\\
&\eqeqref{tripleCNotswap}&\scalebox{0.85}{$\tikzfig{decodagenaturalite6}$}\\\\
&\overset{\text{\cref{cor:swap}}}{=}&\scalebox{0.85}{$\tikzfig{decodagenaturalite7}$}\\\\
&\eqeqref{tripleCNotswap}&\scalebox{0.85}{$\tikzfig{decodagenaturalite8}$}\\\\
&\eqtroiseqref{XX}{CNotX}{commutationXCNot}&\scalebox{0.85}{$\tikzfig{decodagenaturalite9}$}\\\\
&\overset{\text{\cref{ctrlXCNot,prop:comb,commctrl}}}{=}&\scalebox{0.85}{$\tikzfig{decodagenaturalite10}$}\\\\
&\overset{\text{\cref{symmetriesemicontrolee}, \eqref{XX}}}{=}&\scalebox{0.85}{$\tikzfig{decodagenaturalite11}$}\\\\
&\overset{\text{\cref{prop:CCX,commctrl,cor:swap,prop:comb,ctrlXCNot}}}{=}&\scalebox{0.85}{$\tikzfig{decodagenaturalite12}$}\\\\
&\overset{\text{\cref{commctrlpasenface}}}{=}&\scalebox{0.85}{$\tikzfig{decodagenaturalite13}$}\\\\
&\overset{\text{\crefnosort{prop:CCX,commctrl,cor:swap,prop:comb,ctrlXCNot}}}{=}&%\tikzfig{decodagenaturalite14}
\Lambda^x_{y1}R_X(\theta)\circ\Lambda^{x1y}X\circ\Lambda^x_{y1}X.
\end{longtable}
\item The case $C=\tikzfig{swap-xs}$ is similar to the preceding one, with $R_X(\theta)$ replaced by $X$.
\end{itemize}

\subsection{Proof of Lemma \ref{decodingLOPPrules}}\label{preuvedecodingLOPPrules}

By \cref{decodingtoporules}, to prove \cref{decodingLOPPrules}, it suffices to prove that for each rule of \cref{axiomsLOPP}, of the form $C_1=C_2$ with $C_1,C_2\in\LOPPbarebf(i,i)$ (see \cref{rawinfigures}), and any $2^n$-mode raw context $\C[\cdot]_i$, one has $\qc\vdash D(\C[C_1])=D(\C[C_2])$. For this purpose, we prove a slightly more general result, namely that for any $k,n$ and any $\ell$-mode raw context $\C[\cdot]_i$ with $k+\ell\leq 2^n$, one has $\qc\vdash D_{k,n}(\C[C_1])=D(\C[C_2])$. We proceed by induction on $\C[\cdot]_i$:
\begin{itemize}
\item If $\C[\cdot]_i=C\circ \C'[\cdot]_i$, then $D_{k,n}(\C[C_1])=D_{k,n}(C)\circ D_{k,n}(\C'[C_1])$ and $D_{k,n}(\C[C_2])=D_{k,n}(C)\circ D_{k,n}(\C'[C_2])$, so the result follows by induction hypothesis. The case $\C[\cdot]_i=\C'[\cdot]_i\circ C$ is similar.
\item If $\C[\cdot]_i=C\otimes \C'[\cdot]_i$ with $C:\ell_1\to\ell_1$, then $D_{k,n}(\C[C_1])=D_{k+\ell_1,n}(\C'[C_1])\circ D_{k,n}(C)$ and $D_{k,n}(\C[C_2])=D_{k+\ell_1,n}(\C'[C_2])\circ D_{k,n}(C)$, so the result follows by induction hypothesis. The case $\C[\cdot]_i=\C'[\cdot]_i\otimes C$ is similar.
\end{itemize}
It remains to prove for each rule of \cref{axiomsLOPP}, of the form $C_1=C_2$ with $C_1,C_2\in\LOPPbarebf(i,i)$, that for any $k,n$ with $k+i\leq 2^n$, one has $\qc\vdash D_{k,n}(C_1)=D_{k,n}(C_2)$. Again by \cref{decodingtoporules}, it suffices to prove that $\qc\vdash D_{k,n}(C'_1)=D_{k,n}(C'_2)$ for arbitrary $C'_1$ and $C'_2$ such that $C'_1\equiv C_1$ and $C'_2\equiv C_2$. 

For \cref{phase0}, one has $D_{k,n}(\scalebox{0.69}{$\tikzfig{phase-shift0}$})=\Lambda^{G_n(k)}s(0)$, $D_{k,n}(\scalebox{0.69}{$\tikzfig{phase-shift2pi}$})=\Lambda^{G_n(k)}s(2\pi)$ and $D_{k,n}(\gid)=id_n$. The three are equal modulo $\textup{QC}$ by \cref{prop:sum,prop:period}.

For \cref{bs0}, one has $D_{k,n}(\scalebox{0.69}{$\tikzfig{bs0-s}$})=\Lambda^{x_{k,n}}_{y_{k,n}}R_X(0)$ (where $x_{k,n}$ and $y_{k,n}$ are defined in \cref{defdecoding}) and $D_{k,n}(\scalebox{0.4}{$\tikzfig{filsparalleleslongbs-m}$})=id_n\circ id_n\equiv id_n$. The two are equal modulo $\textup{QC}$ 
by \cref{prop:sum}.

For \cref{swapbspisur2}, one has $D_{k,n}(\scalebox{0.69}{$\tikzfig{swap-s}$})=\Lambda^{x_{k,n}}_{y_{k,n}}X$, and $D_{k,n}(\scalebox{0.575}{$\tikzfig{bspissur2-ms}$})=%
\left(\displaystyle\prod_{j\in\{k,k+1\}}\Lambda^{G_n(j)}s(-\frac\pi2)\right)\circ\Lambda^{x_{k,n}}_{y_{k,n}}R_X(-\pi)$.
Note that the definitions imply that
\begin{equation}\label{Grayxy}\{G_n(k),G_n(k+1)\}=\{x_{k,n}0y_{k,n},x_{k,n}1y_{k,n}\}.\end{equation}
Therefore,
\begin{longtable}{RCL}D_{k,n}(\scalebox{0.575}{$\tikzfig{bspissur2-ms}$})&=&\sigma_{1,|x_{k,n}|}\circ\left(\displaystyle\prod_{a\in\{0,1\}}\Lambda^{ax_{k,n}y_{k,n}}s(-\frac\pi2)\right)\circ\Lambda^{\epsilon}_{x_{k,n}y_{k,n}}R_X(-\pi)\circ \sigma_{|x_{k,n}|,1}\\\\
&\overset{\text{\cref{prop:comb,commctrl}}}{=}&\sigma_{1,|x_{k,n}|}\circ\left(\gid\otimes\Lambda^{x_{k,n}y_{k,n}}s(-\frac\pi2)\right)\circ\Lambda^{\epsilon}_{x_{k,n}y_{k,n}}R_X(-\pi)\circ \sigma_{|x_{k,n}|,1}
\end{longtable}
which by \cref{commctrlphaseenhaut,def:multicontrolled,prop:period,HH}, is equal modulo $\qc$ to $\Lambda^{x_{k,n}}_{y_{k,n}}X$.

For \cref{phaseaddition}, one has $D_{k,n}(\scalebox{0.69}{$\tikzfig{convtp-phase-shifts-12}$})=\Lambda^{G_n(k)}s(\varphi_2)\circ\Lambda^{G_n(k)}s(\varphi_1)$ and $D_{k,n}(\scalebox{0.69}{$\tikzfig{convtp-phase-shift-1plus2}$})=\Lambda^{G_n(k)}s(\varphi_1+\varphi_2)$. Both are equal modulo $\qc$ by \cref{prop:sum}.

For \cref{globalphasepropagationbs}, one has
\begin{longtable}{RCL}D_{k,n}(\scalebox{0.69}{$\tikzfig{thetathetabs-ms}$})&=&\Lambda^{x_{k,n}}_{y_{k,n}}R_X(-2\theta)\circ\left(\displaystyle\prod_{j\in\{k,k+1\}}\Lambda^{G_n(j)}s(\varphi)\right)\\\\
&\eqeqref{Grayxy}&\Lambda^{x_{k,n}}_{y_{k,n}}R_X(-2\theta)\circ\left(\displaystyle\prod_{a\in\{0,1\}}\Lambda^{x_{k,n}ay_{k,n}}s(\varphi)\right)\\\\
&=&\sigma_{1,|x_{k,n}|}\circ\Lambda^{\epsilon}_{x_{k,n}y_{k,n}}R_X(-2\theta)\circ\left(\displaystyle\prod_{a\in\{0,1\}}\Lambda^{ax_{k,n}y_{k,n}}s(\varphi)\right)\circ \sigma_{|x_{k,n}|,1}\\\\
&\overset{\text{\cref{prop:comb,commctrl}}}{=}&\sigma_{1,|x_{k,n}|}\circ\Lambda^{\epsilon}_{x_{k,n}y_{k,n}}R_X(-2\theta)\circ\left(\gid\otimes\Lambda^{x_{k,n}y_{k,n}}s(\varphi)\right)\circ \sigma_{|x_{k,n}|,1}\\\\
&\overset{\text{\cref{commctrlphaseenhaut}}}{=}&\sigma_{1,|x_{k,n}|}\circ\left(\gid\otimes\Lambda^{x_{k,n}y_{k,n}}s(\varphi)\right)\circ\Lambda^{\epsilon}_{x_{k,n}y_{k,n}}R_X(-2\theta)\circ \sigma_{|x_{k,n}|,1}\\\\
&\overset{\text{\cref{prop:comb,commctrl}}}{=}&\sigma_{1,|x_{k,n}|}\circ\left(\displaystyle\prod_{a\in\{0,1\}}\Lambda^{ax_{k,n}y_{k,n}}s(\varphi)\right)\circ\Lambda^{\epsilon}_{x_{k,n}y_{k,n}}R_X(-2\theta)\circ \sigma_{|x_{k,n}|,1}\\\\
&=&\left(\displaystyle\prod_{a\in\{0,1\}}\Lambda^{x_{k,n}ay_{k,n}}s(\varphi)\right)\circ\Lambda^{x_{k,n}}_{y_{k,n}}R_X(-2\theta)\\\\
&\eqeqref{Grayxy}&\left(\displaystyle\prod_{j\in\{k,k+1\}}\Lambda^{G_n(j)}s(\varphi)\right)\circ\Lambda^{x_{k,n}}_{y_{k,n}}R_X(-2\theta)\\\\
&=&D_{k,n}(\scalebox{0.69}{$\tikzfig{bsthetatheta-ms}$}).
\end{longtable}

For \cref{Eulerbsphasebs}, one has

\[D_{k,n}(\scalebox{0.69}{$\tikzfig{bsphasebsalpha-ms}$})\equiv\Lambda^{x_{k,n}}_{y_{k,n}}R_X(-2\alpha_3)\circ\Lambda^{G_n(k)}s(\alpha_2)\circ\Lambda^{x_{k,n}}_{y_{k,n}}R_X(-2\alpha_1)\]
and
\[D_{k,n}(\scalebox{0.69}{$\tikzfig{phasebsphasebeta-ms}$})\equiv\Lambda^{G_n(k+1)}s(\beta_4)\circ\Lambda^{G_n(k)}s(\beta_3)\circ\Lambda^{x_{k,n}}_{y_{k,n}}R_X(-2\beta_2)\circ\Lambda^{G_n(k)}s(\beta_1).\]
Note that for some $a_k\in\{0,1\}$, one has $G_n(k)=x_{k,n}a_ky_{k,n}$ and $G_n(k+1)=x_{k,n}\bar a_ky_{k,n}$. Therefore, by \cref{prop:CP}, for any $\varphi\in\mathbb R$, one has $\qc\vdash\Lambda^{G_n(k)}s(\varphi)=\Lambda^{x_{k,n}}_{y_{k,n}}P(\varphi)$ and $\qc\vdash\Lambda^{G_n(k+1)}s(\varphi)=(id_{|x_{k,n}|}\otimes X\otimes id_{|y_{k,n}|})\circ\Lambda^{x_{k,n}}_{y_{k,n}}P(\varphi)\circ(id_{|x_{k,n}|}\otimes X\otimes id_{|y_{k,n}|})$, or conversely. Thus, up to using \cref{XX} and possibly \cref{symmetriesemicontrolee}, it suffices to prove that $\lambda^{n-1}R_X(-2\alpha_3)\circ\lambda^{n-1}P(\alpha_2)\circ\lambda^{n-1}R_X(-2\alpha_1)=(id_{n-1}\otimes X)\circ\lambda^{n-1}P(\beta_4)\circ(id_{n-1}\otimes X)\circ\lambda^{n-1}P(\beta_3)\circ\lambda^{n-1}R_X(-2\beta_2)\circ\lambda^{n-1}P(\beta_1)$. One has
\begin{longtable}{RCL}
\scalebox{0.8}{$\tikzfig{phasebsphasebeta-circuit}$}&\overset{\text{\cref{antisymmetriesemicontrolee}}}{=}&\scalebox{0.8}{$\tikzfig{phasebsphasebeta-circuit1}$}\\\\
&\overset{\text{\cref{prop:sum}}}{=}&\scalebox{0.8}{$\tikzfig{phasebsphasebeta-circuit2}$}\\\\
&\overset{\text{\cref{prop:period}}}{=}&\scalebox{0.8}{$\tikzfig{phasebsphasebeta-circuit3}$}
\end{longtable}

Because of the conditions on the angles in the right-hand side of \cref{Eulerbsphasebs}, if $\beta_2=0$ then the angles of the last circuit satisfy the conditions so that it matches the right-hand side of \cref{Euler2dmulticontrolled}. Hence, since it has the same semantics as $\lambda^{n-1}R_X(-2\alpha_3)\circ\lambda^{n-1}P(\alpha_2)\circ\lambda^{n-1}R_X(-2\alpha_1)$, both circuits are equal according to \cref{Euler2dmulticontrolled}.

If $\beta_2\neq0$, then
\begin{longtable}{CL}
&\scalebox{0.8}{$\tikzfig{phasebsphasebeta-circuit3}$}\\\\
\overset{\text{\cref{prop:period,prop:sum}}}{=}&\scalebox{0.8}{$\tikzfig{phasebsphasebeta-circuit4}$}\\\\
\overset{\text{\cref{CRX2pi}}}{=}&\scalebox{0.8}{$\tikzfig{phasebsphasebeta-circuit5}$}\\\\
\overset{\text{\cref{commctrlphaseenhautP,prop:sum,prop:period}}}{=}&\scalebox{0.8}{$\tikzfig{phasebsphasebeta-circuit6}$}
\end{longtable}
Because of the conditions on the angles in the right-hand side of \cref{Eulerbsphasebs}, one has $\beta_2\in(0,\pi)$, so that $2\pi-2\beta_2\in(0,2\pi)$, and if $2\pi-2\beta_2=\pi$ then $\beta_2=\frac\pi2$, so that $\beta_1=0$. Hence, the angles of the last circuit satisfy the conditions so that it matches the right-hand side of \cref{Euler2dmulticontrolled}. Again, since it has the same semantics as $\lambda^{n-1}R_X(-2\alpha_3)\circ\lambda^{n-1}P(\alpha_2)\circ\lambda^{n-1}R_X(-2\alpha_1)$, both circuits are equal according to \cref{Euler2dmulticontrolled}.

For \cref{Eulerscalaires}, by the properties of the Gray code, exactly one bit differs between $G_n(k)$ and $G_n(k+1)$, as well as between $G_n(k+1)$ and $G_n(k+2)$, and in exactly one of the two cases this is the last bit that differs (namely between $G_n(k)$ and $G_n(k+1)$ if $k$ is even, and between $G_n(k+1)$ and $G_n(k+2)$ if $k$ is odd). Hence we can write $G_n(k)$ as $xayb$ with $a,b\in\{0,1\}$, in such a way that $G_n(k+2)=x\bar ay\bar b$ and $G_n(k+1)=xay\bar b\text{ or }x\bar ayb$ depending on the parity of $k$. We treat the case where $k$ is even, the case with $k$ odd is similar. One has

\[D_{k,n}\left(\scalebox{0.69}{$\tikzfig{bsyangbaxterpointeenbas-simp-gammas-ms}$}\right)\equiv\Lambda^{xay}R_X(-2\gamma_4)\circ\Lambda^x_{y\bar b}R_X(-2\gamma_3)\circ\Lambda^{xayb}s(\gamma_2)\circ\Lambda^{xay}R_X(-2\gamma_1)\]
and
\[D_{k,n}\left(\scalebox{0.69}{$\tikzfig{bsyangbaxterpointeenhaut-deltas-ms}$}\right)\equiv\begin{array}[t]{l}\Lambda^{x\bar ay\bar b}s(\delta_9)\circ\Lambda^{xay\bar b}s(\delta_8)\circ\Lambda^{xayb}s(\delta_7)\circ\Lambda^x_{y\bar b}R_X(-2\delta_6)\circ\Lambda^{xay\bar b}s(\delta_5)\\
\circ\Lambda^{xay}R_X(-2\delta_4)\circ\Lambda^x_{y\bar b}R_X(-2\delta_3)\circ\Lambda^{xayb}s(\delta_2)\circ\Lambda^{xay\bar b}s(\delta_1).\end{array}\]

Up to using \cref{XX}, we can assume that the components of $x$ and $y$ are all equal to $1$. Up to using additionally \cref{symmetriesemicontrolee}, we can assume that $a=1$ and $b=0$. Finally, up to deforming the circuits, we can assume that $y=\epsilon$. Thus, it suffices to prove that
\[\qc\vdash\Lambda^{x1}R_X(-2\gamma_4)\circ\Lambda^x_{1}R_X(-2\gamma_3)\circ\Lambda^{x10}s(\gamma_2)\circ\Lambda^{x1}R_X(-2\gamma_1)=\begin{array}[t]{l}\Lambda^{x01}s(\delta_9)\circ\Lambda^{x11}s(\delta_8)\circ\Lambda^{x10}s(\delta_7)\circ\Lambda^x_{1}R_X(-2\delta_6)\circ\Lambda^{x11}s(\delta_5)\circ\\
\Lambda^{x1}R_X(-2\delta_4)\circ\Lambda^x_{1}R_X(-2\delta_3)\circ\Lambda^{x10}s(\delta_2)\circ\Lambda^{x11}s(\delta_1)\end{array}\]
where $x=1^{n-2}$.

The left-hand side is equal to
\begin{longtable}{RCL}
\scalebox{0.8}{$\tikzfig{Eulerscalairesleft-circuit}$}&\overset{\text{\cref{prop:sum,commctrl,prop:comb}}}{=}&\scalebox{0.8}{$\tikzfig{Eulerscalairesleft-circuit1}$}\\\\
&\overset{\text{\cref{commctrlphaseenhaut}}}{=}&\scalebox{0.8}{$\tikzfig{Eulerscalairesleft-circuit2}$}\\\\
&\equiv&\scalebox{0.8}{$\tikzfig{Eulerscalairesleft-circuit-arrange}$}
\end{longtable}
while the right-hand side is equal to
\begin{longtable}{CL}
&\scalebox{0.8}{$\tikzfig{Eulerscalairesright-circuit}$}\\\\
\overset{\text{\cref{prop:sum,commctrl,%cor:swap,
prop:comb}}}=&\scalebox{0.8}{$\tikzfig{Eulerscalairesright-circuit2}$}\\\\
\equiv&\scalebox{0.8}{$\tikzfig{Eulerscalairesright-circuit3}$}
\end{longtable}

Hence, it suffices to prove that

\[\scalebox{0.8}{$\tikzfig{Euler3dmulticontrolledleft-preuve}$}=\scalebox{0.8}{$\tikzfig{Euler3dmulticontrolledright-preuve}$}.\]
The left-hand side matches the left-hand side of \cref{Euler3dmulticontrolled}, hence it suffices to prove that the right-hand side can be put in the form of the right-hand side of \cref{Euler3dmulticontrolled} with the angles satisfying the conditions. One has
\begin{longtable}{CL}
&\scalebox{0.8}{$\tikzfig{Euler3dmulticontrolledright-preuve}$}\\\\
\overset{\text{\cref{prop:sum,prop:comb}}}{=}&\scalebox{0.8}{$\tikzfig{Euler3dmulticontrolledright-preuve1}$}\\\\
\overset{\text{\cref{prop:CP,prop:sum}}}{=}&\scalebox{0.8}{$\tikzfig{Euler3dmulticontrolledright-preuve2}$}\\\\
\overset{\text{\cref{prop:comb,commctrl,prop:sum}}}{=}&\scalebox{0.8}{$\tikzfig{Euler3dmulticontrolledright-preuve3}$}\\\\
\overset{\text{\cref{prop:sum}}}{=}&\scalebox{0.8}{$\tikzfig{Euler3dmulticontrolledright-preuve4}$}\\\\
\overset{\text{\cref{prop:comb,commctrl,prop:CP,prop:sum}}}{=}&\scalebox{0.8}{$\tikzfig{Euler3dmulticontrolledright-preuve5}$}
\end{longtable}

It remains to %
prove that any circuit of the form $\scalebox{0.8}{$\tikzfig{Euler3dright-multicontrolled-simp-deltas}$}$ can be transformed using the axioms of $\qc$ in such a way that the angles satisfy the conditions given in \cref{fig:euler}. We treat the conditions in the following order (note that some of the conditions of \cref{fig:euler} have been split into two parts):
\begin{itemize}
\item \hyperref[delta3in02pi]{$\delta_3\in[0,2\pi)$}
\item \hyperref[delta4in02pi]{$\delta_4\in[0,2\pi)$}
\item \hyperref[delta6in02pi]{$\delta_6\in[0,2\pi)$}
\item \hyperref[delta3zeroimpldelta2zero]{if $\delta_3=0$ then $\delta_2=0$}
\item \hyperref[delta4piimpldelta2zeroifdelta3notzero]{if $\delta_3\neq0$ but $\delta_4=\pi$ then $\delta_2=0$}
\item \hyperref[delta3zerodelta4piimpldelta1zero]{if $\delta_3=0$ and $\delta_4=\pi$ then $\delta_1=0$}
\item \hyperref[delta3piimpldelta1zero]{if $\delta_3=\pi$ then $\delta_1=0$}
\item \hyperref[delta4zeroimpldelta123zero]{if $\delta_4=0$ then $\delta_1=\delta_2=\delta_3=0$}
\item \hyperref[delta1in0piifdelta3notzero]{if $\delta_3\neq0$ then $\delta_1\in[0,\pi)$}
\item \hyperref[delta1in0piifdelta3zero]{if $\delta_3=0$ then $\delta_1\in[0,\pi)$}
\item \hyperref[delta6zeroimpldelta5zero]{if $\delta_6=0$ then $\delta_5=0$}
\item \hyperref[delta6piimpldelta5zero]{if $\delta_6=\pi$ then $\delta_5=0$}
\item \hyperref[delta2in0pi]{$\delta_2\in[0,\pi)$}
\item \hyperref[delta5in0pi]{$\delta_5\in[0,\pi)$}
\item \hyperref[delta789in02pi]{$\delta_7,\delta_8,\delta_9\in[0,2\pi)$}.
\end{itemize}
For each of them, we prove that given a circuit satisfying the previous conditions, we can transform it into a circuit satisfying also the considered condition.\bigskip

\phantomsection\label{delta3in02pi}If $\delta_3\notin[0,2\pi)$, then by \cref{prop:period}, we can assume that it is in $[0,4\pi)$, and then if it is in $[2\pi,4\pi)$, then:
\begin{longtable}{CL}
&\scalebox{0.8}{$\tikzfig{Euler3dright-multicontrolled-simp-deltas}$}\\\\
\overset{\text{\cref{prop:sum}}}{=}&\scalebox{0.8}{$\tikzfig{anglesbonsintervalles1bis}$}\\\\
\overset{\text{\cref{CRX2pi}}}{=}&%

\scalebox{0.8}{$\tikzfig{anglesbonsintervalles3}$}\\\\
\overset{\text{\cref{prop:comb}}}{=}&\scalebox{0.8}{$\tikzfig{anglesbonsintervalles4}$}\\\\
\overset{\text{\cref{commctrl,passagephasepicircuits}}}{=}&\scalebox{0.8}{$\tikzfig{anglesbonsintervalles5}$}\\\\
\overset{\text{\cref{prop:comb}}}{=}&\scalebox{0.8}{$\tikzfig{anglesbonsintervalles6}$}\\\\
\overset{\text{\cref{commctrlphaseenhaut,commctrlphaseenhautP}}}{=}&\scalebox{0.8}{$\tikzfig{anglesbonsintervalles7}$}\\\\
\overset{\text{\cref{prop:sum}}}{=}&\scalebox{0.8}{$\tikzfig{anglesbonsintervalles8}$}
\end{longtable}
with $\delta_3-2\pi\in[0,2\pi)$. Hence, we can assume that $\delta_3\in[0,2\pi)$.\bigskip

\phantomsection\label{delta4in02pi}If $\delta_4\notin[0,2\pi)$, then by \cref{prop:period}, we can ensure that it is in $[0,4\pi)$, and then if it is in $[2\pi,4\pi)$, then:
\begin{longtable}{CL}
&\scalebox{0.8}{$\tikzfig{Euler3dright-multicontrolled-simp-deltas}$}\\\\
\overset{\text{\cref{prop:sum}}}{=}&\scalebox{0.8}{$\tikzfig{anglesbonsintervalles9}$}\\\\
\overset{\text{\cref{CRX2pi}}}{=}&\scalebox{0.8}{$\tikzfig{anglesbonsintervalles10}$}\\\\
\overset{\text{\cref{prop:comb}}}{=}&\scalebox{0.8}{$\tikzfig{anglesbonsintervalles11}$}\\\\
\overset{\text{\crefnosort{commctrl,prop:sum,passagephasepicircuits}}}{=}&\scalebox{0.8}{$\tikzfig{anglesbonsintervalles12}$}\\\\
\overset{\text{\cref{commctrlphaseenhautP,prop:comb,prop:sum}}}{=}&\scalebox{0.8}{$\tikzfig{anglesbonsintervalles13}$}
\end{longtable}
with $\delta_4-2\pi\in[0,2\pi)$. Hence, we can assume additionally that $\delta_4\in[0,2\pi)$.\bigskip

\phantomsection\label{delta6in02pi}If $\delta_6\notin[0,2\pi)$, then by \cref{prop:period}, we can ensure that it is in $[0,4\pi)$, and then if it is in $[2\pi,4\pi)$, then:
\begin{longtable}{CL}
&\scalebox{0.8}{$\tikzfig{Euler3dright-multicontrolled-simp-deltas}$}\\\\
\overset{\text{\cref{prop:sum}}}{=}&\scalebox{0.8}{$\tikzfig{anglesbonsintervalles14}$}\\\\
\overset{\text{\cref{CRX2pi}}}{=}&\scalebox{0.8}{$\tikzfig{anglesbonsintervalles15}$}\\\\
\overset{\text{\cref{commctrlphaseenhautP,prop:sum}}}{=}&\scalebox{0.8}{$\tikzfig{anglesbonsintervalles16}$}
\end{longtable}
with $\delta_6-2\pi\in[0,2\pi)$. Hence, we can assume additionally that $\delta_6\in[0,2\pi)$.\bigskip

\phantomsection\label{delta3zeroimpldelta2zero}If $\delta_3=0$ but $\delta_2\neq0$, then:
\begin{longtable}{CL}
&\scalebox{0.8}{$\tikzfig{anglesbonsintervalles16etdemi}$}\\\\
\overset{\text{\cref{prop:sum}}}{=}&\scalebox{0.8}{$\tikzfig{anglesbonsintervalles17}$}\\\\
\overset{\text{\cref{commctrlphaseenhaut}}}{=}&\scalebox{0.8}{$\tikzfig{anglesbonsintervalles18}$}\\\\
\overset{\text{\cref{prop:comb}}}{=}&\scalebox{0.8}{$\tikzfig{anglesbonsintervalles19}$}\\\\
\overset{\text{\crefnosort{commctrl,prop:sum}}}{=}&\scalebox{0.8}{$\tikzfig{anglesbonsintervalles20}$}\\\\
\overset{\text{\cref{prop:sum,prop:comb}}}{=}&\scalebox{0.8}{$\tikzfig{anglesbonsintervalles21}$}\\\\
\overset{\text{\cref{prop:comb,commctrl,prop:CP,prop:sum}}}{=}&\scalebox{0.8}{$\tikzfig{anglesbonsintervalles22}$}\\\\
\overset{\text{\cref{prop:sum}}}{=}&\scalebox{0.8}{$\tikzfig{anglesbonsintervalles23}$}.
\end{longtable}
Hence, we can assume additionally that if $\delta_3=0$ then $\delta_2=0$.\bigskip

\phantomsection\label{delta4piimpldelta2zeroifdelta3notzero}If $\delta_3\neq0$, and $\delta_4=\pi$ but $\delta_2\neq0$, then:
\begin{longtable}{CL}
&\scalebox{0.8}{$\tikzfig{anglesbonsintervalles44}$}\\\\
\overset{\text{\cref{prop:sum,mctrlX}}}{=}&\scalebox{0.8}{$\tikzfig{anglesbonsintervalles45}$}\\\\
\overset{\text{\cref{prop:comb}}}{=}&\scalebox{0.8}{$\tikzfig{anglesbonsintervalles46}$}\\\\
\overset{\text{\cref{prop:sum,commctrl}}}{=}&\scalebox{0.8}{$\tikzfig{anglesbonsintervalles47}$}\\\\
\overset{\text{\cref{prop:CP}}}{=}&\scalebox{0.8}{$\tikzfig{anglesbonsintervalles48}$}\\\\
\overset{\text{\cref{commctrlphaseenhautP}}}{=}&\scalebox{0.8}{$\tikzfig{anglesbonsintervalles49}$}\\\\
\overset{\text{\cref{antisymmetriesemicontrolee,antisymmetriecontrolee}}}{=}&\scalebox{0.8}{$\tikzfig{anglesbonsintervalles50}$}\\\\
\overset{\text{\cref{prop:CCX}}}{=}&\scalebox{0.8}{$\tikzfig{anglesbonsintervalles51}$}\\\\
\overset{\text{\crefnosort{prop:CP,prop:sum}}}{=}&\scalebox{0.8}{$\tikzfig{anglesbonsintervalles52}$}\\\\
\overset{\text{\cref{mctrlX,prop:sum}}}{=}&\scalebox{0.8}{$\tikzfig{anglesbonsintervalles53}$}\\\\
\overset{\text{\cref{prop:sum}}}{=}&\scalebox{0.8}{$\tikzfig{anglesbonsintervalles54}$}
\end{longtable}
Hence, we can assume additionally that if $\delta_4=\pi$ then $\delta_2=0$ (note that by the previous assumption we already had $\delta_2=0$ when $\delta_3=0$).\bigskip

\phantomsection\label{delta3zerodelta4piimpldelta1zero}If $\delta_3=0$ and $\delta_4=\pi$, then by assumption, $\delta_2=0$. If we do not have additionally that $\delta_1=0$, then:
\begin{longtable}{CL}
&\scalebox{0.8}{$\tikzfig{anglesbonsintervalles55}$}\\\\
\overset{\text{\cref{prop:sum}}}{=}&\scalebox{0.8}{$\tikzfig{anglesbonsintervalles56}$}\\\\
\overset{\text{\cref{prop:sum,mctrlX}}}{=}&\scalebox{0.8}{$\tikzfig{anglesbonsintervalles57}$}\\\\
\overset{\text{\cref{prop:CP}}}{=}&\scalebox{0.8}{$\tikzfig{anglesbonsintervalles58}$}\\\\
\overset{\text{\cref{commctrlphaseenhautP}}}{=}&\scalebox{0.8}{$\tikzfig{anglesbonsintervalles59}$}\\\\
\overset{\text{\crefnosort{prop:CCX,antisymmetriecontrolee,antisymmetriesemicontrolee}}}{=}&\scalebox{0.8}{$\tikzfig{anglesbonsintervalles60}$}\\\\
\overset{\text{\cref{prop:CP}}}{=}&\scalebox{0.8}{$\tikzfig{anglesbonsintervalles61}$}\\\\
\overset{\text{\cref{commctrl}}}{=}&\scalebox{0.8}{$\tikzfig{anglesbonsintervalles62}$}\\\\
\overset{\text{\cref{prop:sum,commctrlphaseenhautP,prop:comb}}}{=}&\scalebox{0.8}{$\tikzfig{anglesbonsintervalles63}$}\\\\
\overset{\text{\cref{mctrlX,prop:sum}}}{=}&\scalebox{0.8}{$\tikzfig{anglesbonsintervalles64}$}\\\\
\overset{\text{\cref{prop:sum}}}{=}&\scalebox{0.8}{$\tikzfig{anglesbonsintervalles65}$}
\end{longtable}
Hence, we can assume additionally that if $\delta_3=0$ and $\delta_4=\pi$ then $\delta_1=0$.\bigskip

\phantomsection\label{delta3piimpldelta1zero}If $\delta_3=\pi$ but $\delta_1\neq0$, then:
\begin{longtable}{CL}
&\scalebox{0.8}{$\tikzfig{anglesbonsintervalles24}$}\\\\
\overset{\text{\cref{prop:sum,mctrlX}}}{=}&\scalebox{0.8}{$\tikzfig{anglesbonsintervalles25}$}\\\\
\overset{\text{\cref{prop:comb,prop:sum,commctrl}}}{=}&\scalebox{0.8}{$\tikzfig{anglesbonsintervalles26}$}\\\\
\overset{\text{\cref{commctrlphaseenhautP}}}{=}&\scalebox{0.8}{$\tikzfig{anglesbonsintervalles27}$}\\\\
\overset{\text{\crefnosort{prop:CCX,antisymmetriecontrolee,antisymmetriesemicontrolee}}}{=}&\scalebox{0.8}{$\tikzfig{anglesbonsintervalles28}$}\\\\
\overset{\text{\cref{prop:CP}}}{=}&\scalebox{0.8}{$\tikzfig{anglesbonsintervalles29}$}\\\\
\overset{\text{\cref{commctrl}}}{=}&\scalebox{0.8}{$\tikzfig{anglesbonsintervalles30}$}\\\\
\overset{\text{\cref{prop:CP}}}{=}&\scalebox{0.8}{$\tikzfig{anglesbonsintervalles31}$}\\\\
\overset{\text{\cref{commctrl,prop:sum,prop:comb}}}{=}&\scalebox{0.8}{$\tikzfig{anglesbonsintervalles32}$}\\\\
\overset{\text{\cref{prop:CP}}}{=}&\scalebox{0.8}{$\tikzfig{anglesbonsintervalles33}$}\\\\
\overset{\text{\cref{commctrlphaseenhaut,commctrlphaseenhautP,prop:sum}}}{=}&\scalebox{0.8}{$\tikzfig{anglesbonsintervalles34}$}\\\\
\overset{\text{\cref{mctrlX,prop:sum}}}{=}&\scalebox{0.8}{$\tikzfig{anglesbonsintervalles35}$}
\end{longtable}
Hence, we can assume additionally that if $\delta_3=\pi$ then $\delta_1=0$.\bigskip

\phantomsection\label{delta4zeroimpldelta123zero}If $\delta_4=0$ but $(\delta_1,\delta_2,\delta_3)\neq(0,0,0)$, then:
\begin{longtable}{CL}
&\scalebox{0.8}{$\tikzfig{anglesbonsintervalles36}$}\\\\
\overset{\text{\cref{prop:sum}}}{=}&\scalebox{0.8}{$\tikzfig{anglesbonsintervalles37}$}\\\\
\eqeqref{Euler2dmulticontrolled}&\scalebox{0.8}{$\tikzfig{anglesbonsintervalles38}$}\\\\
\overset{\text{\cref{commctrlphaseenhautP,prop:sum}}}{=}&\scalebox{0.8}{$\tikzfig{anglesbonsintervalles39}$}\\\\
\overset{\text{\cref{prop:comb}}}{=}&\scalebox{0.8}{$\tikzfig{anglesbonsintervalles40}$}\\\\
\overset{\text{\crefnosort{commctrl,prop:sum}}}{=}&\scalebox{0.8}{$\tikzfig{anglesbonsintervalles41}$}\\\\
\overset{\text{\cref{prop:sum,commctrlphaseenhautP,prop:comb}}}{=}&\scalebox{0.8}{$\tikzfig{anglesbonsintervalles42}$}\\\\
\overset{\text{\cref{prop:sum}}}{=}&\scalebox{0.8}{$\tikzfig{anglesbonsintervalles43}$}
\end{longtable}
where $\beta_0,\beta_1,\beta_2\text{ and }\beta_3$ satisfy the conditions given in \cref{fig:euler}. In particular, $\beta_2\in[0,2\pi)$, so that the previous assumptions are preserved. This implies that we can assume additionally that if $\delta_4=0$ then $\delta_1=\delta_2=\delta_3=0$.\bigskip

\phantomsection\label{delta1in0piifdelta3notzero}If $\delta_1\notin[0,\pi)$, then by \cref{prop:period}, we can ensure that it is in $[0,2\pi)$, and then if it is in $[\pi,2\pi)$, then, if $\delta_3\neq0$:
\begin{longtable}{CL}
&\scalebox{0.8}{$\tikzfig{Euler3dright-multicontrolled-simp-deltas}$}\\\\
\overset{\text{\cref{prop:sum}}}{=}&\scalebox{0.8}{$\tikzfig{anglesbonsintervalles80}$}\\\\
\overset{\text{\cref{prop:comb,prop:sum,commctrl}}}{=}&\scalebox{0.8}{$\tikzfig{anglesbonsintervalles81}$}\\\\
\overset{\text{\cref{passagephasepihbcircuits}}}{=}&\scalebox{0.8}{$\tikzfig{anglesbonsintervalles86}$}\\\\
\overset{\text{\cref{commctrl}}}{=}&\scalebox{0.8}{$\tikzfig{anglesbonsintervalles87}$}\\\\
\overset{\text{\cref{prop:CP}}}{=}&\scalebox{0.8}{$\tikzfig{anglesbonsintervalles88}$}\\\\
\overset{\text{\cref{commctrl,prop:sum,prop:comb}}}{=}&\scalebox{0.8}{$\tikzfig{anglesbonsintervalles89}$}\\\\
\overset{\text{\cref{commctrlphaseenhaut,commctrlphaseenhautP,prop:sum}}}{=}&\scalebox{0.8}{$\tikzfig{anglesbonsintervalles90}$}\\\\
\overset{\text{\cref{prop:CP}}}{=}&\scalebox{0.8}{$\tikzfig{anglesbonsintervalles91}$}
\end{longtable}
with $\delta_1-\pi\in[0,\pi)$. Moreover, since $\delta_3\neq0$, one has $2\pi-\delta_3\in[0,2\pi)$, so that the previous assumptions are preserved.

\phantomsection\label{delta1in0piifdelta3zero}And, still in the case where $\delta_1\in[\pi,2\pi)$, if $\delta_3=0$, then by assumption, $\delta_2=0$, and one has:
\begin{longtable}{CL}
&\scalebox{0.8}{$\tikzfig{anglesbonsintervalles92}$}\\\\
\overset{\text{\cref{prop:sum}}}{=}&\scalebox{0.8}{$\tikzfig{anglesbonsintervalles93}$}\\\\
\overset{\text{\cref{prop:sum}}}{=}&\scalebox{0.8}{$\tikzfig{anglesbonsintervalles93etdemi}$}\\\\
\overset{\text{\cref{prop:CP}}}{=}&\scalebox{0.8}{$\tikzfig{anglesbonsintervalles94}$}\\\\
\overset{\text{\cref{passagephasepihbcircuits}}}{=}&\scalebox{0.8}{$\tikzfig{anglesbonsintervalles95}$}\\\\
\overset{\text{\cref{commctrl}}}{=}&\scalebox{0.8}{$\tikzfig{anglesbonsintervalles96}$}\\\\
\overset{\text{\cref{prop:sum,commctrlphaseenhautP,prop:comb}}}{=}&\scalebox{0.8}{$\tikzfig{anglesbonsintervalles97}$}\\\\
\overset{\text{\cref{prop:sum}}}{=}&\scalebox{0.8}{$\tikzfig{anglesbonsintervalles98}$}
\end{longtable}
with $\delta_1-\pi\in[0,\pi)$.\bigskip

\phantomsection\label{delta6zeroimpldelta5zero}If $\delta_6=0$ but $\delta_5\neq0$, then:
\begin{longtable}{CL}
&\scalebox{0.8}{$\tikzfig{anglesbonsintervalles66}$}\\\\
\overset{\text{\cref{prop:sum}}}{=}&\scalebox{0.8}{$\tikzfig{anglesbonsintervalles67}$}\\\\
\overset{\text{\cref{prop:sum}}}{=}&\scalebox{0.8}{$\tikzfig{anglesbonsintervalles68}$}\\\\
\overset{\text{\cref{prop:sum}}}{=}&\scalebox{0.8}{$\tikzfig{anglesbonsintervalles69}$}
\end{longtable}
Hence, we can assume additionally that if $\delta_6=0$ then $\delta_5=0$.\bigskip

\phantomsection\label{delta6piimpldelta5zero}If $\delta_6=\pi$ but $\delta_5\neq0$, then:
\begin{longtable}{CL}
&\scalebox{0.8}{$\tikzfig{anglesbonsintervalles70}$}\\\\
\overset{\text{\cref{prop:sum,mctrlX}}}{=}&\scalebox{0.8}{$\tikzfig{anglesbonsintervalles71}$}\\\\
\overset{\text{\cref{commctrlphaseenhautP}}}{=}&\scalebox{0.8}{$\tikzfig{anglesbonsintervalles72}$}\\\\
\overset{\text{\crefnosort{prop:CCX,antisymmetriecontrolee,antisymmetriesemicontrolee}}}{=}&\scalebox{0.8}{$\tikzfig{anglesbonsintervalles73}$}\\\\
\overset{\text{\cref{prop:CP}}}{=}&\scalebox{0.8}{$\tikzfig{anglesbonsintervalles74}$}\\\\
\overset{\text{\crefnosort{commctrl,prop:sum,prop:comb}}}{=}&\scalebox{0.8}{$\tikzfig{anglesbonsintervalles75}$}\\\\
\overset{\text{\cref{prop:sum}}}{=}&\scalebox{0.8}{$\tikzfig{anglesbonsintervalles76}$}\\\\
\overset{\text{\cref{prop:CP}}}{=}&\scalebox{0.8}{$\tikzfig{anglesbonsintervalles77}$}\\\\
\overset{\text{\cref{mctrlX,prop:sum}}}{=}&\scalebox{0.8}{$\tikzfig{anglesbonsintervalles78}$}\\\\
\overset{\text{\cref{prop:sum}}}{=}&\scalebox{0.8}{$\tikzfig{anglesbonsintervalles79}$}
\end{longtable}
Hence, we can assume additionally that if $\delta_6=\pi$ then $\delta_5=0$.\bigskip

\phantomsection\label{delta2in0pi}If $\delta_2\notin[0,\pi)$, then by \cref{prop:period}, we can ensure that it is in $[0,2\pi)$, and then if it is in $[\pi,2\pi)$, then:
\begin{longtable}{CL}
&\scalebox{0.8}{$\tikzfig{Euler3dright-multicontrolled-simp-deltas}$}\\\\
\overset{\text{\cref{prop:sum}}}{=}&\scalebox{0.8}{$\tikzfig{anglesbonsintervalles99}$}\\\\
\overset{\text{\cref{prop:comb}}}{=}&\scalebox{0.8}{$\tikzfig{anglesbonsintervalles100}$}\\\\
\overset{\text{\crefnosort{commctrl,prop:CP}}}{=}&\scalebox{0.8}{$\tikzfig{anglesbonsintervalles101}$}\\\\
\overset{\text{\cref{symmetriesemicontrolee}}}{=}&\scalebox{0.8}{$\tikzfig{anglesbonsintervalles102}$}\\\\
\overset{\text{\cref{passagephasepihbcircuits}}}{=}&\scalebox{0.8}{$\tikzfig{anglesbonsintervalles103}$}\\\\
\overset{\text{\cref{symmetriesemicontrolee,XX}}}{=}&\scalebox{0.8}{$\tikzfig{anglesbonsintervalles104}$}\\\\
\overset{\text{\cref{passagephasepihbcircuits}}}{=}&\scalebox{0.8}{$\tikzfig{anglesbonsintervalles105}$}\\\\
\overset{\text{\cref{commctrl}}}{=}&\scalebox{0.8}{$\tikzfig{anglesbonsintervalles106}$}\\\\
\overset{\text{\crefnosort{prop:CP,commctrl,prop:comb}}}{=}&\scalebox{0.8}{$\tikzfig{anglesbonsintervalles107}$}\\\\
\overset{\text{\cref{commctrlphaseenhautP,commctrlphaseenhaut,prop:sum}}}{=}&\scalebox{0.8}{$\tikzfig{anglesbonsintervalles108}$}
\end{longtable}
with $\delta_2-\pi\in[0,\pi)$. Moreover, since $\delta_2\neq0$, by assumption $\delta_3\neq0$ and $\delta_4\neq0$, so that $2\pi-\delta_3$ and $2\pi-\delta_4$ are still in $[0,2\pi)$ and the previous assumptions are preserved.\bigskip

\phantomsection\label{delta5in0pi}If $\delta_5\notin[0,\pi)$, then by \cref{prop:period}, we can ensure that it is in $[0,2\pi)$, and then if it is in $[\pi,2\pi)$, then:
\begin{longtable}{CL}
&\scalebox{0.8}{$\tikzfig{Euler3dright-multicontrolled-simp-deltas}$}\\\\
\overset{\text{\cref{prop:sum}}}{=}&\scalebox{0.8}{$\tikzfig{anglesbonsintervalles109}$}\\\\
\overset{\text{\cref{passagephasepihbcircuits}}}{=}&\scalebox{0.8}{$\tikzfig{anglesbonsintervalles110}$}\\\\
\overset{\text{\cref{prop:CP,commctrl,prop:sum,prop:comb}}}{=}&\scalebox{0.8}{$\tikzfig{anglesbonsintervalles111}$}
\end{longtable}
with $\delta_5-\pi\in[0,\pi)$. Moreover, since $\delta_5\neq0$, by assumption $\delta_6\neq0$, so that $2\pi-\delta_6\in[0,2\pi)$ and the previous assumptions are preserved.\bigskip

\phantomsection\label{delta789in02pi}Finally, by \cref{prop:period} we can put $\delta_7$, $\delta_8$ and $\delta_9$ in $[0,2\pi)$ without modifying the other angles.

\subsection[Definition of sigma\_k,n,l]{Definition of $\sigma_{k,n,\ell}$}\label{defsigma}

$\sigma_{k,0,\ell}\coloneqq(\tikzfig{filcourt-s})^{\otimes 2^{k+\ell}}$ and $\forall n\geq2,\ \displaystyle\sigma_{k,n,\ell}\coloneqq%
\sigma_{k,1,\ell+n-1}^n$, with 
\[\sigma_{k,1,\ell}=\prod_{j=k+1}^{k+\ell}\mathcal P_j\mathcal Q_j\mathcal P_j\]
where
\begin{itemize}
\item given a family of $N$-mode circuits $C_A,...,C_B$,\quad $\displaystyle\prod_{i=A}^BC_i\coloneqq 
(\ldots((C_B\circ C_{B-1})\circ C_{B-2})\circ \ldots )\circ C_A$,
%and
\item $M\coloneqq k+\ell+1$
\item $\mathcal P_j$ is a raw optical circuit such that $\mathfrak G_n\circ\interp{\mathcal P_j}\circ\mathfrak G_n^{-1}=id_{j-1}\otimes\interp{\gcnot}\otimes id_{M-j-1}$, defined as\newline $\displaystyle \mathcal P_j\coloneqq\hspace{-1.5em}\prod_{\begin{scriptarray}{c}\\[-1.5em]b=0\\[-0.4em]b\bmod 4\in\{1,2\}\end{scriptarray}}^{2^{j}-1}\hspace{-1.5em}\prod_{a=0}^{2^{M-j-1}-1}\hspace{-1em}\upsilon_{M,j,b,a}$
\item $\mathcal Q_j$ is a raw optical circuit such that $\mathfrak G_n\circ\interp{\mathcal Q_j}\circ\mathfrak G_n^{-1}=id_{j-1}\otimes\interp{\gnotc}\otimes id_{M-j-1}$, defined as\newline $\displaystyle \mathcal Q_j\coloneqq\prod_{b=0}^{2^{j-1}-1}\ \ \prod_{a=0}^{2^{M-j-3}-1}\hspace{-1em}\upsilon_{M,j-1,b,a}$
\item $\upsilon_{N,i,b,a}$ is a raw optical circuit such that $\upsilon_{N,i,b,a}\equiv\scalebox{0.69}{\tikzfig{swap-ibaN-etroit}}$\bigskip. It is defined for any $N\geq1$, $i\in\{0,...,N-1\}$, $b\in\{0,...,2^i-1\}$ and $a\in\{0,...,2^{N-i-1}-1\}$, %$\upsilon_{N,i,b,a}$ is defined 
by finite induction on $a$ by \[\upsilon_{N,i,b,0}\coloneqq\tikzfig{swapib2pNi1-PHOL},\] and for $a\in\{1,...,2^{N-i-1}-1\}$, %\[\upsilon_{n,i,b,a}\coloneqq\tikzfig{swapib2pni1moinsa-PHOL}\circ\tikzfig{swapib2pni1plusa-PHOL}\circ\upsilon_{n,i,b,a-1}\circ\tikzfig{swapib2pni1plusa-PHOL}\circ\tikzfig{swapib2pni1moinsa-PHOL}.\]
\[\upsilon_{N,i,b,a}\coloneqq s_{-a}\circ s_{+a}\circ\upsilon_{N,i,b,a-1}\circ s_{+a}\circ s_{-a}\bigskip,\] where $s_{+a}\coloneqq\tikzfig{swapib2pNi1plusa-PHOL}$\quad and\quad $s_{-a}\coloneqq\tikzfig{swapib2pNi1moinsa-PHOL}$\bigskip.
\end{itemize}

\end{document}